\newtheorem{theorem}{Theorem}
\newtheorem{lemma}[theorem]{Lemma}
\newtheorem{corollary}[theorem]{Corollary}
\newtheorem{remark}{Remark}
\newcommand{\ra}{\rightarrow}
\def\Bu{{\mathfrak{B}}_u}
\def\Bp{{\mathfrak{B}}_o}
\def\Bh{{\mathfrak{B}}_{rc}}
\def\betau{{{\beta}}_{\mathrm u}}
\def\betarc{{{\beta}}_{\mathrm rc}}
\def\Zrc{{{Z}}_{\mathrm rc}}
\newcommand{\Tmix}{T_{\mathrm{mix}}}
\def\eps{\varepsilon}
\def\epsilon{\varepsilon}
\newcommand{\TreeD}{\mathbb{T}_{\Delta}}
\def\vv{\ensuremath{\mathbf{v}}}
\def\u{\ensuremath{\mathbf{u}}}
\def\m{\ensuremath{\mathbf{m}}}
\def\w{\ensuremath{\mathbf{w}}}
\def\alphab{\ensuremath{\boldsymbol{\alpha}}}
\newcommand{\norm}[1]{\left\|#1\right\|}
\title{Swendsen-Wang Algorithm on the Mean-Field Potts Model\thanks{An extended abstract of this paper appeared in the proceedings of RANDOM/APPROX 2015.}}
\author{Andreas Galanis\thanks{University of Oxford,
  Wolfson Building, Parks Road, Oxford, OX1~3QD, UK.
  \texttt{andreas.galanis@cs.ox.ac.uk}.
The research leading to these results has received funding from the European Research Council under
the European Union's Seventh Framework Programme (FP7/2007-2013) ERC grant agreement no. 334828. The paper
reflects only the authors' views and not the views of the ERC or the European Commission. The European Union is not liable for any use that may be made of the information contained therein.
}
\and
 Daniel \v{S}tefankovi\v{c}\thanks{Department of Computer Science, University of Rochester,
Rochester, NY 14627.   \texttt{stefanko@cs.rochester.edu}.
Research supported in part by NSF grant CCF-1318374.}
 \and Eric Vigoda\thanks{School of Computer Science, Georgia
Institute of Technology, Atlanta GA 30332.
 \texttt{vigoda@cc.gatech.edu}.
Research supported in part by NSF grant CCF-1217458.
}
}
\begin{document}

\maketitle

\begin{abstract}
We study the $q$-state ferromagnetic Potts model  on the $n$-vertex complete
graph known as the mean-field (Curie-Weiss) model.  We analyze
 the Swendsen-Wang algorithm which is a Markov chain that utilizes the random cluster
representation for the ferromagnetic Potts model to recolor large
sets of vertices in one step and potentially overcomes obstacles
that inhibit single-site Glauber dynamics. 
Long et al. studied the case $q=2$, the Swendsen-Wang algorithm for the mean-field ferromagnetic Ising model, and showed that the mixing time satisfies: (i) $\Theta(1)$ for
$\beta<\beta_c$, (ii) $\Theta(n^{1/4})$ for $\beta=\beta_c$, (iii) $\Theta(\log n)$ for $\beta>\beta_c$, where $\beta_c$ is the critical temperature for the ordered/disordered phase transition.
In contrast, for $q\geq 3$ there are two
critical temperatures $0<\betau<\betarc$ that are relevant.
We prove that the mixing time of the
Swendsen-Wang algorithm for the ferromagnetic Potts model on the
$n$-vertex complete graph satisfies: (i) $\Theta(1)$ for
$\beta<\betau$, (ii) $\Theta(n^{1/3})$ for $\beta=\betau$, (iii) $\exp(n^{\Omega(1)})$ for $\betau<\beta<\betarc$,
and (iv) $\Theta(\log{n})$ for $\beta\geq\betarc$.
These results complement refined results of Cuff et al. on the mixing time of
the Glauber dynamics for the ferromagnetic Potts model.
\end{abstract}

\smallskip
\noindent {\bf Keywords:} mean-field Ferromagnetic Potts model, Curie-Weiss model, Swendsen-Wang  algorithm, phase transitions.

\thispagestyle{empty}

\newpage

\setcounter{page}{1}

\section{Introduction}

The mixing time of Markov chains is of critical importance for
simulations of statistical physics models.  It is especially interesting
to understand how phase transitions in these models manifest in
the behavior of the mixing time; these connections are the topic of this paper.

We study the $q$-state ferromagnetic Potts model.
In the following definition the case $q=2$ corresponds to the Ising model and $q\geq 3$ is the
Potts model.  For a graph $G=(V,E)$ the configurations of the model
are assignments $\sigma:V\rightarrow [q]$ of spins to vertices; denote by $\Omega$ the set of all configurations.
The model is parameterized by $\beta>0$, known as the (inverse) temperature.
For a configuration $\sigma\in\Omega$ let $m(\sigma)$ be the
number of edges in $E$ that are monochromatic under $\sigma$ and
let its weight be $w(\sigma) = \exp(\beta m(\sigma))$.
Then the Gibbs distribution $\mu$ is defined as follows: for $\sigma\in\Omega$, $\mu(\sigma)=w(\sigma)/Z(\beta)$,
where $Z(\beta)= \sum_{\sigma\in\Omega} w(\sigma)$ is the normalizing constant, known as the
partition function.

A useful feature for studying the ferromagnetic Potts model is its alternative formulation known
as the random-cluster model.  Here configurations are subsets of edges and the
weight of such a configuration $S\subseteq E$ is
\[
w(S) = p^{|S|}(1-p)^{|E\setminus S|}q^{k(S)},
\]
where $p=1-\exp(-\beta)$ and $k(S)$ is the number of connected components in the graph $G'=(V,S)$
(isolated vertices do count).  The corresponding partition function $\Zrc = \sum_{S\subseteq E} w(S)$
satisfies $\Zrc = (1-p)^{|E|} Z$.

The focus of this paper is the Curie-Weiss model which in computer
science terminology is the $n$-vertex complete graph $G=(V,E)$.  The interest in this model
is that it allows more detailed results and these results are believed to extend to
other graphs of particular interest such as random regular graphs.
For convenience we parameterize the model in terms of a constant $B>0$ such that 
 the Gibbs distribution is as follows:
\begin{equation}\label{eq:Gibbs}
\mu(\sigma)=\frac{1}{Z(\beta)} (1-B/n)^{-m(\sigma)}.
\end{equation}
(Note that $\beta=-\ln(1-B/n)\sim B/n$ for large $n$.)  The following critical points $\Bu<\Bp<\Bh$ for the parameter $B$ are well-studied\footnote{$\Bp$ is $\beta_c$ in \cite[Equation (3.1)]{HET} and
$\Bu$ is equivalent to $\beta_s$ in \cite[Equation (1.1)]{CDLLPS}
under the parametrization $z=B (qx-1)/(q-1)$.  We
follow the convention of counting monochromatic edges~\cite{HET} as opposed to counting monochromatic
pairs of vertices~\cite{CDLLPS}; hence our thresholds are larger than those in~\cite{CDLLPS} by a factor
of~$2$.}    and relevant to our study of the Potts model on the complete graph:
\begin{gather}
\label{ehrehr}
\Bu =  \sup\Big\{ B\geq 0\,\Big|\, \frac{B-z}{B+(q-1)z} \neq {\mathrm e}^{-z} \  \mbox{for all}\ z>0\Big\}
 = \min_{z\geq 0}\left\{ z+\frac{qz}{{\mathrm e}^{z}-1}\right\}, \\
\label{eq:BpBh}
\Bp =  \frac{2(q-1)\ln(q-1)}{q-2},\qquad \Bh=q.
\end{gather}
These thresholds correspond to the critical points for the infinite $\Delta$-regular tree $\TreeD$ and random $\Delta$-regular
graphs by taking appropriate limits as $\Delta\ra\infty$.
More specifically, if $B(\Delta)$ is a
threshold on $\TreeD$ or the random $\Delta$-regular graph then
$\lim_{\Delta\rightarrow\infty} \Delta( B(\Delta)-1 )$ is the
corresponding threshold in the Curie-Weiss model.
In this perspective, $\Bu$ corresponds to the uniqueness/non-uniqueness threshold on $\TreeD$;
$\Bp$ corresponds to the ordered/disordered phase transition;
and $\Bh$ was conjectured by H\"{a}ggstr\"{o}m to correspond to a second uniqueness/non-uniqueness
threshold for the random-cluster model on $\TreeD$ with periodic boundaries (in particular, he
conjectured that non-uniqueness holds iff $B\in (\Bu,\Bh)$).  For a detailed exposition of
these critical points we refer the reader to \cite{CDLLPS} (see also \cite{GSVY} for their relevance for random regular graphs). We should finally remark that in the case of the Ising model ($q=2$), the three points $\Bu,\Bp,\Bh$ coincide.

The Glauber dynamics is a classical tool for studying the Gibbs distribution.  This is the class
of Markov chains with ``local'' transitions that update the configuration at a randomly chosen vertex and
which are designed so that the stationary distribution is the Gibbs distribution.
The limitation of local Markov chains, such as the Glauber dynamics,
is that they are typically slow to converge at low temperatures
(large $B$).   The Swendsen-Wang algorithm is a more sophisticated Markov chain
that  utilizes the random cluster representation of
the Potts model to potentially overcome bottlenecks that obstruct the simpler Glauber dynamics.

Specifically, the Swendsen-Wang algorithm is a Markov chain $(X_t)$ whose transitions $X_t\rightarrow X_{t+1}$
are as follows.  From a configuration $X_t\in \Omega$:
\begin{itemize}
\item Let $M$ be the set of monochromatic edges in $X_t$.
\item \emph{Percolation step:} for each edge $e\in M$, keep it independently with
probability $B/n$. 
Let $M'$ denote the set of the remaining monochromatic edges.
\item \emph{Coloring step:} in the graph $(V,M')$, independently for each connected component, choose a color uniformly at random from $[q]$
and assign to all vertices in that component the chosen color.
Let $X_{t+1}$ denote the resulting spin configuration.
\end{itemize}
It  is a standard fact that the chain is reversible with respect to the Gibbs distribution $\mu$ (and thus converges to it). We will be interested in the mixing time $\Tmix$ of the chain, which 
is defined as the number of steps from the worst initial state to get within total variation distance $1/4$ of the distribution $\mu$.

For the Swendsen-Wang algorithm for the ferromagnetic Ising model ($q=2$),
Cooper et al. \cite{CDFR} showed for the complete graph with $n$ vertices that  the mixing time satisfies $\Tmix=n^{1/2+o(1)}$ for all temperatures except for $\beta=\beta_c$, where $\beta_c$ is the uniqueness/non-uniqueness threshold.
Long et al. \cite{LNNP} showed more refined results for the complete graph establishing that the mixing time is
$\Theta(1)$ for $\beta<\beta_c$, $\Theta(n^{1/4})$ for $\beta=\beta_c$, and $\Theta(\log{n})$
for $\beta>\beta_c$. For square boxes of $\mathbb{Z}^2$, Ullrich \cite{Ullrich1,Ullrich2} proved that the mixing time of Swendsen-Wang is polynomial for all temperatures (building upon results for the Glauber dynamics  by Martinelli and Olivieri \cite{MOI,MOII}  and Lubetzky and Sly \cite{LubetzkySly}). Very recently, Guo and Jerrum \cite{GuoJerrum} showed that the mixing time of Swendsen-Wang is polynomial for any graph $G$ for all temperatures.

For the Swendsen-Wang algorithm
for the ferromagnetic Potts model ($q\geq 3$), it has been demonstrated that the mixing time can be of order $\exp(n^{\Omega(1)})$ at the ordered/disordered phase transition point (\emph{phase coexistence}). In particular, Gore and Jerrum \cite{GJ} showed for the complete graph that the mixing time is $\exp(\Omega(\sqrt{n}))$
at the critical point $B=\Bp$. Similar slow mixing results have been established for other classes of graphs at the analogous critical point: Cooper and Frieze \cite{CF} showed this for $G(n,p)$ when $p=\Omega(n^{-1/3})$, Galanis et al. \cite{GSVY} for random $\Delta$-regular graphs when $q\geq 2\Delta/\log \Delta$,
and Borgs et al. \cite{BCFKVV,BCT} for the $d$-dimensional integer lattice for $q\geq 25$. For square boxes of $\mathbb{Z}^2$, Ullrich \cite{Ullrich1,Ullrich2} proves polynomial mixing time at all temperatures except criticality building upon the results of Beffara and Duminil-Copin \cite{Beffara}. On the torus $(\mathbb{Z}/n\mathbb{Z})^2$, Gheissari and Lubetzky \cite{GL} recently showed the following bounds on the mixing time at criticality: polynomial upper bound for $q=3$, quasi-polynomial upper bound for $q=4$ and  exponential lower bound for $q>4$. 

In this paper, we study the mixing time of the Swendsen-Wang dynamics for the ferromagnetic Potts model on the complete graph. Previously, Cuff et al. \cite{CDLLPS} had detailed the mixing time of the Glauber dynamics  for the ferromagnetic Potts model on the complete graph (their results are significantly more precise than what we state here for convenience):
$\Theta(n\log{n})$ for $B<\Bu$, $\exp(\Omega(n))$ for $B>\Bu$,
and $\Theta(n^{4/3})$ mixing time for $B=\Bu$ (and a scaling window of $O(n^{-2/3})$ around $\Bu$).

Our main result is a complete classification of the mixing time of the Swendsen-Wang dynamics on the complete graph when the parameter $B$ is a constant independent of $n$.
\begin{theorem}\label{thm:main}
For all integer $q\geq 3$, the mixing time $\Tmix$ of the Swendsen-Wang algorithm on the $n$-vertex complete graph satisfies:
\begin{enumerate}
\item \label{thm:fast-uniq}
For all $B<\Bu$, $\Tmix = \Theta(1)$.
\item \label{thm:critical} For $B=\Bu$, $\Tmix=\Theta(n^{1/3})$.
\item \label{thm:slow} For all $\Bu< B<\Bh$, $\Tmix = \exp(n^{\Omega(1)})$.
\item \label{thm:fast-nonuniq}
For all $B\geq\Bh$, $\Tmix = \Theta(\log{n})$.
\end{enumerate}
\end{theorem}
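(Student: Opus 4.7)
The natural strategy is to project the SW chain onto the sorted vector of color-class proportions $\alpha=(\alpha_1,\dots,\alpha_q)$ with $\alpha_1\geq\cdots\geq\alpha_q$, since on the complete graph the distribution of the next configuration depends on $X_t$ only through $\alpha(X_t)$. Conditional on $\alpha$, the percolation step within color class $i$ is precisely an Erd\H{o}s--R\'enyi graph $G(n\alpha_i,B/n)$: it is supercritical iff $B\alpha_i>1$, producing a giant of relative size $\theta(B\alpha_i)$ (the survival probability of a Poisson$(B\alpha_i)$ branching process), and otherwise breaking into components of size $O(\log n)$. The recoloring step then sends the giant of each class and each small component independently to a uniformly random color in $[q]$. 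This yields an explicit one-step distribution for $\alpha'$: its mean is a deterministic map $\Lambda(\alpha)$ whose fixed points drive the phase diagram, and its fluctuations consist of a $O(1)$ number of Bernoulli choices for any giants together with multinomial $O(1/\sqrt{n})$ noise from the small components.

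Cases 1 and 4 follow from routine contraction arguments on $\alpha$. For $B<\Bu$ the only attractor of $\Lambda$ is the symmetric point $\alpha^\star=(1/q,\dots,1/q)$, and its linearization is a strict contraction, so $\E[\|\alpha'-\alpha^\star\|\mid\alpha]\leq c\|\alpha-\alpha^\star\|+O(1/\sqrt{n})$ with $c<1$. Since the Gibbs measure itself concentrates on the $O(1/\sqrt{n})$-neighborhood of $\alpha^\star$, a single SW step already puts us on the correct scale, and a short coupling inside this neighborhood closes the $O(1)$ upper bound (the lower bound is trivial). For $B\geq\Bh$ the analogous attractor is an ordered fixed point (one dominant color); a linearized contraction from a worst-case $\Theta(1)$ displacement takes $\Theta(\log n)$ steps to reach the $O(1/\sqrt{n})$ equilibrium scale, matching the diffusive lower bound. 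Case 3 ($\Bu<B<\Bh$) is a bottleneck argument: both the symmetric and the ordered fixed points are attractors of $\Lambda$, separated by a saddle in the large-deviation rate function of the Gibbs measure; cutting the space at the saddle's level set isolates a region of stationary mass $\exp(-\Omega(n))$ whose removal separates the two wells, and conductance delivers $\Tmix=\exp(n^{\Omega(1)})$, in the spirit of \cite{GJ}.

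The heart of the proof is Case 2, at $B=\Bu$, where $\alpha^\star$ is only neutrally stable: the Jacobian of $\Lambda$ at $\alpha^\star$ has an eigenvalue equal to $1$ along the symmetry-breaking direction (note that $\Bu/q<1$, so all color classes remain subcritical near $\alpha^\star$ and no giant ever appears there). Consequently, writing $y:=\alpha^{\max}-1/q\geq 0$, the drift back to $\alpha^\star$ is only cubic in $y$ while the per-step variance is of order $1/n$. An explicit Erd\H{o}s--R\'enyi / branching-process computation yields, to leading order,
\[
\E[y'-y \mid y] = -c_1\, y^3 + O(y^4) + O(1/n), \qquad \mathrm{Var}(y' \mid y) = \Theta(1/n),
\]
for some constant $c_1>0$. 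Balancing drift against variance identifies the critical scale $y\asymp n^{-1/3}$ and predicts $\Tmix\asymp n^{1/3}$, but turning this prediction into a proof is the main obstacle. As the abstract anticipates, no single polynomial potential works uniformly: close to $\alpha^\star$ the variance dominates, so a quadratic potential drifts in the wrong direction, while far from $\alpha^\star$ the cubic drift $-y^3$ must be converted into a useful negative drift on the potential. The plan is to construct a tailored $\Phi(y)$ that is nearly flat on $\{y\leq n^{-1/3}\}$ (so variance-induced changes of $\Phi$ are $O(1/n)$) yet grows fast enough on $\{y\gg n^{-1/3}\}$ that $\E[\Phi(y')-\Phi(y)\mid y]\leq -\Omega(n^{-1/3})$; an optional-stopping / Hoeffding-type martingale argument then caps the hitting time of the $n^{-1/3}$-ball at $O(n^{1/3})$, after which coupling inside this ball yields stationarity in $O(1)$ further steps. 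The matching $\Omega(n^{1/3})$ lower bound follows by comparing the evolution of $y$ from $\alpha^\star$ to a rescaled random walk: with variance $\Theta(1/n)$ per step and only cubic drift, building up the $\Theta(1)$ fluctuations needed to cover a constant fraction of the equilibrium mass takes $\Omega(n^{1/3})$ steps.
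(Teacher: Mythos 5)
Your reduction to the evolution of the color-class proportions and their mean map (the paper's $F$) is the right frame, and your sketches for Cases 1, 3, 4 are broadly workable, but the heart of your proposal --- Case 2 --- rests on a misidentification of where the critical slowdown occurs. At $B=\Bu$ with $q\geq 3$ the uniform point is \emph{not} neutrally stable: since $\Bu<q=\Bh$ every class near $\u$ is strictly subcritical, $F$ is constant equal to $1/q$ on the whole interval $[1/q,1/B]$, and a single SW step from any near-uniform configuration snaps all proportions back to $1/q$ up to $O(n^{-1/2})$ fluctuations; there is no eigenvalue-one direction and no weak (cubic) drift there. The marginal object is instead the \emph{ordered} fixpoint $a>1/q$ (the majority phase): there $aB>1$, so the dominant class is supercritical and carries a giant component, and one has $F'(a)=1$ with $F''(a)<0$, so the drift of the largest-class density is one-sided and \emph{quadratic}, $F(z)-z\approx -c(z-a)^2$, while the per-step fluctuation of the largest class has variance $\Theta(n)$ coming from the giant component. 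Balancing quadratic drift against this variance gives the window $|S_t-an|\lesssim n^{2/3}$ and an escape time $\Theta(n^{1/3})$; your own ansatz (cubic drift, per-step variance $\Theta(1/n)$ at the uniform point) does not even yield the exponent $1/3$ under the standard drift--diffusion balance --- it is essentially the $q=2$ picture of Long et al., which does not apply for $q\geq 3$.

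As a consequence, the two concrete devices you propose target the wrong region. The potential function must be engineered around $a$, not around $\u$: flat/concave inside the critical window around $a$ (so the $\Theta(n)$ variance of the giant produces negative drift of the potential) and with derivative comparable to $1/(z-F(z))$ outside it, which is what the paper's construction does; a function that is ``nearly flat on $\{y\le n^{-1/3}\}$'' near $\u$ controls a region where the chain in fact equilibrates in $O(1)$ steps. Your lower-bound sketch fails for a separate reason: starting at $\alpha^\star$ and waiting for fluctuations to build up cannot lower-bound $\Tmix$, because at $B=\Bu$ the stationary measure already concentrates near $\u$; the correct lower bound starts from the fully ordered configuration and shows, via the reverse potential inequality, that $\Omega(n^{1/3})$ steps are needed to escape the neighborhood of the ordered fixpoint $a$. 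Two smaller gaps: for Case 3 a conductance cut at the saddle is not sufficient by itself, since SW makes global moves --- you still need the one-step non-escape estimates from the balls around $\u$ and $\m$ (which is essentially the paper's whole argument there); and at $B=\Bh$ the uniform fixpoint of $F$ is Jacobian repulsive rather than flat, so escaping near-uniform starts requires a dedicated argument (an initial $\Omega(n^{-1/3})$ displacement amplified geometrically over $O(\log n)$ steps), not the ``routine contraction'' you invoke for all $B\geq\Bh$.
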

In an independent work, Blanca and Sinclair \cite{BS} analyze a closely related chain to the Swendsen-Wang dynamics, known as the Chayes-Machta dynamics, which is also suitable for sampling random cluster configurations (works more generally for $q\geq 1$ with $q\in \mathbb{R}$). They provide an analogue of Theorem~\ref{thm:main}, though their analysis excludes the critical points $B=\Bu$ and $B=\Bh$. Very recently, Gheissari, Lubetzky, and Peres \cite{GLP} improved the lower bound on the mixing time in the window $\Bu< B<\Bh$ to $\exp(\Omega(n))$, both for the Swendsen-Wang and the Chayes-Machta dynamics.

In the following section, we give an overview of our proof approach. First, we discuss the critical points $\Bu,\Bp,\Bh$ in more detail. Then, we
present a function $F$ which captures a simplified view of the Swendsen-Wang
dynamics, and then we connect the behavior of $F$
with the critical points.  We also present in Section \ref{sec:proof-overview}
a high-level sketch of the proof of Theorem~\ref{thm:main}. In Section~\ref{sec:randomgraph}, we collect facts for the $G(n,c/n)$ random graph which will be relevant for analyzing one step of the Swendsen-Wang algorithm. 
In Section \ref{sepp}, we prove the slow mixing result (Part \ref{thm:slow}
of Theorem \ref{thm:main}).  We then prove the rapid mixing results for
$B>\Bh$ in Section \ref{sec:fast}, for $B=\Bh$  in Section
\ref{sec:fast-Bh}, for $B<\Bu$ in Section \ref{sec:fast-uniqueness}, and for $B=\Bu$ in Section \ref{sec:fast-Bu}.

\section{Proof Approach} \label{sec:proof-overview}

\subsection{Critical Points for Phase Transitions}\label{sec:crphasetr}
In this section, we review the thresholds $\Bu,\Bp,\Bh$ for the mean-field Potts model and their connections to the critical points of the partition function which will be relevant later. The reader is referred to \cite{BGJ, HET} for further details (\cite{BGJ} also applies to the random-cluster model).  

We first need to introduce some notation for the complete graph $G=(V,E)$ with $n$ vertices. For a configuration $\sigma:V\rightarrow[q]$ and a color $i\in [q]$, let $\alpha_i(\sigma)$ be the fraction of vertices with color $i$ in $\sigma$, i.e., $\alpha_i(\sigma) = |\{v\in V: \sigma(v) = i\}|/n$. We denote by $\alphab(\sigma)$ the vector $(\alpha_1(\sigma),\dots,\alpha_q(\sigma))$,
and refer to it as the  {\em phase} of  $\sigma$.  There are $q+1$ phases that are most relevant,  the {\em uniform} phase $\u:=(1/q,\dots,1/q)$
and the $q$ permutations of the {\em majority} phase $\m:=(a,b,\dots,b)$, for some appropriate $a>1/q$ and $b$ given by $a+(q-1)b=1$. Roughly, these phases correspond to the configurations that have dominant contribution to the partition function. 

More precisely, for a $q$-dimensional probability vector $\alphab$,  let $\Omega^{\alphab}$ be  the set of configurations $\sigma$ whose phase is  $\alphab$.\footnote{Technically, for integrality reasons, $\Omega^{\alphab}$ are the  configurations $\sigma$ whose phase  is within $O(1/n)$ from $\alphab$. This does not have any effect in the subsequent asymptotic considerations.}   Let
\begin{equation*} Z^{\alphab} = \sum_{\sigma\in\Omega^{\alphab}} w(\sigma)\mbox{ and } \Psi(\alphab) := \lim_{n\ra\infty} \frac{1}{n}\ln Z^{\alphab}.
\end{equation*}
To simplify the formulas, it turns out that it is enough to consider the following one-dimensional version of $\Psi$ corresponding to configurations where one color has density $\alpha$ and the remaining colors have density $\beta$ where $\alpha+(q-1)\beta=1$. Namely, let
\[\Psi_1 (\alpha):=\Psi(\alpha,\beta,\dots,\beta)=\Psi\Big(\alpha,\frac{1-\alpha}{q-1},\dots,\frac{1-\alpha}{q-1}\Big).\]
It is not hard to see that $Z^{\alphab}$ is given by $\binom{n}{\alpha_1n,\hdots,\alpha_q n}(1-B/n)^{-\sum_{i\in[q]}\binom{\alpha_i n}{2}}$, so using Stirling's approximation we obtain the explicit expression
\begin{equation}\label{hrz1}
\Psi_1 (\alpha)= - \alpha\ln \alpha - (1-\alpha)\ln\frac{1-\alpha}{q-1}+\frac{B}{2}\Big(\alpha^2 + \frac{(1-\alpha)^2}{q-1}\Big).
\end{equation}

With these definitions, we next relate the thresholds $\Bu,\Bp,\Bh$ to the critical points/local maxima of $\Psi_1$. Depending on the value of $B$, there are two points that are relevant, $u=1/q$ and $a>1/q$, where $a$
is a critical point of $\Psi_1$ and hence satisfies\footnote{Such a critical point $a>1/q$ exists when $B\geq \Bu$ (see Lemma~\ref{lem:Ncritical}). In the regime $\Bu<B<\Bh$ there are two critical points of $\Psi_1(a)$ with value $a>1/q$; the relevant value of $a$ is then given by the point where $\Psi_1(\alpha)$ has a local maximum, see Lemma~\ref{lem:psi1local} for details and Figure~\ref{fig:psi} for a depiction.} 
\begin{equation}\label{hrz2}
\ln\frac{(q-1)a}{1-a} = B \frac{qa-1}{q-1}.
\end{equation}
The following folklore lemma illustrates the relevant thresholds,  see also Figure~\ref{fig:psi}. For completeness, we give the proof in Section~\ref{sec:locmaxpsi}.
\begin{lemma}\label{lem:psi1local}
Let $q\geq 3$. For the function $\Psi_1$,
\begin{enumerate}
\item
For $B<\Bu$, $\Psi_1$ has a unique local maximum, at $u=1/q$, and there are no other critical points of $\Psi_1$. 
\item For $B=\Bu$, $\Psi_1$ has two critical points, at $u=1/q$ and $a>1/q$ (satisfying \eqref{hrz2}). Of these, $u=1/q$ is the only local maximum of $\Psi_1$.
\item For $\Bu<B<\Bh$, $\Psi_1$ has two local maxima, at $u=1/q$ and $a>1/q$ \mbox{(satisfying \eqref{hrz2}). Further,}
\begin{itemize}
\item[--] For $B\in(\Bu,\Bp)$, $u$ is the only global maximum of $\Psi_1$.
\item[--] For $B=\Bp$, $u$ and $a$ are the global maxima of $\Psi_1$.
\item[--] For $B\in(\Bp,\Bh)$, $a$ is the only global maximum of $\Psi_1$.
\end{itemize}
\item For $B \geq \Bh$, $\Psi_1$ has one local maximum in the interval $[1/q,1]$, at a point  $a>1/q$ (satisfying \eqref{hrz2}).
\end{enumerate}
\end{lemma}

While we will not need the following fact explicitly in our arguments, we remark for the sake of completeness that the local maxima of the multivariable function $\Psi$ correspond to  the local maxima of the function $\Psi_1$ as follows. The phases where $\Psi$ can have a local maximum is  the uniform phase $\u=(1/q,\dots,1/q)$
and the $q$ permutations of the majority phase $\m=(a,b,\dots,b)$, where $a>1/q$ is as in Lemma~\ref{lem:psi1local} and $b$ is given by $a+(q-1)b=1$. More precisely, $\u$ is a local maximum of $\Psi$  iff $u=1/q$ is a local maximum of $\Psi_1$, the majority phase $\m$ is a local maximum of $\Psi$  iff $a$ is a local maximum of $\Psi_1$, and there are no other local maxima of $\Psi$. Both $\u$ and $\m$ are global maxima of $\Psi$ only at the point $B=\Bp$.
\begin{figure}[t]
\begin{subfigure}{0.33\textwidth}
  \centering
  \scalebox{0.55}[0.55]{\includegraphics[viewport=0 0 260 190,clip]{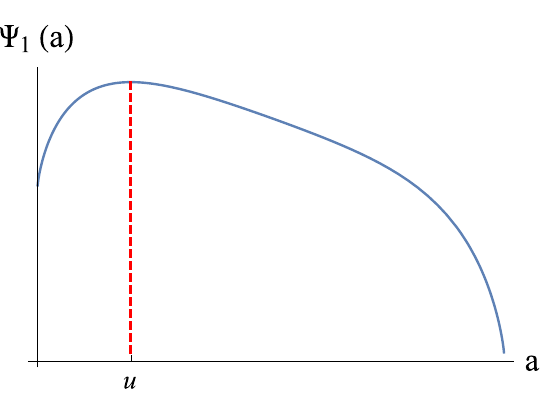}}
  \caption{$B<\Bu$}
  \label{fig:psiuniqueness}
\end{subfigure}%
\begin{subfigure}{0.33\textwidth}
  \centering
  \scalebox{0.55}[0.55]{\includegraphics[viewport=0 0 260 190,clip]{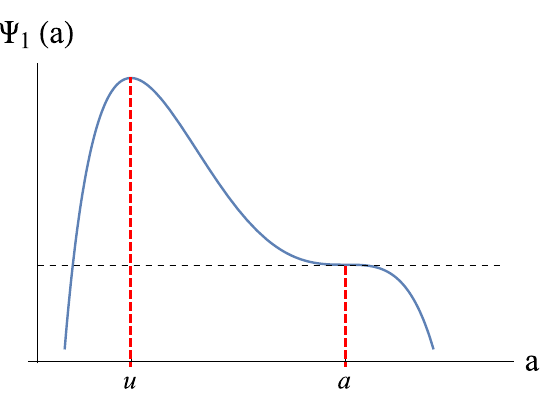}}
  \caption{$B=\Bu$}
  \label{fig:PsiBu}
\end{subfigure}
\begin{subfigure}{0.33\textwidth}
  \centering
  \scalebox{0.55}[0.55]{\includegraphics[viewport=0 0 260 190,clip]{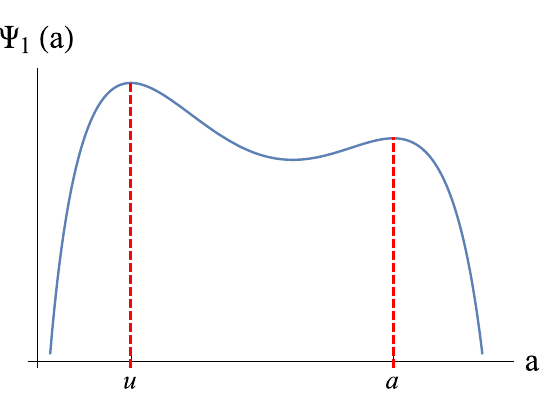}}
  \caption{$\Bu<B<\Bp$}
  \label{fig:PsiBlessBo}
\end{subfigure}
\begin{subfigure}{0.33\textwidth}
  \centering
  \scalebox{0.55}[0.55]{\includegraphics[viewport=0 0 260 190,clip]{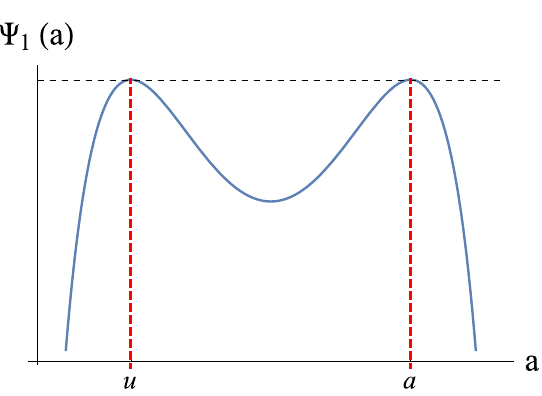}}
  \caption{$B=\Bp$}
  \label{fig:PsiBo}
\end{subfigure}
\begin{subfigure}{0.33\textwidth}
  \centering
  \scalebox{0.55}[0.55]{\includegraphics[viewport=0 0 260 190,clip]{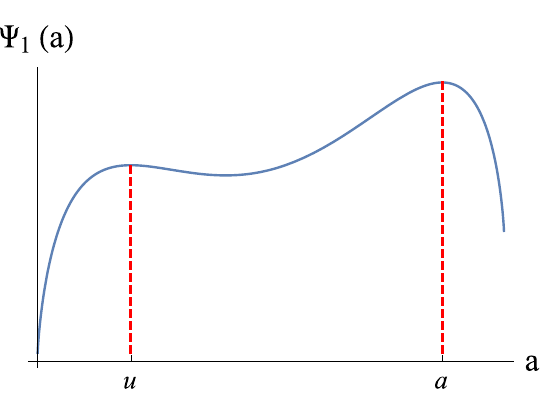}}
  \caption{$\Bp<B<\Bh$}
  \label{fig:PsigreaterBo}
\end{subfigure}
\begin{subfigure}{0.33\textwidth}
  \centering
  \scalebox{0.55}[0.55]{\includegraphics[viewport=0 0 260 190,clip]{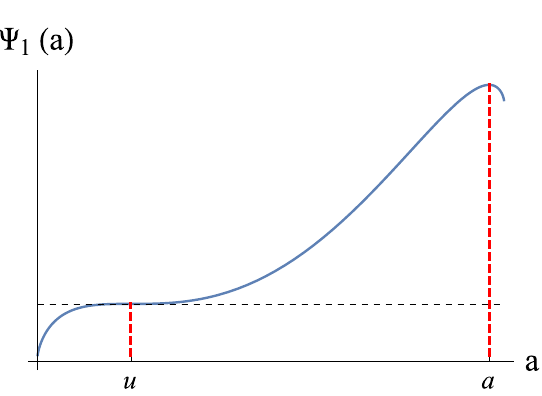}}
  \caption{$B= \Bh$}
  \label{fig:PsiBrc}
\end{subfigure}
\caption[.]{The function $\Psi_1$ (free energy) plotted in different regimes of $B$ (defined in \eqref{hrz1}). The critical points $\Bu,\Bp,\Bh$ are given by \eqref{ehrehr} and \eqref{eq:BpBh}. In the regime $B<\Bu$ (figure \ref{fig:psiuniqueness}), the function $\Psi_1$ has a unique local maximum at the disordered phase. At $B=\Bu$ (figure \ref{fig:PsiBu}), the function $\Psi_1$ has a saddle point at the ordered phase. In the regime $\Bu<B<\Bh$ (figures \ref{fig:PsiBlessBo}, \ref{fig:PsiBo} and \ref{fig:PsigreaterBo}) the function $\Psi_1$ has two local maxima; these  are both global maxima iff $B=\Bp$. In the regime $B\geq \Bh$ (figure \ref{fig:PsiBrc}), the function $\Psi_1$ has a unique local maximum in the interval $[1/q,1]$.
}
\label{fig:psi}
\end{figure}

\subsection{Connections to Simplified Swendsen-Wang}\label{sec:defF}
The following function\footnote{The argument of $F$ will typically be the density of the largest color class --- we could have extended the domain of the function $F$ to be the interval $[0,1]$ by further defining the value of $F$ in the interval $[0,1/B)$ to be $1/q$.} from $[1/q,1]$ to $[0,1]$ will capture the behavior of the Swendsen-Wang algorithm. Namely, let
\begin{equation}\label{defa}
F(z) := \frac{1}{q} + \left(1 - \frac{1}{q} \right) z x,
\end{equation}
where $x=0$ for $z\leq 1/B$ and for $z> 1/B$, $x\in (0,1]$ is the (unique) solution of
\begin{equation}\label{dexa}
x + \exp(-zB x) = 1.
\end{equation}
The function $F$ captures the size of the largest color class when there is a single heavy color where
heavy means that the color class is supercritical in the percolation step of the Swendsen-Wang process.
Hence after the percolation step this heavy color will have a giant component and the other color classes will
all be broken into small components.  So say initially the one heavy color has size $zn$ for $1/B<z<1$
and let's consider its size after one step of the Swendsen-Wang dynamics.  After the percolation step,
this heavy color will have a giant component of size roughly  $xzn$ (where $x$ is as in  \eqref{dexa})
and all other components will be of size $O(\log{n})$.  Then, a $1/q$ fraction of the small components
will be recolored the same as the giant component, and hence the size of the largest color class will be (roughly) $nF(z)$
after this one step of the Swendsen-Wang dynamics.

Our next goal is to tie together the functions $F$ and $\Psi_1$ so that we
can relate the behavior of the Swendsen-Wang dynamics with the underlying phase transitions of the model.
We first need some terminology.  A critical point $a$ of a function $f:{\mathbb R}\rightarrow{\mathbb R}$ is a
{\em hessian maximum} if the second derivative of $f$ at $a$ is negative (this is a sufficient condition for $a$ to be a local maximum). For an integer $n\geq 1$, we will denote by $f^{(n)}$ the $n$-th iterate of the function $f$. A fixpoint $a$ of $f$ is {\em attractive} if there exists $\delta>0$ such that for all $x\in(a-\delta,a+\delta)$ it holds that $f^{(n)}(x)\rightarrow a$; it is {\em repulsive} otherwise. The fixpoint $a$ is {\em jacobian attractive} if $|F'(a)|<1$; this is a sufficient condition for $a$ to be attractive. The fixpoint $a$ is {\em jacobian repulsive} if $|F'(a)|>1$; this is a sufficient condition for $a$ to be repulsive. 

Our first lemma connects the local maxima of $\Psi_1$ with the attractive fixpoints of $F$ (we restrict our attention to the interval $[1/q,1]$ since the function $F$ will be considered only in this interval).
\begin{lemma}\label{conn}
In the interval $[1/q,1]$, the hessian maxima of $\Psi_1$ correspond
to jacobian attractive fixpoints of $F$.
\end{lemma}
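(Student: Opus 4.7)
The plan is to reduce the lemma to two algebraic claims: (i) a bijection between fixpoints of $F$ in $(1/B,1]$ and critical points of $\Psi_1$ in $(1/q,1)$, and (ii) under this bijection, the equivalence $|F'(a)|<1 \Leftrightarrow \Psi_1''(a)<0$ at corresponding points.

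For (i), I take the fixpoint equation $F(a)=a$ with $a>1/B$ and rearrange $a=\tfrac{1}{q}+\tfrac{q-1}{q}\,ax$ to obtain $ax=(qa-1)/(q-1)$, so $x=(qa-1)/[a(q-1)]$ and $1-x=(1-a)/[a(q-1)]$. Substituting into the defining relation $1-x=e^{-aBx}$ from \eqref{dexa} and taking logarithms yields
\[
\ln\frac{(q-1)a}{1-a} \;=\; B\Bigl(a-\tfrac{1-a}{q-1}\Bigr),
\]
which is exactly the critical-point equation \eqref{hrz2} for $\Psi_1$. Running these steps backwards shows that any critical point $a\in(1/q,1)$ of $\Psi_1$ produces a valid fixpoint of $F$ via $x=(qa-1)/[a(q-1)]$, so the correspondence is a bijection.

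For (ii), I first compute directly
\[
\Psi_1''(a) \;=\; -\frac{1}{a(1-a)} + \frac{qB}{q-1},
\]
so $\Psi_1''(a)<0$ is equivalent to $qBa(1-a)/(q-1)<1$. On the $F$ side, implicit differentiation of $x+e^{-zBx}=1$ (using $e^{-zBx}=1-x$) gives $x'(z)=Bx(1-x)/(1-zB(1-x))$, and hence
\[
F'(z) \;=\; \tfrac{q-1}{q}\bigl(x(z)+z\,x'(z)\bigr) \;=\; \frac{(q-1)\,x(z)}{q\bigl(1-zB(1-x(z))\bigr)}.
\]
At the fixpoint I substitute $x=(qa-1)/[a(q-1)]$ and $aB(1-x)=B(1-a)/(q-1)$ to obtain
\[
F'(a) \;=\; \frac{(q-1)(qa-1)}{aq\bigl(q-1-B(1-a)\bigr)}.
\]
Clearing denominators, $F'(a)<1$ reduces to $-(q-1)<-aqB(1-a)$, i.e.\ $qBa(1-a)/(q-1)<1$, matching the Hessian condition exactly. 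Hence $F'(a)<1 \Leftrightarrow \Psi_1''(a)<0$, and composed with the bijection from (i) this yields the lemma.

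The main obstacle is the sign bookkeeping needed to upgrade ``$F'(a)<1$'' to ``$|F'(a)|<1$''. At the stable majority-phase fixpoint one expects $1-aB(1-x)=(q-1-B(1-a))/(q-1)>0$, so $F'(a)$ is positive and the absolute value is automatic; at any intermediate (repulsive) fixpoints one must separately argue that the same algebra still places $|F'(a)|$ on the correct side of $1$, using \eqref{hrz2} to control the sign of the denominator $q-1-B(1-a)$. Beyond this routine case analysis, the lemma is essentially the two-line substitution carried out above.
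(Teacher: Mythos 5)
Your computation reproduces the paper's identity exactly — your closed form for $F'(a)$ at a fixpoint is the same as \eqref{zzzz2} — so the route is the paper's route; the problem is that the one step you set aside as ``routine sign bookkeeping'' is precisely the step the paper must (and does) prove, and the argument is incomplete without it. Clearing denominators in $F'(a)<1$ is only legitimate when $q-1-B(1-a)>0$: if that quantity were negative at some critical point, the inequality would flip and the claimed equivalence $F'(a)<1\Leftrightarrow\Psi_1''(a)<0$ would fail there. So the sign of the denominator is not merely needed to upgrade $F'(a)<1$ to $|F'(a)|<1$; it is needed for the equivalence itself, at every critical point $a>1/q$, not only at the ``stable majority-phase'' one. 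The paper supplies this as Lemma~\ref{lehehe}: parametrizing a critical point as $a=(z+1)/(z+q)$ with $z>0$ satisfying \eqref{hrz3}, one gets $B(1-a)/(q-1)=\ln(1+z)/z<1$, i.e.\ $q-1-B(1-a)>0$ at every critical point $a>1/q$; with that in hand your formula also gives $F'(a)>0$ (numerator and denominator both positive), so $|F'(a)|<1\Leftrightarrow\Psi_1''(a)<0$ with no case analysis. You name the right tool (\eqref{hrz2}) but never carry this out, so as written the proof has a gap exactly where the content of the lemma lies.

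Second, your bijection is restricted to fixpoints in $(1/B,1]$ and critical points in $(1/q,1)$, which silently drops the uniform critical point $a=1/q$. For $B<\Bh$ one computes $\Psi_1''(1/q)=q(B-q)/(q-1)<0$, so $1/q$ is a hessian maximum of $\Psi_1$ and the lemma requires checking that it corresponds to a jacobian attractive fixpoint of $F$. The paper treats this case separately (via Lemma~\ref{lem:jacobianuniform}): for $B<\Bh$ we have $1/q<1/B$, the function $F$ is constant on $[1/q,1/B]$, hence $F'(1/q)=0$ and the point is jacobian attractive, while at $B=\Bh$ it is neither a hessian maximum nor attractive, and for $B>\Bh$ it is not a fixpoint of $F$ at all. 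This case is easy, but it is part of the statement and cannot be omitted.
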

Lemma~\ref{conn} is proved in Section~\ref{sec:connection}. A relevant fact we should remark here and we will prove later is that, in the half-open interval $(1/q,1]$,  the critical points of $\Psi_1$ correspond to fixpoints of $F$ (see Lemma~\ref{lem:halfopen}); this actually holds for the left endpoint $1/q$ as well but only when $B\leq \Bh$ (for $B>\Bh$, $1/q$ is a critical point of $\Psi_1$ but not a fixpoint of $F$, see Lemma~\ref{lem:jacobianuniform}).

The second lemma studies the behavior of $F$ around the fixpoints and it is the main tool for proving Theorem \ref{thm:main}. Recall the earlier discussion of the uniform vector $\u:=(1/q,\dots,1/q)$ and the $q$ permutations of the majority phase $\m:=(a,b,\dots,b)$, where $a>1/q$ is as in Lemma~\ref{lem:psi1local}.
The following lemma (proved in Section \ref{sec:fixpoints}) provides some basic intuition about the proof of Theorem \ref{thm:main}, as we shall explain shortly.  A depiction of the various regimes is given in Figure~\ref{fig:F}.
\begin{lemma}
\label{lem:fixpoints}
Let $q\geq 3$. For the function $F$,
\begin{enumerate}
\item
For $B<\Bu$, $u=1/q$ is the unique fixpoint and it is jacobian attractive.
\item For $B=\Bu$, there are 2 fixpoints: $u$ and $a$. Of these, only $u$ is (jacobian) attractive. The fixpoint $a$ is repulsive but not jacobian repulsive.
\item
For $\Bu< B < \Bh$ there are 2 jacobian attractive fixpoints: $u$ and $a$. 
\item 
For $B=\Bh$, there are 2 fixpoints: $u$ and $a$. The fixpoint $u$ is jacobian repulsive, while the fixpoint $a$ is jacobian attractive. 
\item For $B>\Bh$, $a$ is the only fixpoint and it is jacobian attractive.
\end{enumerate}
\end{lemma}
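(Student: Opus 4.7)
The plan is to derive the fixpoint equation of $F$, match its solutions to the critical points of $\Psi_1$, and then read off the jacobian attractive fixpoints from Lemma~\ref{conn}; the two borderline temperatures $B=\Bu$ and $B=\Bh$ will require short direct computations on $F$. Setting $F(z)=z$ for $z\in(1/q,1]$ forces $x=(qz-1)/((q-1)z)$ and hence $1-x=(1-z)/((q-1)z)$; substituting into the log form $-\ln(1-x)=zBx$ of \eqref{dexa} reproduces exactly the critical-point equation \eqref{hrz2} for $\Psi_1$. Separately, $z=1/q$ is a fixpoint iff $1/q\leq 1/B$, i.e., iff $B\leq\Bh$. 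Thus the fixpoints of $F$ in $(1/q,1]$ are in bijection with the critical points of $\Psi_1$ in $(1/q,1]$, and $1/q$ is an additional fixpoint precisely when $B\leq\Bh$.

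Combined with the shapes of $\Psi_1$ in Figure~\ref{fig:psi} and with Lemma~\ref{conn}, parts 1, 3, and 5 follow immediately. For $B<\Bu$ the only critical point of $\Psi_1$ in $[1/q,1]$ is the uniform $u=1/q$, which is a hessian max since $\Psi_1''(1/q)=q(B-q)/(q-1)<0$, and hence jacobian attractive. For $\Bu<B<\Bh$ both $u$ and the majority phase $a$ are hessian maxima of $\Psi_1$, giving two jacobian attractive fixpoints (an intermediate local minimum of $\Psi_1$ produces a third, repulsive fixpoint that the lemma does not claim to be attractive). For $B>\Bh$ we have $\Psi_1''(1/q)>0$, so $u$ drops out and only the majority phase $a$ remains, still a hessian max and hence jacobian attractive.

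The harder cases are the two borderlines. At $B=\Bu$, the majority phase $a$ is the point where the local max and local min of $\Psi_1$ just merge, so $\Psi_1''(a)=0$. Implicit differentiation of \eqref{dexa} at a fixpoint yields $F'(a)=\tfrac{q-1}{q}\cdot\tfrac{qa-1}{a(q-1-B(1-a))}$, and direct comparison with the explicit formula for $\Psi_1''(a)$ shows $F'(a)=1$ iff $\Psi_1''(a)=0$; hence $a$ is not jacobian repulsive. To show $a$ is repulsive, I plan to establish the sign identity $\sgn(F(z)-z)=\sgn(\Psi_1'(z))$ on $(1/q,1)$, by writing $F(z)-z=\tfrac{q-1}{q}z\,(x(z)-x_\ast(z))$ with $x_\ast(z):=(qz-1)/((q-1)z)$ and comparing to $\Psi_1'(z)=-[-\ln(1-x_\ast(z))-zBx_\ast(z)]$ via the convex function $x\mapsto -\ln(1-x)-zBx$ whose two zeros are $0$ and $x(z)$. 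Since at $B=\Bu$ the derivative $\Psi_1'\leq 0$ on both sides of $a$, iterates of $F$ starting below $a$ strictly decrease and escape toward $u$, so $a$ is repulsive in the relevant sense. At $B=\Bh$, $1/B=1/q$ so $F$ is not differentiable at $u$, but implicit differentiation of \eqref{dexa} as $z\to 1/B^+$ gives $x'(z)\to 2B$ and hence the one-sided derivative $F'(u^+)=2(q-1)/q>1$ for $q\geq 3$, making $u$ jacobian repulsive on the relevant domain $[1/q,1]$; the majority phase $a$ at $B=\Bh$ is an ordinary hessian max of $\Psi_1$ and hence jacobian attractive by Lemma~\ref{conn}. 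The main obstacle is the $B=\Bu$ case, where $|F'(a)|=1$ forces the careful sign analysis of $F(z)-z$ near $a$ via the connection to $\Psi_1'$ just described.
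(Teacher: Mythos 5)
Your reductions are sound: the identification of fixpoints of $F$ on $(1/q,1]$ with solutions of \eqref{hrz2}, the fixpoint formula $F'(a)=\tfrac{q-1}{q}\cdot\tfrac{qa-1}{a\,(q-1-B(1-a))}$ and its consequence $F'(a)=1\Leftrightarrow\Psi_1''(a)=0$ (this is the paper's \eqref{zzzz2} in disguise), the right derivative $2(q-1)/q$ of $F$ at $1/q$ when $B=\Bh$, and the sign identity $\sgn(F(z)-z)=\sgn(\Psi_1'(z))$ all check out. The sign identity is a genuinely different (and pleasant) route to the repulsiveness claim at $B=\Bu$; the paper instead deduces non-attractiveness from $F'(a)=1$ together with strict concavity of $F$ (Lemma~\ref{lem:Fshape}, Remark~\ref{rem:notattractive}).

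The genuine gap is that the two analytic facts on which parts 2--5 hinge are asserted rather than proved. To invoke Lemma~\ref{conn} at the majority fixpoint for $B>\Bu$ (parts 3--5) you need $\Psi_1''(a)<0$ strictly, i.e.\ that the local maximum is \emph{hessian}; the landscape description in Section 2 and Figure~\ref{fig:psi} only records that $a$ is a local maximum, and ruling out a degenerate maximum is exactly the content of the paper's Lemma~\ref{nononoz}, whose proof is a nontrivial two-variable inequality after the parametrization \eqref{ery}. Likewise, for part 2 you need $\Psi_1''(a)=0$ at $B=\Bu$; your justification is the heuristic that the local max and local min ``just merge,'' which is precisely what the paper proves in Lemma~\ref{lem:notJacobianuniqueness} by exploiting $\Bu=\min_{z\geq 0}\{z+qz/({\mathrm e}^z-1)\}$ and the first-order condition at the minimizer. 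Your repulsiveness argument also quietly uses that at $B=\Bu$ the point $a$ is the unique critical point in $(1/q,1)$ with $\Psi_1'\leq 0$ on both sides, which again needs the tangency/double-root analysis rather than the figure. Without substitutes for these two lemmas, the attractiveness of $a$ for $B>\Bu$ and the whole of part 2 rest on unproven assertions. The counting claims are comparatively minor: uniqueness of the fixpoint for $B<\Bu$ does follow directly from \eqref{ehrehr}, and uniqueness of the non-uniform fixpoint for $B\geq\Bh$ is the paper's Lemma~\ref{lem:exist}; these should be stated as such rather than read off the plots.
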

The reason that $u$ abruptly changes from a jacobian  attractive fixpoint ($B<\Bh$) to a jacobian repulsive fixpoint ($B=\Bh$) stems from the fact that in the regime $B<\Bh$, $F$ is constant in a small neighborhood around $1/q$ (precisely, in the interval $[1/q,1/B]$),  which is no longer the case for $B=\Bh$.
\begin{figure}[h]
\begin{subfigure}{0.33\textwidth}
  \centering
  \scalebox{0.57}[0.57]{\includegraphics[viewport=0 0 260 190,clip]{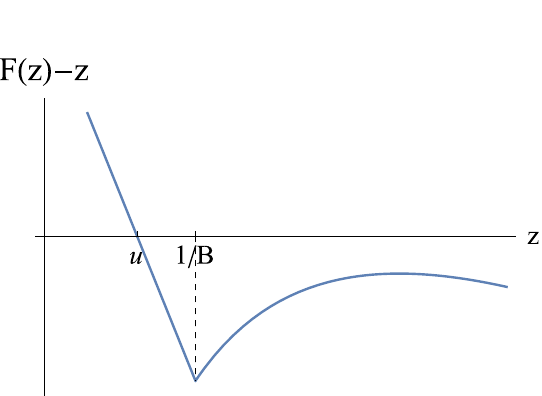}}
  \caption{$B<\Bu$}
  \label{fig:Funiqueness}
\end{subfigure}%
\begin{subfigure}{0.33\textwidth}
  \centering
  \scalebox{0.57}[0.57]{\includegraphics[viewport=0 0 260 190,clip]{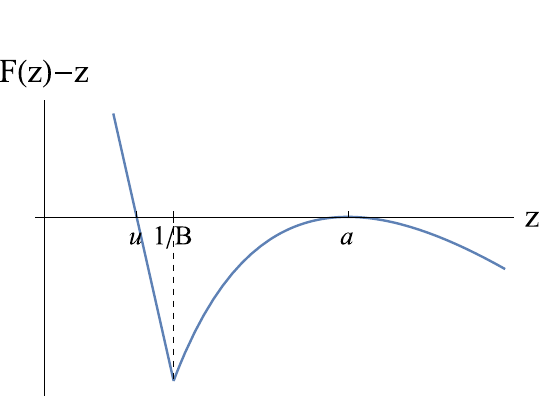}}
  \caption{$B=\Bu$}
  \label{fig:FBu}
\end{subfigure}
\begin{subfigure}{0.33\textwidth}
  \centering
  \scalebox{0.57}[0.57]{\includegraphics[viewport=0 0 260 190,clip]{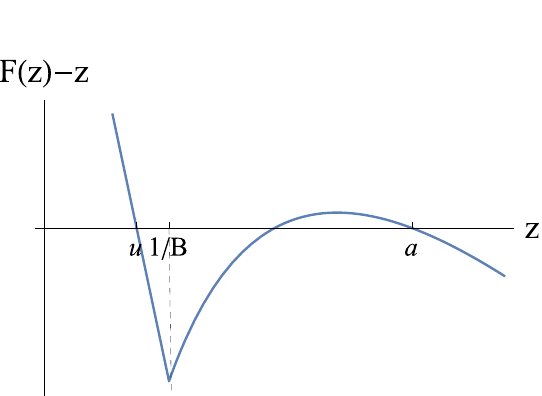}}
  \caption{$\Bu<B<\Bh$}
  \label{fig:FBlessBo}
\end{subfigure}
\vskip 0.2cm\noindent
\begin{subfigure}{0.5\textwidth}
  \centering
  \scalebox{0.57}[0.57]{\includegraphics[viewport=0 0 260 190,clip]{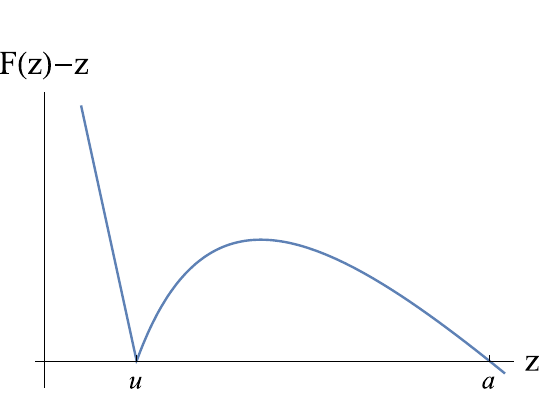}}
  \caption{$B= \Bh$}
  \label{fig:FBrc}
\end{subfigure}
\begin{subfigure}{0.5\textwidth}
  \centering
  \scalebox{0.57}[0.57]{\includegraphics[viewport=0 0 260 190,clip]{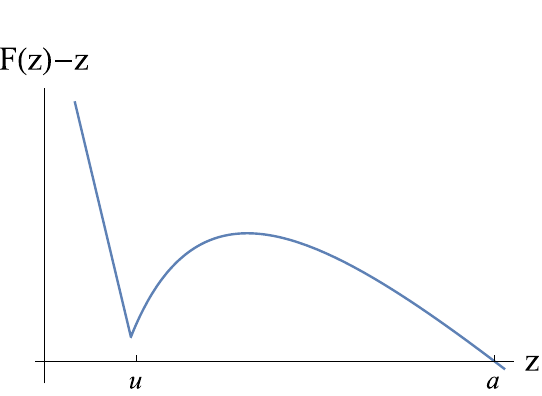}}
  \caption{$B> \Bh$}
  \label{fig:FBlargerBrc}
\end{subfigure}
\caption[.]{The drift function $F(z)-z$, where $F$ is defined by \eqref{defa}, \eqref{dexa}. The critical points $\Bu,\Bp,\Bh$ are given by \eqref{ehrehr} and \eqref{eq:BpBh}. In the regime $B<\Bu$ (figure \ref{fig:Funiqueness}), the function $F$ has a unique attractive fixpoint at the disordered phase.  At $B=\Bu$ (figure \ref{fig:FBu}), $F$ also has a (non-jacobian) \emph{repulsive} fixpoint at the ordered phase. In the regime $\Bu<B<\Bh$ (figures \ref{fig:FBlessBo}), $F$ has attractive fixpoints at the ordered and disordered phases. At $B=\Bh$ (figure \ref{fig:FBrc}), the disordered phase is no longer attractive; it is jacobian repulsive.  Finally, in the regime $B>\Bh$ (figure \ref{fig:FBlargerBrc}), the function $F$ has a unique attractive fixpoint at the ordered phase.}
\label{fig:F}
\end{figure}

\subsection{Proof Sketches}\label{sec:proofsketches}

We explain the high-level proof approach for the various parts of Theorem~\ref{thm:main}
before presenting the detailed proofs in subsequent sections.

{\bf Slow mixing for $B\in(\Bu, \Bh)$:}  The main idea
is that the function $F$ has 2 attractive fixpoints (see Lemma \ref{lem:fixpoints}).    At least one of the
corresponding phases, $\u$ or $\m$, is a global maximum for $\Psi$.  Consider the other phase, say it
is $\u$ for concreteness.
Consider the local ball around $\u$, these are configurations that are close in $\ell_\infty$ distance
from $\u$.  The key is that since $u=1/q$ is an attractive fixpoint for $F$, if the initial state is in this local ball around $\u$ 
then with very high probability after one step of the Swendsen-Wang dynamics it will still be in the local ball
(see Lemma \ref{l2}, and Lemma \ref{l3} for the analogous lemma for $\m$).
The result then follows since one needs to sample from the local ball around the phase which corresponds
to the global maximum of $\Psi$ to get close to the stationary distribution. The full argument is given in Section~\ref{sepp}.

{\bf Fast mixing for $B>\Bh$:}  For a configuration $\sigma$ and spin $i$, say the color class is heavy
if the number of vertices with spin $i$ is $>n/B$ and light if it is $<n/B$.  If a color class is heavy then
it is supercritical for the percolation step of Swendsen-Wang and hence there will be a giant component.
The key is that for any initial state $X_0$,  with constant probability, the largest components from all of the colors will choose the same new color
and consequently there will be only one heavy color class and the other $q-1$ colors will be light.
Hence we can assume there is one heavy color class and $q-1$ light color classes, and then the function $F$
suitably describes the size of the largest color class during the evolution of the Swendsen-Wang dynamics.
Since the only fixpoint of $F$ corresponds to the majority phase $\m$, after $O(\log{n})$ steps
we'll be close to $\m$ -- the difference will be due to the stochastic nature of the process.  The remaining bit of the proof is then to define a coupling for two chains $(X_t,Y_t)$ whose initial states $X_0,Y_0$
are close to $\m$ so that after $T=O(\log{n})$ steps we have that $X_T=Y_T$ (this latter part is fairly standard). The proof of the upper bound on the mixing time is given in Section~\ref{sec:fast}; the lower bound on the mixing time is proved in Section~\ref{sec:lowerboundBBh}.

{\bf Fast mixing for $B=\Bh$:}  The basic outline is similar to the $B>\Bh$ case except here
the argument is more intricate when the heaviest color lies in the scaling window (for the onset of a giant component). We need a more involved argument that we get away from initial configurations
that are close to the uniform phase; informally, the uniform fixpoint of $F$ is jacobian repulsive, so an initial displacement increases geometrically by a constant factor. The proof of the upper bound on the mixing time is given in Section~\ref{sec:fast-Bh}; the lower bound on the mixing time is proved in Section~\ref{sec:lowerboundBBh}.

{\bf Fast mixing for $B<\Bu$:}
Here the argument is similar to the $B>\Bh$ case; namely, the evolution of the density of the largest color class is captured by the iterates of the function $F$. Now, the only fixpoint of $F$ corresponds to the uniform phase $\u$, so after $O(\log{n})$ steps
the chain will get close to $\u$. In fact, this bound can now be improved to $O(1)$ steps: once the largest color class reaches
density $<1/B$ (which happens in $O(1)$ steps), then in the next step the chain jumps close to $1/q$,  i.e., we get close to the uniform phase \emph{abruptly}; this is the reason that the mixing time for $B<\Bu$ is $O(1)$ rather than $O(\log n)$. Once we are close to the uniform phase, we can then adapt a symmetry argument of \cite{LNNP} to show that we can couple two copies of the SW chain in one more step.  The details can be found in Section~\ref{sec:fast-uniqueness}.

{\bf Fast mixing for $B=\Bu$:} This is the most difficult part. As
in the $B>\Bh$ case with constant probability there will be at
most one heavy color class after one step. We then track the
evolution of the size of the heavy color class. The difficulty
arises because the size of the component does not decrease in
expectation at the majority fixpoint.  However variance moves the
size of the component into a region where the size of the
component decreases in expectation. The formal argument uses a
carefully engineered potential function  that decreases because of
the variance (the function is concave around the fixpoint) and
expectation (the function is increasing) of the size of the
largest color class, see Section~\ref{sec:fast-Bu}.

\section{Phases of the Gibbs distribution and stability analysis of fixpoints of $F$}
\subsection{Analysis of the local maxima of $\Psi_1$: Proof of Lemma \ref{lem:psi1local}}\label{sec:locmaxpsi}
In this section, we analyze the critical points/local maxima of $\Psi_1$ and prove Lemma \ref{lem:psi1local}.

The following formulas for the derivatives of $\Psi_1$ will be useful: 
\begin{equation}\label{sddr}
\Psi_1'(\alpha)=-\ln\frac{(q-1)\alpha}{1-\alpha}+B\frac{q\alpha-1}{q-1},\quad 
\Psi_1''(\alpha)= B \frac{q}{q-1} - \frac{1}{\alpha(1-\alpha)}.
\end{equation}
Recall that at a critical $a$ of $\Psi_1$ it holds that $\Psi_1'(a)=0$ and hence $a$ satisfies
\begin{equation*}\tag{\ref{hrz2}}
\ln\frac{(q-1)a}{1-a} = B \frac{qa-1}{q-1}.
\end{equation*}

We will need the following bound on the critical points of $\Psi_1$ in the interval $(1/q,1]$.
\begin{lemma}\label{lehehe}
Let $a>1/q$ be a critical point of $\Psi_1$ and $b=(1-a)/(q-1)$. Then $aB>1$ and $bB<1$.
\end{lemma}
\begin{proof}
Since $a>1/q$, there is $z>0$ such that $a=(z+1)/(z+q)$. Equation~\eqref{hrz2} becomes
\begin{equation}\label{hrz3}
\ln (1+z) = \frac{zB}{z+q}.
\end{equation}
Then, using \eqref{hrz3}, we have that 
\begin{gather*}
aB = \frac{B(z + 1)}{z+q} = (1+1/z) \ln(1+z)>1,\\
bB = (1-a)B/(q-1) = \frac{B}{z+q} = \frac{1}{z}\ln(1+z)<1.
\end{gather*}
where the inequalities hold for any $z>0$. This finishes the proof.
\end{proof}

\begin{lemma}\label{nononoz}
Let $B>\Bu$. A critical point $a>1/q$ of $\Psi_1$ has non-zero second derivative.
\end{lemma}
\begin{proof}
For the sake of contradiction, let $a>1/q$ be a critical point of $\Psi_1$ such that $\Psi_1''(a)=0$. Using \eqref{sddr}, $\Psi_1''(a)=0$ yields   that $1/q = 1-Ba(1-a)$. Plugging the value of $q$
into~\eqref{hrz2} we obtain
\begin{equation}\label{skk}
\ln\frac{Ba^2}{1-Ba(1-a)} = \frac{Ba-1}{a}.
\end{equation}
Let $w=B-1/a$. Since $a>1/q$,  by Lemma~\ref{lehehe} we have $w>0$. Equation~\eqref{skk} becomes
\begin{equation}\label{skk2}
\ln (1-w(1-w/B)) = - w.
\end{equation}
We thus obtain the following parametrization of $B,a,q$ in terms of $w$:
\begin{equation}\label{ery}
B = \frac{w^2}{{\mathrm e}^{-w} + w - 1},\quad
a=\frac{1}{1-{\mathrm e}^{-w}} - \frac{1}{w},\quad
q=\frac{{\mathrm e}^w + {\mathrm e}^{-w}-2}{{\mathrm e}^{-w} + w - 1}.
\end{equation}
Since $B>\Bu$, by the definition~\eqref{ehrehr} of the threshold $\Bu$, there exists $B'<B$ and $z>0$ such that
\[\frac{B'-z}{B'+(q-1)z} = \exp(-z)\]
and hence
\begin{equation}\label{dddd}
B' = z + \frac{qz}{{\mathrm e}^z - 1}.
\end{equation}
We will now prove that, for $B$ and $q$ as in \eqref{ery}, for any $z>0$ we have
\begin{equation}\label{dddd2}
B \leq z + \frac{qz}{{\mathrm e}^z - 1},
\end{equation}
contradicting~\eqref{dddd} and $B'<B$.

To prove \eqref{dddd2}, our goal is to show that for any $w>0$ and any $z>0$
\begin{equation*}
\frac{w^2}{{\mathrm e}^{-w} + w - 1} \leq z + \frac{{\mathrm e}^w + {\mathrm e}^{-w}-2}{{\mathrm e}^{-w} + w - 1} \frac{z}{{\mathrm e}^z - 1}.
\end{equation*}
Since ${\mathrm e}^{-w} + w - 1>0$ and ${\mathrm e}^z - 1>0$, multiplying out this inequality yields the equivalent 
\begin{equation}\label{dddd4}
0 \leq   z ({\mathrm e}^z - {\mathrm e}^w)({\mathrm e}^{-w} - 1) - w (w - z) ({\mathrm e}^z - 1) =: G_1(w,z).
\end{equation}
We have
$$
G_1(s+y,2s)=(s^2-y^2)({\mathrm e}^{2s}-1)
-2s({\mathrm e}^s-{\mathrm e}^y)({\mathrm e}^s-{\mathrm e}^{-y})=:G_2(s,y).
$$
We will show $G_2(s,y)\geq 0$ for all $s>0$ and $y\geq -s$. We have
$G_2(s,-s) = 0$ and $\lim_{y\rightarrow\infty} G_2(s,y) = \infty$.
Thus it is enough to explore the critical points of $G_3(y):=G_2(s,y)$
for each $s$. We have
$$
\frac{\partial}{\partial y} G_3 (y) = 2{\mathrm e}^s \Big(s ({\mathrm e}^y - {\mathrm e}^{-y}) - y({\mathrm e}^s - {\mathrm e}^{-s}) \Big).
$$
The function $y\mapsto ({\mathrm e}^y - {\mathrm e}^{-y})/y$ is
monotone for $y\geq 0$ (this follows from the series expansion)
and hence the only critical points of $G_3(y)$ are $y=0$ and
$y=\pm s$. For $y=\pm s$ we have $G_3(y)=0$. For $y=0$ we have
$$
G_3(0) = s^2({\mathrm e}^{2s}-1)
-2s({\mathrm e}^s-1)^2 = \sum_{i=5}^{\infty} \frac{2^i(i-5)+16}{4(i-1)!} s^i > 0.
$$
This establishes non-negativity of $G_3(y)$ for $y\geq -s$ for all
$s>0$. This completes the proof of~\eqref{dddd2} and hence the proof of the lemma.
\end{proof}

The following lemma details the number of critical points of $\Psi_1$ in the interval $(1/q,1]$.
\begin{lemma}\label{lem:Ncritical}
Let $N$ be the number of critical points of $\Psi_1$ in the interval $(1/q,1]$. Then,
\begin{enumerate}
\item\label{it:34w1a} for $B<\Bu$, $N$ equals $0$, 
\item\label{it:34w1b} for $B=\Bu$, $N$ equals $1$,
\item\label{it:34w1c} for $\Bu<B< \Bh$, $N$ equals $2$, 
\item\label{it:34w1d} for $B\geq \Bh$, $N$ equals $1$.
\end{enumerate}
\end{lemma}
\begin{proof}
Consider the function $g(z)=z+\frac{qz}{{\mathrm e}^{z}-1}$ for $z\geq 0$. By the definition \eqref{eq:BpBh} of $\Bu$, we have that \[\Bu=\min_{z\geq 0}g(z).\]
Since $g''(z)=\frac{ q{\mathrm e}^z ({\mathrm e}^z (z-2) + z+2)}{({\mathrm e}^z-1)^3}$ and $\lim_{z\downarrow 0}g'(z)=1-q/2$, we have that $g(z)$ is a convex function of $z$ and that, for $q\geq 3$, its minimum is attained (uniquely) at a point $z_0>0$.

Note that if $B\geq \Bu$ and $z>0$ satisfy $B=g(z)$, then $\alpha=\frac{{\mathrm e}^z}{{\mathrm e}^z+q-1}> 1/q$ is a critical point of $\Psi_1$ (cf. \eqref{hrz2}); similarly a critical point of $\Psi_1$ in the interval $(1/q,1]$ yields  $z>0$ such that $B=g(z)$. It follows that for $B<\Bu$, $\Psi_1$ has no critical point in the interval $(1/q,1]$. Since $\lim_{z\uparrow +\infty}g(z)=+\infty$ and $g$ is continuous, we have that for $B\geq \Bu$, $\Psi_1$ has at least a critical point in the interval $(1/q,1]$. Since the function $g(z)$ is convex  and $\lim_{z\downarrow 0}g(z)=\Bh$, we obtain that $\Psi_1$, in the interval $(1/q,1]$, has exactly two critical points for $B\in(\Bu,\Bh)$ and exactly one critical point for $B\geq \Bh$.
\end{proof}

We are now ready to prove Lemma~\ref{lem:psi1local}.
\begin{proof}[Proof of Lemma~\ref{lem:psi1local}]
Note that $\Psi_1'(\alpha)\uparrow\infty$ for $\alpha\downarrow 0$ and $\Psi_1'(\alpha)\downarrow-\infty$ for $\alpha\uparrow 1$, so all the local maxima correspond to critical points of $\Psi_1$.

By \eqref{sddr}, we have that $\Psi_1'(1/q)=0$ and hence $u=1/q$ is a critical point of $\Psi_1$ for all $B>0$. In fact, we have that $\Psi_1''(1/q)<0$ for $B<\Bh$ and $\Psi_1''(1/q)>0$ for $B>\Bh$. At $B=\Bh$, we have $\Psi_1''(1/q)=0$ and $\Psi_1'''(1/q)\neq 0$ (using $q\geq 3$). It follows that 
\begin{equation}\label{eq:u1qun}
\mbox{$u=1/q$ is a local maximum of $\Psi_1$ iff  $B<\Bh$.}
\end{equation}
We also have that $\Psi_1''$ is monotone in the interval $[0,1/2]$ (since $1/(\alpha(1-\alpha))$ is monotone). For $B<\Bh$, we have that $\Psi_1''(1/q)<0$ and $\lim_{\alpha\downarrow 0}\Psi_1''(\alpha)<0$, so we obtain that $\Psi_1'$ is decreasing in the interval $[0,1/q]$ and hence 
\begin{equation}\label{eq:in01q}
\mbox{for $B<\Bh$, there are no critical points/local maxima of $\Psi_1$ in the interval $[0,1/q)$}.
\end{equation}

We next search for the existence of critical points/local maxima in the interval $(1/q,1]$. We have the following case analysis.

\textbf{Case I.} For $B<\Bu$, by \eqref{eq:u1qun}, \eqref{eq:in01q} and Item~\ref{it:34w1a} of Lemma~\ref{lem:Ncritical}, we have that $u=1/q$ is the unique critical point of $\Psi_1$ and it is a local maximum of $\Psi_1$.

\textbf{Case II.} For $B= \Bu$, by \eqref{eq:in01q} and Item~\ref{it:34w1b} of Lemma~\ref{lem:Ncritical} we have that $\Psi_1$ has exactly two critical points, at $u=1/q$ and $a>1/q$. By \eqref{eq:u1qun}, we have that $\Psi_1$ has a local maximum at $u=1/q$. $\Psi_1$ cannot have a local maximum at $a$,  otherwise $\Psi_1$ must have at least one  critical point in the interval $(u,a)$ which contradicts the fact that $\Psi_1$ has exactly one critical point in $(1/q,1]$ (Item~\ref{it:34w1b} of Lemma~\ref{lem:Ncritical}).

\textbf{Case III.} For $B\in (\Bu,\Bh)$, by \eqref{eq:in01q} and Item~\ref{it:34w1c} of Lemma~\ref{lem:Ncritical}, we have that $\Psi_1$ has exactly three critical points, at $u=1/q$ and $a_1,a_2>1/q$ with $a_1< a_2$. By \eqref{eq:u1qun}, we have that $\Psi_1$ has a local maximum at $u=1/q$. From this, it follows that $\Psi_1$ does not have a local maximum at $a_1$, otherwise $\Psi_1$ would have a critical point in the interval $(u,a_1)$; so, $\Psi_1''(a_1)\geq 0$. In fact, by Lemma~\ref{nononoz}, we can conclude that $\Psi_1$ has a local minimum at $a_1$. It follows that $\Psi_1$ cannot have a local minimum at $a_2$ (otherwise there would be a critical point of $\Psi_1$ between $a_1$ and $a_2$). Again by Lemma~\ref{nononoz}, we conclude that $\Psi_1$ has a local maximum at $a_2$.

The analysis of the values of $B$ where the two local maxima of $\Psi_1$ correspond to global maxima is well-known and can be found in, e.g., \cite{HET}. Roughly, denoting by $a$ the point where $\Psi_1$ has a local maximum in the interval $(1/q,1]$, it can be shown that $\Psi_1(a)-\Psi_1(u)$ is increasing with respect to $B$; then, one only needs to observe that, at $B=\Bp$, it holds that $a=(q-1)/q$ and $\Psi_1(a)=\Psi_1(u)$.

\textbf{Case IV.} For $B\geq\Bh$, by \eqref{eq:in01q} and Item~\ref{it:34w1d} of Lemma~\ref{lem:Ncritical}, we have that $\Psi_1$ has exactly two critical points in the interval $[1/q,1]$, at $u=1/q$ and $a>1/q$. By \eqref{eq:u1qun}, we have that $\Psi_1$ does not have a local maximum at $u=1/q$. Since $\Psi_1'(\alpha)\downarrow-\infty$ for $\alpha\uparrow 1$, we obtain that $\Psi_1$ cannot have a local minimum at $a$ (otherwise there would be a critical point in the interval $(a,1)$). By Lemma~\ref{nononoz}, we conclude that $\Psi_1$ has a local maximum at $a$.
\end{proof}

\subsection{Connection: Proof of Lemma \ref{conn}}
\label{sec:connection}
In this section, we prove Lemma \ref{conn} presented in Section \ref{sec:proof-overview}
connecting the critical points of the function $\Psi_1$
with the fixpoints of the function $F$. Recall that the function $F$ captures the density of the largest color class after one step of the SW algorithm (see \eqref{defa} and \eqref{dexa} for the definition of $F$).  

We first prove the following lemma. The  lemma corresponds to the intuitive fact that $F(z)$ is an increasing function of the initial density $z$ and that the rate of increase, i.e., $F'(z)$, is a decreasing function of $z$. 
\begin{lemma}\label{lem:Fshape}
For every $B>0$, the function $F$ satisfies $F'(z)>0$ and $F''(z)<0$ for all $z\in (1/B,1]$, i.e., $F$ is strictly increasing and concave in the interval $[1/B,1]$.
\end{lemma}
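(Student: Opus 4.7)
The plan is to reduce both the monotonicity and the concavity assertions to a single elementary one-variable inequality via implicit differentiation of the defining relation for $x(z)$.

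For $z \in (1/B, 1]$, the equation $x + \exp(-zBx) = 1$ smoothly defines $x = x(z) \in (0,1)$. Using $\exp(-zBx) = 1 - x$, implicit differentiation yields
\[
\frac{dx}{dz} = \frac{Bx(1-x)}{1 - Bz(1-x)}.
\]
From the defining relation we have $zBx = -\ln(1-x)$, so the denominator equals $g(x)/x$ with
\[
g(x) := x + (1-x)\ln(1-x).
\]
Since $g(0) = 0$ and $g'(x) = -\ln(1-x) > 0$ on $(0,1)$, one has $g(x) > 0$, and therefore $dx/dz > 0$. Combining $F(z) = 1/q + (1 - 1/q)\,z x$ with the identity $x + z\,dx/dz = x/[1 - Bz(1-x)]$ then collapses to the clean formula
\[
F'(z) = \Big(1 - \frac{1}{q}\Big)\, \frac{x^2}{g(x)},
\]
which is strictly positive, establishing the monotonicity half of the lemma.

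For concavity, since $dx/dz > 0$ it suffices to show that $\phi(x) := x^2/g(x)$ is strictly decreasing in $x \in (0,1)$. A short calculation using $g'(x) = -\ln(1-x)$ gives
\[
\phi'(x) = \frac{x}{g(x)^2}\bigl[\,2x + (2-x)\ln(1-x)\,\bigr].
\]
The proof thus reduces to the single-variable inequality $\psi(x) := 2x + (2-x)\ln(1-x) < 0$ for $x \in (0,1)$. I plan to verify that $\psi(0) = 0$, $\psi'(0) = 0$, and
\[
\psi''(x) = -\frac{x}{(1-x)^2} < 0 \quad \text{on } (0,1),
\]
so integrating twice from $0$ yields $\psi(x) < 0$, hence $\phi'(x) < 0$, hence $F''(z) < 0$.

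The only obstacle is the algebraic bookkeeping needed to derive the compact expressions for $F'(z)$ and $\phi'(x)$ — in particular, recognizing that the cumbersome denominator $1 - Bz(1-x)$ simplifies to $g(x)/x$ via the defining relation. Every remaining analytic input is a one-line application of single-variable calculus, and the same function $g(x) = x + (1-x)\ln(1-x)$ controls both the sign of $dx/dz$ and (through $\phi$) the sign of $F''(z)$, which is what makes the argument uniform.
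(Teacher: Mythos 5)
Your proposal is correct and takes essentially the same route as the paper: implicit differentiation of $x+\exp(-zBx)=1$, the identity $1-zB\eee^{-zBx}=\frac{x+(1-x)\ln(1-x)}{x}$ giving $F'(z)=\frac{q-1}{q}\cdot\frac{x^2}{x+(1-x)\ln(1-x)}>0$, and the reduction of concavity to the same inequality $2x+(2-x)\ln(1-x)<0$ on $(0,1)$. The only (harmless) difference is organizational: you avoid computing $\partial^2 x/\partial z^2$ by writing $F'$ as a function of $x$ alone and using the chain rule with $dx/dz>0$, and you verify the key inequality via its second derivative rather than via the first as in the paper.
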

\begin{proof}
We may assume that $B> 1$ (otherwise there is nothing to prove). Let $z\in(1/B,1]$ and recall that $x\in(0,1)$ is the (unique) solution of  
\begin{equation*}\tag{\ref{dexa}}
x + \exp(-zB x) = 1.
\end{equation*}
We view \eqref{dexa} as an equation that defines $x$ as an implicit function of $z$. Differentiating \eqref{dexa} two times we obtain
\begin{align*}
\frac{\partial x}{\partial z}&=\frac{Bx{\mathrm e}^{-z B x}}{1-zB{\mathrm e}^{-z B x}},\\
\frac{\partial^2 x}{\partial z^2}&=-\frac{B^2z{\mathrm e}^{-z B x}\big(2{\mathrm e}^{-z B x}(1-zB{\mathrm e}^{-z B x})+x\big)}{(1-zB{\mathrm e}^{-z B x})^3}.
\end{align*}
Since $F(z)=\frac{1}{q} + \left(1 - \frac{1}{q} \right) z x$, we obtain
\begin{align}
F'(z)&=\frac{q-1}{q}\left(x+z\frac{\partial x}{\partial z}\right)=\frac{(q-1)x}{q(1-zB{\mathrm e}^{-z B x})},\label{eq:firstderivative}\\
F''(z)&=\frac{q-1}{q}\left(2\frac{\partial x}{\partial z}+z\frac{\partial^2 x}{\partial z^2}\right)=-\frac{(q-1)B x {\mathrm e}^{-z B x} \left(z B( x+2{\mathrm e}^{-z B x})-2\right)}{q \left(1-z B{\mathrm e}^{-z B x}\right)^3}. \notag
\end{align}

We first show that $F'(z)>0$ for all $z\in(1/B,1]$. Since $x$ is positive for all $z>1/B$, it suffices to show that $1-zB{\mathrm e}^{-z B x}>0$. Since $x$ satisfies
\eqref{dexa}, we have $zB=-\ln(1-x)/x$, so we have
\begin{equation}\label{eq:xlnarb}
1-zB{\mathrm e}^{-z B x}=\frac{x+(1-x)\ln(1-x)}{x}>0,
\end{equation}
for all $0<x<1$ (the inequality holds since the derivative of the numerator is $-\ln
(1-x)$ and its value at $x=0$ is $0$). Thus $F'(z)>0$ for $z>1/B$.

We next show that $F''(z)<0$ for all $z\in(1/B,1]$. We have already shown that the denominator in the expression for $F''(z)$ is positive, so we only need to show that $z B( x+2{\mathrm e}^{-z B x})-2>0$. Using again that  $zB=-\ln(1-x)/x$, we have
\[z B( x+2{\mathrm e}^{-z B x})-2=-\frac{2x+(2-x)\ln(1-x)}{x}>0,\]
for all $0<x<1$ (the inequality holds since the numerator at $x=0$ is 0 and the first derivative   of the numerator is $-\frac{x+(1-x) \ln (1-x)}{1-x}<0$ from \eqref{eq:xlnarb}). This concludes the proof.
\end{proof}

We next prove the following correspondence. 
\begin{lemma}\label{lem:halfopen}
Let $B>0$. For any $a>1/q$,  $\Psi_1$ has a critical point at $a$ iff $F$ has a fixpoint at $a$.
\end{lemma}
\begin{proof}
Consider first a critical point $a$ of $\Psi_1$ (with $a>1/q$). We use the same parametrization as in the proof of Lemma~\ref{lehehe}, i.e., we set $a=(z+1)/(z+q)$ where $z> 0$, so that $z$ satisfies
\begin{equation*}\tag{\ref{hrz3}}
\ln (1+z) = \frac{zB}{z+q}.
\end{equation*}

Now, consider a fixpoint $a$ of $F$ (with $a>1/q$). Note that $a>1/B$ (this is immediate for $B\geq q$ since then $1/q\geq 1/B$; for $B<q$, we have that $F(z)=1/q$ for all $z\in[1/q,1/B]$, so there is no fixpoint of $F$ with $a\in (1/q,1/B]$). Therefore, from \eqref{defa}, we have  $F(a)=\frac{1}{q} + \left(1 - \frac{1}{q} \right) a x$, where $x\in (0,1]$ is the unique solution of
\begin{equation*}\tag{\ref{dexa}}
x + \exp(-aB x) = 1.
\end{equation*} 
Under the parametrization $a=(z+1)/(z+q)$, equation $F(a) = a$ becomes
\begin{equation}\label{vrz1}
x = \frac{z}{z+1},
\end{equation}
and~\eqref{dexa} becomes
\begin{equation}\label{vrz2}
x + \exp\Big(-\frac{z+1}{z+q} B x\Big) = 1.
\end{equation}
Plugging~\eqref{vrz1} into~\eqref{vrz2} and taking logarithm of both sides yields \eqref{hrz3}. This proves the lemma.
\end{proof}

\begin{lemma}\label{lem:jacobianuniform}
The function $F$ has a fixpoint at $u=1/q$ iff $B\leq \Bh$. For $B<\Bh$, the fixpoint $u=1/q$ of $F$ is  jacobian attractive. For $B=\Bh$, the fixpoint $u=1/q$ is jacobian repulsive.
\end{lemma}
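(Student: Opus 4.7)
The plan is to compute the right-derivative of $F$ at the fixpoint $u=1/q$ directly, splitting the analysis based on whether $1/q<1/B$ (i.e., $B<\Bh=q$) or $1/q=1/B$ (i.e., $B=\Bh$). In both regimes $u=1/q$ is indeed a fixpoint: since $B\leq q$ gives $1/q\leq 1/B$, the definition of $F$ forces $x=0$ at $z=1/q$ and hence $F(1/q)=1/q$. The entire difference in behavior is driven by the flat piece $F\equiv 1/q$ on $[1/q,1/B]$: when $B<\Bh$ this flat region contains a genuine right-neighborhood of $u$, whereas at $B=\Bh$ it collapses to a single point.

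For the subcritical regime $B<\Bh$ the argument is essentially immediate. Because $F$ is identically $1/q$ on $[1/q,1/B]$ and $1/q<1/B$, the right-derivative $F'(u)$ equals $0$, so $|F'(u)|<1$ and $u$ is jacobian attractive.

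For $B=\Bh=q$ I will compute $F'(u)$ as the right-limit $\lim_{z\to (1/B)^+} F'(z)$, using the closed form \eqref{eq:firstderivative}. Since both the numerator $(q-1)x$ and the denominator $q(1-zBe^{-zBx})$ vanish in this limit, the key step is a first-order Taylor expansion of $x$ in $\delta:=zB-1$. Expanding $e^{-zBx}$ in \eqref{dexa} to second order in $x$ and dividing through by $x$ yields $x=2\delta+O(\delta^2)$; for the denominator, the identity $e^{-zBx}=1-x$ from \eqref{dexa} rewrites it as $-\delta+(1+\delta)x=\delta+O(\delta^2)$. Taking the ratio then gives
\begin{equation*}
F'(u)\ =\ \lim_{z\to(1/B)^+}\frac{(q-1)\,x}{q\,(1-zBe^{-zBx})}\ =\ \frac{2(q-1)}{q},
\end{equation*}
which for $q\geq 3$ is at least $4/3>1$, so $u$ is jacobian repulsive.

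The only genuinely delicate step is resolving the $0/0$ indeterminacy in the critical case via the expansion $x=2\delta+O(\delta^2)$; once that is in hand, everything else follows directly from the piecewise definition of $F$ and from Lemma~\ref{lem:Fshape}, which guarantees that $F$ is smooth on $(1/B,1]$ so the right-limit of $F'(z)$ coincides with the right-derivative at the endpoint.
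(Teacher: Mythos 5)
Your proof is correct and follows essentially the same route as the paper: the $B<\Bh$ case is the identical observation that $F$ is flat on $[1/q,1/B]$, and for $B=\Bh$ both arguments rest on the same first-order expansion of \eqref{dexa} near $x=0$, giving $x=2q(z-1/q)+O((z-1/q)^2)$ and hence $F'(1/q)=2(q-1)/q>1$. The only cosmetic difference is that the paper reads the derivative off the difference quotient for $F$ directly, whereas you take the limit of the closed form \eqref{eq:firstderivative} and transfer it to the one-sided derivative at the endpoint via the mean value theorem (which also needs continuity of $F$ at $1/q$ from the right, but that is immediate from your own expansion since $x\downarrow 0$ as $z\downarrow 1/q$).
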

\begin{proof}
Recall from \eqref{defa} that $F(z)=\frac{1}{q} + \left(1 - \frac{1}{q} \right) z x$, where $x=0$ for $z\leq 1/B$ and for $z> 1/B$, $x\in (0,1]$ is the (unique) solution of
\begin{equation*}\tag{\ref{dexa}}
x + \exp(-zB x) = 1.
\end{equation*} 
Note that when $z=u=1/q$, we obtain that $x=0$ for $B\leq \Bh$ and $x>0$ for $B>\Bh$. Hence $F(u)=u$ iff $B\leq \Bh$, proving the first part of the lemma.

For $B<\Bh$, we have that $F$ is constant throughout $[1/q,1/B]$, so trivially $F'(1/q)=0$ and hence $u$ is jacobian attractive. 

For $B=\Bh$, rewrite \eqref{dexa} as 
\begin{equation}\label{eq:epseps}
zq=f(x), \mbox{ where } f(x):=-\frac{\ln(1-x)}{x}.
\end{equation}
Note that as $x\downarrow 0$, we have $z\downarrow 1/q$. Then, for all sufficiently small $x>0$, an expansion of $f$ around $x=0$ yields
\[z=\frac{1}{q}\left(1+\frac{x}{2}+\frac{x^2}{3}\right)+O(x^3).\] 
It is not hard from here to conclude 
\[x=2q (z-1/q)+O((z-1/q)^2),\]
for all $z$ in a small neighborhood of $1/q$. It follows that 
\[F'(1/q)=2(q-1)/q>1,\]
for all $q\geq 3$, and hence $u$ is jacobian repulsive.
\end{proof}

We are now ready to give the proof of Lemma \ref{conn}.
\begin{proof}[Proof of Lemma \ref{conn}]
Our goal is to show that, in the interval $[1/q,1]$, the hessian local maxima of $\Psi_1$ and the jacobian attractive fixpoints of $F$ are in one-to-one correspondence.

We first prove the correspondence in the half-open interval $(1/q,1]$. By Lemma~\ref{lem:halfopen}, we have that every critical point $a>1/q$ of $\Psi_1$ is also a fixpoint of $F$ (and vice versa). Therefore, it suffices to show that a critical point $a$ of $\Psi_1$ is a hessian local maximum of $\Psi_1$ iff $a$ is also a jacobian attractive fixpoint of $F$. 

From \eqref{sddr}, we have 
\begin{equation*}\tag{\ref{sddr}}
\Psi_1''(a) = B \frac{q}{q-1} - \frac{1}{a(1-a)}.
\end{equation*}
By Lemma~\ref{lehehe}, we have that $a>1/B$, so from \eqref{eq:firstderivative}, we have that 
\begin{equation}\label{zzzz}
F'(a) = \Big(1-\frac{1}{q}\Big) \frac{x}{1-aB\exp(-aBx)},
\end{equation}
where $x\in (0,1]$ satisfies
\begin{equation*}\tag{\ref{dexa}}
x + \exp(-aB x) = 1.
\end{equation*}
Since $a$ is also a fixpoint of $F$, we have $F(a)=a$, which yields $x=(qa-1)/((q-1)a)$. From \eqref{dexa}, we also have $\exp(-aBx)=1-x$. Plugging these values in \eqref{zzzz}, we obtain
\begin{equation}\label{zzzz2}
F'(a) = 1 + \frac{B \frac{q}{q-1} - \frac{1}{a(1-a)}}{\frac{q}{1-a} - B\frac{q}{q-1}} = 1 +
\frac{\Psi_1''(a)}{\frac{q}{1-a} - B\frac{q}{q-1}}.
\end{equation}
The denominator of~\eqref{zzzz2} is positive (since $a>1/B$) and hence we have
\begin{equation}\label{kk1w}
F'(a) < 1\ \Longleftrightarrow\ \Psi_1''(a)<0.
\end{equation}
We also have by Lemma~\ref{lem:Fshape} that $F'(a)>0$, so we can rewrite~\eqref{kk1w} as
\begin{equation}\label{kk2w}
\left| F'(a)\right| < 1\ \Leftrightarrow\ \Psi_1''(a)<0.
\end{equation}
This establishes the lemma in the interval $(1/q,1]$.

We next consider the left-endpoint of the interval $[1/q,1]$, i.e., the point $u=1/q$. From \eqref{sddr}, we have that $u$ is a critical point of $\Psi_1$ for all $B>0$ and it is a hessian local maximum of $\Psi_1$ (i.e., it holds that $\Psi_1''(u)<0$) iff $B<\Bh$. By Lemma~\ref{lem:jacobianuniform}, we have that $u=1/q$ is a jacobian attractive fixpoint of $F$ precisely when $B<\Bh$. 

This concludes the proof of the lemma.
\end{proof}

\subsection{Analysis of the fixpoints of $F$: Proof of Lemma \ref{lem:fixpoints}}
\label{sec:fixpoints}

In this section, we prove Lemma \ref{lem:fixpoints}, i.e., we analyze the fixpoints of the function $F$ in the interval $[1/q,1]$.  Lemma~\ref{lem:jacobianuniform} details when $u=1/q$ is a (jacobian attractive) fixpoint of $F$, therefore we will focus on the interval $(1/q,1]$.

Recall, by Lemma~\ref{lem:halfopen}, a point $a\in(1/q,1]$ is a fixpoint of $F$ iff $a$ is a critical point of $\Psi_1$. Therefore, the number of fixpoints of $F$ in the interval $(1/q,1]$ is the same as the number of critical points of $\Psi_1$ in the interval $(1/q,1]$. Lemma~\ref{lem:Ncritical} therefore yields the following corollary.

\begin{corollary}
\label{lem:exist}
For $B<\Bu$, there is no fixpoint of $F$ in the interval $(1/q,1]$. For $B=\Bu$, there is a unique fixpoint of $F$ in the interval $(1/q,1]$. For $B\in(\Bu,\Bh)$, there are two fixpoints of $F$ in the interval $(1/q,1]$. For $B\geq \Bh$, there is a unique fixpoint of $F$ in the interval $(1/q,1]$.
\end{corollary}

By Lemma~\ref{nononoz}, every local maximum of $\Psi_1$ in the interval $(1/q,1]$ is in fact a hessian maximum of $\Psi_1$. By Lemma~\ref{conn}, a hessian maximum of $\Psi_1$ is also a jacobian attractive fixpoint of $F$. Therefore, Lemma~\ref{lem:psi1local}, which details the local maxima of $\Psi_1$, yields the following.

\begin{corollary}\label{corp4}
For $B>\Bu$, the function $F$ has a unique  jacobian attractive fixpoint in the interval $(1/q,1]$, namely the point $a>1/q$ where $\Psi_1$ has a local maximum. 
\end{corollary}

Corollaries~\ref{lem:exist} and~\ref{corp4} classify the number of fixpoints of $F$ and when these are jacobian attractive for all $B\neq \Bu$. The following lemma addresses the case $B=\Bu$. 
\begin{lemma}\label{lem:notJacobianuniqueness}
For $B=\Bu$, consider the fixpoint $a$ of $F$ in the interval $(1/q,1]$. Then, $F'(a)=1$.
\end{lemma}
\begin{remark}\label{rem:notattractive}
Note,  the non-attractiveness of the fixpoint $a$  for $B=\Bu$ follows from $F'(a)=1$ and $F''(a)\neq 0$ (Lemma~\ref{lem:Fshape}).
\end{remark}
\begin{proof}
From \eqref{zzzz2}, it suffices to show that $\Psi_1''(a)=0$.

Recall also that $\Psi_1'(a)=0$, i.e., $a$ is a critical point of $\Psi_1$. Using Lemma~\ref{lem:Ncritical},  we therefore have that the critical points of $\Psi_1$ in the interval $[1/q,1]$ are precisely $u=1/q$ and $a$. 

By Lemma~\ref{lem:psi1local}, $u=1/q$ is the unique local maximum of $\Psi_1$ and hence it must be the case that $\Psi_1''(a)\geq 0$ (otherwise $a$ would also be a local maximum).  We also have that $\Psi_1''(a)\leq 0$: otherwise, $\Psi_1$ has a local minimum at $a$. Since $\Psi_1'(\alpha)\downarrow -\infty$ as $\alpha\uparrow 1$, we would then obtain that $\Psi_1$ has a critical point in the interval $(a,1)$, contradicting that, for $B=\Bu$, $\Psi_1$ has a unique critical point in the interval $(1/q,1]$ (Lemma~\ref{lem:Ncritical}).

Thus, $\Psi_1''(a)=0$, as wanted.
\end{proof}
We are now ready to prove Lemma~\ref{lem:fixpoints} from Section~\ref{sec:proof-overview}.
\begin{proof}[Proof of Lemma~\ref{lem:fixpoints}]
Lemma~\ref{lem:jacobianuniform} details when $u=1/q$ is a fixpoint of $F$ and when it is jacobian attractive. It therefore remains to classify the fixpoints in the interval $(1/q,1]$.

For $B<\Bu$, there are no fixpoints of $F$ in the interval $(1/q,1]$ by Corollary~\ref{lem:exist}. 

For $B>\Bu$, by  Corollary~\ref{corp4}, there is precisely one jacobian attractive fixpoint of $F$ in the interval $(1/q,1]$ and it coincides with the point where $\Psi_1$ has a local maximum.

For $B=\Bu$, by  Corollary~\ref{corp4}, there is precisely one  fixpoint $a$ of $F$ in the interval $(1/q,1]$. By Lemma~\ref{lem:notJacobianuniqueness} and Lemma~\ref{lem:Fshape}, we have that $F'(a)=1$ and $F''(a)\neq 0$, so $a$ is repulsive but not jacobian repulsive.

This completes the proof of the lemma.
\end{proof}

\section{Random Graph Lemmas}\label{sec:randomgraph}

In this section, we collect relevant results from the literature for the sizes of the components in $G(n,p)$ where $p\sim 1/n$. We will use these to analyze one step of the SW algorithm. 

For a graph $G$, we denote by $C_1,C_2,\hdots$ the connected components of $G$ in decreasing order of size; throughout the paper we refer to the size of a component $C$ as the number of vertices in it and use $|C|$ to denote its size. Roughly, in one iteration of the SW algorithm, the size of the largest component after the percolation step controls the size of the largest color class, and the fluctuations are determined by the sum of squares of the sizes of the components.

\subsection{The supercritical regime}
We will need several known results on the $G(n,p)$ model in the
supercritical regime ($p=c/n$, where $c>1$). The size of the giant
component is asymptotically normal~\cite{Stepanov} and satisfies moderate deviation inequalities around its mean value~\cite{AL}. We will use
the following moderate deviation inequalities for the sizes of the largest and second largest components of $G$.
These are used to track the evolution of the SW dynamics for an exponential number of steps in the slow mixing regime $\Bu<B<\Bh$.

\begin{lemma}\label{l1al1}
Let $G\sim G(n,c/n)$ where $c>1$. Let $\beta\in (0,1)$ be the
solution of $x + \exp(-c x) = 1$. Let $C_1, C_2$ be the largest and second largest components of $G$ respectively. Then, for every constant $\epsilon\in (0,1/3]$ it holds that
\begin{gather}
P(\big||C_1| - \beta n\big| \geq n^{1/2+\epsilon})\leq \exp(-\Theta(n^{2\epsilon})),\label{st1aa}\\
P(|C_2| \geq n^{\epsilon})\leq \exp(-\Theta(n^{\epsilon})). \label{st2bb}
\end{gather}
\end{lemma}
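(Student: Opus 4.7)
My plan is to reduce both claims to comparisons between the BFS exploration of $G(n,c/n)$ and a Galton-Watson branching process, combined with off-the-shelf concentration inequalities. Recall that BFS reveals, at each active vertex, a Binomial$(n', c/n)$ number of fresh neighbors among the $n'$ still-unexplored vertices; this is well approximated by a Galton-Watson tree with Poisson$(c)$ offspring, whose survival probability is exactly the $\beta \in (0,1)$ solving $\beta + e^{-c\beta} = 1$.

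For \eqref{st2bb}, I would invoke the duality principle for $G(n,c/n)$: conditioned on the vertex set of the giant having size $\beta n + o(n)$, the induced subgraph on the complement is distributed essentially as $G((1-\beta)n,\, c/n)$, which is strictly subcritical because $c(1-\beta) = c\, e^{-c\beta} < 1$. In any subcritical $G(m, d/m)$ with $d < 1$ fixed, coupling the BFS started at a root $v$ with a subcritical Galton-Watson tree yields the uniform exponential tail $\Prob{|C(v)| \geq k} \leq e^{-\alpha k}$ for a constant $\alpha = \alpha(c) > 0$. Union-bounding over the $n$ possible roots and substituting $k = n^\epsilon$ gives $\Prob{|C_2| \geq n^\epsilon} \leq n\, e^{-\alpha n^\epsilon} = \exp(-\Theta(n^\epsilon))$, as required.

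For \eqref{st1aa}, I would apply an edge-exposure martingale on top of the bound just proved. On the event $\mathcal{E} = \{|C_2| \leq n^\epsilon\}$, adding or deleting a single edge changes $|C_1|$ by at most $O(n^\epsilon)$, because such a change can only merge $C_1$ with another component (of size at most $|C_2|$) or split off at most one such piece. McDiarmid's bounded-differences inequality applied over the $\binom{n}{2}$ edge indicators then yields
\[
\Prob{\bigl| |C_1| - \mathbf{E}[|C_1|] \bigr| \geq t \;\big|\; \mathcal{E}} \;\leq\; 2\exp\bigl(-\Omega(t^2 / n^{2\epsilon})\bigr).
\]
Substituting $t = \tfrac{1}{2} n^{1/2+\epsilon}$ gives exactly the target rate $\exp(-\Theta(n^{2\epsilon}))$. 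A direct branching-process computation shows $\mathbf{E}[|C_1|] = \beta n + o(n^{1/2+\epsilon})$, so the triangle inequality converts concentration around $\mathbf{E}[|C_1|]$ into concentration around $\beta n$.

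The main obstacle is making the duality argument and the conditioning on $\mathcal{E}$ rigorous without losing any exponential rate: turning the ``conditioned on the giant size'' statement into an unconditional tail requires exposing vertices sequentially while tracking active-vertex counts, and applying McDiarmid on a high-probability event (rather than unconditionally) calls for either a truncation trick (replacing $|C_1|$ by a fixed value off $\mathcal{E}$, to which bounded differences do apply) or a typical-bounded-differences refinement. These are well-studied issues, and I would cite the rigorous versions from standard references such as the Janson-Luczak-Rucinski monograph or Bollob\'{a}s's random graph textbook rather than reproduce the details.
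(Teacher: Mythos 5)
Your plan for \eqref{st2bb} (duality plus subcritical exponential tails plus a union bound, with the conditioning handled by dividing by the probability of a high-probability event) is a workable alternative to what the paper does (the paper simply imports the intermediate-size-component estimates from the proof of Theorem 5.4 in Janson--{\L}uczak--Ruci\'nski, and cites \cite{LNNP} for \eqref{st1aa}). The problem is your proof of \eqref{st1aa}: the bounded-differences step is quantitatively wrong. McDiarmid's inequality gives $2\exp\bigl(-2t^2/\sum_i c_i^2\bigr)$, and with $\binom{n}{2}$ edge coordinates each of Lipschitz constant $O(n^{\epsilon})$ (even granting the truncation needed to make the conditioning on $\mathcal{E}$ legitimate) you get $\sum_i c_i^2=\Theta(n^{2+2\epsilon})$, hence a bound of $\exp\bigl(-\Omega(t^2/n^{2+2\epsilon})\bigr)$, not $\exp\bigl(-\Omega(t^2/n^{2\epsilon})\bigr)$; your stated bound silently drops the number of coordinates. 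Plugging in $t=\tfrac12 n^{1/2+\epsilon}$ the exponent is $\Omega(n^{-1})$, i.e.\ the bound is vacuous. Even a ``typical bounded differences'' refinement that only charges the $\Theta(n)$ edges actually present yields $\sum_i c_i^2=\Theta(n^{1+2\epsilon})$ and hence only $\exp(-\Omega(1))$.

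The underlying reason this cannot be patched within your framework is that \eqref{st1aa} is a moderate-deviation statement at the Gaussian scale: $|C_1|$ has fluctuations of order $\sqrt{n}$ (variance $\Theta(n)$, cf.\ Lemma~\ref{lem:supercritical}), and the target rate $\exp(-\Theta(n^{2\epsilon}))$ for deviations $n^{1/2+\epsilon}$ is exactly what a variance proxy of order $n$ would give. A worst-case Lipschitz argument with per-coordinate constant $n^{\epsilon}$ can never produce a variance proxy smaller than $n^{1+2\epsilon}$, so it is structurally incapable of reaching the claimed rate. To get \eqref{st1aa} you need an argument that sees the true variance, e.g.\ the exploration/random-walk representation of the giant (the component size as a hitting time of a walk with $O(1)$-size increments, to which Chernoff bounds apply), or sprinkling/two-round exposure, or simply citing a moderate-deviations result such as \cite[Lemma~5.4]{LNNP} or Ameskamp--L\"owe \cite{AL} --- which is precisely what the paper does. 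The auxiliary claim that $E[|C_1|]=\beta n+o(n^{1/2+\epsilon})$ by ``a direct branching-process computation'' also deserves a citation rather than a wave (it is \cite[Theorem~5]{Oghlan} in the paper's bibliography), but that is minor next to the concentration gap.
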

\begin{proof}
Equation~\eqref{st1aa} is proved in \cite[Lemma 5.4]{LNNP}. 
We next prove equation~\eqref{st2bb}. All the elements are contained in the proof of~\cite[Theorem 5.4]{JLR}. The probability that there exists a
component of size from the interval $\{n^{\epsilon},\dots,n^{2/3}\}$ is bounded
 by (see \cite[p. 110, line 11]{JLR}):
\begin{equation}\label{ez1}
n^2 \exp (-( (c-1)^2/(9c) ) n^{\epsilon}).
\end{equation}
The probability that there exist two or more components of size at
least $n^{2/3}$ is bounded by (see \cite[p. 110, line 24]{JLR}):
\begin{equation}\label{ez2}
n^2 \exp (-( (c-1)^2c/4 ) n^{1/3}).
\end{equation}
Using the union bound (combining~\eqref{ez1} and~\eqref{ez2}) we obtain~\eqref{st2bb}, that is,
with high probability we have only one component of size $\geq n^{\epsilon}$.
  \end{proof}

The following lemma will be used to analyze the evolution of the SW chain when $B=\Bu$.
\begin{lemma}\label{lem:supercritical}
Let $G\sim G(n,c/n)$ where $c_0<c<c_1$ for absolute constants $c_0,c_1>1$ ($c$ may otherwise depend on $n$). Let $\beta\in(0,1)$ be the unique solution of $\beta+\exp(-\beta c)=1$. Denote by $C_1$ the largest component in $G$.

Then, for every constant $\epsilon>0$, for all sufficiently large $n$ it holds that
\begin{equation}\label{eq:nlittleo1}
n\beta-n^\epsilon\leq E[|C_1|]\leq n \beta+n^\epsilon.
\end{equation}
Moreover, there exist constants $K_1,K_2, K_3>0$ (depending only on $c_0,c_1$) such that for all sufficiently large $n$ it holds that 
\begin{gather}
K_1 n\leq Var[|C_1|]\leq K_2 n, \quad E\Big[\sum_{j\geq 2} |C_j|^2\Big] \leq K_3 n\label{eq:variancesuper}.
\end{gather}
Finally, there exists a constant $U>0$ (depending only on $c_0,c_1$) such that for all sufficiently large $n$, for all $u\geq U$, it holds that
\begin{equation}\label{eq:moddev}
P\big(\big||C_1|-n\beta\big|\geq u\sqrt{n}\big)\leq U/u^2.
\end{equation}
\end{lemma}
\begin{proof}
The bounds on $E[|C_1|]$ and $Var[|C_1|]$ can be found in \cite[Theorem 5]{Oghlan}. The bound on $E\Big[\sum_{j\geq 2} |C_j|^2\Big]$ is an immediate corollary of \cite[Corollary 5.6]{LNNP}. The  probability bound in \eqref{eq:moddev} can be derived by Chebyshev's inequality using the bounds on $E[|C_1|]$ and $Var[|C_1|]$.
\end{proof}

\subsection{The scaling window \& subcritical regimes}

We use the following well-known result about the size of the giant component in the subcritical regime.
 
\begin{lemma}[see, e.g.,~\cite{JLR}, p.109]\label{l1}
Let $t\in (0,1]$ be a constant. Let $G\sim G(n,c/n)$ where $c<1$ is a constant, and $C_1$ be the largest component of $G$. Then,
\begin{equation*}
P(|C_1|\geq n^{t})\leq \exp(- \Theta(n^{t})).
\end{equation*}
\end{lemma}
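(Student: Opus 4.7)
The plan is to follow the standard approach for subcritical Erd\H{o}s--R\'enyi random graphs: bound the size of the component containing a single vertex via a branching process comparison, then apply a union bound over vertices.

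First, I would fix an arbitrary vertex $v\in V$ and explore its component $C_v$ via breadth-first search. At each step, the number of newly discovered neighbors of the currently active vertex is $\mathrm{Binomial}(n', c/n)$ for some $n'\leq n-1$, and hence stochastically dominated by $\mathrm{Binomial}(n-1, c/n)$. It follows that $|C_v|$ is stochastically dominated by the total progeny $T$ of a Galton--Watson branching process with $\mathrm{Binomial}(n-1, c/n)$ offspring, which for large $n$ is essentially $\mathrm{Poisson}(c)$. Since $c<1$, this branching process is subcritical.

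Next, I would invoke the exponential tail bound for the total progeny. Via Otter's formula (or directly via Cayley's spanning-tree enumeration followed by Stirling's approximation), one obtains
\[
P(T\geq k) \;\leq\; \sum_{j\geq k}\frac{(cj)^{j-1}e^{-cj}}{j!} \;\leq\; C\cdot\bigl(c\,e^{1-c}\bigr)^{k},
\]
with geometric decay because the map $c\mapsto ce^{1-c}$ is strictly increasing on $[0,1]$ with value $1$ at $c=1$, so $ce^{1-c}<1$ whenever $c<1$. Setting $k=\lceil n^{t}\rceil$ yields $P(|C_v|\geq n^{t})\leq \exp(-\Theta(n^{t}))$. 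A union bound over the $n$ choices of $v$ then gives
\[
P(|C_1|\geq n^{t}) \;\leq\; n\cdot \exp(-\Theta(n^{t})) \;=\; \exp(-\Theta(n^{t})),
\]
since $\log n = o(n^{t})$ for any constant $t>0$; this is the claim.

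The only technical subtlety is making the stochastic domination rigorous, because the BFS samples without replacement, so after $k$ vertices have been explored the offspring distribution is $\mathrm{Binomial}(n-k,c/n)$ rather than a fixed distribution. This can be handled either by (i) adding ``ghost'' children to the exploration so that the number of offspring is always $\mathrm{Binomial}(n-1,c/n)$, or (ii) bypassing the coupling entirely and using the direct enumeration bound $P(|C_v|=k)\leq \binom{n-1}{k-1}k^{k-2}(c/n)^{k-1}(1-c/n)^{k(n-k)}$, where the factor $(1-c/n)^{k(n-k)}\leq e^{-ck(1-k/n)}$ supplies precisely the extra $e^{-ck}$ needed to turn the $(ce)^{k}$ prefactor from the spanning-tree count into $(ce^{1-c})^{k}$. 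Both routes are routine and appear explicitly in \cite{JLR}.
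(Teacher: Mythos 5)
The paper does not prove this lemma at all; it simply cites it (JLR, p.~109), so any correct standard proof is acceptable, and your main route is exactly the classical one. Your primary argument (BFS exploration, domination of $|C_v|$ by the total progeny of a subcritical branching process, exponential tail of the total progeny, union bound over $v$ using $\log n = o(n^t)$) is correct, with one step you should tighten: ``Binomial$(n-1,c/n)$ is essentially Poisson$(c)$'' is not by itself a domination statement. Either work with the Binomial offspring directly --- by Otter--Dwass, $P(T=j)=\tfrac1j P(S_j=j-1)$ with $S_j\sim\mathrm{Bin}(j(n-1),c/n)$, and Chernoff gives $P(S_j\ge j-1)\le e^{-\Theta(j)}$ since $E[S_j]\le cj<j-1$ for large $j$ --- or dominate Binomial$(n-1,c/n)$ by Poisson$(c')$ with $c'=-(n-1)\ln(1-c/n)\to c<1$, after which your Borel-tail computation applies verbatim. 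Be more careful with your fallback route (ii): the bound $\binom{n-1}{k-1}k^{k-2}(c/n)^{k-1}(1-c/n)^{k(n-k)}\lesssim \frac{1}{ck}\bigl(c\,e^{1-c(1-k/n)}\bigr)^{k}$ must be summed over all $k\ge n^{t}$, including $k=\Theta(n)$, where the exponent $1-c(1-k/n)$ degrades and the base exceeds $1$ whenever $c>1/e$; so as stated that route does not close without a separate argument for linear-size $k$ (e.g.\ the same Chernoff bound $P(|C_v|\ge k)\le P(\mathrm{Bin}(kn,c/n)\ge k-1)$, which needs only the single value $k=\lceil n^t\rceil$ rather than a sum). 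With the branching-process route made precise as above, your proof is complete and matches the standard treatment in the cited reference.
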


The following lemma considers the size of the components in the scaling window.

\begin{lemma}\label{lemia1}
There exist constants $K,c,c'>0$ such that for any $n$ and
\begin{enumerate}
\item \label{it:sub} any $\eps\in (0,1)$ for random $G$ from $G(n,(1-\eps)/n)$ we have
$$
E\Big[\sum_{i\geq 1} |C_i|^2\Big]\leq \frac{K n}{\eps},
$$
\item \label{it:sup} for any $\eps\in [1/n^{1/3},c]$  for random $G$ from $G(n,(1+\eps)/n)$ we have
$$
E\Big[\sum_{i\geq 2} |C_i|^2\Big]\leq \frac{K n}{\eps},
$$
\item \label{it:suplargest} for any $\eps\in [c'/n^{1/3},c']$ for random $G$ from $G(n,(1+\eps)/n)$ we have
\begin{equation*}
P\Big(\big[|C_1|< (7/4)\eps n\big] \cup \big[|C_1|> 3\eps n\big] \Big)\leq K \exp(-c\eps^3 n).
\end{equation*}
\end{enumerate}
\end{lemma}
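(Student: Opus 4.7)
The plan is to handle the three parts separately, each by a standard random-graph technique: a susceptibility identity plus branching-process comparison for part~\eqref{it:sub}, a duality/coupling reduction to the subcritical case for part~\eqref{it:sup}, and an exploration-process concentration bound for part~\eqref{it:suplargest}.

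For part~\eqref{it:sub}, I would exploit the identity $\sum_{i\geq 1}|C_i|^2 = \sum_{v\in V}|C(v)|$, where $C(v)$ is the component containing $v$. By vertex symmetry, $E[\sum_i |C_i|^2] = n\,E[|C(v_1)|]$ for any fixed $v_1$. Exploring $C(v_1)$ by BFS, the number of new neighbors uncovered at each step is stochastically dominated by a $\mathrm{Bin}(n-1,(1-\epsilon)/n)$ random variable, so $|C(v_1)|$ is dominated by the total progeny of a Galton–Watson process with offspring mean $\leq 1-\epsilon + O(1/n) \leq 1-\epsilon/2$ for large $n$. The total progeny of such a subcritical process has expectation at most $2/\epsilon$, giving $E[\sum_i |C_i|^2] \leq Kn/\epsilon$ for a suitable constant $K$.

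For part~\eqref{it:sup}, I would use duality. By part~\eqref{it:suplargest} (applied with the same $\epsilon$), the largest component $C_1$ of $G(n,(1+\epsilon)/n)$ has size in the interval $[(7/4)\epsilon n,\,3\epsilon n]$ except on a small-probability event. Conditional on the vertex set of $C_1$ on the typical event, the subgraph induced on $V\setminus C_1$ is distributed as $G(n-|C_1|,(1+\epsilon)/n)$ conditioned on not producing a second large component. The effective edge parameter there is $(n-|C_1|)(1+\epsilon)/n \leq 1 - c''\epsilon$ for some $c''>0$, so the complement is subcritical with gap $\Theta(\epsilon)$. Discarding the "no large component'' conditioning (which only costs a bounded multiplicative factor, since by part~\eqref{it:suplargest} this event has probability bounded away from $0$) and invoking part~\eqref{it:sub} on the complement gives $E[\sum_{i\geq 2}|C_i|^2 \mid C_1] \leq K'n/\epsilon$ on the typical event. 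On the atypical event, one uses the trivial bound $\sum_{i\geq 2}|C_i|^2\leq n^2$ times the tail probability $\exp(-\Theta(\epsilon^3 n))$; the constant $c'$ in part~\eqref{it:suplargest} must be chosen large enough that this contribution is absorbed into $n/\epsilon$ in the entire range $\epsilon\in[c'/n^{1/3},c]$.

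For part~\eqref{it:suplargest}, I would run a BFS exploration from an arbitrary start vertex and let $Y_t$ denote the number of \emph{active} (revealed, unexplored) vertices after $t$ steps; the component of the start vertex equals the first hitting time of $0$. The increment $Y_t - Y_{t-1}+1$ is, conditional on history, $\mathrm{Bin}(n-t-D_t,(1+\epsilon)/n)$ where $D_t$ counts the already-discovered vertices, with conditional drift approximately $\epsilon - t/n$. Hence $E[Y_t]\approx t(\epsilon-t/n)$, which is positive and of order $\epsilon^2 n$ throughout the window $t\in[(7/4)\epsilon n,\,3\epsilon n]$. A Chernoff/Azuma-type deviation of size $\Theta(\epsilon^2 n)$ for a sum of $\Theta(\epsilon n)$ bounded increments with total variance $\Theta(\epsilon n)$ costs probability at most $\exp(-\Theta(\epsilon^3 n))$, which after a union bound over a polynomial grid of times yields $|C_1|\in[(7/4)\epsilon n,\,3\epsilon n]$ with probability $1-\exp(-\Theta(\epsilon^3 n))$. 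The lower-bound direction (forcing the exploration to discover a large component from some start vertex) is handled analogously, using the fact that with high probability a positive fraction of start vertices lie in the giant. The main obstacle is part~\eqref{it:sup}: the duality argument requires care in handling the conditioning on $C_1$, and in particular the choice of $c'$ in part~\eqref{it:suplargest} must be calibrated so that the naive bound $n^2\exp(-c\epsilon^3 n)$ on the atypical event is dominated by the target $Kn/\epsilon$ down to $\epsilon = c'/n^{1/3}$.
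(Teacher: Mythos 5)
The paper does not prove this lemma from scratch: all three parts are quoted directly from Long--Nachmias--Ning--Peres \cite{LNNP} (their Lemma 5.3, Theorems 5.12 and 5.13(ii), Lemma 5.4 and Theorem 5.9), so any self-contained argument is necessarily a different route, and your part (1) (the susceptibility identity $\sum_i|C_i|^2=\sum_v|C(v)|$ plus domination by a subcritical branching process) is fine. The genuine gap is in part (2). The statement is claimed for all $\eps\in[n^{-1/3},c]$, but your derivation rests on part (3), which is only available for $\eps\geq c'n^{-1/3}$: in the sub-range $\eps\in[n^{-1/3},c'n^{-1/3})$ the graph sits inside the critical window, $|C_1|$ is of order $n^{2/3}$ with non-degenerate fluctuations and is \emph{not} concentrated in $[(7/4)\eps n,3\eps n]$, so the duality reduction to a subcritical graph with gap $\Theta(\eps)$ simply does not exist there. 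Moreover, even where part (3) applies, the proposed treatment of the atypical event cannot be calibrated: you need $n^2\exp(-c\eps^3n)\leq Kn/\eps$, i.e.\ $\eps n\exp(-c\eps^3n)\leq K$, and at $\eps=\Theta(n^{-1/3})$ the left-hand side is $\Theta(n^{2/3})\exp(-c\,(c')^3)$, which grows with $n$ for \emph{every} fixed constant $c'$; the trivial bound $n^2$ on $\sum_{i\geq2}|C_i|^2$ only suffices once $\eps^3n\gtrsim\log n$. So roughly the range $\eps\lesssim(\log n/n)^{1/3}$ is not covered by your argument. To close this one needs either a direct truncated-susceptibility argument (bound $n\,E[|C(v)|\,;\,|C(v)|\leq k]$ with a cutoff $k\asymp\eps n$, and show separately that with suitable probability at most one component exceeds the cutoff) --- which is essentially how the cited source proceeds --- or a genuinely better bound on the atypical event than $n^2$ times its probability.

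A secondary weak point is the lower-tail half of part (3). The upper tail via the exploration process is plausible, but $P\big(|C_1|<(7/4)\eps n\big)\leq K\exp(-c\eps^3n)$ is the delicate moderate-deviation estimate in the weakly supercritical regime, and your sketch begs the question: a BFS started at a single vertex typically dies after $O(\eps^{-2})$ steps because the start vertex is usually \emph{not} in the giant, and ``with high probability a positive fraction of start vertices lie in the giant'' is (a quantitative form of) exactly what has to be proved, at the exponential rate $\exp(-c\eps^3 n)$. A complete argument needs, for instance, exponential concentration for the number of vertices lying in components of size at least $\Theta(\eps^{-2})$ followed by a sprinkling/two-round exposure step to merge them into one component of size close to $2\eps n$. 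Since your part (2) is then fed by part (3), this step carries real weight in your scheme, whereas parts (2) and (3) are logically independent in the source the paper cites.
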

\begin{proof}
Part~\ref{it:sub} follows from \cite[Lemma 5.3 \& Theorem 5.12]{LNNP}. Part~\ref{it:sup} is \cite[Theorem 5.13, Part (ii)]{LNNP}. Part~\ref{it:suplargest} follows from \cite[Lemma 5.4 \& Theorem 5.9]{LNNP}.
\end{proof}

\begin{lemma}\label{lemia3}
Let $G\sim G(n,p)$, $p\geq (1-An^{-1/3})/n$ where $A$ is a  large constant. Let $C_1,C_2,\hdots$ be the connected components of $G$ in decreasing order of size. Then, for all sufficiently large  constant $L>0$,  there exists  a positive constant $p'$ such that
for all  $n$ sufficiently large it holds that $P\big(|C_1|\geq Ln^{2/3},\ \sum_{j\geq 2} |C_j|^2 \leq n^{4/3}\big)\geq p'$.
\end{lemma}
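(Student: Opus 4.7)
Writing $p=(1+\eps)/n$ with $\eps n^{1/3}\geq -A$, I will split by the magnitude of $\eps$ into three regimes and take the minimum of the resulting probability bounds at the end.

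In the strictly supercritical regime $\eps \geq \delta$ for a constant $\delta>0$, Lemma~\ref{lem:supercritical} ensures that $|C_1|$ concentrates around $\beta n=\Theta(n) \gg Ln^{2/3}$ and $E[\sum_{j\geq 2}|C_j|^2]\leq K_3 n\ll n^{4/3}$, so Chebyshev and Markov make both target events hold with probability $1-o(1)$. In the weakly supercritical regime $\eps \in [\lambda_0 n^{-1/3},\delta]$, where $\lambda_0$ is a large constant depending on $L$, Part~\ref{it:suplargest} of Lemma~\ref{lemia1} yields $|C_1|\geq (7/4)\eps n \geq (7/4)\lambda_0 n^{2/3}$ except with exponentially small probability, giving $|C_1|\geq Ln^{2/3}$ provided $\lambda_0\geq 4L/7$; Part~\ref{it:sup} combined with Markov's inequality gives $P(\sum_{j\geq 2}|C_j|^2 >n^{4/3})\leq K/\lambda_0 \leq 1/3$ for $\lambda_0 \geq 3K$.

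The remaining case, the near-critical window $\eps=\lambda n^{-1/3}$ with $\lambda \in[-A,\lambda_0]$, is the main obstacle: here $|C_1|^2$ and $\sum_{j\geq 2}|C_j|^2$ are both of order $n^{4/3}$, so moment inequalities alone cannot drive either event below its complement. I will invoke Aldous's scaling limit, which asserts that $(|C_i|/n^{2/3})_{i\geq 1}$ converges jointly in $\ell^2_\downarrow$ to the ordered excursion lengths $(\gamma_i^\lambda)_{i\geq 1}$ of a Brownian motion with parabolic drift $B_t+\lambda t -t^2/2$. This carries the joint event through to the limit $P(\gamma_1^\lambda\geq L,\ \sum_{j\geq 2}(\gamma_j^\lambda)^2\leq 1)$. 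For each fixed $\lambda$ this limit probability is strictly positive because the Brownian excursion representation has full support in $\ell^2_\downarrow$ and realizes configurations with an arbitrarily large first excursion together with an arbitrarily small $\ell^2$ tail; continuity of the map $\lambda\mapsto P(\gamma_1^\lambda\geq L,\ \sum_{j\geq 2}(\gamma_j^\lambda)^2\leq 1)$ and compactness of $[-A,\lambda_0]$ then yield a uniform positive lower bound, completing the argument after taking $p'$ to be the minimum of the three regime bounds.
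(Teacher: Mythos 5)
Your decomposition and the two supercritical regimes are sound and run parallel to the paper's treatment of $p\geq(1+An^{-1/3})/n$: you control the tail sum $\sum_{j\geq2}|C_j|^2$ by Part~\ref{it:sup} of Lemma~\ref{lemia1} plus Markov exactly as the paper does, and you get $|C_1|\geq Ln^{2/3}$ from Part~\ref{it:suplargest} (resp.\ Lemma~\ref{lem:supercritical} when $\eps$ is of constant order), where the paper instead uses a monotone coupling with $G(n,1/n)$ and Corollary~\ref{strictercritical}; either works, and both you and the paper implicitly assume $np=O(1)$ in the strictly supercritical range. The genuine divergence is in the window $|np-1|=O(n^{-1/3})$. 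The paper stays at the discrete level: Corollary~\ref{strictercritical} (i.e.\ \cite[Theorem 5.20]{JLR}) with $t=1$, $b_1=L$, $d=1$ gives $|C_1|\geq Ln^{2/3}$ and $|C_2|\leq n^{2/3}$ with positive probability, and the argument of \cite[Lemma 8.26]{LNNP} is then invoked to upgrade this to $\sum_{j\geq2}|C_j|^2\leq n^{4/3}$. You instead pass to Aldous's $\ell^2_\downarrow$ scaling limit with parabolic drift, which is appealing because convergence in the $\ell^2$ metric makes the map sending a decreasing sequence to the sum of squares of all but its largest entry continuous, so the joint event transfers to the limit with no analogue of the LNNP computation.

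Two steps of that critical-window argument are asserted rather than proved, and they are where the work actually lies. First, the positivity of $P\big(\gamma_1^\lambda\geq L,\ \sum_{j\geq2}(\gamma_j^\lambda)^2\leq 1\big)$ is justified by the claim that the excursion-length vector has ``full support in $\ell^2_\downarrow$''; that is not a citable fact and is stronger than what is true or needed. The positivity is correct, but it requires a genuine (if short) excursion-theoretic/absolute-continuity argument showing that with positive probability the reflected process produces one excursion of length at least $L$ while the remaining excursions have small $\ell^2$ norm; as written this is a gap. Second, since $p$ (hence $\lambda=n^{1/3}(np-1)$) may vary with $n$ inside $[-A,\lambda_0]$, you need the form of the scaling limit in which $n^{1/3}(np-1)\to\lambda$ (plus a subsequence/compactness argument and the observation that the boundary of your event is null for every $\lambda$, so the portmanteau theorem applies), together with continuity of $\lambda\mapsto P(\cdot)$ to get a uniform positive lower bound over the compact window; continuity and compactness alone, without relating the law at parameter $\lambda_n$ to the law at its limit point, do not suffice. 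Both issues are fixable, but they are precisely the points at which the paper's discrete route (JLR Theorem 5.20 plus the argument of \cite[Lemma 8.26]{LNNP}) does the corresponding work by citation.
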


The proof of Lemma~\ref{lemia3} is based on \cite[Proof of Lemma 8.26]{LNNP}. We will use the following special case of \cite[Theorem 5.20]{JLR}.
\begin{corollary}[{\cite[Theorem 5.20]{JLR}}]\label{strictercritical}
Let $t$ be a positive integer and $d,a_1,\hdots,a_t,$ $b_1,\hdots,b_t$ be such that $\infty\geq a_1> b_1> a_2> b_2> \hdots> a_t> b_t>d>0$. Let $c$ be a constant (not necessarily positive) and let $p=(1+cn^{-1/3})/n$.

For $G\sim G(n,p)$ denote by $C_1,C_2,\hdots$ the connected components of $G$ in decreasing order of their sizes. There exists $\ell:=\ell(c,t,d,a_1,\hdots,a_t,b_1,\hdots,b_t)>0$ such that for all sufficiently large $n$, it holds that
\begin{equation*}
P\Big(a_1 \geq \frac{|C_1|}{n^{2/3}}\geq b_1 ,\hdots,a_t \geq \frac{|C_t|}{n^{2/3}}\geq b_t , d\geq \frac{|C_{t+1}|}{n^{2/3}}\Big)\geq \ell.
\end{equation*}
\end{corollary}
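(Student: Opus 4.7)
The plan is to partition $p \in [(1-An^{-1/3})/n,\,1]$ into three regimes and, in each, lower-bound the probability of the joint event $\{|C_1| \geq L n^{2/3}\} \cap \{\sum_{j \geq 2}|C_j|^2 \leq n^{4/3}\}$ by a positive constant depending only on $A$ and $L$; the final $p' = p'(A,L)$ is then the minimum of the three bounds.

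For the \emph{supercritical regime} $p \in [c_0/n, c_1/n]$ with constants $1 < c_0 < c_1$, Lemma~\ref{lem:supercritical} gives $|C_1| = (\beta+o(1))n \gg L n^{2/3}$ (via Chebyshev on the variance bound) and $E[\sum_{j\geq 2}|C_j|^2] \leq K_3 n \ll n^{4/3}$, so both events hold with probability $1-o(1)$; for $p$ larger than any constant over $n$ the graph is asymptotically connected and the conclusion is trivial. For $p = (1+\epsilon)/n$ \emph{just above the scaling window}, with $\epsilon \in [8L n^{-1/3}, c^*]$ (where $c^*$ is the small constant from Lemma~\ref{lemia1}): by Lemma~\ref{lemia1} part~\ref{it:suplargest}, $|C_1| \geq \tfrac{7}{4}\epsilon n \geq 14 L n^{2/3}$ with probability at least $1 - K\exp(-c^*(8L)^3)$; by Lemma~\ref{lemia1} part~\ref{it:sup} combined with Markov, $P(\sum_{j\geq 2}|C_j|^2 > n^{4/3}) \leq K/(8L)$. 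For $L$ sufficiently large both failure probabilities are small, so a union bound yields positive probability uniformly over $\epsilon$ in this range.

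The main obstacle is the \emph{scaling window regime} $p = (1 + c n^{-1/3})/n$ with $c \in [-A, 8L]$. Apply Corollary~\ref{strictercritical} with $t=1$, $a_1 = L+1$, $b_1 = L$, and a constant $d > 0$ to be chosen: with probability at least $\ell_0 = \ell_0(c, d, L) > 0$ (uniform in $c$ on the compact interval by continuity of the Aldous-type scaling limit underlying the corollary) we obtain the event
\begin{equation*}
X := \{|C_1| \in [L, L+1]\,n^{2/3}\} \cap \{|C_j| \leq d\,n^{2/3} \text{ for all } j \geq 2\}.
\end{equation*}
On $X$ we have $\sum_{j \geq 2}|C_j|^2 \leq \sum_{j : |C_j| \leq dn^{2/3}} |C_j|^2 = \sum_{v\,:\,|C(v)| \leq dn^{2/3}} |C(v)|$ (using the identity $\sum_j |C_j|^2 = \sum_v |C(v)|$). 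The key technical input is the moment estimate
\begin{equation*}
E\Big[\sum_{v\,:\,|C(v)|\leq dn^{2/3}} |C(v)|\Big] \leq C(A,L)\,\eta(d)\,n^{4/3}, \qquad \eta(d) \to 0 \text{ as } d \to 0,
\end{equation*}
which I would establish via an exploration-process argument analogous to Lemma~\ref{lemia1} part~\ref{it:sub} (stochastically dominating the BFS growth of a small component by a near-critical branching process, uniformly in the scaling parameter~$c$). Choosing $d = d(A, L)$ so small that $C(A,L)\eta(d) < \ell_0/2$, Markov gives $P(X \cap Y^c) < \ell_0/2$ where $Y := \{\sum_{j\geq 2}|C_j|^2 \leq n^{4/3}\}$, whence $P(X \cap Y) \geq P(X) - \ell_0/2 \geq \ell_0/2 > 0$.

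Taking the minimum of the three regime-specific probabilities yields the desired $p'(A,L) > 0$. The main technical effort is the small-component moment estimate uniform in the scaling parameter, which follows the structure of~\cite[Proof of Lemma 8.26]{LNNP}.
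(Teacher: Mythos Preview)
You are proving the wrong statement. Your entire proposal is an argument for Lemma~\ref{lemia3} (positive probability that $|C_1|\geq Ln^{2/3}$ and $\sum_{j\geq 2}|C_j|^2\leq n^{4/3}$, uniformly over $p\geq (1-An^{-1/3})/n$), not for Corollary~\ref{strictercritical}. The corollary is a statement about a \emph{fixed} scaling-window parameter $c$ and asks that the top $t$ component sizes fall in prescribed intervals $[b_i n^{2/3},a_i n^{2/3}]$ while $|C_{t+1}|\leq d n^{2/3}$; there is no sum-of-squares condition, no range of $p$ to partition, and no role for Lemmas~\ref{lem:supercritical} or~\ref{lemia1}. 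Worse, in your ``scaling window regime'' you explicitly \emph{apply} Corollary~\ref{strictercritical} as a tool, so even as a proof of Lemma~\ref{lemia3} the argument would be circular with respect to the present task.

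The paper's proof of Corollary~\ref{strictercritical} is a two-line model transfer. Theorem~5.20 of \cite{JLR} already gives exactly the conclusion of the corollary for the uniform model $G(n,M)$ with $M=n/2+cn^{2/3}$ (this is the Aldous-type scaling-limit statement: the rescaled ordered component sizes converge in distribution to a limit with no atoms, so any open box in the first $t+1$ coordinates has positive limiting probability). One then passes to $G(n,p)$ with $p=(1+2cn^{-1/3})/n$ by noting that the number of edges equals $n/2+cn^{2/3}+O(\sqrt{n})$ with probability $\Omega(1)$; conditioning on the edge count and using that a fluctuation of $O(\sqrt{n})$ edges corresponds to an $o(1)$ shift in the scaling parameter yields the $G(n,p)$ statement with a possibly smaller constant $\ell$.
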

\begin{proof}
The statement of \cite[Theorem 5.20]{JLR}  is for the  {E}rd\"os-{R}\'enyi random graph model $G(n,M)$ with $M=(n/2)+cn^{2/3}$. Since for $G\sim G(n,p)$ with $p=(1+2cn^{-1/3})/n$ the number of edges is $(n/2)+cn^{2/3}+O(\sqrt{n})$ with  probability $\Omega(1)$, the corollary follows.
    \end{proof}

\begin{proof}[Proof of Lemma~\ref{lemia3}]
Let $A>0$ be a large constant. We consider two cases. If $p\geq (1+An^{-1/3})/n$, we have from Part~\ref{it:sup} of Lemma~\ref{lemia1} that
$E[\sum_{j\geq 2} |C_j|^2]\leq K n^{4/3}/A$, so by Markov's inequality
\[P\Big(\sum_{j\geq 2} |C_j|^2 \leq n^{4/3}\Big)\geq 1-\frac{K}{A}.\]
From Corollary~\ref{strictercritical} (with $t=1$, $b_1=L$) we obtain that for $p=1/n$, $|C_1|$ is greater than $Ln^{2/3}$ with asymptotically positive probability $p_1$ for any constant $L>0$. Note that for $p>1/n$ we can couple $G\sim G(n,1/n)$ and $G'\sim G(n,p)$
so that $G$ is a subgraph of $G'$. Since $|C_1|$ is monotone, it follows that for $p>1/n$, $|C_1|$ is greater than $Ln^{2/3}$ with  positive probability $p_1$. Provided that $A$ is sufficiently large (depending on $K,p_1$), by a union bound we have that
\[P\Big(|C_1|\geq Ln^{2/3},\ \sum_{j\geq 2} |C_j|^2 \leq n^{4/3}\Big)\geq p_1-\frac{K}{A}>0.\]
If $(1-An^{-1/3})/n\leq p\leq (1+An^{-1/3})/n$, we have from Corollary~\ref{strictercritical} (with $t=1$, $d=1$, $b_1=L$) and the argument in \cite[Proof of Lemma 8.26]{LNNP} that for all sufficiently large $L$, it holds that
\[P\Big(|C_1|\geq Ln^{2/3},\ \sum_{j\geq 2} |C_j|^2 \leq n^{4/3}\Big)\geq p_2>0,\]
where $p_2$ is a constant. The lemma follows.
    \end{proof}

We will also use the following upper bound on the size of the giant component in the critical window.
\begin{lemma}[{\cite[Corollary 5.6]{SS}, see also \cite[Theorems 1 \& 7]{NP}}]\label{lem:criticalper}
Let $G\sim G(n,p)$ with $p=(1\pm cn^{-1/3})/n $ where $c$ is a sufficiently large constant. Let $C_1$ be the largest component in $G$. Then there exists a constant $r>0$ such that for positive $A$ larger than an absolute constant, it holds that
\[P(|C_1|>An^{2/3})\leq \exp(-rA^3).\]
\end{lemma}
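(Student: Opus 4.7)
The plan is to control $|C_1|$ via a breadth-first exploration of $G$ that encodes component sizes as excursion lengths of an auxiliary random walk, and to show that in the critical window no excursion can reach length $k:=An^{2/3}$ once $A$ is large compared to $c$. Throughout I fix $p=(1\pm cn^{-1/3})/n$ and write $k=An^{2/3}$, assuming for concreteness $A\geq 3c$.

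Process vertices one at a time in BFS order, restarting from an arbitrary unseen vertex whenever the pool of active vertices empties. Let $\xi_s$ be the number of fresh vertices revealed at step $s$; conditional on the history through step $s-1$, $\xi_s\sim\mathrm{Bin}(n-s-Y_{s-1}+1,p)$, where $Y_{s-1}$ is the current active count. Define $S_t:=\sum_{s\leq t}(\xi_s-1)$; a standard observation identifies the component sizes (in processing order) with the excursion lengths of $S_t$ relative to its running minimum. Hence $|C_1|\geq k$ implies the existence of indices $0\leq a<a+k\leq n$ with $\sum_{s=a+1}^{a+k}\xi_s\geq k-1$. For any such $a$ we have $E\bigl[\sum_{s=a+1}^{a+k}\xi_s\bigr]\leq knp\bigl(1-(a+k/2)/n\bigr)\leq k\bigl(1-(A/2-c)n^{-1/3}\bigr)$, so the required deviation above the conditional mean is at least $(A/2-c)An^{1/3}=\Omega(A^2 n^{1/3})$.

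The centered process $M_t:=\sum_{s\leq t}\bigl(\xi_s-E[\xi_s\mid\mathcal F_{s-1}]\bigr)$ is a martingale with predictable quadratic variation at most $Ct$ for an absolute constant $C$. Freedman's (or Bernstein's) inequality then bounds the probability of an $\Omega(A^2n^{1/3})$-deviation of $M$ over a single window of length $k$ by $\exp(-r'A^3)$ for some $r'>0$, since in the Gaussian regime $\lambda^2/\langle M\rangle\asymp (A^2n^{1/3})^2/(An^{2/3})=A^3$. The delicate step is turning this single-window estimate into a bound that is uniform over all starting indices $a\in[0,n-k]$ without losing an extra factor of $n/k=n^{1/3}/A$; a naive union bound is adequate only when $A\gtrsim(\log n)^{1/3}$. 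To obtain the cube-exponential tail uniformly in $n$ for all sufficiently large constant $A$, I would apply Doob's maximal inequality to $M_t$ globally, following Nachmias and Peres, so that the supremum over $a$ of $M_{a+k}-M_a$ is absorbed into a single martingale event whose total quadratic variation still satisfies $\langle M\rangle=O(n)$ and the bound $\exp(-rA^3)$ survives. The subcritical case $p=(1-cn^{-1/3})/n$ is strictly easier since the drift of $S_t$ is already negative from step $1$. The main obstacle is exactly this maximal-inequality step: avoiding spurious polynomial corrections requires the sharp martingale route rather than the crude union bound.
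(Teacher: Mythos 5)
The paper does not prove this lemma at all: it is imported verbatim from Scott--Sorkin and Nachmias--Peres, so your attempt is a reconstruction of a literature result. Your framework is the right one and matches Nachmias--Peres in spirit: the exploration walk, the identification of component sizes with excursion lengths, the drift computation over a window of length $k=An^{2/3}$ (giving a required upward deviation of order $A^2n^{1/3}$), and the single-window Freedman/Bernstein estimate $\exp(-\Theta(A^3))$ are all correct.

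The genuine gap is exactly at the step you flag as delicate, and your proposed fix does not work. If you apply a maximal inequality to $M_t$ over the whole exploration, the relevant quadratic variation is $\langle M\rangle=\Theta(n)$, so the best bound you can get for a deviation of size $\lambda\asymp A^2n^{1/3}$ is $\exp\bigl(-\Theta(\lambda^2/n)\bigr)=\exp\bigl(-\Theta(A^4n^{-1/3})\bigr)$, which tends to $1$ as $n\to\infty$ for fixed $A$; moreover the event $\sup_a\,(M_{a+k}-M_a)\ge\lambda$ concerns increments over windows, and reducing it to a single maximal event either costs a factor $2$ in $\lambda$ (same vacuous bound) or reintroduces the union over $a$ that you were trying to avoid. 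So ``the bound $\exp(-rA^3)$ survives'' is precisely the assertion that fails. The way Nachmias--Peres actually remove the polynomial factor is different: they bound the component of a \emph{fixed} vertex and show $P(|C(v)|\ge k)\le Cn^{-1/3}e^{-cA^3}$ --- the extra $n^{-1/3}$ coming from a ballot-type/staying-positive estimate (or an optional-stopping argument at the death time of the excursion), multiplied against the drift cost $e^{-cA^3}$ --- and then convert via $P(|C_1|\ge k)\le E\bigl[\#\{v:|C(v)|\ge k\}\bigr]/k$, which cancels the $n^{1/3}/A$. Your single-window estimate only yields $P(|C(v)|\ge k)\le e^{-cA^3}$, which after this Markov step still leaves the factor $n^{1/3}/A$ you identified. (Incidentally, for the way the paper uses the lemma the largest component is only needed to be at most $n^{2/3+\Omega(1)}$, i.e.\ $A=n^{\Omega(1)}$, where your naive union bound is already sufficient; but that does not prove the statement as written for constant $A$.) A smaller point: with $p=(1+cn^{-1/3})/n$ and $c$ large the bound can only hold for $A$ exceeding a multiple of $c$ (as in your assumption $A\ge 3c$), since $|C_1|\approx 2cn^{2/3}$ with high probability; the lemma's ``absolute constant'' threshold should be read accordingly.
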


\subsection{Concentration Inequalities}
We conclude this section by recording the following version of Azuma's inequality that we will use. 
\begin{lemma}[Azuma's inequality, see, e.g.,~{\cite[p.37]{JLR}}]\label{lem:azuma}
Let $X_1,\hdots, X_n$ be independent random variables such that, for $i=1,\hdots,n$ it holds that $a_i\leq X_i\leq b_i$. Let $X=X_1+\cdots+X_n$. Then, for all $t\geq 0$, it holds that
\[\Pr\big(|X-E(X)|\geq t\big)\leq 2\exp\Big(-\frac{t^2}{\sum^n_{i=1}(b_i-a_i)^2}\Big).\]
\end{lemma}

\section{Slow Mixing for $\Bu<B<\Bh$}\label{sepp}
In this section, we show that the SW chain mixes slowly when $B\in (\Bu,\Bh)$.

Let ${\cal B}(\vv,\delta)$ be the $\ell_\infty$-ball of configuration vectors of the $q$-state Potts model in $K_n$
around  $\vv$ of radius $\delta$, that is,
\begin{equation}\label{eq:ballball}
{\cal B}(\vv,\delta) = \{ \w\in{\mathbb Z}^q\,|\, \|\w/n - \vv\|_\infty\leq \delta\}.
\end{equation}

We will show that for $B<\Bh$ the Swendsen-Wang algorithm is
extremely unlikely to leave the vicinity of the uniform configuration. More precisely, we show the following.
\begin{lemma}\label{l2}
Assume $B<\Bh$. There exists $\eps_0>0$ such that, for all constant $\eps\in (0,\eps_0)$, for $S = {\cal B}(\u,\eps)$, it holds that
\begin{equation*}
\Pr (X_{1}\in S\mid X_0\in S) \geq 1 - \exp(-\Theta(n^{1/2})).
\end{equation*}
\end{lemma}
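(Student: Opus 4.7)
The plan is to choose $\eps_0$ small enough that every color class in $X_0\in S$ is strictly subcritical for the percolation step, and then control the recoloring step by a bounded-differences inequality. Since $B<\Bh=q$, we have $1/q<1/B$, so there exist constants $\delta>0$ and $\eps_0>0$ with $(1/q+\eps_0)B\le 1-\delta$.

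\emph{Step 1 (subcritical percolation).} Let $V_i=\{v:X_0(v)=i\}$ for $i\in[q]$, so $|V_i|/n\in[1/q-\eps,1/q+\eps]$. Within $V_i$ the percolation step of Swendsen--Wang is an instance of $G(|V_i|,B/n)$; reparametrizing as $G(|V_i|,c_i/|V_i|)$ with $c_i:=|V_i|B/n\le 1-\delta$, we are uniformly in the subcritical regime. Applying Lemma~\ref{l1} with $t=1/2$ inside each color class gives that the largest component inside $V_i$ has size at most $n^{1/2}$ except on an event of probability $\exp(-\Theta(n^{1/2}))$. A union bound over $i\in[q]$ produces an event $\mathcal{E}$ with $\Prob{\mathcal{E}^c}\le\exp(-\Theta(n^{1/2}))$ on which every percolation component has size at most $n^{1/2}$.

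\emph{Step 2 (concentration in recoloring).} Condition on any outcome in $\mathcal{E}$ and list the percolation components across all color classes as $C_1,\dots,C_N$, with sizes $s_k:=|C_k|$ satisfying $\max_k s_k\le n^{1/2}$ and $\sum_k s_k=n$. In the recoloring step each $C_k$ is independently assigned a uniform color in $[q]$; writing $W_j:=\sum_k s_k\,\indicator{C_k\text{ gets color }j}$ for the new size of color class $j$, we have $\Exp{W_j}=n/q$ and changing the color of a single $C_k$ perturbs $W_j$ by at most $s_k$. McDiarmid's inequality then gives
\[
\Prob{|W_j-n/q|>\eps n}\ \le\ 2\exp\!\Bigl(-\tfrac{2\eps^2 n^2}{\sum_k s_k^2}\Bigr)\ \le\ 2\exp(-2\eps^2 n^{1/2}),
\]
using $\sum_k s_k^2\le (\max_k s_k)(\sum_k s_k)\le n^{3/2}$. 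A union bound over $j\in[q]$ together with the bound from Step~1 completes the proof.

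The only delicate point is the calibration of $\eps_0$: it must be small enough that every color class sits strictly below the percolation threshold, which is possible precisely because $B<\Bh$. Everything else is routine concentration, and the $n^{1/2}$ exponent in the statement is optimal for this argument: with components bounded by $n^{t}$, Lemma~\ref{l1} contributes $\exp(-\Theta(n^{t}))$ while McDiarmid contributes $\exp(-\Theta(n^{1-t}))$, and these are balanced exactly at $t=1/2$.
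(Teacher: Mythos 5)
Your proposal is correct and follows essentially the same route as the paper's proof: choose $\eps_0$ small enough (using $B<\Bh=q$) that every color class is strictly subcritical, apply Lemma~\ref{l1} with $t=1/2$ to bound all component sizes by $n^{1/2}$, and then use a bounded-differences concentration inequality (the paper uses Azuma, you use McDiarmid, which is the same estimate) for the recoloring step. No gaps.
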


The reason for Lemma~\ref{l2} failing for $B>\Bh$ is that the percolation step of the Swendsen-Wang
algorithm on a cluster of size $n/q$ yields linear sized connected components, and these allow
the algorithm to escape the neighborhood of $\u$ (a somewhat similar argument applies for $B=\Bh$ as well, though in this case one has to account more carefully for the fluctuations of the largest components since the percolation step of the SW dynamics is in the critical window for such configurations).

\begin{proof}[Proof of Lemma \ref{l2}]
Let $X_0\in S$. The first step of the Swendsen-Wang algorithm chooses, for each color class,
a random graph from $G(m,p)$, where $p=B/n$ and $m$ is the number of vertices of that color.
For all sufficiently small $\eps$ we have
$$
p = \frac{B}{m}\frac{m}{n} \leq \frac{d}{m},
$$
where $d<1$ (we used $B<q$ and $m\leq n/q+\eps n$). Now Lemma~\ref{l1} (with $t=1/2$) implies that with
probability at least
\begin{equation}\label{zle1}
1 - n \exp(-\Theta(n^{1/2}))
\end{equation}
all components after the first step have size $\leq n^{1/2}$. The second step of the Swendsen-Wang algorithm colors each component
by a uniformly random color; call the resulting state $X_1$. Let $Z_i$ be the number of
vertices of color $i$ in $X_1$. By symmetry, $E[Z_i] = n/q$.

Now assume that all components have size $\leq n^{1/2}$. By Azuma's inequality (see Lemma~\ref{lem:azuma}),
\begin{equation}\label{zle2}
\Pr( |Z_i-n/q|\geq\eps n ) \leq \exp(-\Theta(n^{1/2})),
\end{equation}
and hence $\Pr(X_1\in S)\geq 1 - n \exp(-\Theta(n^{1/2}))$, which combined with~\eqref{zle1} yields the lemma.
    \end{proof}

We also analyze the behavior of the algorithm around the majority
configuration $\m$ (recall, for the configuration to exist we need $B\geq\Bu$).
\begin{lemma}\label{l3}
Assume $B>\Bu$ and let $\m=(a,b,\dots,b)$ where $a>1/q$ is the jacobian attractive fixpoint of $F$ of Lemma~\ref{lem:fixpoints}.
There exists constant $\eps_0>0$ such that, for all sufficiently large $n$, for all $\eps\in (n^{-1/7},\eps_0)$, for $S = {\cal B}(\m,\eps)$, we have
\begin{equation}\label{eq:PSWSS}
\Pr (X_1\in S\mid X_0\in S) \geq 1 - \exp(-\Theta(n^{1/3})).
\end{equation}
\end{lemma}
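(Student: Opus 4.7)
The plan is to exploit that, by Lemma~\ref{lem:fixpoints} combined with Lemma~\ref{conn}, $a$ is a Jacobian attractive fixpoint of $F$: there exists $\lambda<1$ such that $|F'(z)|\leq \lambda$ on a sufficiently small neighborhood of $a$. I will then verify that the Swendsen--Wang dynamics tracks $F$ with fluctuations small enough to stay inside the ball $S$. Throughout, given $X_0\in S$, relabel so that color $1$ is the heavy one with density $\alpha_1\in[a-\eps,a+\eps]$, while colors $i\geq 2$ have densities $\alpha_i\in[b-\eps,b+\eps]$. By Lemma~\ref{lehehe}, $aB>1>bB$, so taking $\eps_0$ sufficiently small (and noting $n^{-1/7}\to 0$) we obtain constants $c_0>0$ with $B\alpha_1\geq 1+c_0$ and $B\alpha_i\leq 1-c_0$ uniformly over all admissible $\eps$. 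Thus the color-$1$ percolation is strictly supercritical, while each color-$i$ percolation is strictly subcritical.

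The next step is the percolation analysis. Apply Lemma~\ref{l1al1} with exponent $1/3$ to color $1$: with probability $1-\exp(-\Theta(n^{1/3}))$ the largest component $C$ satisfies $\big||C|-x_1\alpha_1 n\big|\leq n^{5/6}$, where $x_1\in(0,1)$ solves $x+\exp(-B\alpha_1 x)=1$, and every other color-$1$ component has size $\leq n^{1/3}$. Apply Lemma~\ref{l1} to each light color: with probability $1-\exp(-\Theta(n^{1/3}))$, every component there has size $\leq n^{1/3}$. On the intersection event $E$, $C$ is the unique giant and all other components are of size $\leq n^{1/3}$, with sum of squared sizes at most $n^{1/3}\cdot n = n^{4/3}$.

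For the recoloring step, assign each component an independent uniform color. By the permutation symmetry of $S$, I may condition on $C$ being assigned color $1$. Writing $Z_j := |\{v : X_1(v) = j\}|$, each $Z_j$ is (conditional on the component structure) a sum of independent bounded variables with conditional means
\begin{equation*}
E[Z_1\mid\cdot]/n = F(\alpha_1) + O(n^{-1/6}), \qquad E[Z_j\mid\cdot]/n = (1-F(\alpha_1))/(q-1) + O(n^{-1/6}) \ \ (j\geq 2),
\end{equation*}
where the $O(n^{-1/6})$ absorbs the $n^{5/6}$ deviation of $|C|$. Hoeffding's inequality on $E$ then gives $\Pr(|Z_j - E[Z_j\mid\cdot]|\geq tn\mid E)\leq 2\exp(-t^2 n^{2/3}/2)$.

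To conclude, shrink $\eps_0$ further so that $|F'(z)|\leq\lambda<1$ on $|z-a|\leq\eps_0$ and the analogous contraction rate $\leq\lambda$ holds at $b$ for $z\mapsto(1-F(z))/(q-1)$. Then $|F(\alpha_1)-a|\leq\lambda\eps$, and taking $t=(1-\lambda)\eps/3$ and a union bound over $j\in[q]$ yields $\|X_1/n - \m\|_\infty \leq \lambda\eps + (1-\lambda)\eps/3 + O(n^{-1/6}) \leq \eps$ for $n$ large, with Hoeffding failure probability $\exp(-\Theta((1-\lambda)^2\eps^2 n^{2/3}))$. For constant $\eps\in(0,\eps_0)$ this is $\exp(-\Theta(n^{2/3}))$; for $\eps=n^{-1/7}$ it is $\exp(-\Theta(n^{8/21}))$, both dominated by the $\exp(-\Theta(n^{1/3}))$ percolation failure. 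The main obstacle---and the reason for the specific threshold $n^{-1/7}$---is balancing the deterministic error $O(n^{-1/6})$ from Lemma~\ref{l1al1} against the contraction slack $(1-\lambda)\eps$: for $\eps=n^{-1/7}$ this just works because $n^{-1/6}=o(n^{-1/7})$ (a margin of $n^{-1/42}$), but there is little room to either weaken the percolation-tail exponent or push $\eps$ much below $n^{-1/6}$.
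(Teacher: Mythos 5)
Your proposal is correct and follows essentially the same route as the paper's proof: supercriticality of the heavy class and subcriticality of the light classes via Lemma~\ref{lehehe}, the moderate-deviation bounds of Lemmas~\ref{l1al1} and~\ref{l1} for the percolation step, an Azuma/Hoeffding bound for the recoloring step, and the contraction of $F$ near the jacobian attractive fixpoint $a$ to absorb the $O(n^{-1/6})$ error, which also yields the case $\eps=n^{-1/7}$ since $n^{-1/6}=o(n^{-1/7})$. The only cosmetic differences (deviation scale $n^{5/6}$ versus $n^{2/3}$ for the giant, and explicitly conditioning on the giant receiving colour $1$ by symmetry, which the paper leaves implicit) do not change the argument.
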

The reason that Lemma~\ref{l3} does not hold for $B=\Bu$ is that the fixpoint $a>1/q$ of $F$ is no longer attractive; indeed, in Section~\ref{sec:fast-Bu} we show that the Swendsen-Wang algorithm escapes the vicinity of this fixpoint in $O(n^{1/3})$ steps.

\begin{proof}[Proof of Lemma \ref{l3}]
Let $X_0\in S$ and let $\gamma:=F'(a)$ (recall that $|\gamma|<1$, since $a$ is Jacobian attractive fixpoint by Lemma~\ref{lem:fixpoints}).
The first step of the Swendsen-Wang algorithm chooses,
for each color class, a random graph from $G(m,p)$, where $p=B/n$ and $m$ is the number of vertices of that color.
Let $m_1$ be the number of vertices of the dominant color. Since $X_0\in S$ we have $m_1/n = a + \tau =: a'$ where
$|\tau|\leq \eps$. We can write
$$
p = (m_1 B/n) / m_1 = (a'B)/m_1,
$$
where $a'B>1$ for sufficiently small $\eps_0>0$ (using $aB>1$ from Lemmas~\ref{lehehe} and~\ref{lem:halfopen}). This means that
the $G(m,p)$ process for the dominant color class is supercritical. Let $\beta\in (0,1]$ be the root of $x+\exp(-a'B x) = 1$.
By Lemma~\ref{l1al1}
the random graph will have, with probability
$\geq 1-\exp(-\Theta(n^{1/3}))$, one component of size $a' \beta n \pm n^{2/3}$ and all the other components will
have size at most $n^{1/3}$.

Let $m_2$ be the number of vertices in one of the non-dominant colors. Since $X_0\in S$ we have $m_2/n =: b'$ where
\begin{equation}
b-\eps_0 \leq b-\eps \leq b' \leq b+\eps\leq b+\eps_0.
\end{equation}
We can write
$$
p = (m_2 B/n) / m_2 = (b' B)/m_2,
$$
where $b' B<1$ for sufficiently small $\eps_0>0$ (using $bB<1$ from Lemmas~\ref{lehehe} and~\ref{lem:halfopen}). This means that
the $G(m,p)$ process in this component is subcritical.  By Lemma~\ref{l1} (with $t=1/3$), with probability
$\geq 1-\exp(-\Theta(n^{1/3}))$ the random graph will have all components of size at most $n^{1/3}$.

To summarize: starting from a configuration in $S$ after the first step of the Swendsen-Wang algorithm
we have, with probability $\geq 1-q\exp(-\Theta(n^{1/3}))$ one large component of size
$a' \beta n \pm n^{2/3}$ and the remaining components are of size $\leq n^{1/3}$ (small components).
In the second step of the algorithm the components get colored by a random color.
By symmetry,  in expectation each color obtains $(n-a' \beta n \mp n^{2/3})/q$ vertices from the small components
and by Azuma's inequality this number is $(n-a' \beta n \mp n^{2/3})/q \pm n^{5/6}$ with probability
$\geq 1 - \exp(-\Theta(n^{1/3}))$.
Combining the analysis of the first and the second step we obtain that at the end with
probability $\geq 1-2q\exp(-\Theta(n^{1/3}))$ we have 
\begin{equation}\label{grg4c}
\Big\|\alphab(X_{t+1}) -
\Big(F(a'),\frac{1-F(a')}{q-1},\dots,\frac{1-F(a')}{q-1}\Big)\Big\|_\infty \leq 2n^{-1/6}.
\end{equation}

For sufficiently small $\eps_0>0$ there exists $\gamma'\in (\gamma,1)$ such that for all $|\tau|<\eps_0$ we
have $|F(a+\tau)-a|<\gamma'\tau\leq\gamma'\epsilon$. Therefore, for all  sufficiently large $n$ and $\epsilon\in (n^{-1/7},\epsilon_0)$, we have
\begin{equation}\label{eq:newstateb}
|F(a')\pm 2n^{-1/6} - a|\leq \eps \mbox{ and }\Big|\frac{1-F(a')}{q-1}\pm 2n^{-1/6} - b\Big|\leq\eps.
\end{equation} 
Combining \eqref{grg4c} and \eqref{eq:newstateb} gives that $X_1\in S$ with probability at least $1-\exp(-\Omega(n^{1/3}))$, which  finishes the proof of the  lemma. 
\end{proof}

Combining Lemmas~\ref{l2} and \ref{l3} we obtain Part \ref{thm:slow} of Theorem \ref{thm:main}.
\begin{corollary}
For $B\in (\Bu,\Bh)$ the Swendsen-Wang algorithm has mixing time $\exp(\Omega(n^{1/3}))$.
\end{corollary}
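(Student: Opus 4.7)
The plan is to prove the lower bound via a standard bottleneck argument, using a single majority-phase ball from Lemma~\ref{l3} as the bad cut. Fix $\epsilon>0$ small enough that Lemma~\ref{l3} applies and that the $q+1$ balls $\mathcal{B}(\u,\epsilon),\mathcal{B}(\m^{(1)},\epsilon),\dots,\mathcal{B}(\m^{(q)},\epsilon)$ (where $\m^{(1)},\dots,\m^{(q)}$ are the $q$ permutations of the majority phase $\m=(a,b,\dots,b)$) are pairwise disjoint; this is possible since $a>1/q>b$ by Lemma~\ref{lehehe}. I take $S:=\mathcal{B}(\m^{(1)},\epsilon)$.

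The Gibbs measure $\mu$ is invariant under color permutations, so $\mu(\mathcal{B}(\m^{(i)},\epsilon))$ is the same for all $i=1,\dots,q$. Since these $q$ balls are pairwise disjoint (subsets of the whole state space), their masses sum to at most~$1$, so $\mu(S)\leq 1/q\leq 1/3$ using $q\geq 3$; in particular $\mu(S^c)\geq 2/3$. The set $S$ is non-empty for all sufficiently large $n$: assign $\lfloor an\rfloor$ vertices to color~$1$ and distribute the rest as evenly as possible among the remaining $q-1$ colors, so that the resulting phase lies within $O(1/n)\ll\epsilon$ of $\m^{(1)}$.

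Starting from any $X_0\in S$, iterating Lemma~\ref{l3} and using the Markov property gives
\[
\Pr\bigl(X_s\in S\text{ for all }s=0,1,\dots,t\bigr)\;\geq\;\bigl(1-\exp(-\Theta(n^{1/3}))\bigr)^t.
\]
For $t=\lfloor\tfrac{1}{4}\exp(cn^{1/3})\rfloor$ with a sufficiently small constant $c>0$, the right-hand side is at least $3/4$ for all large $n$, hence
\[
\bigl\|P^{t}(X_0,\cdot)-\mu\bigr\|_{\mathrm{TV}}\;\geq\;\Pr(X_t\in S)-\mu(S)\;\geq\;\tfrac{3}{4}-\tfrac{1}{3}\;>\;\tfrac{1}{4},
\]
which forces $\Tmix\geq\exp(\Omega(n^{1/3}))$ as claimed.

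The substantive content is carried entirely by Lemma~\ref{l3}; the only additional ingredient is the pigeonhole observation that the $q$-fold color symmetry of $\mu$ combined with $q\geq 3$ forces each majority-phase ball to have mass at most $1/q<1/2$. This spares us from having to understand how the stationary mass is actually partitioned between the uniform and majority phases as $B$ varies over $(\Bu,\Bh)$---a partition that depends on whether $B\lessgtr\Bp$, and which would otherwise be the main obstacle to deploying the bottleneck argument with a canonical choice of cut.
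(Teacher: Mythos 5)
Your proof is correct. The trapping step is exactly the paper's: fix a small constant $\epsilon$, start in the ball around a majority phase, iterate Lemma~\ref{l3} (whose proof gives the $1-\exp(-\Theta(n^{1/3}))$ bound for every fixed starting state in the ball, so the Markov-property iteration is legitimate), and compare $\Pr(X_t\in S)$ with $\mu(S)$. Where you diverge from the paper is in how you certify that the trapping set has small stationary mass. The paper uses \emph{both} Lemma~\ref{l2} and Lemma~\ref{l3}: it forms the two disjoint balls $\mathcal{B}(\u,\epsilon)$ and $\mathcal{B}(\m,\epsilon)$ and takes whichever has $\mu$-mass at most $1/2$, which sidesteps knowing how the mass splits between the ordered and disordered phases (a split that flips at $B=\Bp$). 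You instead use only Lemma~\ref{l3}, observing that the $q$ permuted majority balls are pairwise disjoint (since $a>1/q>b$, so their mutual $\ell_\infty$ distance is a positive constant) and have equal mass by the color symmetry of the Gibbs distribution, whence each has mass at most $1/q\leq 1/3$; then $3/4-1/3>1/4$ closes the argument. Your route buys a small economy — Lemma~\ref{l2} is not needed for this corollary at all, and the bound $\mu(S)\leq 1/q$ is explicit rather than obtained by an $\arg\min$ selection — at the price of leaning on the $q$-fold symmetry and on $q\geq 3$ (for $q=2$ the pigeonhole only gives $\mu(S)\leq 1/2$, though one could then simply push the trapping probability above $3/4$; this is moot here since the interval $(\Bu,\Bh)$ is the relevant regime only for $q\geq 3$). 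The paper's min-of-two-balls device is marginally more robust in that it makes no use of symmetry, and Lemma~\ref{l2} is in any case reused elsewhere (e.g., in Section~\ref{sec:fast-Bu}), so the overall saving is local to this corollary.
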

\begin{proof}
For some small constant $\epsilon>0$, let $S_\u=\mathcal{B}(\u,\epsilon)$ and  $S_\m=\mathcal{B}(\m,\epsilon)$.  We can choose $\epsilon$ so that $S_\u\cap S_\m=\emptyset$ (since $\u\neq \m$) and further, by Lemmas~\ref{l2} and~\ref{l3},
\begin{equation}\label{eq:pswum}
\Pr(X_1\in S_\u\mid X_0\in S_\u)\geq 1-\exp(-C n^{1/3}), \quad \Pr(X_1\in S_\m\mid X_0\in S_\m)\geq 1-\exp(-C n^{1/3}),
\end{equation}
where $C>0$ is a constant independent of $n$.

 Let $\mu$ be the stationary distribution of the SW chain, i.e., $\mu$ is the Potts distribution given in \eqref{eq:Gibbs}. Let $S=S_\u$ if  $\mu(S_\u)\leq \mu(S_\m)$ and $S=S_\m$ otherwise, so that $\mu(S)\leq 1/2$. We will use $\overline{S}$ to denote the set of configurations which are not in $S$.

Let $X_0\in S$ and $T=\frac{1}{10}\exp( Cn^{1/3})$.  Then, using \eqref{eq:pswum}, we have that 
\[\Pr(X_T\in S)\geq (1-\exp(-C n^{1/3}))^{T}\geq 1-T\exp(-C n^{1/3}))\geq 9/10.\]
Observe now that 
\[d_{TV}(X_T,\mu)=\max_{A\subseteq \Omega} |\mu(A)-\Pr(X_T\in A)|\geq |\mu(\overline{S})-\Pr(X_T\in \overline{S})|\geq \frac{1}{2}-\frac{1}{10}>\frac{1}{4}.\]
It follows from the definition of mixing time that $\Tmix\geq T$, as claimed.
\end{proof}
We remark that for $B\in (\Bu,\Bh)$ and $B\neq \Bp$, the subset of initial configurations where the mixing of Swendsen-Wang is slow has exponentially small mass in the Gibbs distribution (known as \emph{essential mixing}, see \cite{CDLLPS}). More precisely, for $B\neq \Bp$, the  Swendsen-Wang algorithm started from a typical configuration of the Gibbs distribution gets within total variation distance $1/poly(n)$ from the stationary distribution in $O(\log n)$ steps. For $B\in (\Bu,\Bp)$,  this follows by considering starting configurations which are close to uniform and then using the upcoming Lemmas~\ref{lem:vertices-agree},~\ref{lem:phases-uniform} and~\ref{zirafa1}; for $B\in (\Bp,\Bh)$, this follows by considering starting configurations which are close to a majority phase and then  using Lemmas~\ref{lem:vertices-agree},~\ref{lem:phases-agree} and~\ref{slon8}.

\section{Basic rapid mixing results}

Recall from Section~\ref{sec:crphasetr} the definition of a phase of a configuration. In this section, we will consider two copies of the SW chain and, utilizing the symmetry of the complete graph, we give sufficient conditions on their phases that ensure the existence of a coupling.

The first lemma asserts that once the phases of the two chains align, we can couple the chains (so that the configurations agree). 
\begin{lemma}[{\cite[Lemma 4]{CDFR}}]\label{lem:vertices-agree}
For any constant $B>0$, for all $q\geq 2$, all constant $\eps>0$, for $T=O(\log{n})$ there is a coupling where 
$\Pr(X_T \neq Y_T\mid \alphab(X_0) = \alphab(Y_0)) \leq \eps.$
\end{lemma}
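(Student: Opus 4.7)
The plan is to construct an explicit Markov coupling of $(X_t)$ and $(Y_t)$ that preserves the equality of phases at every step and contracts the Hamming distance $d_t := |\{v \in V : X_t(v) \neq Y_t(v)\}|$ geometrically. Since $\alphab(X_t) = \alphab(Y_t)$ will be maintained inductively by the coupling, at every step one can choose a bijection $\phi_t \colon V \to V$ satisfying $X_t = Y_t \circ \phi_t$. Moreover, because the numbers of vertices of each color agree in $X_t$ and $Y_t$, one can further insist that $\phi_t$ fix every vertex where $X_t$ and $Y_t$ already agree; with this choice, the number of non-fixed points of $\phi_t$ is exactly $d_t$.

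The coupling is defined as follows. The monochromatic edges of $X_t$ are in natural bijection with those of $Y_t$ via $\phi_t$, since $\{u,v\}$ is $X_t$-monochromatic iff $\{\phi_t(u),\phi_t(v)\}$ is $Y_t$-monochromatic. We couple the Bernoulli$(B/n)$ percolations so that $\{u,v\}$ survives in $H^X$ iff $\{\phi_t(u),\phi_t(v)\}$ survives in $H^Y$; consequently the connected components of $H^Y$ are exactly the $\phi_t$-images of those of $H^X$. For the recoloring, each component $C$ of $H^X$ is assigned a uniform color $c(C) \in [q]$, and simultaneously its matched $Y$-component $\phi_t(C)$ is assigned the same color $c(C)$. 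A short computation using the identity ($Y$-component of $v$) $= \phi_t(X$-component of $\phi_t^{-1}(v))$ shows that $X_{t+1}(v) = Y_{t+1}(v)$ iff the $H^X$-components containing $v$ and $\phi_t^{-1}(v)$ receive the same color. In particular, every fixed point of $\phi_t$ (i.e., every $v$ where $X_t(v) = Y_t(v)$) is preserved in agreement; and any non-fixed $v$ agrees with probability at least $1/q$, corresponding to the probability that the (possibly distinct) $X$-components of $v$ and $\phi_t^{-1}(v)$ receive matching independent uniform colors.

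Summing over the $d_t$ non-fixed vertices yields $E[d_{t+1} \mid d_t] \leq \tfrac{q-1}{q}\, d_t$, so iterating gives $E[d_T] \leq n\bigl(\tfrac{q-1}{q}\bigr)^T$. For fixed $q$ this is at most $\eps$ once $T \geq \log(n/\eps)/\log\tfrac{q}{q-1} = O(\log n)$, and Markov's inequality then gives $\Pr[X_T \neq Y_T \mid \alphab(X_0) = \alphab(Y_0)] = \Pr[d_T \geq 1] \leq \eps$. The only real obstacle is the bookkeeping around $\phi_t$: one must check that a bijection with a maximal fixed-point set exists (which reduces, within each color class, to matching agreeing-to-agreeing and disagreeing-to-disagreeing vertices on the $X$-side and $Y$-side, whose cardinalities coincide thanks to $\alphab(X_t) = \alphab(Y_t)$), and that the coupling preserves phase equality so that $\phi_{t+1}$ can be chosen at the next step (it does, since $\phi_t$-matched components of $H^X$ and $H^Y$ have equal size and receive identical colors). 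Notably, the contraction constant depends only on $q$, not on $B$, so the coupling works uniformly in all regimes of the temperature.
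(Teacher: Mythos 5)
Your proposal is correct and follows essentially the same route as the paper's proof: your bijection $\phi_t$ fixing the agreeing vertices is exactly the paper's matching $\tau$, the percolation and recoloring are coupled through it in the same way, and the conclusion follows from the same contraction $\Exp{d_{t+1}\mid d_t}\leq(1-1/q)\,d_t$ plus Markov's inequality. The only (harmless) difference is that you state the contraction as an inequality rather than the paper's equality, which is in fact the more careful formulation since matched vertices can land in the same $H^X$-component.
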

Lemma~\ref{lem:vertices-agree} is essentially identical to \cite[Lemma 4]{CDFR}, which is also used in \cite[Lemma 4.1]{LNNP}. For completeness, we include the proof of the lemma.

\begin{proof}[Proof of Lemma \ref{lem:vertices-agree}]
Let $A_t = \{v: X_t(v) = Y_t(v)\}$ and $D_t = V\setminus A_t$.
We will define a one-step coupling which maintains $\alphab(X_t)=\alphab(Y_t)$
and where
\begin{equation}
\label{eq:contract-couple}
E[|D_{t+1}|\mid X_t,Y_t] = (1-1/q)|D_t|.
\end{equation}
We'll define a matching $\tau:V \ra V$.
For $v\in A_t$ let $\tau(v)=v$.
For $V\setminus A_t$ define $\tau$ so that
for all $v\in V$, $X_t(v)=Y_t(\tau(v))$.
In words, $\tau$ matches vertices with the same color
(this is always possible since $\alphab(X_t)=\alphab(Y_t)$)
and it uses the identity matching on those vertices whose colors
agree in the two chains.
In the percolation step of the Swendsen-Wang process, first
perform the step for chain $X_t$, then for $Y_t$ for a pair $v,w$
where $Y_t(v)=Y_t(w)$ we delete the edge iff the edge $(\tau(v),\tau(w))$
is deleted.  Therefore, the component sizes are identical for the two chains
and we can couple the recoloring in the same manner so that
if $v\in A_t$ then $v\in A_{t+1}$ and \eqref{eq:contract-couple}
holds.
Then, by applying Markov's inequality,
\[ \Pr(X_t\neq Y_t\mid X_0,Y_0) \leq n(1-1/q)^t \leq \eps
\] for $t=O(\log{n})$.   
\end{proof}

It is enough to get the phases within $O(\sqrt{n})$ distance
and then there is a coupling so that with constant probability the phases will be
identical after one additional step. More precisely, we have the following lemmas which are analogous to \cite[Theorem 6.5]{LNNP} for the $q=2$ case.

\begin{lemma}\label{lem:phases-uniform}
Let $B< \Bh$ and $\u=(1/q,\hdots,1/q)$. Let $X_0,Y_0$ be a pair of configurations where
$\|\alphab(X_0) - \u\|_\infty\leq Ln^{-1/2},
\|\alphab(Y_0) - \u\|_\infty\leq Ln^{-1/2},$
for an arbitrarily large constant $L>0$. For all sufficiently large $n$, there exists a coupling such that with prob. $\Theta(1)$, $\alphab(X_1) = \alphab(Y_1)$.
\end{lemma}
\begin{lemma}\label{lem:phases-agree}
Let $B\geq \Bu$ and $\m=(a,b,\dots,b)$ where $a>1/q$ is the attractive fixpoint of $F$ of Lemma~\ref{lem:fixpoints}. Let $X_0,Y_0$ be a pair of configurations where
$\|\alphab(X_0) - \m\|_\infty\leq Ln^{-1/2},
\|\alphab(Y_0) - \m\|_\infty\leq Ln^{-1/2},$
for an arbitrarily large constant $L>0$. For all sufficiently large $n$, there exists a coupling such that with prob. $\Theta(1)$, $\alphab(X_1) = \alphab(Y_1)$.
\end{lemma}

For completeness, we include the proof of the lemmas.

\begin{proof}[Proof of Lemmas~\ref{lem:phases-uniform} and~\ref{lem:phases-agree}]
We focus on proving Lemma~\ref{lem:phases-agree} which is (slightly) more involved than Lemma~\ref{lem:phases-uniform}, and then explain the small modification needed to obtain Lemma~\ref{lem:phases-uniform}. Our proof closely follows the approach in \cite[Theorem 6.5]{LNNP} (which is for the case $q=2$) with small differences in some of the technical details.

Perform the percolation step of the Swendsen-Wang algorithm independently for the chains $X_0$ and $Y_0$.
By Lemma 5.7 in \cite{LNNP}, there is a constant $C>0$ such that with probability $1-O(1/n)$, there are $\geq Cn$ isolated vertices in each chain (i.e., components of size 1).  Our goal will be to couple the colorings of the components using the $C n$ isolated vertices to guarantee that $\alphab(X_1) = \alphab(Y_1)$. 

In each chain, order the components by decreasing size.  Next, couple the coloring step so that the largest component in each chain gets the same color. For the remaining components, color them independently in each chain in order of decreasing size, but leave the last $Cn$ components uncolored.  As noted earlier,
the remaining $Cn$ uncolored components in each chain are isolated vertices (with probability $1-O(1/n)$). Let $\hat{X_1},\hat{Y_1}$ denote the configuration except on these $Cn$ uncolored components and denote by $x_i,y_i$ the number of vertices  which are assigned color $i$  under $\hat{X_1},\hat{Y_1}$ respectively. 

We will show that under this coupling, for a (large) constant $L'>0$, with probability $\Theta(1)$, it holds that 
\begin{equation}\label{eq:truncated}
|x_i-y_i|\leq L'\sqrt{n}\mbox{ for all } i=1,\hdots,q.
\end{equation}
We will do this shortly, let us assume \eqref{eq:truncated} for the moment and conclude the coupling argument. For $i\in[q]$, let $\ell_i:=x_i-y_i$, so that $|\ell_i|\leq L'\sqrt{n}$ and denote by $\boldsymbol{\ell}$ the vector with coordinates $\ell_1,\hdots,\ell_q$.  Further, denote the remaining $Cn$ uncolored vertices  as $v_1,\hdots,v_{Cn}$. Let $Z_i$ be the r.v. which denotes the number of vertices from $v_1,\hdots,v_{Cn}$ that get color $i$ in $X_1$, and let $Z'_i$ denote the respective r.v. for $Y_1$. We will couple $\mathbf{Z}:=(Z_1,\hdots,Z_q)$ with $\mathbf{Z}':=(Z_1',\hdots,Z_q')$ so that 
\begin{equation}\label{eq:ZZpZZp}
\Pr(\mathbf{Z}'=\mathbf{Z}+\boldsymbol{\ell})=\Omega(1).
\end{equation}
From this, we clearly obtain a coupling such that with probability $\Theta(1)$ we have $\alpha_i(X_1) = \alpha_i(Y_1)$ for $i\in[q]$. The coupling in \eqref{eq:ZZpZZp} is nearly identical to the one used in \cite[Lemma 6.7]{LNNP}, we give the details for completeness. 

Consider $\mathbf{W}:=(W_1,\hdots,W_q)$, where $\mathbf{W}$ follows the multinomial distribution $\mathrm{Mult}\big(Cn,(\frac{1}{q},\hdots,\frac{1}{q})\big)$ and note that $\mathbf{Z},\mathbf{Z}'$ have the same distribution as $\mathbf{W}$. For $t>0$, let 
\[I(t):=\bigg\{\mathbf{w}=(w_1,\hdots,w_q)\in\mathbb{Z}^q\,\Big| \, w_1,\hdots,w_q\in \Big[\frac{Cn}{q}-t\sqrt{n},\frac{Cn}{q}+t\sqrt{n}\Big], \ w_1+\hdots+w_q=Cn\bigg\}.\] 
Standard deviation bounds (or, alternatively, using Stirling's approximation) yield that, for every constant $t>0$, for $\mathbf{w}=(w_1,\hdots,w_q)\in I(t)$, it holds that 
\begin{equation}\label{eq:qazqazqaz}
\Pr\big(\mathbf{W}=\mathbf{w}\big)\geq \frac{C_0}{(\sqrt{n})^{q-1}},
\end{equation}
for some absolute constant $C_0>0$ (depending only on $q,C,t$). Note that the variance of any coordinate $W_i$ is $\Theta(n)$, but since the sum of $W_i$'s is equal to $Cn$, the random vector $\mathbf{W}$ lies in a $(q-1)$-dimensional space, yielding the denominator $(\sqrt{n})^{q-1}$ in \eqref{eq:qazqazqaz}.

The coupling $\mu$ of $\mathbf{Z},\mathbf{Z}'$ will be defined to be optimal on pairs of the form $(\mathbf{w},\mathbf{w}+\boldsymbol{\ell})$ with $\mathbf{w}\in I(L')$. More precisely, for $\mathbf{w}=(w_1,\hdots,w_q)\in I(L')$, we set
\begin{equation}\label{eq:rfvbgt}
\mu\big(\mathbf{Z}=\mathbf{w}, \mathbf{Z}'=\mathbf{w}+\boldsymbol{\ell}\big):=\min\big\{\Pr\big(\mathbf{W}=\mathbf{w}\big),\Pr\big(\mathbf{W}=\mathbf{w}+\boldsymbol{\ell}\big) \big\}\geq \Omega(n^{-(q-1)/2}),
\end{equation}
where in the last inequality we used \eqref{eq:qazqazqaz} for $t=2L'$ (recall that the coordinates of $\boldsymbol{\ell}$ are bounded in absolute value by $L'\sqrt{n}$). For pairs $(\mathbf{w},\mathbf{w}')\notin \{(\mathbf{w},\mathbf{w}+\boldsymbol{\ell)} \mid \mathbf{w}\in I(L')\}$, the coupling is independent (the construction is analogous to the one used in the proof of the Coupling lemma, see \cite[Section 4.2]{LPW}). Now note that 
\[\mu(\mathbf{Z}=\mathbf{Z}'+\boldsymbol{\ell})\geq \sum_{\mathbf{w}\in I(L')}\mu\big(\mathbf{Z}=\mathbf{w}, \mathbf{Z}'=\mathbf{w}+\boldsymbol{\ell}\big) = \Omega(1),\]
where in the last inequality we used \eqref{eq:rfvbgt} and the fact that the number of $\mathbf{w}$ in $I(L')$ is $\Omega\big((\sqrt{n})^{q-1}\big)$. This proves \eqref{eq:ZZpZZp} with the coupling $\mu$, and hence, modulo the proof of \eqref{eq:truncated} which is given below,  the proof of Lemma~\ref{lem:phases-agree} is complete.

To prove \eqref{eq:truncated}, we may assume w.l.o.g. that the largest component received color 1 (in each of the chains, by the coupling). Let $n'=n-Cn$ and denote by $C_{1,X},C_{1,Y}$ the largest components after the percolation step of the SW dynamics on $X_0, Y_0$ respectively. Since the configurations $X_0$ and $Y_0$ are close to $\m=(a,b,\hdots,b)$, in each of these configurations, exactly one color class is supercritical and the remaining color classes are subcritical in the percolation step (using that $aB>1$ and $bB<1$ from Lemmas~\ref{lehehe} and~\ref{lem:halfopen}). Therefore, by Lemma~\ref{lem:supercritical}, we have with probability $\Theta(1)$  that 
\[\big||C_{1,X}|-|C_{1,Y}|\big|\leq K_0 \sqrt{n}\]
for some (large) constant $K_0>0$.  We will further show that for a (large) constant $K_1>0$, with probability $\Theta(1)$,  it holds that 
\begin{equation}\label{eq:x1xiy1yi}
\bigg|x_1-\Big(\frac{n'}{q}+\big(1-\frac{1}{q}\big)|C_{1,X}|\Big)\bigg|\leq K_1\sqrt{n} \mbox{ and } \Big|x_i-\frac{n'-|C_{1,X}|}{q}\Big|\leq K_1\sqrt{n} \mbox{ for } i\neq 1,
\end{equation}
and, by an identical argument, the analogous inequalities for the $y_i$'s. Combining these, we obtain \eqref{eq:truncated} with $L'=2(K_0+K_1)$.

It remains to show \eqref{eq:x1xiy1yi}. Consider the configuration $X_0$. W.l.o.g. we may assume that color 1 induces the largest color class in $X_0$, so that the assumption 
$\|\alphab(X_0) - \m\|_\infty\leq Ln^{-1/2}$ translates into
\begin{equation*}
|\alpha_1(X_0)-a|\leq Ln^{-1/2}, \quad |\alpha_i(X_0)-b|\leq Ln^{-1/2} \mbox{ for } i\neq 1.
\end{equation*}
From this, we have that color 1 is supercritical in the coloring step of the SW dynamics, while the colors $2,\hdots,q$ subcritical (since it holds that $aB>1$ and $bB<1$ by Lemmas~\ref{lehehe} and~\ref{lem:halfopen}). Let $C_1,C_2,\hdots$ be the components in decreasing order of size after performing the percolation step in $X_0$ and note that $C_1=C_{1,X}$.  We have
\begin{equation*}
E\Big[\sum_{j\geq 2} |C_j|^2\Big] \leq K n
\end{equation*}
for some absolute constant $K>0$. To see this, use the bound in Lemma~\ref{lem:supercritical} and equation \eqref{eq:variancesuper} for the (supercritical) color class 1 and Item~\ref{it:sub} of Lemma~\ref{lemia1}  for each of the (subcritical) color classes $2,\hdots,q$. By Markov's inequality (and restricting our attention to components other than the isolated vertices $\{v_1,\hdots,v_{Cn}\}$) we obtain that with probability $\Theta(1)$ it holds that
\begin{equation}\label{eq:qwertyqwerty}
\sum_{j\geq 2;\, C_j\neq \{v_1\},\hdots, \{v_{Cn}\}} |C_j|^2 \leq K' n
\end{equation}
for some absolute constant $K'>0$. Now, for $i=1,\hdots,q$ let $J_i$ be the number of vertices colored with $i$ among the vertices other than $v_1,\hdots,v_{Cn}$ and those that belonged to the component $C_{1,X}$. Note that 
\begin{equation}\label{eq:Jia}
x_1=|C_{1,X}|+J_1, \quad x_i=J_i \mbox{ for } i\neq 1.
\end{equation}
Observe that $E[J_i]=(n'-|C_{1,X}|)/q$. Further, using \eqref{eq:qwertyqwerty} and Azuma's inequality, we obtain that with probability $\Theta(1)$ it holds that 
\begin{equation}\label{eq:Jib}
\Big|J_i-\frac{n'-|C_{1,X}|}{q}\Big|\leq K'' \sqrt{n} \mbox{ for } i=1,\hdots,q.
\end{equation}
for some absolute constant $K''>0$. Combining \eqref{eq:Jia} and \eqref{eq:Jib} yields \eqref{eq:x1xiy1yi} (with $K_1=K''$), as wanted. In turn, this completes the proof of \eqref{eq:truncated} and hence the proof of Lemma~\ref{lem:phases-agree}.

As mentioned earlier, the proof of Lemma~\ref{lem:phases-uniform} is completely analogous. The only difference is that now, where the configurations $X_0,Y_0$ are close to $\u=(1/q,\hdots,1/q)$, all color classes are subcritical in the percolation step of the SW algorithm (using that $B<\Bh$). Hence, there is no need to consider the size of the biggest components $C_{1,X}$ and $C_{1,Y}$. In particular, adapting the above arguments yields the following analogue of \eqref{eq:x1xiy1yi}:
\begin{equation}\label{eq:x1xiy1yib}
\Big|x_i-\frac{n'}{q}\Big|\leq K_1\sqrt{n} \mbox{ for } i\in [q].
\end{equation}
Using \eqref{eq:x1xiy1yib} (and the analogous inequalities for $y_i$'s), we obtain \eqref{eq:truncated}; the remaining bit of the proof of Lemma~\ref{lem:phases-uniform} is in all other respects identical to the proof  of Lemma~\ref{lem:phases-agree} (i.e., using the isolated vertices to couple $X_1$ and $Y_1$).

This concludes the proofs.
\end{proof}

\section{Fast mixing for $B>\Bh$}
\label{sec:fast}
In this section, we prove that the SW algorithm mixes in $O(\log n)$ steps for all $B>\Bh$.

Let $\eps>0$  and consider a state $X_t$ of the SW algorithm. We say that  a color $i$ is $\eps$-heavy if $\alpha_i(X_t)\geq
(1+\eps)/B$; it is $\eps$-light if $\alpha_i(X_t)\leq
(1-\eps)/B$.  We will show that the
SW algorithm has a reasonable chance of moving into a state where
one color is $\eps$-heavy and the remaining $q-1$ colors are
$\eps$-light.

\begin{lemma}\label{hroch1}
Assume $B>\Bh$ is a constant. There exists a constant $\eps>0$ such that the following hold for all sufficiently large $n$.  For any initial state $X_0$, with probability $\Theta(1)$ the next state $X_1$ has one $\eps$-heavy color and the remaining $q-1$ colors are $\eps$-light. Moreover, if $X_0$ has one $\eps$-heavy color and the remaining $q-1$ colors are $\eps$-light, the same is true for $X_1$ with probability $1-o(1)$.
\end{lemma}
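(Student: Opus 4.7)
My plan decomposes one Swendsen--Wang step into its percolation and uniform-recoloring stages and tracks the aggregate size $G$ of the supercritical giants. For a color class of density $\alpha$ the percolation step produces an independent $G(\alpha n, B/n)$ random graph. Define $f(\alpha) := \alpha\,\beta(\alpha B)$ for $\alpha > 1/B$ and $f(\alpha) := 0$ otherwise, where $\beta(c) \in (0,1)$ is the positive root of $x + e^{-cx} = 1$ for $c>1$. By Lemmas~\ref{l1al1},~\ref{l1} (and Lemma~\ref{lem:criticalper} for the critical window), with probability $1 - \exp(-n^{\Omega(1)})$ every supercritical class yields a unique giant of size $f(\alpha)n + o(n)$ and only $o(n)$-sized components elsewhere, while subcritical and critical-window classes contribute only $o(n)$-sized components. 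Azuma's inequality applied to the uniform recoloring of the non-giant (small) components then yields, whenever all giants receive a common color $i^\ast$,
\[
x_{i^\ast}/n = G/n + (1 - G/n)/q + o(1), \qquad x_j/n = (1 - G/n)/q + o(1)\ \text{ for }\ j \neq i^\ast.
\]

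For the preservation statement, if $X_0$ already has a unique heavy color $1$ of density $\alpha_1$ and $q-1$ light colors, then only class $1$ is supercritical, exactly one giant is produced with probability $1-o(1)$, and the common color $i^\ast$ is automatic; here $G/n$ concentrates on $f(\alpha_1)$. For the first statement, there are at most $q$ giants, and since the recoloring assigns each an independent uniform color, the alignment event that they all share a common color $i^\ast$ has probability at least $q \cdot (1/q)^q = q^{1-q} = \Omega(1)$; conditioning on alignment, $G/n$ concentrates on $\sum_i f(\alpha_i)$. In both cases the heavy bound $x_{i^\ast}/n \geq (1+\eps)/B$ is immediate from $x_{i^\ast}/n \geq 1/q > 1/B$ (using $B > \Bh = q$), so the only nontrivial point is the light bound, which reduces in both parts to $G/n \geq 1 - q/B + \Omega(1)$ (with slack uniform in the phase of $X_0$).

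The hard part is the following uniform lower bound, which I view as the main analytic obstacle: for every probability vector $(\alpha_1, \ldots, \alpha_q)$,
\[
\sum_{i=1}^{q} f(\alpha_i) \geq 1 - \frac{q}{B} + \eta
\]
for some constant $\eta > 0$ depending only on $B, q$. The plan is to rewrite the difference as $\sum_i[f(\alpha_i) - (\alpha_i - 1/B)]$ and verify termwise non-negativity: if $\alpha_i \leq 1/B$ the term equals $1/B - \alpha_i \geq 0$; if $\alpha_i > 1/B$ it rearranges to $\alpha_i\bigl[\beta(\alpha_i B) - (1 - 1/(\alpha_i B))\bigr] \geq 0$, using the inequality $\beta(c) > 1 - 1/c$ for all $c > 1$ (proved by plugging $\beta = 1-1/c$ into $\beta + e^{-c\beta} = 1$ and noting $c e^{1-c} < 1$ for $c > 1$). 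Because $B > q$ forbids $\alpha_i = 1/B$ simultaneously for all $i$ (the sum would be $q/B < 1$), at least one summand is strictly positive; continuity of $f$ at the threshold $1/B$ (where $\beta(1)=0$) and compactness of the simplex upgrade strict positivity to the uniform constant $\eta$. Specializing to $\alpha = \bigl(\alpha_1, \frac{1-\alpha_1}{q-1}, \ldots, \frac{1-\alpha_1}{q-1}\bigr)$, whose off-class-$1$ coordinates are $\leq 1/B$ by the light hypothesis, closes Part~2 via $f(\alpha_1) \geq 1 - q/B + \eta$; applying the bound directly to the phase of $X_0$ closes Part~1.
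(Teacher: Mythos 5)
Your proposal is correct, and its probabilistic skeleton is the same as the paper's: percolate each color class, note that all giants receive a common color with probability at least $q^{1-q}$, use Azuma/Chernoff for the small components, and reduce both parts of the lemma to a uniform lower bound on the total giant mass, namely $\sum_i g(\alpha_i)\geq 1-q/B+\Omega(1)$ (your $f$ is the paper's $g$). Where you genuinely diverge is in how this deterministic inequality is proved. The paper first identifies the minimizer of $\sum_i g(a_i)$ over the simplex (Lemma~\ref{hroch3}, via monotonicity and concavity of $g$, inherited from Lemma~\ref{lem:Fshape}) and then verifies the explicit strict inequality $g(1-(q-1)/B)>1-q/B$ at that extremal point (Lemma~\ref{hroch2}, an algebraic computation). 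You instead use the termwise bound $g(\alpha)\geq (\alpha-1/B)^{+}$, which amounts to $\beta(c)\geq 1-1/c$ for $c>1$ and is checked by a one-line sign computation (plugging $x=1-1/c$ into $x+e^{-cx}-1$ and using $ce^{1-c}<1$, together with the fact that $x+e^{-cx}-1$ is negative exactly on $(0,\beta(c))$); summing over colors gives $\sum_i g(\alpha_i)\geq 1-q/B$, strict unless all $\alpha_i=1/B$, which is impossible since $q/B<1$, and continuity of $g$ at $1/B$ plus compactness of the simplex upgrades this to a uniform constant gap $\eta$. Your route is more elementary — it avoids both the concavity/exchange argument and the explicit extremal-point calculation — at the price of a non-explicit constant $\eta$, which is harmless here since $\eps$ need not be explicit. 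The remaining probabilistic details (concentration of the supercritical giant via Lemma~\ref{l1al1}, subcritical and critical-window classes contributing only $o(n)$ via Lemmas~\ref{l1} and~\ref{lem:criticalper}, and the automatic alignment in the preservation statement) match the paper's treatment, and your handling of Part~2 — observing that the light classes contribute $f(\alpha_j)=0$ so the uniform bound collapses to $f(\alpha_1)\geq 1-q/B+\eta$ — is a valid substitute for the paper's direct reuse of the one-giant computation.
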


Before proving Lemma~\ref{hroch1} we will need the following
function $g:[0,1]\ra [0,1]$ which roughly captures the size
of the largest component in $G(zn,B/n)$. Specifically, for
$z\leq 1/B$ we set $g(z)=0$; for $z> 1/B$ we set $g(z) = zx$,
where $x$ is the unique solution of $x+\exp(-zBx) = 1$ in
$(0,1]$. Note that the functions $F$ and $g$ are connected by the  relation 
\[F(z)=\frac{1}{q}+\Big(1-\frac{1}{q}\Big)g(z) \mbox{ for all $z\in(1/B,1]$}.\]
The following inequality will be used to conclude the existence of heavy colors. 
\begin{lemma}\label{hroch2}
Assume $B>\Bh$.  Then, for all $\alpha_1,\hdots,\alpha_q\geq 0$ with $\alpha_1+\cdots+\alpha_q=1$, it holds that 
\begin{equation*}
\sum_{i\in[q]} g(\alpha_i)\geq g\Big(1-\frac{q-1}{B}\Big) > 1 -\frac{q}{B}.
\end{equation*}
\end{lemma}
\begin{proof}[Proof of Lemma \ref{hroch2}]
For convenience, let $W:=\sum_{i\in[q]} g(\alpha_i)$. Note that $g(z)$ is increasing and concave for $z>1/B$ (this follows by Lemma~\ref{lem:Fshape} since $F(z)=\frac{1}{q}+(1-\frac{1}{q})g(z)$ for $z\in (1/B,1]$). 

To minimize $W$, observe that
\begin{enumerate}
\item \label{it:fff3f2f4} If $\alpha_i>1/B$ and $\alpha_j<1/B$ then we can decrease the value
of $W$ by decreasing $\alpha_i$ and increasing $\alpha_j$ (since
$g(z)=0$ for all $z\leq 1/B$ and $g(z)$ is increasing for
$z>1/B$). 
\item \label{it:21w1s}If $1/B<\alpha_i<\alpha_j$ then we can decrease the value of $W$ by decreasing $\alpha_i$ and increasing $\alpha_j$ (since
$g(z)$ is concave for $z>1/B$).
\end{enumerate}
Since $B>\Bh=q$ and $\alpha_1+\cdots+\alpha_q=1$, we have that at least one of the $\alpha_i$'s is strictly larger than $1/B$. Thus, from Items~\ref{it:fff3f2f4} and~\ref{it:21w1s}, it follows that $W$ is minimized when all but one of the $\alpha_i$'s are equal to $1/B$ (the value of the remaining $\alpha_i$ is given by the condition $\alpha_1+\cdots+\alpha_q=1$). Since $g(1/B)=0$, it follows that  
\[W\geq  g\Big(1-\frac{q-1}{B}\Big).\]

It remains to show that $g(z)> 1 -\frac{q}{B}$, where $z:=1-(q-1)/B$. Note that $z>1/B$ from $B>q$. Let $x\in (0,1)$ be the solution of $x+\exp(-zBx) = 1$. The inequality $g(z)> 1 -\frac{q}{B}$ is equivalent to
\begin{equation}\label{prrrr1}
x > \frac{B-q}{B-q+1}.
\end{equation}
For the sake of contradiction, suppose that~\eqref{prrrr1} is false, that is, $x\leq (B-q)/(B-q+1)$. Then,
\begin{equation}\label{prrrr2}
1-\frac{q-1}{B} = -\frac{\ln(1-x)}{Bx} \leq \frac{(B-q+1)\ln
(B-q+1)}{B(B-q)},
\end{equation}
where the equality follows from $x+\exp(-zBx) = 1$ and the inequality follows from the fact that $x\mapsto
-\frac{\ln(1-x)}{x}$ is an increasing function on $(0,1)$.
Inequality~\eqref{prrrr2} yields that $B-q\leq\ln (1+B-q)$,
which is false (since $B-q>0$), and hence we have a contradiction.
This shows that~\eqref{prrrr1} is true.
\end{proof}

We are now ready to prove Lemma~\ref{hroch1}.
\begin{proof}[Proof of Lemma~\ref{hroch1}.]
Let $W:=g\big(1-\frac{q-1}{B}\big)$. By Lemma~\ref{hroch2}, there exists a small constant $\epsilon>0$ such that
\[W-\epsilon\geq 1-\frac{q}{B}(1-2\epsilon).\] 
Since the function $g(z)$ is continuous and $g(z)=0$ for all $z\leq 1/B$, there exists a small constant $\eta>0$ such that for all $z\leq (1+\eta)/B$ it holds that $g(z)\leq \epsilon/q$.

For $i\in[q]$, let $m_i$ be the number of vertices of color $i$ in $X_0$ and let $\alpha_i=m_i/n$.  By Lemma~\ref{hroch2},  
\[\sum_{i\in[q]}g(\alpha_i)\geq W.\]
Perform the percolation step of the SW algorithm on the color class $i$ and denote by $G_i$ be the resulting graph. Moreover, let $C_1^{(i)}, C_2^{(i)},\hdots$ be the components of $G_i$ in decreasing order of size. Note that $G_i$ is distributed as $G(n\alpha_i, B/n)$.

To prove the first part of the lemma, note that for each color $i\in[q]$ the following hold with probability $1-o(1)$:
\begin{itemize}
\item If $B\alpha_i\geq 1+\eta$, the size of the largest component in $G_i$ is $ng(\alpha_i)+o(n)$ (by Lemma~\ref{lem:supercritical}).
\item If $B\alpha_i\leq 1+\eta$, by the choice of $\eta$ we have $g(\alpha_i)\leq \epsilon/q$ and therefore the largest component in $G_i$ is trivially at least $g(\alpha_i)n-\frac{\epsilon}{q}n$. 
\end{itemize}
Moreover, with $A$ being the constant in Lemma \ref{lemia3}, we have that for each color $i\in[q]$ the following hold with positive probability (not depending on $n$):
\begin{enumerate}
\item   If $B\alpha_i\geq (1-A m_i^{-1/3})/m_i$, then  $\sum_{j> 1} |C^{(i)}_j|^2 \leq m_i^{4/3}\leq n^{4/3}$ (by Lemma~\ref{lemia3}).
\item If $(1-A m_i^{-1/3})/m_i>B\alpha_i$, then $\sum_{j\geq 1} |C^{(i)}_j|^2 =O(  n^{4/3})$ (by Item~\ref{it:sub} of Lemma~\ref{lemia1}).
\end{enumerate}
It follows that for all sufficiently large $n$, with probability $\Theta(1)$, after the percolation step of the SW algorithm, it holds that 
\[\sum_{i\in[q]}|C_1^{(i)}|\geq (W-\epsilon)n\geq \Big(1-\frac{q}{B}(1-2\epsilon)\Big)n \mbox{\ \ and\ \ } \sum_{i\in[q]}\sum_{j\geq 2}|C_j^{(i)}|^2=o(n^2).\]

Now,  in the coloring step of the SW algorithm, with probability
$q^{-q} = \Theta(1)$,   all of  the components $C_1^{(i)}$ with $i\in [q]$ receive color $1$. Conditioned on that, the expected number of vertices which get the color $k\neq 1$ after the coloring step of the SW algorithm is 
\[\frac{n-\sum_{i\in [q]} \big|C_1^{(i)}\big|}{q}\leq n(1-2\epsilon)/B.\]
Thus, using Azuma's inequality, we obtain that with probability $\Theta(1)$, for all colors $k\neq 1$, the number of vertices which get the color $k$ after the coloring step of the SW algorithm is at most $n(1-\epsilon)/B$, which implies that the number of vertices which get the color 1 is at least $n(1-(q-1)(1-\epsilon)/B)\geq n(1+\epsilon)/B$. Thus, combining all the above, we obtain that, after one step of the SW algorithm, with probability $\Theta(1)$, color 1 is $\epsilon$-heavy and all other colors are $\epsilon$-light.

The second part of the lemma is analogous, the only difference is that now there is a unique $\epsilon$-heavy color class $i$, which is therefore supercritical in the percolation step; all other color classes are $\epsilon$-light and therefore subcritical. This allows us to improve the probability bounds in the previous analysis.  In particular, by Lemma~\ref{lem:supercritical} (applied to the supercritical color) and Lemma~\ref{l1} (applied to the subcritical colors), we obtain that 
with probability $1-o(1)$, after the percolation step of the SW algorithm,  there is just one linear-sized component  of size $g(\alpha_i)n+o(n)$ and the remaining components have size $o(n)$. Since $g(\alpha_j)=0$ for all $j\neq i$, Lemma~\ref{hroch2} yields that $g(\alpha_i)\geq W$. W.l.o.g., we may assume that this unique linear-sized  component receives the color 1. Then, using Azuma's inequality just as above, we obtain that, after one step of the SW algorithm, with probability $1-o(1)$, all  colors other than color 1 are $\epsilon$-light and color $1$ is $\epsilon$-heavy.

This completes the proof of Lemma \ref{hroch1}.
\end{proof}

After applying Lemma \ref{hroch1} the behavior of the 
SW algorithm in one step will be
controlled by the function $F$ (cf. Section~\ref{sec:defF}). We use this to show that, after
$O(1)$ steps, with constant probability, the state of SW will be close to the majority phase $\m$; recall that $\m=(a,b,\hdots,b)$ where $a>1/q$ is the unique fixpoint of $F$ and $b=(1-a)/(q-1)$.
\begin{lemma}\label{slon7}
Assume $B>\Bh$ is a constant. For any constant $\delta>0$, for all sufficiently large $n$ and any starting
state $X_0$, after $T=O(1)$ steps, with probability $\Theta(1)$
the SW algorithm moves to a state $X_T$ such that
$\|\alphab(X_T)-\m\|_{\infty} \leq \delta$.
\end{lemma}

\begin{proof}
Let $\epsilon>0$ be the constant in Lemma~\ref{hroch1}. Then, with probability $\Theta(1)$, the state $X_1$ has
one $\eps$-heavy color and the remaining $q-1$ colors are
$\eps$-light. 

Assume that at time $t\geq 1$ we are at a state $X_t$ with one $\eps$-heavy color and $q-1$ colors which are $\eps$-light. Then by the second part of Lemma~\ref{hroch1}, the same is true for the state $X_{t+1}$ with probability $1-o(1)$. In fact, let $zn$ be the number of vertices of the heavy color class in $X_t$; we claim that with probability $1-o(1)$, in $X_{t+1}$ the heavy color class has $F(z)n+o(n)$ vertices, while all the other color classes have $\frac{1-F(z)}{q-1}n+o(n)$ vertices. Indeed, in the percolation step of the SW dynamics, exactly one color class is supercritical and the remaining $q-1$ color classes are subcritical. By Lemma~\ref{lem:supercritical} (applied to the supercritical color) and Lemma~\ref{l1} (applied to the subcritical colors), we obtain that 
with probability $1-o(1)$, after the percolation step of the SW algorithm,  there is just one linear-sized component $C$  of size $g(z)n+o(n)$ and the remaining components have size $o(n)$.  W.l.o.g., we may assume that the component $C$ receives the color 1. Then, using Azuma's inequality just as in the proof of Lemma~\ref{hroch1}, we obtain that, with probability $1-o(1)$, for each color $k\neq 1$, $\frac{1-g(z)}{q}n+o(n)$ vertices receive the color $k$ and the remaining $\frac{n}{q}+\frac{q-1}{q}g(z)n+o(n)=F(z)n+o(n)$ vertices receive the color 1, as claimed. 

We thus obtain that for any constant integer $T\geq 2$, with probability $\Theta(1)$
the SW algorithm moves to a state $X_T$ where one color class has $\alpha n+o(n)$ vertices and each of the remaining color classes has $\frac{1-\alpha}{q-1} n+o(n)$ vertices, where $\alpha$ belongs to the interval $F^{(T)}([1/q,1])$ (recall that $F^{(T)}$ is the $T$-th iterate of the function $F$). Since $F$ is increasing (Lemma~\ref{lem:Fshape}), we have $F^{(T)}([1/q,1])=[F^{(T)}(1/q),F^{(T)}(1)]$. Since $B>\Bh$, by Lemma~\ref{lem:fixpoints} we have that $F$ has a unique fixpoint $a>1/q$. Hence, using also again that $F$ is increasing, for any constant $\delta>0$, there is a constant $T$  such that
$[F^{(T)}(1/q),F^{(T)}(1)] \subseteq [a-\delta/2,a+\delta/2]$. Thus
in $T$ steps, with probability $\Theta(1)$, we are within $\ell_\infty$-distance $\delta$ of $\m$ (with room to spare to absorb the $o(n)$ fluctuations of the color classes).
\end{proof}

Then we show that once we are at constant distance from $\m$ then in $O(\log n)$ steps
the distance to $\m$ further decreases to $O(n^{-1/2})$. 

\begin{lemma}\label{slon8}
For $B>\Bu$, there exist $\delta, L>0$ such that the following is true. Suppose that we
start at a state $X_0$ such that $\|\alphab(X_0)-\m\|_{\infty} \leq
\delta$. Then in $T=O(\log n)$ steps with probability $\Theta(1)$
the SW algorithm ends up in a state $X_t$ such that
\begin{equation}\label{hrb1}
\|\alphab(X_T)-\m\|_{\infty}\leq L n^{-1/2}.
\end{equation}
\end{lemma}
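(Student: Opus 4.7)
The plan is to exploit that for $B>\Bu$ the majority value $a$ is a Jacobian attractive fixpoint of $F$ (Lemma~\ref{lem:fixpoints}), so $\gamma:=|F'(a)|<1$. Fix any $\gamma'\in(\gamma,1)$ and choose $\delta$ small enough that, for all $w\in[a-\delta,a+\delta]$, $|F(w)-a|\le \gamma'|w-a|$, and also that the dominant color remains strictly supercritical ($aB>1$) while each non-dominant color remains strictly subcritical ($bB<1$); both are possible by Lemma~\ref{lehehe}. I will track the displacement $D_t:=\alpha_1(X_t)-a$ of the dominant color from its fixpoint value and establish a one-step second-moment contraction of the form
\begin{equation}\label{eq:L2step}
E\big[D_{t+1}^2 \,\big|\, X_t\big]\le (\gamma')^2 D_t^2 + K/n,
\end{equation}
holding as long as $\|\alphab(X_t)-\m\|_\infty\le\delta$, for some constant $K$.

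To get \eqref{eq:L2step}, I decompose the SW transition. In the percolation step on the (supercritical) dominant color class of size $\alpha_1(X_t) n$, Lemma~\ref{lem:supercritical} gives a giant component whose size has mean $\beta(\alpha_1(X_t))\cdot\alpha_1(X_t) n\pm n^{\epsilon}$ and variance $O(n)$, with $\sum_{j\ge 2}|C_j|^2=O(n)$ in expectation. For each of the $q-1$ (subcritical) non-dominant color classes, Lemma~\ref{lemia1}(\ref{it:sub}) gives $\sum_j|C_j|^2=O(n)$. In the recoloring step, by symmetry and the definition of $F$ we have $E[\alpha_1(X_{t+1})\mid X_t]=F(\alpha_1(X_t))+O(1/n)$; applying Azuma's inequality (or computing the variance directly) to the uniform recoloring of the small components whose squared sizes sum to $O(n)$ gives $\mathrm{Var}[\alpha_1(X_{t+1})\mid X_t]=O(1/n)$. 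Combining this with the Taylor expansion $F(a+D_t)-a = \gamma D_t + O(D_t^2)$ and absorbing the cross term using $2ab\le \eta a^2+\eta^{-1}b^2$ (enlarging $\gamma$ slightly to $\gamma'$) yields \eqref{eq:L2step}.

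To conclude, let $\mathcal{E}_t$ be the event that $X_s\in\mathcal{B}(\m,\delta)$ for all $s\le t$. By Lemma~\ref{l3} (iterated with a union bound), $\Pr[\mathcal{E}_T]=1-o(1)$ for any $T=O(\log n)$. Iterating \eqref{eq:L2step} on $\mathcal{E}_T$ gives
\[
E[D_T^2\, \mathbf{1}_{\mathcal{E}_T}]\le (\gamma')^{2T}D_0^2 + \frac{K}{n(1-(\gamma')^2)} = O(1/n)
\]
for $T=\lceil C\log n\rceil$ with $C$ large (so that $(\gamma')^{2T}\le 1/n$). Markov's inequality then gives $|D_T|\le L_1/\sqrt{n}$ on $\mathcal{E}_T$ with probability $\Theta(1)$ for $L_1$ sufficiently large. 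An identical $L^2$-argument applied to each of the remaining coordinates $\alpha_i(X_t)-b$, $i\ne 1$ (whose conditional variances are again $O(1/n)$ by the same random-graph bounds), together with a final union bound over the $q$ coordinates, yields \eqref{hrb1} with $L=qL_1$ and probability $\Theta(1)$.

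The main obstacle is the sharp $O(\sqrt{n})$ fluctuation at each step: the crude $n^{1/3}$ component-size bound used in Lemma~\ref{l3} only gives $n^{5/6}$-scale fluctuations via Azuma, which is not enough to reach the $n^{-1/2}$ scale in \eqref{hrb1}. Replacing it by the $\sum|C_j|^2=O(n)$ bounds of Lemmas~\ref{lem:supercritical} and~\ref{lemia1} (which hold in expectation, in both the supercritical and subcritical regimes) gives the right Azuma step-size vector and is what makes the second-moment contraction \eqref{eq:L2step} close at the $1/n$ scale needed to iterate for $O(\log n)$ rounds.
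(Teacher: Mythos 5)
Your argument is correct, but it reaches the conclusion by a genuinely different route than the paper. The paper's proof is a pathwise, per-step high-probability argument: at each step it applies Markov's inequality to $\sum_{i\geq 2}|C_i|^2$, Azuma for the recoloring, and a Gaussian-type tail for the giant, with an adaptively chosen deviation parameter $w_t$ proportional to the current distance $\|\alphab(X_t)-\m\|_\infty$; since the distances (hence the $w_t$'s) decay geometrically, the product of the per-step success probabilities is bounded below by a constant, and the process stops once the distance is $\leq Ln^{-1/2}$. You instead prove a one-step second-moment contraction $E[D_{t+1}^2\mid X_t]\leq(\gamma')^2D_t^2+K/n$ (valid on the ball), keep the chain in the ball via Lemma~\ref{l3} plus a union bound over $O(\log n)$ steps, iterate the recursion on the indicator of that event, and finish with Markov/Chebyshev. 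Your ingredients (variance of $|C_1|$ of order $n$ from Lemma~\ref{lem:supercritical}, expected $\sum_j|C_j|^2=O(n)$ from Lemma~\ref{lemia1}, law of total variance for the recoloring) are exactly the ones the paper deploys later for $B=\Bu$ in Lemma~\ref{lestart}, so nothing beyond the paper's toolkit is needed; what your $L^2$ route buys is a cleaner bookkeeping (no adaptive thresholds or infinite product of success probabilities) and a success probability that can be pushed to $1-o(1)-O(1/L^2)$ rather than just $\Theta(1)$, while the paper's route avoids needing the variance lower/upper bounds on $|C_1|$ and works directly with tail estimates.

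Two small imprecisions, neither fatal: Lemma~\ref{lem:supercritical} gives $E[\alpha_1(X_{t+1})\mid X_t]=F(\alpha_1(X_t))+O(n^{\epsilon-1})$, not $+O(1/n)$; since only the square of this error enters \eqref{eq:L2step}, any $o(n^{-1/2})$ error suffices, so the recursion still closes, but the constant-order claim should be stated as such. Also, the treatment of the coordinates $i\neq 1$ is not literally an ``identical'' self-contained contraction: their conditional means are $(1-F(\alpha_1(X_t)))/(q-1)$, i.e.\ they are slaved to $D_t$ rather than to their own previous displacement; one extra line converting your bound on $E[D_{T-1}^2\mathbf{1}_{\mathcal{E}_{T-1}}]$ into a bound on $E[(\alpha_i(X_T)-b)^2\mathbf{1}_{\mathcal{E}_T}]$ fixes this.
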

\begin{proof}
Recall that $\m=(a,b,\hdots,b)$ where $a>1/q$ is a jacobian attractive fixpoint of $F$ and $b=(1-a)/(q-1)$. Moreover, by Lemmas~\ref{lehehe} and~\ref{lem:halfopen}, it holds that $aB>1$ and $bB<1$.

Let $\delta>0$, $c\in(0,1)$ be constants such that for all  $z\in [a-\delta,a+\delta]$ it holds that $|F(z)-a|\leq c |z-a|$ and $z B>1$, $(1-z) B/(q-1)<1$. Note that the existence of such constants $\delta,c$ is guaranteed by the jacobian attractiveness of the fixpoint $a$ throughout the regime $B>\Bu$ (Lemma~\ref{lem:fixpoints}) and the facts $a B>1,\, bB<1$.

Define the geometrically decreasing sequence $\{w_t\}_{t\geq 0}$ by setting $w_0=\delta n^{1/2}$ and $w_{t}=\frac{1+c}{2}w_{t-1}$. Further, let $T:=\big\lceil\frac{\frac{1}{2}\log n}{\log \frac{2}{1+c}}\big\rceil-K$ where $K>0$ is a large constant to be chosen later.  Note that for any constant $K$,  it holds that
\[\frac{1+c}{2}L\leq w_T\leq L,\mbox{ where }L:=\delta (2/(1+c))^{K}.\]
Thus, to prove the lemma, it suffices to show the following (slightly stronger) statement: there exists a constant $K>0$ such that with probability $\Theta(1)$, 
\begin{equation}\label{eq:alphaxtwt}
\mbox{for all $t=0,1,\hdots, T$, it holds that $\norm{\alphab(X_t)-\m}_{\infty}\leq w_t n^{-1/2}$}.
\end{equation}

The main step in the proof is to track one step of the SW dynamics. Specifically, we will show that there exist constants $L',C>0$ such that for all $w_t\in[L',\delta n^{1/2}]$, for any state $X_t$ such that $\norm{\alphab(X_t)-\m}_{\infty}\leq w_t n^{-1/2}$, with probability at least $\exp(-C/w_t)$ it holds that 
\begin{equation}\label{eq:xtmxtm}
\|\alphab(X_{t+1})-\m\|_{\infty} \leq w_{t+1}n^{-1/2}.
\end{equation} 
To conclude \eqref{eq:alphaxtwt} from \eqref{eq:xtmxtm}, note that by choosing $K$ large, we can ensure that $w_0\geq\hdots\geq w_T\geq L'$ and hence the probability of the event in \eqref{eq:alphaxtwt} is at least $\prod^{T}_{t=0}\exp(-C/w_t)$. The latter product is bounded by a positive constant, since $w_t$ is a geometrically decreasing sequence.  

It remains to show \eqref{eq:xtmxtm}. In particular, assume that at time $t$ it holds that $\norm{\alphab(X_t)-\m}_{\infty}\leq w_t n^{-1/2}$ where $w_t\in[L',\delta n^{1/2}]$ for some large constant $L'$ to be specified later. By the choice of the constant $\delta$, in the percolation step of the SW dynamics, exactly one color class is supercritical and the remaining $q-1$ color classes are subcritical. Denote by $C_1,C_2, \dots$ all the connected components  after
the percolation step, sorted in decreasing order of size. By the second inequality in \eqref{eq:variancesuper} of  Lemma~\ref{lem:supercritical} (applied to the supercritical color) and part~\ref{it:sub} of Lemma~\ref{lemia1} (applied to the subcritical colors), we obtain that there exists a constant $K'>0$ such that
$$
E\Big[\sum_{i\geq 2} |C_i|^2 \Big] \leq K' n.
$$
Let $w_t':=\frac{(1-c)}{2(1+\sqrt{K'})}w_t$; the choice of $w_t'$ will become apparent shortly. Note that by choosing $L'$ to be a large constant, we can ensure that $w_t'$ is larger than any desired constant (whenever $w_t\in[L',\delta n^{1/2}]$). 

By Markov's inequality,  it holds that
\begin{equation}\label{grg1}
P\Big(\sum_{i\geq 2} |C_i|^2 \leq w_t' K' n\Big) \geq 1-1/w_t'.
\end{equation}
Assuming that the event in~\eqref{grg1} occured, by Azuma's inequality, in the
coloring step of the SW algorithm the number $Z_i$ of vertices in $C_2\cup C_3\dots$
that receive color $i$ is concentrated around the expectation, i.e., 
\begin{equation}\label{grg2}
P\Big(|Z_i - E[Z_i]|\geq w_t' \sqrt{K' n}\Big)\leq 2\exp(-w_t'/2).
\end{equation}
Let $zn$ be the number of vertices in the largest color class of $X_t$; by the choice of $\delta$ in the beginning, we have that $zB>1$ and hence the largest color class is supercritical in the percolation step of SW. Therefore, by Lemma~\ref{lem:supercritical} (equation \eqref{eq:moddev}), 
\begin{equation}\label{grg3}
P( \big||C_1| - g(z) n \big| \geq w_t' \sqrt{n} ) \leq U/(w_t')^2.
\end{equation}
Combining~\eqref{grg1}, \eqref{grg2}, and~\eqref{grg3} (and choosing $L'$ to be a large constant relative to $K',U,1/(1-c),q$), we obtain
that with probability at least
\begin{equation}\label{ios1}
(1-1/w_t')\big(1-2q\exp(-w_t'/2)-U/(w_t')^2\big)\geq \exp(-10/w_t')=\exp(-C/w_t), \quad C:=\frac{20(1+\sqrt{K'})}{1-c},
\end{equation}
we have that
\begin{equation}\label{grg4}
\Big\|\alphab(X_{t+1}) -
\Big(F(z),\frac{1-F(z)}{q-1},\dots,\frac{1-F(z)}{q-1}\Big)\Big\|_\infty \leq w_t'
(1+\sqrt{K'}) n^{-1/2}.
\end{equation}
By the choice of the constants $\delta,c$, we have
\begin{equation}\label{grg5}
\Big\|\Big(F(z),\frac{1-F(z)}{q-1},\dots,\frac{1-F(z)}{q-1}\Big) - \m\Big\|_\infty \leq c
\| \alphab(X_t) - \m\|_\infty\leq c w_t n^{-1/2}.
\end{equation}
Equations~\eqref{grg4} and~\eqref{grg5} combined yield that with probability $\geq \exp(-C/w_t)$, it holds that
\begin{equation}\label{grg6}
\begin{split}
\|\alphab(X_{t+1})-\m\|_\infty \leq w_t' (1+\sqrt{K'}) n^{-1/2} + c
w_tn^{-1/2} \leq \frac{c+1}{2} w_tn^{-1/2}=w_{t+1}n^{-1/2},
\end{split}
\end{equation}
where the last inequality follows from  $w_t'=\frac{(1-c)}{2(1+\sqrt{K'})}w_t$. This proves \eqref{eq:xtmxtm} and therefore completes the proof of Lemma~\ref{slon8}.
\end{proof}

From Lemmas \ref{lem:vertices-agree}, \ref{lem:phases-agree}, \ref{slon7} and  \ref{slon8}
we conclude the following.

\begin{corollary}\label{cor:BgBh}
Let $B>\Bh$ be a constant. The mixing time of the Swendsen-Wang algorithm
on the complete graph on $n$ vertices is $O(\log n)$.
\end{corollary}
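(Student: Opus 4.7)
The plan is to compose the four lemmas Lemma~\ref{slon7}, Lemma~\ref{slon8}, Lemma~\ref{lem:phases-agree}, and Lemma~\ref{lem:vertices-agree} into a single coupling that coalesces two copies of the chain within $O(\log n)$ steps with constant probability; a standard boosting argument then yields $\Tmix = O(\log n)$. The target region around the majority phase is $A := \{X : \|\alphab(X)-\m\|_\infty \leq Ln^{-1/2}\}$, where $L$ is the constant from Lemma~\ref{slon8}.

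Given arbitrary initial states $X_0, Y_0$, I would run both chains independently. By Lemma~\ref{slon7}, after $T_1 = O(1)$ steps each chain lands in $\mathcal{B}(\m,\delta)$ with probability $\Omega(1)$, and conditional on this, Lemma~\ref{slon8} guarantees that within $T_2 = O(\log n)$ further steps each chain enters $A$ with probability $\Omega(1)$. Since the two chains are independent and each reaches $A$ with constant probability, after $T_1+T_2 = O(\log n)$ steps both $X_{T_1+T_2}$ and $Y_{T_1+T_2}$ lie in $A$ with probability $\Omega(1)$.

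Conditioned on both chains being in $A$, Lemma~\ref{lem:phases-agree} supplies a one-step coupling that produces matching phases $\alphab(X) = \alphab(Y)$ with probability $\Omega(1)$. Once the phases match, Lemma~\ref{lem:vertices-agree} provides a coupling of length $T_3 = O(\log n)$ that makes $X_T = Y_T$ with probability at least $1 - \epsilon$ for any desired $\epsilon > 0$. Stringing these three pieces together gives a coupling of length $T = O(\log n)$ that coalesces the two chains with probability at least some constant $p > 0$ from arbitrary starting pair $(X_0,Y_0)$. By the standard path-coupling/coalescence-to-mixing argument (iterate the coupling $O(\log(1/\epsilon)/p)$ times, each block of length $T$), we conclude $\Tmix = O(\log n)$.

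The only real subtlety is the composition itself: each of Lemmas~\ref{slon7}--\ref{lem:phases-agree} succeeds only with constant probability, so the intermediate conditioning events must be tracked carefully. The cleanest way to handle this is to define the coupling as a product over the independent randomness used in each phase, declare the attempt a failure if any of the required events (reaching $\mathcal{B}(\m,\delta)$, then $A$, then matching phases) fails, and observe that the probability of an unconditional success is simply the product of the constant probabilities. Since we do not need to condition on success of earlier phases to apply the next lemma (we apply them pointwise on the relevant event), the argument is routine; there is no genuinely new technical obstacle beyond the lemmas already established.
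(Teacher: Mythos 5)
Your proposal follows essentially the same route as the paper's proof of Corollary~\ref{cor:BgBh}: compose Lemma~\ref{slon7} ($O(1)$ steps to reach $\mathcal{B}(\m,\delta)$), Lemma~\ref{slon8} ($O(\log n)$ steps to get within $Ln^{-1/2}$ of $\m$), the one-step phase coupling of Lemma~\ref{lem:phases-agree}, and the $O(\log n)$-step coupling of Lemma~\ref{lem:vertices-agree}, obtaining coalescence with constant probability in $O(\log n)$ steps and then boosting to conclude $\Tmix=O(\log n)$. This matches the paper's argument, with your remarks on tracking the conditioning and boosting being only a slightly more explicit rendering of the same steps.
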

\begin{proof}
Let $\epsilon>0$ be a small constant and consider two copies $X_t,Y_t$ of the SW chain. We will show that for some $T=O(\log n)$, there exists a coupling such that $\Pr(X_T\neq Y_T)\leq \epsilon$. 

Let $\delta,L$ be as in Lemma~\ref{slon8}. By Lemma~\ref{slon7}, for some $T_1=O(1)$ with probability $\Theta(1)$ we have that
\begin{equation}\label{eq:XT1YT1}
\|\alphab(X_{T_1}) - \m\|_\infty\leq \delta\mbox{ and }\|\alphab(Y_{T_1}) - \m\|_\infty\leq \delta.
\end{equation}
By Lemma~\ref{slon8}, for some  $T_2=O(\log n)$ with probability $\Theta(1)$, we have that 
\begin{equation}\label{eq:couplecouple}
\|\alphab(X_{T_1+T_2}) - \m\|_\infty\leq Ln^{-1/2}\mbox{ and }\|\alphab(Y_{T_1+T_2}) - \m\|_\infty\leq Ln^{-1/2}.
\end{equation}
Let $T':=T_1+T_2$. Conditioning on \eqref{eq:couplecouple}, by Lemma~\ref{lem:phases-agree} there exists a coupling that with probability $\Theta(1)$, for $T_3=T'+1$, it holds that $\alphab(X_{T_3})=\alphab(Y_{T_3})$. Conditioned on $\alphab(X_{T_3})=\alphab(Y_{T_3})$, by Lemma~\ref{lem:vertices-agree}, for every constant $\epsilon'>0$ there exists $T_4=O(\log n)$ and a second coupling such that $\Pr(X_{T_3+T_4}\neq  Y_{T_3+T_4})\leq \epsilon'$. By letting $\epsilon'$ to be a sufficiently small constant, we obtain a coupling and some $T=O(\log n)$ such that $\Pr(X_T\neq Y_T)\leq \epsilon$, as wanted.
\end{proof}

\section{Fast mixing for $B=\Bh$}
\label{sec:fast-Bh}

The proof resembles the case $B>\Bh$, though we have to account more carefully for the mixing time of the chain for configurations which are close to uniform. In particular, for starting configurations which are $\epsilon$-far from being uniform, a straightforward modification of the proof for $B>\Bh$ gives that the SW chain mixes rapidly. The main difficulty in the case $B=\Bh$ is to show that the chain escapes from starting configurations which are close to uniform. 
We will show that this happens after roughly $\log n$ steps. More precisely, we have the following lemma.

\begin{lemma}\label{hrochhroch1}
Assume $B=\Bh$. There exists constant $\eps>0$ such that for
any $n$ and any initial state $X_0$ with probability $\Theta(1)$
after $T_1=O(\log n)$ steps, $X_{T_1}$ has an $\eps$-heavy color and the remaining
$q-1$ colors are $\eps$-light.
\end{lemma}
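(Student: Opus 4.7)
Let $S_t$ denote the size of the largest color class in $X_t$ and set $\delta_t := S_t/n - 1/q$. The idea is to exploit the jacobian repulsiveness of the uniform fixpoint of $F$ at $B = \Bh$: Lemma~\ref{lem:jacobianuniform} gives $F'(1/q) = 2(q-1)/q > 1$. However, near $\delta_t = 0$ each color class percolation lies in the \emph{critical} scaling window (since $\alpha_i B \approx 1$ whenever $\alpha_i \approx 1/q = 1/B$), so the one-step $\Omega(1)$-gain argument used for $B > \Bh$ in Lemma~\ref{hroch1} fails here: Lemma~\ref{hroch2} becomes tight at $B = \Bh$ with $g(1/q) = 0$. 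The plan therefore has two phases: (I) a single-step \emph{kick-off} producing $\delta_1 = \Omega(n^{-1/3})$ from critical-window fluctuations, followed by (II) a \emph{geometric amplification} over $O(\log n)$ steps that pushes $\delta_t$ up to a fixed constant.

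\textbf{Phase I (kick-off).} I would show that for a sufficiently large constant $c_0$, with constant probability $\delta_1 \geq c_0 n^{-1/3}$. If $X_0$ already has a color class of density noticeably above $1/q$, that color's percolation is supercritical and the analysis of Lemma~\ref{hroch1} (combined with the symmetric recoloring step) gives $\delta_1$ of constant order, much larger than needed. The delicate case is when $X_0$ is close to uniform. Applying Lemma~\ref{lemia3} to each of the $q$ independent color-class percolations yields, with joint constant probability, a giant component of size $\geq L n^{2/3}$ in every color class together with $\sum_{j \geq 2}|C_j|^2 \leq n^{4/3}$ for the remaining components, where $L$ can be taken arbitrarily large. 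In the recoloring step, with probability $q^{-(q-1)}$ all $q$ giants are assigned the same color, say color $1$, contributing at least $q L n^{2/3}$ vertices to it; Azuma's inequality combined with the sum-of-squares bound controls the $O(n^{2/3})$ fluctuation of the small-component recoloring. Taking $L$ large then gives $\delta_1 \geq c_0 n^{-1/3}$ with $c_0$ arbitrarily large.

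\textbf{Phase II (geometric escape).} Conditioned on $\delta_t \in [c_0 n^{-1/3}, \eps]$ for a small constant $\eps$ and on $X_t$ having one distinctly dominant color (as Phase I arranges), the one-step evolution satisfies $\ExpCond{\delta_{t+1}}{X_t} = \gamma \delta_t + O(\delta_t^2)$ with $\gamma := F'(1/q) > 1$. The fluctuation of $\delta_{t+1}$ about its expectation is $O(n^{-1/3})$ inside the scaling window and drops to $O(n^{-1/2})$ once $\delta_t \gg n^{-1/3}$, by Lemma~\ref{lem:supercritical} applied to the dominant color and subcritical concentration (Lemma~\ref{lemia1}) for the others. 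By choosing $c_0$ sufficiently large, the drift $(\gamma-1)\delta_t$ dominates the fluctuation at each step with probability $1 - O(1/\log n)$, using the tail bound of Lemma~\ref{lem:criticalper} inside the scaling window and standard supercritical concentration outside of it. A union bound over the $O(\log n)$ steps needed for $\delta_t$ to grow geometrically from $c_0 n^{-1/3}$ past $\eps$ then gives constant overall success probability. Once $\delta_T \geq \eps$, the dominant color is strictly supercritical and the argument of Lemma~\ref{hroch1} adapts to conclude that $X_T$ has one $\eps'$-heavy color and $q-1$ $\eps'$-light colors for some possibly smaller $\eps'$.

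\textbf{Main obstacle.} The delicate step is Phase II inside the scaling window, where both the drift and the stochastic fluctuation are of order $n^{-1/3}$. Making the per-step success probability at least $1 - O(1/\log n)$ is only possible by taking $c_0$ very large, which in turn sharpens what Phase I must deliver. This requires tail estimates stronger than the second-moment bounds of Lemma~\ref{lemia1} for the sizes of the largest critical component, for which Lemma~\ref{lem:criticalper} is the relevant tool, and careful coordination between the two phases so that the overall success probability remains a positive constant independent of $n$.
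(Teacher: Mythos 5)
Your two-phase plan is in substance the paper's: Phase I is the paper's Lemma~\ref{lem:wgood} (one kick-off step, using Lemma~\ref{lemia3} and the event that all giants receive the same color, to create a displacement of order $n^{-1/3}$ with constant probability), and Phase II is the paper's Lemma~\ref{lem:epsgood} (geometric amplification of the displacement, driven by $F'(1/q)=2(q-1)/q>1$). The genuine gap is the probability accounting in Phase II. You claim each amplification step succeeds with probability $1-O(1/\log n)$ ``by choosing $c_0$ sufficiently large'' and then union bound over the $O(\log n)$ steps. But $c_0$ must be an absolute constant: Phase I delivers $\delta_1\geq c_0 n^{-1/3}$ only with a probability that decays in $c_0$ (roughly like $\exp(-\Theta(c_0^3))$, by the critical-window tail for the largest component), so $c_0$ cannot grow with $n$. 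With $c_0$ constant, at the early steps of Phase II both the drift and the fluctuation are $\Theta(n^{-1/3})$, and the tools you cite (Lemma~\ref{lem:criticalper}, part~\ref{it:suplargest} of Lemma~\ref{lemia1}, Markov's inequality for $\sum_{j\ge2}|C_j|^2$) only bound the per-step failure probability by a \emph{constant} depending on $c_0$ — e.g.\ the Markov bound contributes $\Theta(1/c_0)$ — not by $O(1/\log n)$. So the uniform per-step bound is unattainable and the union bound as you describe it does not close.

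The repair is exactly what the paper does and does not require any per-step bound that improves with $n$: if the displacement at step $t$ is $w_t n^{-1/3}$ with $w_{t+1}=(13/12)w_t$, then the failure probability of step $t$ is $O(1/w_t)+\exp(-\Omega(w_t^2))$, and since $w_t$ grows geometrically the series $\sum_t 1/w_t$ converges; hence the product of the per-step success probabilities (equivalently $1$ minus the summed failures, dominated by the first few terms) is bounded below by a positive constant independent of $n$, which is all that the lemma asks for. One further bookkeeping point: you track only the excess $\delta_t$ of the largest class, but the amplification step also needs the other $q-1$ classes to sit proportionally \emph{below} $n/q$ (the paper's notion of a $wn^{-1/3}$-good state requires them to be $(w/2q)n^{-1/3}$-light), so that in the percolation step only the dominant class can produce the large component and the recoloring estimates apply; your Phase I arranges this, but Phase II must propagate it along with the growth of $\delta_t$.
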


Lemma~\ref{hrochhroch1} yields the following analogue of Lemma~\ref{slon7} (note here the logarithmic bound on $T$).
\begin{lemma}\label{slonslon7}
Assume $B=\Bh$. For any constant $\delta>0$ and any starting
state $X_0$, after $T=O(\log n)$ steps, with probability $\Theta(1)$
the SW algorithm moves to state $X_T$ with
$\|\alphab(X_T)-\m\|_{\infty} \leq \delta$.
\end{lemma}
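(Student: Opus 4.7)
The plan is to combine Lemma~\ref{hrochhroch1}, which handles the escape from near-uniform configurations (the new difficulty at $B=\Bh$), with the iteration-of-$F$ argument used in the proof of Lemma~\ref{slon7}. First I would apply Lemma~\ref{hrochhroch1} to obtain a time $T_1=O(\log n)$ such that with probability $\Theta(1)$ the state $X_{T_1}$ has one $\eps$-heavy color (density $z_1\geq (1+\eps)/B$) and $q-1$ $\eps$-light colors (each of density $\leq (1-\eps)/B$). After possibly shrinking the constant $\eps$ produced by Lemma~\ref{hrochhroch1}, I may assume $\eps<aB-1$ (using $aB>1$ from Lemma~\ref{lehehe}), so that $(1+\eps)/B<a$; note that since $B=\Bh=q$, we have $1/B=1/q$, so $z_1$ is bounded away from the percolation threshold by a constant.

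From such a heavy/light configuration the one-step analysis is identical to that in the proof of Lemma~\ref{slon7}: the heavy color class is supercritical so (by Lemma~\ref{l1al1}) the percolation step produces a single giant component of size $g(z_t)n(1+o(1))$, while each light class is subcritical and (by Lemma~\ref{l1}) all its components have size $o(n)$; a Chernoff bound in the coloring step then shows that the new dominant density equals $F(z_t)+o(1)$ and the remaining $q-1$ densities equal $(1-F(z_t))/(q-1)+o(1)$. I would then iterate for $T_2=O(1)$ steps. By Lemma~\ref{lem:Fshape}, $F$ is strictly increasing on $[1/B,1]$; by Lemma~\ref{lem:exist} combined with Lemma~\ref{lem:fixpoints}, $F$ has a unique fixpoint $a>1/q$ in this interval and it is jacobian attractive. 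Hence by monotonicity,
\[ F^{(T_2)}\bigl([(1+\eps)/B,\,1]\bigr)\;=\;\bigl[F^{(T_2)}((1+\eps)/B),\;F^{(T_2)}(1)\bigr]\;\subseteq\;[a-\delta/2,\,a+\delta/2] \]
for a constant $T_2=T_2(\eps,\delta)$, and a union bound over the $T_2$ constantly many steps absorbs the per-step $o(1)$ fluctuation. This yields $\|\alphab(X_{T_1+T_2})-\m\|_\infty\leq\delta$ with probability $\Theta(1)$ at time $T:=T_1+T_2=O(\log n)$, as required.

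The only technical point needing a brief check is that the $\eps$-heavy/$\eps$-light structure is preserved across all $T_2$ iterates so that the percolation estimates remain valid at every step. Since $F$ is monotone and $(1+\eps)/B<a$, there is a uniform constant margin $\gamma:=F((1+\eps)/B)-(1+\eps)/B>0$ such that $F(z)\geq(1+\eps)/B+\gamma$ for every $z\in[(1+\eps)/B,1]$; this margin dominates the $o(1)$ fluctuation in each coloring step and ensures the orbit $z_1,F(z_1),F^{(2)}(z_1),\ldots$ stays in $[(1+\eps)/B,1]$ throughout. The essential new feature at $B=\Bh$ --- namely that $u=1/q$ is jacobian repulsive rather than attractive, so escape from the uniform region takes $\Theta(\log n)$ rather than $O(1)$ steps --- is entirely encapsulated in Lemma~\ref{hrochhroch1}, and after that step the argument proceeds exactly as in the $B>\Bh$ case. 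Consequently the main obstacle in filling in the details of this proof is really the proof of Lemma~\ref{hrochhroch1} itself, not anything here.
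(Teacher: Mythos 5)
Your proposal follows essentially the same route as the paper's proof: apply Lemma~\ref{hrochhroch1} to reach a heavy/light configuration in $T_1=O(\log n)$ steps, then run the constant-length iteration-of-$F$ argument from Lemma~\ref{slon7}, using that $F$ is increasing with a unique fixpoint $a>1/q$. The only loose point is your assertion that $\gamma:=F((1+\eps)/B)-(1+\eps)/B>0$: monotonicity of $F$ together with $(1+\eps)/B<a$ does not by itself give $F(z)>z$ on $(1/q,a)$, and this sign is genuinely needed, since otherwise the orbit started at the left endpoint would drift back toward the uniform fixpoint rather than up to $a$. The paper obtains it from the jacobian repulsiveness of $u=1/q$ at $B=\Bh$ (second part of Lemma~\ref{lem:jacobianuniform}); alternatively it follows from concavity of $F$ on $[1/B,1]$ (Lemma~\ref{lem:Fshape}) together with uniqueness of the fixpoint in $(1/q,1]$ (Lemma~\ref{lem:exist}), so this is a one-line fix rather than a real gap.
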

\begin{proof}[Proof of Lemma~\ref{slonslon7}]
From Lemma~\ref{hrochhroch1}, for some (small) constant $\epsilon>0$, we have that for $T_1=O(\log n)$, with probability $\Theta(1)$, $X_{T_1}$ has an $\eps$-heavy color and the remaining
$q-1$ colors are $\eps$-light. Using Lemma~\ref{lem:Fshape} ($F$ is increasing), the second part of Lemma~\ref{lem:jacobianuniform} (the uniform fixpoint is jacobian repulsive) and Corollary~\ref{lem:exist} (there exists a unique fixpoint of $F$ in the interval $(1/q,1]$), we obtain that for constant $T_2$ (depending on $\delta$) we have $F^{(T_2)}([(1+\eps)/q,1]) \subseteq [a-\delta/2,a+\delta/2]$, so the same arguments as in the proof of Lemma~\ref{slon7} yield  that $\norm{\alphab(X_{T_1+T_2})-\m}_{\infty}\leq \delta$ with probability $\Theta(1)$, as wanted.
    \end{proof}

Using Lemma~\ref{lem:vertices-agree} (note that it applies to all $B>0$) and Lemmas~\ref{lem:phases-agree} and~\ref{slon8} (note that these apply to all $B>\Bu$), we may conclude the following from Lemma~\ref{slonslon7}.
\begin{corollary}\label{cor:BBh}
Let $B=\Bh$. The mixing time of the Swendsen-Wang algorithm
on the complete graph on $n$ vertices is $O(\log n)$.
\end{corollary}
\begin{proof}
The proof is completely analogous to the proof of Corollary~\ref{cor:BgBh}, the only difference is that now we use Lemma~\ref{slonslon7} to argue that \eqref{eq:XT1YT1}  holds with probability $\Theta(1)$ for $T_1=O(\log n)$.
\end{proof}

We next turn to the proof of Lemma~\ref{hrochhroch1}. We will use the following definition. For $W>0$, a state $X$ will be called \emph{$W$-good} if $X$ has a $W$-heavy color and the remaining $q-1$ colors are $(W/2q)$-light.

\begin{lemma}\label{lem:wgood}
Let $B=\Bh$. For any starting state $X_0$ and an arbitrary constant $w>0$, with probability at least $p(w)>0$ (not depending on $n$) the next state $X_1$ of the SW dynamics  is $wn^{-1/3}$-good.
\end{lemma}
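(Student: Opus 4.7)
The plan is: in one step of SW, obtain a ``large component'' of size $\Omega(n^{2/3})$ in the percolation step, then use the recoloring step to assign this mass to a single color (making that color $wn^{-1/3}$-heavy) while a sum-of-squares bound ensures the remaining mass concentrates near uniform across the other colors (making them $(w/(2q))n^{-1/3}$-light). Let $\alpha^* := \max_i \alpha_i(X_0)\geq 1/q = 1/\Bh$ and let $V^*$ be the dominant color class, so that percolation on $V^*$ with $p = B/n$ has parameter $q\alpha^*\geq 1$ and is at least critical.

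For the percolation step, I will apply Lemma~\ref{lemia3} to $V^*$: with positive probability, the largest component $C^*$ of $V^*$'s percolation satisfies $|C^*|\geq Ln^{2/3}$ for an arbitrarily large constant $L$, and $\sum_{j\geq 2, V^*}|C_j|^2\leq n^{4/3}$. For the other color classes $V_i$, Lemma~\ref{lemia1} Item~\ref{it:sub} bounds the sum of squares by $O(n)$ in the strongly subcritical regime; Lemma~\ref{lemia1} Item~\ref{it:sup} together with Lemma~\ref{lem:criticalper} gives $O(n^{4/3})$ in the scaling window. For any color class with $\alpha_i\geq 1/q + An^{-1/3}$ (beyond the scaling window, for a suitably chosen constant $A$), I will separate off its largest component as an additional ``large'' component, bounding the sum of squares of the rest by $O(n^{4/3})$ via Lemma~\ref{lem:supercritical} or Lemma~\ref{lemia1} Item~\ref{it:sup}. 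A Markov bound then gives that with positive probability the aggregate sum of squares of all ``small'' components is at most $K'n^{4/3}$ for some constant $K'$.

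For the recoloring step, condition on the above percolation events. Each component receives an independent uniform color in $[q]$, so with probability at least $q^{-q}$ all ``large'' components are assigned a common color, which I label color $1$. Let $J_i$ denote the total size of small components assigned color $i$; then $E[J_i] = (n-|C_{\mathrm{large}}|)/q$ with $|C_{\mathrm{large}}|$ the total large-component mass, and $\mathrm{Var}(J_i) \leq \tfrac{q-1}{q^2} K' n^{4/3}$. Chebyshev with a union bound yields, with constant probability, $|J_i - E[J_i]|\leq Mn^{2/3}$ for all $i$ and some constant $M$. Then $A_1 := n\alpha_1(X_1) = |C_{\mathrm{large}}| + J_1 \geq n/q + (L(1-1/q) - M)n^{2/3}$ and $A_i := n\alpha_i(X_1) = J_i \leq n/q - (L/q - M)n^{2/3}$ for $i\neq 1$. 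Choosing $L$ large enough relative to $M, q, w$ converts these into the $wn^{-1/3}$-heavy and $(w/(2q))n^{-1/3}$-light conditions, respectively, with total success probability $\Omega(1)$ independent of $n$.

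The main obstacle is the case analysis: the ``large components'' are not all generated by the same mechanism, and strongly supercritical non-dominant color classes (whose largest components may be $\omega(n^{2/3})$) must be aggregated with $C^*$ rather than absorbed into the sum-of-squares bound. Selecting the threshold constant $A$ and the constant $L$ must be done consistently so as to simultaneously yield a large $|C_{\mathrm{large}}|$ for the heavy condition, a small aggregate sum of squares for the Chebyshev concentration, and a controllable gap for the light condition on each non-dominant color.
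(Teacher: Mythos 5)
Your proposal is correct and follows essentially the same route as the paper's proof: Lemma~\ref{lemia3} supplies the $\Omega(n^{2/3})$ largest component on the at-least-critical dominant color class together with a sum-of-squares bound, all large components are forced to a common color with probability at least $q^{-q}$, and concentration of the randomly recolored small components plus a choice of $L$ large relative to the fluctuation constant yields the heavy/light conditions. The only cosmetic differences are that the paper applies Lemma~\ref{lemia3} uniformly to every color class that is not strongly subcritical (so supercritical and in-window non-dominant classes need no separate case analysis via Lemma~\ref{lem:supercritical} or Lemma~\ref{lem:criticalper}), and it uses Azuma's inequality where you use Markov plus Chebyshev.
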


\begin{lemma}\label{lem:epsgood}
Let $B=\Bh$. There exist absolute constants $c_1,c_2,C>0$ such that for all $n$ sufficiently large the following holds. For all $w$ such that $c_1\leq w \leq c_2 n^{1/3}$,  for every $wn^{-1/3}$-good starting state $X_0$, the next state of the SW dynamics $X_1$ is $(13/12)w n^{-1/3}$-good with probability at least $\exp(-C/w)$.
\end{lemma}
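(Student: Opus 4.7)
The plan is to analyze one step of the Swendsen--Wang dynamics starting from a $wn^{-1/3}$-good state $X_0$ by decomposing into the percolation and coloring substeps, tracking the size of the giant component that arises from the dominant color class, and showing that it produces the required growth in the leading density with probability at least $\exp(-C/w)$. Throughout, recall that $B=\Bh=q$.

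Let color $1$ be the heavy one, with size $m_1=\alpha_1 n$ and excess $\epsilon_1:=\alpha_1 B-1\geq wn^{-1/3}$; each other color $i\geq 2$ has subcritical deficit $1-\alpha_i B\geq wn^{-1/3}/(2q)$. The percolation step on color $1$ samples $G(m_1,B/n)$, which is supercritical with excess $\epsilon_1$: by Lemma~\ref{lemia1}, Part~\ref{it:suplargest} (or, for $w$ in a bounded regime, by Lemma~\ref{lemia3}/Corollary~\ref{strictercritical} applied to this color class), with probability $\geq 1-K\exp(-cw^3)$ the giant component $C_{1,1}$ has size $s\in[(7/4)\epsilon_1 m_1,3\epsilon_1 m_1]$. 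For all non-giant components across all $q$ color classes, Lemma~\ref{lemia1}, Parts~\ref{it:sub} and~\ref{it:sup}, together give $E\bigl[\sum|C|^2\bigr]=O(n^{4/3}/w)$, so by Markov's inequality $\sum|C|^2\leq Mn^{4/3}/w$ with failure probability $1/M$.

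In the coloring step, $C_{1,1}$ receives a uniformly random color $c^*\in[q]$, which becomes the new dominant color; the other components are colored independently. Azuma--Hoeffding bounds the deviation of each color's count (over the small components) by $t\sqrt{\sum|C|^2}$ with failure $2\exp(-t^2/2)$. The mean new density of $c^*$ is $\tfrac{1}{q}+(1-\tfrac{1}{q})\tfrac{s}{n}$; substituting $s\geq (7/4)\epsilon_1 m_1\geq (7/4)wn^{2/3}/q$ yields
\[
\alpha_{c^*}'-\tfrac{1}{q}\;\geq\;\frac{7(q-1)}{4q^2}\,wn^{-1/3}\;-\;\text{(fluctuation)}.
\]
The target $(1+(13/12)wn^{-1/3})/B$ requires an increment of $(13/12)wn^{-1/3}/q$; since $\tfrac{7(q-1)}{4q}\geq \tfrac{13}{12}$ with slack $\tfrac{8q-21}{12q}\geq \tfrac{1}{12}$ for all $q\geq 3$, the leading term beats the target and we only need the fluctuations to fit inside the $\Theta(wn^{-1/3}/q)$ slack in density (equivalently $\Theta(wn^{2/3}/q)$ in vertex count). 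A parallel computation places every losing color safely inside the $(13/12)wn^{-1/3}/(2q)$-light range, with even more slack.

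To make this quantitative, choose $M=c_3 w$ and $t=c_4 w$ with $c_4\sqrt{c_3}$ a sufficiently small constant (depending on $q$) so that the coloring fluctuation $t\sqrt{M/w}\cdot n^{2/3}=c_4\sqrt{c_3}\,wn^{2/3}$ fits within the allowed slack. The resulting failure bounds are $O(1/w)$ from Markov, $\exp(-\Theta(w^2))$ from Azuma over each of the $q$ colors, and $\exp(-\Theta(w^3))$ from the percolation step, so the total failure is $O(1/w)$ when $w$ is large, giving success $\geq 1-O(1/w)\geq \exp(-C/w)$. For $w$ at the lower threshold $c_1$, taking $c_1$ large enough (depending on $c_3, c_4, q, K$) makes each failure term a small constant, yielding a positive constant lower bound on success that exceeds $\exp(-C/c_1)$ for $C$ sufficiently large. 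The upper cutoff $w\leq c_2 n^{1/3}$ keeps $\epsilon_1$ inside the validity range of the scaling-window estimates. The main obstacle is that the argument is genuinely tight: the $13/12$ in the target is essentially forced by $F'(1/q)=2(q-1)/q$ combined with the $(7/4)$ lower bound from Lemma~\ref{lemia1}, Part~\ref{it:suplargest}, leaving only a $\Theta(1/12)$ slack in which all three error terms (percolation window, Markov on small components, and Azuma on coloring) must simultaneously fit.
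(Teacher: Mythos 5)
Your proposal is correct and follows essentially the same route as the paper's proof: Lemma~\ref{lemia1} Part~\ref{it:suplargest} for the $(7/4)$ lower bound on the giant of the heavy color class, Parts~\ref{it:sub}--\ref{it:sup} plus Markov for the sum of squares of the remaining components, Azuma for the coloring step, and the same $7/4 \to 13/12$ arithmetic using $q\geq 3$, with the same failure bookkeeping yielding $\exp(-C/w)$. The only small gloss (shared with the paper, and harmless since the giant's size is monotone in the edge probability) is that $wn^{-1/3}$-goodness does not upper-bound the heavy class's supercritical excess, so the scaling-window estimate is applied at the nominal excess $\Theta(wn^{-1/3})$ rather than the actual one.
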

Before proceeding, let us briefly motivate Lemmas~\ref{lem:wgood} and~\ref{lem:epsgood}. First, we explain the origin of the constant 13/12 in Lemma~\ref{lem:epsgood}, whose value is somewhat arbitrary, any constant strictly smaller than 4/3 (and greater than 1) would work for all $q\geq 3$. To understand where the  constant 4/3 comes from, recall from Lemma~\ref{lem:jacobianuniform} that the uniform phase $u=1/q$ is a jacobian repulsive fixpoint of $F$ (for $B=\Bh$) and, more precisely, $F'(1/q)=2 (q - 1)/q$ (note that $F'(1/q)>1$ for all $q>2$). Then, just observe that $\min_{q\geq 3}\{2 (q - 1)/q\}=4/3$. 

Thus, for any $4/3>c>1$ (or, slightly less loosely, when $F'(1/q)>c>1$), whenever  $\norm{\alphab(X_t)-\u}_\infty$ is sufficiently small, for all sufficiently large $n$, one would expect that  
\[\norm{\alphab(X_{t+1})-\u}_\infty \geq  c\norm{\alphab(X_{t})-\u}_\infty.\] 
We show that this indeed holds by accounting carefully for color classes which are in the critical window  for the percolation step of the SW dynamics (technically,  to establish the probability bound in Lemma~\ref{lem:epsgood}, we need that $\norm{\alphab(X_t)-\u}_\infty=\Omega(n^{-1/3})$).  Lemma~\ref{lem:epsgood} thus proves that an initial displacement of $\Omega(n^{-1/3})$, which is guaranteed with constant probability from Lemma~\ref{lem:wgood}, increases geometrically.

Lemma~\ref{hrochhroch1} follows immediately from Lemmas~\ref{lem:wgood}  and~\ref{lem:epsgood}.
\begin{proof}[Proof of Lemma~\ref{hrochhroch1}]
Let $c_1,c_2,C$ be the constants from Lemma~\ref{lem:epsgood}. Define $w_t$ by $w_1=c_1$ and $w_{t}=(13/12)w_{t-1}$. Moreover, let $0<\epsilon_0<c_2$ and set $t_0=\left\lfloor \log (\epsilon_0 n)/\log  (13/12)\right\rfloor$. By Lemmas~\ref{lem:wgood} and~\ref{lem:epsgood}, for any starting state $X_0$, the state $X_t$ is $w_t$-good for all $t=1,\hdots,t_0$ with probability at least $p(w_1)\prod^{t_0}_{t=2}\exp(-C/w_t)=:L$. Note that the product in the expression for $L$ is bounded by an absolute positive constant, since the series $\sum_{t\geq 1}1/w_t$ converges.

It follows that for any positive $\epsilon<\epsilon_0/(10q)$, with positive probability (not depending on $n$), $X_{t_0}$ has an $\epsilon$-heavy color and the remaining $q-1$ colors are  $\epsilon$-light, as wanted.
    \end{proof}

We next prove Lemmas~\ref{lem:wgood} and~\ref{lem:epsgood}.
\begin{proof}[Proof of Lemma~\ref{lem:wgood}]
We will write $\alpha_i$ as a shorthand for $\alpha_i(X_0)$, and denote $m_i=n\alpha_i$. In each step of the Swendsen-Wang algorithm, the percolation step for color $i$ picks a graph $G_i$ from $G(m_i,q\alpha_i/m_i)$. Let $C_1^{(i)},C_2^{(i)}, \hdots$ be the components of $G_i$ in decreasing order of  size.

Let $A,L$ be the constants in Lemma \ref{lemia3} and let $w\geq L$. For each color $i$ the following hold with positive probability (not depending on $n$):
\begin{enumerate}
\item  \label{it:sup2} If $q\alpha_i\geq (1-A m_i^{-1/3})/m_i$, then  $|C^{(i)}_1|\geq 100wq^2n^{2/3}$, $\sum_{j> 1} |C^{(i)}_j|^2 \leq m_i^{4/3}\leq n^{4/3}$ (by Lemma~\ref{lemia3}).\footnote{We remark that the choice of the constant 100 in the bound for $|C^{(i)}_1|$ is somewhat arbitrary, any sufficient large constant would work; similar remarks apply for the explicit constants 80 and 50 appearing in the proof of Lemma~\ref{lem:wgood}.}
\item \label{it:sub2} If $(1-A m_i^{-1/3})/m_i>q\alpha_i$, then $\sum_{j\geq 1} |C^{(i)}_j|^2 \leq  n^{4/3}$ (by Item~\ref{it:sub} of Lemma~\ref{lemia1}).
\end{enumerate}

Note that for at least 1 color we have $q\alpha_i\geq 1$ (since the $\alpha_i$'s sum to 1). Let $S=\{i\in[q]: q\alpha_i\geq 1\}$ and consider all the components \emph{different} from $C^{(i)}_1$, $i\in S$.  Color these
components  independently by a uniformly random color from $[q]$. Let $A_i$ be the number of vertices of color
$i$. By Azuma's inequality and a union bound we have that with probability at least $1-2q \exp( - 50w^2q)$, for each $i\in[q]$ it holds that
\begin{equation*}
\Big|A_i -\frac{n-\sum_{i\in S}|C^{(i)}_1|}{q}\Big|\leq (10wq)n^{2/3}.
\end{equation*}
With probability at least $q^{-q}$ each of $C^{(i)}_1$ with $i\in S$ receives color 1. Let $A_i'$ be the
number of vertices of color $i$ after the coloring step of the SW algorithm. Note, we have $A_1'=A_1 + \sum_{i\in S}|C^{(i)}_1|$ and $A_i' = A_i$ for $i\geq 2$. We obtain that with probability at
least $q^{-q}\big(1-2q\exp( -50w^2q)\big)>0$
\begin{equation*}
|A_1'| \geq \frac{n}{q} +\left(\sum_{i\in S}|C^{(i)}_1|\right)\left(1-\frac{1}{q}\right)-(10wq)n^{2/3} \geq \frac{n}{q} + (80wq^2) n^{2/3},
\end{equation*}
and for all $i\in\{2,\dots,q\}$
\begin{equation*}
|A_i'| \leq \frac{n}{q} -\frac{1}{q}\left(\sum_{i\in S}|C^{(i)}_1|\right)+(10wq)n^{2/3}\leq \frac{n}{q} - (90wq) n^{2/3}.
\end{equation*}
This concludes the proof.
    \end{proof}

\begin{proof}[Proof of Lemma~\ref{lem:epsgood}]
W.l.o.g., we may assume that the color classes $S_1,S_2,\hdots,S_q$ of $X_0$ satisfy
\begin{equation}\label{easum}
|S_1|\geq \frac{n}{q} + w n^{2/3}\quad\mbox{and}\quad
|S_i|\leq \frac{n}{q} - \frac{w}{2q} n^{2/3}\quad\mbox{for}\ i\in\{2,\dots,q\}.
\end{equation}
Now we make a step of the Swendsen-Wang algorithm.
Let $C_1,C_2,\dots,$ be all the connected components after the percolation step of the
Swendsen-Wang algorithm, listed in decreasing size. By Lemma~\ref{lemia1} (first part for
the color classes $i=2,\hdots,q$ and second part for the 
 color class $i=1$) we have
$$
E\Big[\sum_{j\geq 2} |C_j|^2\Big] \leq \frac{2Kn^{4/3}}{w}.
$$
By Markov's inequality
\begin{equation}\label{zuquo1}
P\Big(\sum_{j\geq 2} |C_i|^2\geq n^{4/3} \Big)\leq \frac{2K}{w}.
\end{equation}
By Lemma~\ref{lemia1} (part \ref{it:suplargest}), there exists a constant $c>0$ such that 
\begin{equation}\label{zuquo2}
P\Big(|C_1|\leq (7/4) wn^{2/3} \Big)\leq K\exp(-c q^2 w^3).
\end{equation}

For all sufficiently large $w$, we may assume that the events in~\eqref{zuquo1} and~\eqref{zuquo2} occurred, that is, $|C_1|\geq (7/4) w n^{2/3}$ and $\sum_{i\geq 2} |C_i|^2\leq n^{4/3}$. Now we
color the components $C_2,C_3,\dots$ independently by a uniformly random color from $[q]$
(for now we leave the component $C_1$ uncolored). Let $A_i$ be the number of vertices of color
$i$. We have by Azuma's inequality that 
\begin{equation}\label{hopso}
P\Big( \Big|A_i - \frac{n-|C_1|}{q}\Big|  \geq \frac{wn^{2/3}}{4q}\Big) \leq  2 \exp( - w^2/(32 q^2) ).
\end{equation}
Now we color $C_1$, and assume w.l.o.g. that it receives color $1$. Let $A_i'$ be the
number of vertices of color $i$ now (we have $A_1'=A_1 + |C_1|$ and $A_i' = A_i$ for $i\geq 2$). Applying union bound
to~\eqref{hopso} we obtain that with probability at
least $1-2q\exp( - w^2/(32 q^2) )$ we have
\begin{equation}\label{ipsy}
|A_1'| \geq \frac{n}{q} + w n^{2/3} \Big( \frac{7}{4} (1-1/q) - 1/(4q) \Big) \geq \frac{n}{q} + \frac{13}{12} w n^{2/3},
\end{equation}
and for all $i\in\{2,\dots,q\}$
\begin{equation}
|A_i'| \leq \frac{n}{q} - w n^{2/3} \Big( \frac{7}{4} (1/q) - 1/(4q) \Big) \leq \frac{n}{q} - \frac{13}{12} w/(2q) n^{2/3}.
\end{equation}
Note that, in the second inequality in~\eqref{ipsy}, we used the fact that $q\geq 3$.

Let $w'=(13/12)w$. Summarizing all the steps we obtain that from a state satisfying~\eqref{easum}
we get to a state satisfying
\begin{equation}\label{easum2}
|S_1|\geq \frac{n}{q} + w'n^{2/3}\quad\mbox{and}\quad
|S_i|\leq \frac{n}{q} - \frac{w'}{2q} n^{2/3}\quad\mbox{for}\ i\in\{2,\dots,q\},
\end{equation}
with probability at least
\begin{equation}
\Big(1-\frac{2K}{w}-K\exp(-cq^2 w^3)\Big)\Big(1-2q\exp(-w^2/(32q^2))\Big).
\end{equation}
For all sufficiently large $w$, the last expression is greater than $\exp(-C/w)$, where $C$ is a positive constant (depending on $K,c,q$), as wanted.
\end{proof}

\section{Lower bound on the mixing time for $B\geq \Bh$}\label{sec:lowerboundBBh}
In this section, we prove that the SW algorithm mixes in $\Omega(\log n)$ steps for all $B>\Bh$.

Recall from Section~\ref{sepp} that ${\cal B}(\vv,\delta)$ is the $\ell_\infty$-ball of configuration vectors of the $q$-state Potts model in $K_n$ around  $\vv$ of radius $\delta$, cf. equation \eqref{eq:ballball}. Let
\[S:=\mathcal{B}(\m,n^{-1/7}),\]
and denote the set of configuration vectors which are not in $S$ by $\overline{S}$. 

We first establish the following (crude) bound on the probability mass of configurations in $\overline{S}$ in the Potts distribution. (Far more precise bounds are known and can be found in, e.g., \cite{HET}; the following estimate follows easily from our upper bound on the mixing time.) 
\begin{lemma}\label{lem:measureconc}
Let $B\geq \Bh$. For the Potts distribution $\mu$ in \eqref{eq:Gibbs}, for all sufficiently large $n$, it holds that $\mu(\overline{S})\leq 1/8$.
\end{lemma}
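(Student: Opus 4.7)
The plan is to use the standard large-deviation / local-CLT description of the Gibbs measure in~\eqref{eq:Gibbs}: for $B\geq\Bh$, the measure concentrates on configurations whose phase is within $O(n^{-1/2})$ of one of the $q$ permutations of the majority phase $\m$, which is well inside $n^{-1/7}$. Throughout I interpret $\mathcal{B}(\m,n^{-1/7})$ as the $\ell_\infty$-ball around the orbit of $\m$ under color permutations (the natural object for the mixing-time lower bound; by $S_q$-symmetry of $\mu$ a single coordinate-ball would carry mass only $\approx 1/q$). Since on $K_n$ the weight $w(\sigma)$ depends only on $\alphab(\sigma)$, I would begin by writing
\[
\mu(\overline{S})\;=\;\frac{\sum_{\alphab\notin S}Z^{\alphab}}{\sum_{\alphab}Z^{\alphab}},\qquad \ln Z^{\alphab}\;=\;n\,\Psi(\alphab)+O(\log n),
\]
the second identity being Stirling's formula, uniform for $\alphab$ at $\Omega(1)$ distance from the boundary of the simplex.

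Two inputs from earlier sections then do all the work. First, for $B\geq\Bh$, Lemma \ref{lem:exist} together with the classification of critical points of $\Psi_1$ (Figure \ref{fig:psi}) gives that $\Psi$ is globally maximized exactly on the $q$ permutations of $\m$, and in particular there exist constants $\eta,\delta>0$ with $\Psi(\alphab)\leq \Psi(\m)-\eta$ whenever $\alphab$ lies at $\ell_\infty$-distance at least $\delta$ from every permutation of $\m$. Second, from the explicit form $\Psi(\alphab)=-\sum_i\alpha_i\ln\alpha_i+\tfrac{B}{2}\sum_i\alpha_i^2$ the Hessian of $\Psi$ at $\m$ equals $\mathrm{diag}(B-1/a,B-1/b,\dots,B-1/b)$; combining the non-degeneracy of $\Psi_1''(a)$ along the symmetric direction $(q-1,-1,\dots,-1)$ from Lemma \ref{nononoz} (which, since $\m$ is a local max of $\Psi_1$, forces $\Psi_1''(a)<0$) with $bB<1$ from Lemma \ref{lehehe} on the remaining $q-2$ transverse directions shows that this Hessian is strictly negative definite on the tangent space $\{\sum v_i=0\}$. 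Therefore $\Psi(\alphab)-\Psi(\m)\leq -c\|\alphab-\m\|_\infty^2$ for a constant $c>0$ and all $\alphab$ with $\|\alphab-\m\|_\infty\leq\delta$.

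Splitting the numerator into the annular regions $\{n^{-1/7}\leq\|\alphab-\pi(\m)\|_\infty\leq\delta\}$ for each permutation $\pi$ of $\m$, plus the ``bulk'' region at $\ell_\infty$-distance at least $\delta$ from every permutation, the local quadratic bound controls the annular contribution by $\mathrm{poly}(n)\exp\big(n\Psi(\m)-cn\cdot n^{-2/7}\big)=\exp\big(n\Psi(\m)-\Omega(n^{5/7})\big)$ and the bulk contribution by $\exp(n\Psi(\m)-\Omega(n))$. For the denominator it is enough to retain a shrinking neighborhood of any single permutation of $\m$: a standard Laplace-method/local-CLT estimate using the non-degenerate Hessian gives $\sum_\alphab Z^\alphab \geq \exp(n\Psi(\m))/\mathrm{poly}(n)$. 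Dividing, $\mu(\overline{S})\leq \exp(-\Omega(n^{5/7}))=o(1)$, comfortably below the claimed $1/8$ for all sufficiently large~$n$.

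The main obstacle is the local analysis around $\m$: Lemma \ref{nononoz} only supplies non-degeneracy along the one-dimensional symmetric slice used to define $\Psi_1$, whereas the simplex tangent space is $(q-1)$-dimensional, so one has to deal with the $q-2$ transverse directions separately. This becomes straightforward once one writes down the diagonal form of the Hessian and applies Lemma \ref{lehehe}, but it is the one step where one must resist the temptation to quote Lemma \ref{nononoz} directly. A secondary bookkeeping issue is ensuring the polynomial prefactors in the Laplace-type lower bound on $Z$ are small enough to be absorbed by the $\exp(-\Omega(n^{5/7}))$ decay of the numerator; this is routine by comparing the lattice sum $\sum_\alphab\exp(n\Psi(\alphab))$ near $\m$ with a Gaussian integral after second-order Taylor expansion.
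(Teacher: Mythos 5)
Your proposal is correct in outline but takes a genuinely different route from the paper. The paper's proof is dynamical: it uses the drift results already established (Lemmas~\ref{slon7}/\ref{slonslon7} and~\ref{slon8}) to show that from an arbitrary start the chain reaches $S$ with probability $\Theta(1)$ within $O(\log n)$ steps, the second part of Lemma~\ref{l3} (applied with radius $n^{-1/7}$) to show that $S$ is essentially absorbing, hence $\Pr(X_{T^*}\in S)\geq 15/16$ at a suitable time $T^*$ that is at least a constant multiple of $\Tmix=O(\log n)$ (Corollaries~\ref{cor:BgBh} and~\ref{cor:BBh}), and then transfers this to $\mu$ via the total-variation definition of mixing. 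Your argument is static: a Laplace/free-energy estimate on $Z^{\alphab}$, which buys a much stronger conclusion ($\mu(\overline{S})\leq \exp(-n^{\Omega(1)})$ rather than $1/8$) and needs nothing about the dynamics. You also correctly flag that $S$ must be read as the union of the balls around the $q$ permutations of $\m$; with a single ball the statement would fail by color symmetry for $q\geq 3$, and the paper uses the same convention implicitly (e.g.\ in the proof of Lemma~\ref{l3}). Your local input is right and correctly assembled: the Hessian of $\Psi$ at $\m$ is $\mathrm{diag}(B-1/a,B-1/b,\dots,B-1/b)$, the symmetric direction and the directions with first coordinate zero are orthogonal for this quadratic form, and negative definiteness on the tangent space follows from $\Psi_1''(a)<0$ (Lemma~\ref{nononoz} plus $\m$ being a local maximum) together with $bB<1$ (Lemma~\ref{lehehe}).

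One step is cited too lightly: the claim that for $B\geq\Bh$ the global maximum of $\Psi$ over the whole simplex is attained exactly at the permutations of $\m$ does not follow from Lemma~\ref{lem:exist} and Figure~\ref{fig:psi}, which concern only the one-dimensional slice $\Psi_1$. To close it you need (i) the reduction that maximizers of $\Psi$ have the form $(a,b,\dots,b)$ up to permutation — at a critical point the coordinates satisfy $-\ln\alpha_i+B\alpha_i=\mathrm{const}$, hence take at most two values, and at a local maximum at most one coordinate can exceed $1/B$, since otherwise the direction $e_i-e_j$ between two such coordinates has positive second derivative $2B-1/\alpha_i-1/\alpha_j$; together with the fact that the boundary of the simplex carries no maxima (the entropy has infinite inward derivative), this puts the global maximizer on the symmetric slice — and (ii) $\Psi_1(a)>\Psi_1(1/q)$ for $B\geq\Bh$, which holds because $\Bh>\Bp$ but is stated in the paper only as background from the cited literature. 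Both are short and standard (the first essentially reuses your diagonal Hessian formula), so this is a citation gap in your write-up rather than a flaw in the approach, but it is the one place where the argument as written does not yet stand on the lemmas you invoke.
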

\begin{proof}
For any starting state $X_0$, we have that for $T=O(\log n)$, it holds that
\[\Pr(X_T\in S)\geq \epsilon,\]
where $\epsilon>0$ is a constant independent of $n$. (For $B>\Bh$ this follows by Lemmas~\ref{slon7} and~\ref{slon8}, and for $B=\Bh$ this follows by Lemmas~\ref{slonslon7} and \ref{slon8}.) It follows that for all non-negative integers $j$ it holds that 
\[\Pr(X_{(j+1)T}\in S\mid X_{jT}\notin S)\geq \epsilon.\]
Further, by  Lemma~\ref{l3}, for integer $t\geq 0$,  it holds that 
\[\Pr(X_{t+1}\in S\mid X_{t}\in S)\geq 1-\exp(-\Omega(n^{1/3})).\]
We thus obtain that for some positive integer $j=j(\epsilon)$, for all sufficiently large $n$, for all  integer $t\geq j T$, it holds that 
\begin{equation}\label{eq:XtS}
\Pr(X_{t}\in S)\geq 15/16.
\end{equation}

Let $T^*=\max\{j T,\, 2\Tmix\}$. Recall that $\Tmix=O(\log n)$ (cf. Corollaries~\ref{cor:BgBh} and~\ref{cor:BBh}), so $T^*=O(\log n)$ as well. Since $\Tmix$ is the time needed to get within total variation distance $\leq 1/4$ from $\mu$, we have that for any $\epsilon'>0$, for $t\geq \Tmix \log_2 (1/\epsilon')$,  it holds that $d_{TV}(X_{t},\mu)\leq \epsilon'$ (see \cite[Section 4.5]{LPW}). Thus, we have that 
\begin{equation}\label{eq:vdvdvd}
\mu(\overline{S})-\Pr(X_{T^{*}}\in \overline{S})\leq \max_{A\subseteq \Omega}|\mu(A)-\Pr(X_{T^*}\in A)|=d_{TV}(X_{T^*},\mu)\leq 1/16.
\end{equation}
Combining \eqref{eq:XtS} and \eqref{eq:vdvdvd} yields $\mu(\overline{S})\leq 1/8$, as wanted.
\end{proof}

\begin{lemma}\label{lem:implowerbound}
For $B\geq \Bh$, there exist constants $\delta_1,\delta_2>0$ such that the following is true. Suppose that we start at a state $X_0$ such that $X_0\notin S$ and $\delta_2\leq \|\alphab(X_0)-\m\|_{\infty} \leq \delta_1$. Then for some $T=\Omega(\log n)$, with probability $\geq 1/2$, it holds that $X_T\notin S$.
\end{lemma}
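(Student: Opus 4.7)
The plan is to show that, starting at constant distance from the majority phase $\m$, the chain contracts only at a constant geometric rate near $\m$, so $\Omega(\log n)$ steps are needed to reach the tiny ball $S = \mathcal{B}(\m, n^{-1/7})$. The key ingredient is that by Lemma~\ref{lem:fixpoints} the majority fixpoint $a$ of $F$ satisfies $\gamma := F'(a) \in (0,1)$ for every $B \geq \Bh$, so a Taylor expansion of $F$ around $a$ shows that one step of $F$ scales small deviations from $a$ by a factor bounded between two constants strictly inside $(0,1)$. The typical per-step stochastic fluctuation of the SW dynamics around the deterministic map $F$ is of order $\sqrt{\log n / n}$, which is $o(n^{-1/7})$, so this noise cannot accelerate the geometric contraction enough to push the chain into $S$ in fewer than $\Omega(\log n)$ steps, provided we start at constant distance (which is exactly what the lower bound $\delta_2$ furnishes).

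Concretely, I would first pick $\delta_1$ and $\gamma_- \in (0,\gamma)$ small enough that $|F(z) - a| \geq \gamma_-|z-a|$ whenever $|z - a| \leq 2\delta_1$, and order the colors so that $\m = (a, b, \ldots, b)$, tracking $z_t := \alpha_1(X_t)$; for $d_t := \|\alphab(X_t) - \m\|_\infty \leq \delta_1$, color $1$ is the heaviest color in $X_t$ (since $a > b$ by Lemma~\ref{lehehe}) and $d_t \geq |z_t - a|$. Next, I would redo the one-step bound from the proof of Lemma~\ref{slon8} with the choice $w_t = \sqrt{\log n}$ (rather than the geometrically shrinking values used there), combining Lemma~\ref{lem:supercritical} for the (supercritical) color $1$, Lemma~\ref{lemia1}, Part~\ref{it:sub}, for the other (subcritical) colors, and Azuma's inequality for the recoloring step. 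This yields, with probability $\geq 1 - n^{-c}$ for some constant $c > 0$, both the fluctuation bound $|z_{t+1} - F(z_t)| \leq C \sqrt{\log n / n}$ and the fact that color $1$ remains the heaviest. Combined with the Taylor bound, as long as $|z_t - a| \geq n^{-1/7}$ the fluctuation is $o(|z_t - a|)$ and one concludes $|z_{t+1} - a| \geq (\gamma_-/2)\,|z_t - a|$, while the analogous upper bound ensures $|z_{t+1} - a| \leq \delta_1$ so the induction stays in the valid regime. Iterating gives $|z_t - a| \geq (\gamma_-/2)^t\,\delta_2$, which stays above $n^{-1/7}$ for every $t$ up to some $T = \Omega(\log n)$ chosen as $T = \lfloor (1/7)\log n / \log(2/\gamma_-) \rfloor$ minus a constant. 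A union bound over these $T$ steps shows all concentration events hold with probability $1 - T n^{-c} = 1 - o(1)$, hence $\Pr(X_T \notin S) \geq 1/2$.

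The delicate point is the interaction between the deterministic drift $|F(z_t) - a| \approx \gamma|z_t - a|$ and the stochastic fluctuation of order $\sqrt{\log n / n}$: once $|z_t - a|$ becomes comparable to that scale the induction breaks down, so the choice of the scale $n^{-1/7}$ in the definition of $S$ is precisely what provides a comfortable polynomial gap above the fluctuation scale $n^{-1/2 + o(1)}$, enough to propagate the geometric lower bound for $\Omega(\log n)$ steps before noise can wipe out the deviation. A secondary but routine point is verifying that color $1$ stays the (strictly) heaviest color throughout the window, which follows from $a > b$ together with the fluctuation bound being much smaller than $a - b$.
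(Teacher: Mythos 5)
Your strategy coincides with the paper's proof: both rest on the two-sided bound $c_l|z-a|\leq |F(z)-a|\leq c_u|z-a|$ near $a$ (coming from $0<F'(a)<1$, Lemmas~\ref{lem:Fshape} and~\ref{lem:fixpoints}), a one-step concentration statement saying the SW update follows $F$ up to an error polynomially smaller than $n^{-1/7}$, and an iteration of $\|\alphab(X_{t+1})-\m\|_\infty\geq \tfrac{c_l}{2}\|\alphab(X_t)-\m\|_\infty$ for $c\log n$ steps, with $c$ chosen so that the geometric decay starting from the constant $\delta_2$ stays above $n^{-1/7}$ (the paper stops at $\delta_2 n^{-1/8}$).

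The gap is in your concentration bookkeeping. Redoing Lemma~\ref{slon8} with $w_t=\sqrt{\log n}$ does not give per-step failure probability $n^{-c}$: the bound on $\sum_{i\geq 2}|C_i|^2$ there is Markov's inequality, \eqref{grg1}, which fails with probability $1/w_t=1/\sqrt{\log n}$, and the Azuma step \eqref{grg2} fails with probability $\exp(-w_t/2)=\exp(-\sqrt{\log n}/2)$; only \eqref{grg3} is polynomially small. A union bound over $\Omega(\log n)$ steps then gives a total failure probability of order $\sqrt{\log n}$, so the argument as written does not close. (Lemma~\ref{lem:supercritical} also supplies only expectation and variance bounds, not the tails you invoke.) The repair is exactly what the paper does: you do not need per-step error $\sqrt{\log n/n}$, only $o(n^{-1/7})$ with per-step failure probability $o(1/\log n)$. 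Using the moderate deviations of Lemma~\ref{l1al1} for the heavy color and Lemma~\ref{l1} for the light ones, with probability $1-\exp(-\Theta(n^{1/3}))$ the giant component is $a'\beta n\pm n^{2/3}$ and all other components have size at most $n^{1/3}$, so on this event $\sum_{i\geq 2}|C_i|^2\leq n^{4/3}$ deterministically and Azuma gives a per-step error $O(n^{-1/6})$, which is \eqref{grg4c}; the union bound over $\Theta(\log n)$ steps is then immediate. (Alternatively, your own scheme goes through if you take, say, $w_t=\log^2 n$.) One further small caveat, shared with the paper's own write-up of \eqref{grg5b}: your induction starts from $|z_0-a|\geq\delta_2$, whereas the hypothesis only guarantees $\|\alphab(X_0)-\m\|_\infty\geq\delta_2$, which could be carried entirely by a light color while $z_0=a$; this is harmless for the intended application, where one simply chooses $X_0$ with the deviation in the heavy coordinate, but it deserves a remark.
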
 
\begin{proof}[Proof of Lemma~\ref{lem:implowerbound}]
Recall that $\m=(a,b,\hdots,b)$ where $a>1/q$ is a fixpoint of $F$. Let $\delta>0$ be such that for some $0<c_l<c_u<1$ for all $z\in
[a-\delta,a+\delta]$ we have 
\begin{equation}\label{eq:plmplmplm}
c_l|z-a|\leq |F(z)-a|\leq c_u |z-a|.
\end{equation} Note that the existence of
such $\delta$ is guaranteed throughout the regime $B\geq \Bh$, since $|F'(a)|<1$ by Lemma~\ref{lem:fixpoints}, $F'(a)>0$ by Lemma~\ref{lem:Fshape} and $F'$ is continuous in a neighbourhood around $a$. Let $\delta_1,\delta_2$ be arbitrary constants satisfying $0<\delta_2<\delta_1<\delta$.

Suppose that we are at $X_t$ such that $n^{-1/7}< \|\alphab(X_{t})-\m\|_{\infty}
\leq \delta$ (note that for such $X_t$, we have $X_t\notin S$).   Let $m_1$ be the number of vertices in the largest color class and note that  $m_1/n = a + \tau =: a'$ where
$|\tau|<\delta$. Exactly as in the proof of Lemma~\ref{l3} (cf. equation \eqref{grg4c}), we obtain that  with probability $\geq 1-2q\exp(-\Theta(n^{1/3}))$  it holds that 
\begin{equation}\label{grg4b}
\Big\|\alphab(X_{t+1}) -
\Big(F(a'),\frac{1-F(a')}{q-1},\dots,\frac{1-F(a')}{q-1}\Big)\Big\|_\infty \leq n^{-1/6}.
\end{equation}
Using \eqref{eq:plmplmplm}, we have
\begin{equation}\label{grg5b}
c_l
\| \alphab(X_t) - \m\|_\infty\leq \Big\|\Big(F(a'),\frac{1-F(a')}{q-1},\dots,\frac{1-F(a')}{q-1}\Big) - \m\Big\|_\infty \leq c_u
\| \alphab(X_t) - \m\|_\infty.
\end{equation}
Equations~\eqref{grg4b} and~\eqref{grg5b} combined yield that for all sufficiently large $n$ we have the following two bounds:
\begin{equation}\label{grg6b}
\|\alphab(X_{t+1})-\m\|_\infty\geq  c_l
\| \alphab(X_t) - \m\|_\infty - n^{-1/6} \geq \frac{c_l}{2}\|\alphab(X_{t})-\m\|_\infty,
\end{equation}
\begin{equation}\label{eq:rfccfr}
\|\alphab(X_{t+1})-\m\|_\infty\leq n^{-1/6} + c_u
\| \alphab(X_t) - \m\|_\infty\leq \delta.
\end{equation}
Let $c'=-\frac{1}{8}/\log (\frac{c_l}{2})$. Applying \eqref{grg6b} for $t=0,\hdots, \lfloor c'\log n\rfloor$ (note that \eqref{eq:rfccfr} guarantees that we remain sufficiently close to $\m$ so that \eqref{grg6b} indeed applies), we obtain that with probability $1-o(1)$ it  holds that 
\[ \|\alphab(X_{c'\log n})-\m\|_\infty\geq n^{-1/8}\|\alphab(X_0)-\m\|_\infty\geq \delta_2 n^{-1/8}> n^{-1/7}.\]
This completes the proof.
\end{proof}

Using Lemma~\ref{lem:implowerbound}, we obtain the following corollary.

\begin{corollary}
Let $B\geq \Bh$. Then the mixing time $T_{mix}$ of the SW dynamics on the $n$-vertex complete graph satisfies $\Tmix=\Omega(\log n)$.
\end{corollary}
\begin{proof}
Let $\delta_1,\delta_2$ be as in Lemma~\ref{lem:implowerbound}. Consider $X_0$ such that $X_0\notin S$ and $\delta_2\leq \|\alphab(X_0)-\m\|_{\infty} \leq \delta_1$. Then, by Lemma~\ref{lem:implowerbound}, for some $T=\Omega(\log n)$ we have that 
\[\Pr\big(X_T\notin S\big)\geq 1/2.\]
On the other hand, by Lemma~\ref{lem:measureconc} we have that $\mu(\bar{S})\leq 1/8$. It follows that 
\[d_{TV}(X_T,\mu)=\max_{A\subseteq \Omega}|\mu(A)-\Pr(X_T\in A)|\geq \Pr(X_T\in \bar{S})-\mu(\bar{S})\geq 1/2-1/8>1/4.\]
It follows from the definition of mixing time that $\Tmix\geq T$, as claimed.
\end{proof}

\section{Fast mixing for $B<\Bu$}
\label{sec:fast-uniqueness}
In this section, we prove that the SW algorithm mixes in $O(1)$ steps for all $B>\Bh$. The proof for establishing mixing in the uniqueness regime will be similar to the $B>\Bh$ case.
We begin with the following analogue of Lemma~\ref{hroch1}.

\begin{lemma}\label{hroch1B}
Assume $B<\Bh$ is a constant. There exists a constant $\eps>0$ such that for, any initial state $X_0$, with probability $\Theta(1)$
the next state $X_1$ has at least $q-1$ colors that are
$\eps$-light.
\end{lemma}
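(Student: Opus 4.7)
The plan is to mirror the structure of the proof of Lemma~\ref{hroch1}, exploiting the fact that for $B < \Bh = q$ one has the slack $1/q < 1/B$. First I would apply the percolation step of Swendsen-Wang separately to each color class. For a color class $i$ with $\alpha_i \le 1/B$, the induced $G(m_i, B/n)$ is subcritical (since $m_i \cdot (B/n) = B\alpha_i \le 1$), so by Lemma~\ref{l1} all its components have size $o(n)$ with probability $1 - o(1)$. For a color class $i$ with $\alpha_i > 1/B$, by Lemma~\ref{l1al1} with probability $1 - o(1)$ there is one giant component of size $g(\alpha_i) n + o(n)$ and all remaining components are of size $o(n)$.

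Next, in the coloring step, I would condition on the good event that all giants (if any exist) receive color $1$; as in the proof of Lemma~\ref{hroch1}, this happens with probability at least $q^{-q} = \Theta(1)$. The remaining components are all of size $o(n)$, so by Azuma's inequality the number $A_j$ of vertices receiving color $j \in \{2, \ldots, q\}$ from these small components concentrates around its mean $(n - \sum_{i: \alpha_i > 1/B} g(\alpha_i) n)/q$, with deviation $o(n)$, with probability $1 - o(1)$.

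Combining these, with probability $\Theta(1)$ the state $X_1$ satisfies, for every $j \ge 2$,
\begin{equation*}
\alpha_j(X_1) = \frac{1}{q}\Big(1 - \sum_{i=1}^{q} g(\alpha_i)\Big) + o(1) \leq \frac{1}{q} + o(1).
\end{equation*}
Since $B < \Bh = q$, we have $1/q < 1/B$. Choosing any constant
\begin{equation*}
\eps \in \Big(0,\ \tfrac{1}{2}\big(1 - B/q\big)\Big),
\end{equation*}
we get $\alpha_j(X_1) \le (1-\eps)/B$ for all $j \in \{2, \ldots, q\}$ and all sufficiently large $n$, so colors $2, \ldots, q$ are all $\eps$-light. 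This yields the required $q-1$ light colors.

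The main (mild) obstacle is handling color classes whose density $\alpha_i$ lies in the critical window $\alpha_i = 1/B \pm o(1)$, where the behavior of the percolation step is not immediately covered by the clean super/sub-critical estimates. However, this issue is harmless for our bound because we only need the inequality $\sum_i g(\alpha_i) \ge 0$ (which holds trivially), not any lower bound; the subcritical and critical cases contribute an $o(1)$ additive term to $\alpha_j(X_1)$, which is absorbed into the slack $1/B - 1/q > 0$ provided by $B < \Bh$. No analogue of Lemmas~\ref{hroch2} or~\ref{hroch3} is needed here, because we are upper-bounding the sizes of the non-dominant classes rather than lower-bounding the size of the dominant one.
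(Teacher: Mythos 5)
Your proposal is correct and follows essentially the same approach as the paper's proof: perform the percolation step, condition on the $\Theta(1)$-probability event that all giant components receive color~$1$ (as in Lemma~\ref{hroch1}), and use a concentration inequality to show that each of the remaining $q-1$ color classes receives roughly an $n/q$ fraction of vertices, which is bounded away from $n/B$ by a constant margin since $B < \Bh = q$. The paper invokes a Chernoff bound and is terser about the critical window, while you use Azuma and are somewhat more explicit about it, but the underlying argument — upper-bounding the non-dominant classes via the $\Omega(1)$ slack $1/B - 1/q > 0$ — is the same.
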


\begin{proof}
The proof is analogous to that of Lemma~\ref{hroch1}, the only difference is that now we do not need to argue that there is an $\eps$-heavy color (and hence the proof is simpler). 

Let $\eps\in(0,1/10)$ be a small enough constant such that $B (1+2\eps) <q$. As in
the proof of the first part of Lemma~\ref{hroch1} with probability $q^{-q}=\Theta(1)$ all the biggest components of each color class receive the color $1$ and the sum of squares of (the sizes of) the remaining components is $o(n^2)$ with probability $\Theta(1)$. Condition on these events happening. Then, the expected number of vertices that receive a color $i=2,\hdots,q$ is at most $n/q$.  Therefore, using Azuma's inequality, with probability $\Theta(1)$, there are at most $n(1+\eps/2)/q$ vertices which have color $i=2,\hdots,q$ in $X_1$. 
By the choice of $\eps$, we have that $(1+\eps/2)/q\leq (1+\eps/2)/(B(1+2\eps))\leq  (1-\eps)/B$, and therefore there are $q-1$ colors which are $\eps$-light in $X_1$.
\end{proof}

We then have the following lemma, which is an analogue of Lemmas \ref{slon7} and \ref{slon8} in the $B>\Bh$ case,
showing that we get within distance $O(n^{-1/2})$ from the uniform phase.

\begin{lemma}\label{zirafa1}
Assume $B<\Bu$ is a constant. There exists a constant $L$ such that for any starting state $X_0$ after $T=O(1)$ steps
with probability $\Theta(1)$ the SW algorithm moves to state $X_T$ such that $\|\alphab(X_T)-\u\|_{\infty} \leq L n^{-1/2}$.
\end{lemma}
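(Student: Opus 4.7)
The plan is to first apply Lemma~\ref{hroch1B} to reduce to a state with at most one heavy color, then to iterate the dynamics captured by $F$ to drive this color's density below $1/B$; at that point, the next SW step lands us within $O(n^{-1/2})$ of the uniform phase.

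After one SW step, by Lemma~\ref{hroch1B}, with probability $\Theta(1)$ there are at least $q-1$ colors that are $\eps$-light. Relabel so that color $1$ is the (possibly) non-light color, and set $z_t:=\alpha_1(X_t)$ for $t\geq 1$. Next I would analyze the deterministic iteration $z_{t+1}=F(z_t)$. By the definition~\eqref{defa} of $F$, it is identically $1/q$ on $[1/q,1/B]$, and by Lemma~\ref{lem:fixpoints} the only fixpoint of $F$ in $[1/q,1]$ is $u=1/q$. Since $F(1/B)=1/q<1/B$ (because $B<\Bu<\Bh=q$), continuity together with the absence of fixpoints in $(1/B,1]$ implies $z-F(z)\geq c$ on $[1/B,1]$ for some constant $c>0$. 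Consequently the trajectory $F^{(t)}(z_1)$ is strictly decreasing on $(1/B,1]$ and drops below $1/B-c/2$ after a constant number $T_0=O(1)$ of iterations.

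For the stochastic trajectory, the same concentration argument as in the proof of Lemma~\ref{slon8} (using Lemma~\ref{lem:supercritical} for the size of the giant in the supercritical color class, Lemma~\ref{lemia1} for the subcritical ones, and Azuma's inequality for the uniform random recoloring of small components) shows that, conditional on the giant component of the heavy color receiving color $1$, at each step $z_{t+1}=F(z_t)+O(n^{-1/2})$ and the remaining colors concentrate around $(1-F(z_t))/(q-1)$ with additive error $O(n^{-1/2})$, all with probability $1-o(1)$. The giant receives color $1$ with probability $1/q$, so each step succeeds with probability $\Theta(1)$; since $T_0=O(1)$, iterating and taking a union bound gives that with probability $\Theta(1)$ we have $z_{T_0+1}<1/B-c/3$ and every color class has density below $1/B-c/3$.

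For the final step, since every color class is uniformly subcritical, Lemma~\ref{l1} guarantees that all components after the next percolation step have size $O(\log n)$ with probability $1-o(1)$. Azuma's inequality applied to the uniform random recoloring of these small components then yields $\|\alphab(X_T)-\u\|_\infty\leq Ln^{-1/2}$ with probability $1-o(1)$, for a suitably large constant $L$, where $T:=T_0+2$. The only delicate point to handle is the critical window $z\approx 1/B$ of percolation, where the random-cluster model enters a scaling regime; this is avoided here because the constant gap $z-F(z)\geq c$ on $[1/B,1]$ ensures the deterministic iteration overshoots $1/B$ by a constant amount, so the $O(n^{-1/2})$ stochastic fluctuations cannot push $z_t$ back up into the supercritical region.
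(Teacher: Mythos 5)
Your overall strategy matches the paper's: apply Lemma~\ref{hroch1B} to reduce to a single (possibly) heavy color, iterate $F$ until its density drops below the percolation threshold $1/B$, then do one final step with all color classes subcritical. The deterministic observations ($F$ has no fixpoint on $(1/B,1]$, hence $z-F(z)\geq c>0$ on $[1/B,1]$ by compactness, so the iterate crosses $1/B$ in $O(1)$ steps) are sound.

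However, there is a gap in the stochastic control. You assert that at \emph{every} step $z_{t+1}=F(z_t)+O(n^{-1/2})$ with probability $1-o(1)$, citing Lemma~\ref{lem:supercritical} and the concentration argument from Lemma~\ref{slon8}. Lemma~\ref{lem:supercritical} requires the percolation parameter $z_t B$ to lie in a fixed interval $[c_0,c_1]$ with $c_0>1$, i.e., the heavy color must be supercritical by a \emph{constant} margin. Nothing in the setup guarantees this: the initial heavy density $z_1$ coming out of Lemma~\ref{hroch1B} can sit anywhere in $[1/q,1]$ — in particular in or near the critical window around $1/B$ — and similarly $F(z_t)$ can land near $1/B$ for some intermediate $t$. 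In that window the largest component has fluctuations of order $n^{2/3}$ (cf.\ Lemma~\ref{lem:criticalper}), not $n^{1/2}$, and $\sum_i|C_i|^2$ is of order $n^{4/3}$ rather than $n$ (Lemma~\ref{lemia1} with $\eps=\Theta(n^{-1/3})$), so the recoloring step concentrates only to additive $O(n^{2/3})$. Your concluding remark tries to dismiss this by noting that the deterministic iterate overshoots $1/B$, so the fluctuations ``cannot push $z_t$ back up into the supercritical region''. But that is not the issue: the issue is that when $z_t$ is already near $1/B$, the $O(n^{-1/2})$ estimate you invoke is simply false. The paper avoids this by only claiming $(1+o(1))$-multiplicative precision during the intermediate iterations (which does hold uniformly, even in the critical window, since $F(1/B)=1/q$ and the one-step fluctuations are $o(1)$), and reserving the sharp $O(n^{-1/2})$ bound for the very last step, where all $q$ color classes have density $<1/B$ by a constant margin and Lemma~\ref{lemia1} part~\ref{it:sub} gives $E[\sum_i|C_i|^2]=O(n)$ with a constant $\eps$. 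With that softer per-step estimate the argument closes; as written yours has an unjustified precision claim in exactly the delicate regime you flagged.

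A smaller point: in the final step you derive the $O(n^{-1/2})$ bound from ``all components have size $O(\log n)$'' plus Azuma. That only gives $O(\sqrt{n\log n})$ with high probability; to get $O(\sqrt{n})$ (with constant probability, which is what the lemma requires) you should, as the paper does, bound $E[\sum_i|C_i|^2]=O(n)$ via Lemma~\ref{lemia1} and apply Markov and then Azuma.
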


\begin{proof}[Proof of Lemma~\ref{zirafa1}.]
Let $\eps>0$ be as in Lemma~\ref{hroch1B}. By Lemma~\ref{hroch1B} starting from any $X_0$ with
constant probability we move to $X_1$ where $q-1$ colors are $\eps$-light. As in Lemma~\ref{slon7}, the evolution of the largest color class is then captured by the iterates of the function $F$.  Since $1/q$ is the only fixpoint of $F$ (by Lemma~\ref{lem:fixpoints}), we
have that for any constant $\delta>0$ there exists constant $T$ such that $F^{(T)}([0,1])\subseteq [1/q-\delta/2,1/q+\delta/2]$. Therefore, with probability
$1-o(1)$, after at most $T$ steps the size of the largest color class becomes less than $1/q + \delta$ (see the proof of Lemma~\ref{slon7} for details). In the next step even the largest color class is subcritical (by taking $\delta$ to be a small constant) and we end up, with probability $1-o(1)$, in
a state where each color occurs $(1+o(1))n/q$ times. In the next step the components sizes after
the percolation step satisfy, by Lemma~\ref{lemia1}
$$
E\Big[\sum_{i} |C_i|^2\Big] = O(n).
$$
Hence, after the coloring step, with constant probability (using the same argument as in~\eqref{grg1} and~\eqref{grg2})
we have color classes of size $(n+O(n^{1/2}))/q$.
\end{proof}

To show that the mixing time of SW is $O(1)$ when $B<\Bu$, we extend the strategy of \cite{LNNP} for $q=2$ to $q\geq 3$. In \cite{LNNP}, a certain projection of the SW chain is defined, called the magnetization chain. For us, the magnetization chain can be defined as follows. Let $\{V_1,\hdots, V_q\}$ be a fixed partition of the vertex set of the complete graph into $q$ parts. The magnetization chain is a Markov chain $\mathcal{A}_t=(A_{ij,t})_{i,j\in [q]}$ with $A_{ij,t}$ being the number of vertices in $V_i$ with color $j$ at time $t$ (the fact that the magnetization chain is a Markov chain is due to the symmetry). Note that for every  $t=0,1,\hdots$, for every $i\in [q]$ it holds that 
$\sum_{j}A_{ij,t}=|V_i|$.

The following lemma is the analogue of \cite[Proposition 7.3]{LNNP} and can be proved analogously to Lemma~\ref{lem:phases-agree}.
\begin{lemma}\label{lem:magnetization}
Assume $B<\Bh$ is a constant. Let $\{V_1,\hdots, V_q\}$ be a  partition of the vertex set of the complete graph on $n$ vertices into $q$ parts. Let $\mathcal{A}_t$ and $\mathcal{A}_t'$ be two copies of the magnetization chain. Further, denote by $a_{j,t},a_{j,t}'$ the total number of vertices with color $j$ in  $\mathcal{A}_t$ and  $\mathcal{A}_t'$, respectively, i.e.,   
\[a_{j,t}=\sum_{i\in[q]} A_{ij,t}, \quad a_{j,t}'=\sum_{i\in[q]} A_{ij,t}'.\]

Let $L>0$ be an arbitrarily large constant and suppose that at time $t$ it holds that 
\[|a_{j,t}-n/q|\leq L\sqrt{n},\quad  |a_{j,t}'-n/q|\leq L\sqrt{n}\mbox{ for all } j\in [q].\]
Then, there exists a coupling of $\mathcal{A}_{t+1},\mathcal{A}_{t+1}'$ such that with probability $\Theta(1)$, it holds that $\mathcal{A}_{t+1}=\mathcal{A}_{t+1}'$.
\end{lemma}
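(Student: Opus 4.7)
The plan is to extend the coupling argument from the proof of Lemma~\ref{lem:phases-agree} to the higher-dimensional state space of the magnetization chain. Since both chains are close to uniform, $a_{j,t}, a_{j,t}' \leq n/q + L\sqrt{n}$, so the effective density for the percolation step on color class $j$ is $(a_{j,t}/n)B \leq B/q + o(1) < 1$ (using $B < \Bu < \Bh = q$). Hence each color class is strictly subcritical, and by Lemma~\ref{lemia1}, part~\ref{it:sub}, combined with Markov's inequality, $\sum_i |C_i|^2 \leq Kn$ holds with constant probability for a constant $K > 0$. On the same event, each color class contains $\Theta(n)$ isolated vertices by standard first/second moment arguments for subcritical $G(m,p)$.

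The second step uses the internal symmetry of the SW-dynamics: conditional on the component structure within color class $j$, the assignment of vertices to parts $V_i$ is uniformly random among labellings consistent with the row counts $(A_{1j,t},\ldots,A_{qj,t})$. Therefore, by Azuma/Chernoff concentration, with constant probability the number of isolated vertices in color class $j$ coming from part $V_i$ is $\Theta(n)$ for every pair $(i,j)$. I would set aside a deterministic number (say, $\lfloor c' n\rfloor$ for some small constant $c'>0$) of isolated vertices from each part as a \emph{buffer} $I_i$, identically in the two chains, and then couple the coloring of non-buffer components along the lines of Lemma~\ref{lem:phases-agree}: largest components in each color class are paired across chains and given a common color via shared randomness, while the remaining non-buffer components are colored independently in each chain. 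Combining Azuma's inequality with $\sum_i |C_i|^2 \leq Kn$ shows that the partial magnetizations contributed by the non-buffer vertices differ, entry-wise, by $O(\sqrt{n})$ in the two chains, with constant probability; call this discrepancy matrix $\boldsymbol{\ell} \in \mathbb{Z}^{q\times q}$.

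Finally, color the buffer vertices independently with uniform colors in each chain and couple these colorings optimally to close the gap $\boldsymbol{\ell}$. For each part $V_i$, the contribution of its buffer to the $q$ new color counts is a multinomial $\mathrm{Mult}(\lfloor c' n \rfloor, (1/q,\ldots,1/q))$; since the buffers across parts are disjoint, these multinomials are independent, yielding a joint distribution over $\mathbb{Z}^{q(q-1)}$ (after factoring out the deterministic row-sum constraints). By the local central limit theorem for multinomials, this distribution has density of order $n^{-q(q-1)/2}$ on a cube of side $O(\sqrt{n})$ around its mean, so the same optimal-coupling argument as in equation~\eqref{eq:rfvbgt} in the proof of Lemma~\ref{lem:phases-agree} yields a coupling under which the buffer contributions $\mathbf{Z}, \mathbf{Z}'$ satisfy $\mathbf{Z} = \mathbf{Z}' + \boldsymbol{\ell}$ with probability $\Theta(1)$, thereby forcing $\mathcal{A}_{t+1} = \mathcal{A}_{t+1}'$. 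The main obstacle is the bookkeeping around this last step: one has to verify the $q(q-1)$-dimensional local CLT carefully, confirm that the buffers can be chosen of equal deterministic size across the two chains on a constant-probability event, and ensure that the symmetry-based conditioning used to set up the buffers does not destroy the cross-part independence needed for the local CLT.
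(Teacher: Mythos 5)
Your approach mirrors the paper's (which simply refers to the proof of Lemma~\ref{lem:phases-agree} and \cite[Proof of Proposition 7.3]{LNNP}): perform the percolation step independently in each chain, reserve isolated vertices in each part $V_i$ as a buffer, color the non-buffer components, and couple the buffer colorings optimally to absorb the resulting discrepancy $\boldsymbol{\ell}$. You correctly identify that the buffer coloring factors as a product over $i$ of independent multinomials (since the buffers $I_1,\dots,I_q$ are disjoint and isolated vertices are recolored independently), and that after imposing the deterministic row-sum constraints the relevant local CLT is in $q(q-1)$ dimensions --- the right higher-dimensional analogue of \eqref{eq:rfvbgt}.

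One detail needs repair. The lemma statement places no restriction on $|V_1|,\dots,|V_q|$, and in the corollary where it is applied the parts are the color classes of an \emph{arbitrary} starting configuration $X_0$, so some $|V_i|$ may be $o(n)$ or even $O(1)$. Reserving a buffer of fixed size $\lfloor c'n\rfloor$ from every $V_i$ is then impossible; the buffer $I_i$ should have size proportional to $|V_i|$, which is how the paper phrases it (``at least $c|V_i|$ isolated vertices in each $V_i$''). But then for the optimal coupling on the $i$-th buffer multinomial to carry $\Theta(1)$ mass at the target $\boldsymbol{\ell}_i$, you need the row-$i$ discrepancy to be $O(\sqrt{|V_i|})$, not merely $O(\sqrt{n})$; the global bound $\sum_C |C|^2 = O(n)$ and a single Azuma application is not enough. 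Instead, using the conditional symmetry you already invoke, note that given the component sizes, $|C\cap V_i|$ is hypergeometric, so $E\big[\sum_C |C\cap V_i|^2\big] = O(|V_i|)$ (using $\sum_j A_{ij,t} = |V_i|$), after which Markov plus a row-wise Azuma bound gives $\boldsymbol{\ell}_i = O(\sqrt{|V_i|})$ entry-wise with constant probability. Parts of bounded size should be handled by a separate elementary argument (with constant probability the few vertices involved are all isolated in both chains, after which their recoloring can be coupled exactly). With these adjustments the strategy goes through as you describe.
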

\begin{proof}
The proof is completely analogous to \cite[Proof of Proposition 7.3]{LNNP} and resembles the proof of Lemmas~\ref{lem:phases-uniform} and~\ref{lem:phases-agree} given earlier.  We therefore highlight the key differences. 

Perform the percolation step of the Swendsen-Wang algorithm independently for the chains $\mathcal{A}_t$ and $\mathcal{A}_t'$.
By Lemma 5.7 in \cite{LNNP}, there is a constant $c>0$ such that with probability $\Theta(1)$, in each chain, in each part $V_i$, there are $\geq c|V_i|$ isolated vertices (i.e., components of size 1).  Next, perform the coloring step in each of the two chains independently but leaving, in each chain and for each part $V_i$, these $c|V_i|$ isolated vertices uncolored. For $i,j\in[q]$, let $\hat{a}_{ij},\hat{a}'_{ij}$ be the number of vertices  which are assigned color $j$  in part $V_i$ (excluding the $c|V_i|$ isolated vertices which are not yet colored). We claim that there exists a (large) constant $L>0$ such that with probability $\Theta(1)$, for all $i,j\in [q]$, it holds that
\begin{equation}\label{eq:truncbb}
|\hat{a}_{ij}-\hat{a}'_{ij}|\leq L\sqrt{|V_i|}.
\end{equation}
Assuming this, then, just as in the proof of Lemmas~\ref{lem:phases-agree} and~\ref{lem:phases-uniform} (cf. \eqref{eq:truncated} and the coupling thereafter), we can couple the coloring of the $c|V_i|$ isolated vertices in each part $V_i$ to equalize the counts with probability $\Theta(1)$, i.e., with probability $\Theta(1)$, the coupling of the two chains satisfies $\mathcal{A}_{t+1}=\mathcal{A}_{t+1}'$.

We focus therefore on proving \eqref{eq:truncbb}. Let $\{C_k\}_{k\geq 1}$ denote the components in the first chain  after the percolation step. Then, for each $i\in [q]$, we will show that with probability $\Theta(1)$ it  holds that
\begin{equation}\label{eq:rvr4g5g3}
\sum_{k\geq 1} |C_k\cap V_i|^2=O(|V_i|).
\end{equation}
To see this, for a vertex $v$, let $C(v)$ be the component that $v$ belongs to after the percolation step. Then, note that
\[\sum_{k\geq 1} |C_k\cap V_i|^2\leq \sum_{v\in V_i}|C(v)|.\]
Since by the assumption of the lemma all colors are subcritical in the percolation step, we have that $E[|C(v)|]=O(1)$ for all $v\in V$. Using Markov's inequality, we therefore obtain  \eqref{eq:rvr4g5g3}. 

Let $\hat{n}_i$ be the number of vertices in $V_i$ excluding the isolated vertices. We obtain (using Azuma's inequality) that, with probability $\Theta(1)$,  for all $i,j\in[q]$ it holds that  
\[|\hat{a}_{ij}-\hat{n}_i/q|=O(\sqrt{|V_i|})\]
Identically, we obtain an analogous bound for $\hat{a}'_{ij}$ which yields \eqref{eq:truncbb}, as needed. 
\end{proof}

Using Lemmas~\ref{zirafa1} and~\ref{lem:magnetization}, we conclude the following corollary.
\begin{corollary}
Let $B<\Bu$ be a constant. The mixing time of the Swendsen-Wang algorithm
on the complete graph on $n$ vertices is $\Theta(1)$.
\end{corollary}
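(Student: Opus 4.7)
The plan is to combine Lemma~\ref{zirafa1} (rapid approach to the uniform phase) with Lemma~\ref{lem:magnetization} (coupling of the magnetization chain), following the strategy used for $q=2$ in~\cite{LNNP}. Fix any partition $\{V_1,\ldots,V_q\}$ of $V$ into $q$ equal parts, and consider two coupled copies of the SW chain $X_t,Y_t$ with $X_0$ arbitrary and $Y_0\sim\mu$.

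First I would apply Lemma~\ref{zirafa1} to $X_t$ to obtain some $T_1=O(1)$ at which $\|\alphab(X_{T_1})-\u\|_\infty\leq L n^{-1/2}$ with probability $\Theta(1)$. By stationarity and the concentration of $\mu$ near $\u$ in the uniqueness regime (implicit in Lemma~\ref{l2}), the analogous bound holds for $Y_{T_1}$. Thus both chains satisfy the precondition of Lemma~\ref{lem:magnetization} at time $T_1$, so one additional step can be coupled to give $\mathcal{A}_{T_1+1}=\mathcal{A}'_{T_1+1}$ with probability $\Theta(1)$.

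The remaining task is to convert equality of magnetizations into equality of full configurations. For this I would exploit the vertex-transitivity of $K_n$: conditional on $\mathcal{A}_{T_1+1}=\mathcal{A}'_{T_1+1}$, there exists a partition-preserving permutation $\pi$ of $V$ such that $X_{T_1+1}=\pi\cdot Y_{T_1+1}$. Since both the SW transition kernel and $\mu$ are invariant under $\pi$, one may replace $Y_t$ by $\pi\cdot Y_t$ throughout without changing its marginal distribution, producing a valid coupling with $X_{T_1+1}=Y_{T_1+1}$. Boosting the $\Theta(1)$ success probability via $O(1)$ independent restarts then yields $d_{TV}(X_T,\mu)\leq 1/4$ for some $T=O(1)$, and hence $\Tmix=O(1)$.

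The main obstacle will be making the retroactive relabeling step rigorous, i.e., ensuring that the permutation $\pi$ can be incorporated into the coupling without biasing the marginal of $Y$. Alternatively, one can formalize the whole argument via the identity $d_{TV}(X_{T_1+1},\mu)=d_{TV}(\mathcal{A}(X_{T_1+1}),\mathcal{A}(\mu))$, which exploits that $\mu$ is uniform on each magnetization fiber (since it is uniform on each phase fiber, which is coarser) and that the SW chain on $K_n$ is a lumping over the group of partition-preserving vertex permutations.
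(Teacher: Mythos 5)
Your strategy matches the paper's (Lemma~\ref{zirafa1} followed by the magnetization-chain coupling of Lemma~\ref{lem:magnetization}), but there is a genuine gap in the step that converts agreement of magnetizations into agreement of full configurations, and it stems from the choice of partition. You fix an arbitrary partition $\{V_1,\dots,V_q\}$ into equal parts, and invoke the identity $d_{TV}(X_t,\mu)=d_{TV}(\mathcal{A}(X_t),\mathcal{A}(\mu))$ by appealing to lumpability of the SW kernel and uniformity of $\mu$ on magnetization fibers. The lumpability (Dynkin condition) does hold for any partition, and $\mu$ is indeed uniform on each fiber. However, the inequality $d_{TV}(X_t,\mu)\geq d_{TV}(\mathcal{A}(X_t),\mathcal{A}(\mu))$ is an equality only if, in addition, the conditional law of $X_t$ given $\mathcal{A}(X_t)$ is also uniform on the fiber; lumpability of the kernel alone does not supply this. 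That conditional uniformity is equivalent to the law of $X_t$ being invariant under all partition-preserving vertex permutations. Since $X_0$ is a fixed configuration, the law of $X_t=P^t(X_0,\cdot)$ is invariant precisely under the stabilizer of $X_0$, i.e., permutations preserving each color class of $X_0$. If the partition is arbitrary and not aligned with $X_0$'s color classes, this invariance fails and the fiber uniformity fails with it. (Concretely, take $q=2$, $V_1=\{1,2\},V_2=\{3,4\}$ and $X_0$ with color classes $\{1,3\},\{2,4\}$: the transposition $(1\,2)$ preserves $\{V_1,V_2\}$ but does not stabilize $X_0$.) The same issue infects the retroactive-relabeling version: even though a pointwise partition-preserving permutation $\pi$ with $X_{T_1+1}=\pi\cdot Y_{T_1+1}$ exists whenever the magnetizations agree, $\pi$ is state-dependent, and applying it does not in general preserve the marginal of $Y$ because the conditional law of $X_{T_1+1}$ given its magnetization is not the same as that of $Y_{T_1+1}$.

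The paper's fix is exactly the ingredient you are missing: choose $V_i$ to be the color class $i$ of the worst-case starting configuration $X_0$ (these parts need not be equal, and Lemma~\ref{lem:magnetization} does not require them to be). Then the group of permutations fixing each $V_i$ is the stabilizer of $X_0$, so the law of $X_t$ is invariant under this group, hence constant on each magnetization fiber; the same holds trivially for $Y_t\sim\mu$. This yields $d_{TV}(X_t,Y_t)=d_{TV}(\mathcal{A}_t,\mathcal{A}'_t)$ for all $t$, and the rest of the argument closes cleanly. A smaller point: rather than deducing the near-uniformity of $Y_{T_1}$ from stationarity and Lemma~\ref{l2} (which as stated gives containment in an $\eps$-ball, not an $O(n^{-1/2})$ bound), the paper simply applies Lemma~\ref{zirafa1} to $Y$ as well, since that lemma works from any starting state.
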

\begin{proof}
Let $\mu$ be the stationary distribution of the Swendsen-Wang algorithm (cf. \eqref{eq:Gibbs}). Consider two copies of the SW algorithm $X_t$ and $Y_t$, where $X_0$ is an arbitrary starting configuration and $Y_0$ is distributed according to $\mu$. It suffices to show that there is $T=O(1)$ and a coupling of $X_T, Y_T$ such that $X_T=Y_T$ with probability $\Omega(1)$.

We will use the magnetization chain for an appropriate partition $\{V_1,\hdots,V_q\}$ of the vertices of the complete graph. Namely, for a color $i\in [q]$, let $V_i$ be the set of vertices with color $i$ in $X_0$. Let $\mathcal{A}_t=\{A_{ij,t}\}_{i,j\in[q]}$, $\mathcal{A}_t':=\{A_{ij,t}'\}_{i,j\in[q]}$ be such that $A_{ij,t}$, $A_{ij,t}'$  is the number of vertices with color $j$ in $V_i$ in $X_t$ and $Y_t$, respectively. The key idea is that, due to symmetry, the probability that the SW chain at time $t$ is at a particular configuration $\sigma$ depends only on the counts $|V_i\cap\sigma^{-1}(j)|$ for $i\in [q]$ and $j\in [q]$. It follows that for every $t$, it holds that 
\begin{equation}\label{eq:symtricky}
d_{TV}(X_t,Y_t)=d_{TV}(\mathcal{A}_t,\mathcal{A}_t').
\end{equation}
It thus suffices to show that for $T=O(1)$, there is a coupling of $\mathcal{A}_T$ and $\mathcal{A}_T'$ such that $\mathcal{A}_T=\mathcal{A}_T'$ with probability $\Theta(1)$.

Let $L$ be the constant in Lemma~\ref{zirafa1}. By Lemma~\ref{zirafa1}, we have that for $T_1=O(1)$, with probability $\Theta(1)$ it holds that 
\begin{equation}\label{eq:2wsx2wsx}
\|\alphab(X_{T_1}) - \u\|_\infty\leq Ln^{-1/2},\quad \|\alphab(Y_{T_1}) - \u\|_\infty\leq Ln^{-1/2}.
\end{equation}
Conditioned on \eqref{eq:2wsx2wsx}, Lemma~\ref{lem:magnetization} shows that there exists a coupling of $\mathcal{A}_{T_1+1}$ and $\mathcal{A}_{T_1+1}'$ such that with probability $\Theta(1)$ it holds that $\mathcal{A}_{T_1+1}=\mathcal{A}_{T_1+1}'$. Using \eqref{eq:symtricky}, we thus conclude that the mixing time of the Swendsen-Wang algorithm is $O(1)$, as wanted. 
\end{proof}

\section{Mixing Time at $B=\Bu$}
\label{sec:fast-Bu}
For $B=\Bu$, our goal is to show that the SW chain reaches the uniform phase in $O(n^{1/3})$ steps. To do this, let $S_t$ be the size of the largest color class in state $X_t$ of the SW chain; throughout this section, we will focus on tracking $S_t$.

In Section~\ref{sec:oneiteration}, we first give some relevant statistics of $S_t$ after one iteration of the SW chain; the main lemma we will use later is Lemma~\ref{lestart}. In Sections~\ref{sec:upperboundBu} and~\ref{sec:lowerboundBu}, we use these statistics to outline our potential function argument for deriving the upper and lower bounds on the mixing time. Finally, in Section~\ref{sec:potfun}, we give in detail the construction of the potential function which is the most technical part of the proof.

\subsection{Tracking one iteration of the SW dynamics}\label{sec:oneiteration}
As a starting point, we have the following analogue of Lemma~\ref{hroch1}. 
\begin{lemma}\label{lestart2}
For sufficiently small (constant) $\epsilon>0$, for any state $X_t$ of the SW chain, with probability $\Theta(1)$, there are at least $q-1$ colors in state $X_{t+1}$ which are $\epsilon$-light. Further, if state $X_t$ has $q-1$ colors which are $\epsilon$-light, then with probability $1-\exp(-n^{\Omega(1)})$, the same is true for $X_{t+1}$.
\end{lemma}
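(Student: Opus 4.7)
The strategy parallels the proofs of Lemmas~\ref{hroch1} and~\ref{hroch1B}. Since $\Bu<\Bh=q$, I fix a small constant $\eps\in(0,\,1-B/q)$, which ensures $(1-\eps)/B>1/q$. Call a percolation component of $X_1$ \emph{heavy} if it has size at least $n^{0.9}$ (any fixed threshold in $(n^{2/3},n)$ would do). The core technical claim is: with probability $1-\exp(-n^{\Omega(1)})$, each color class of $X_0$ contributes at most one heavy component in the percolation step, while all non-heavy components have size at most $n^{0.9}$. Once this is in place, the sum of squares of non-heavy components is at most $n^{0.9}\cdot n=n^{1.9}$, so Hoeffding's inequality applied to the (independent) coloring of non-heavy components yields concentration of $o(n)$ for each color's vertex count, which drives both statements.

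\textbf{Proof of the claim.} Each color class $i$ is percolated as $G(\alpha_i n,\,B/n)$, whose effective density is $B\alpha_i$. If $|B\alpha_i-1|=\Omega(1)$, Lemmas~\ref{l1} and~\ref{l1al1} give, respectively, no heavy component (subcritical) or a unique heavy with $|C_2^{(i)}|=n^{o(1)}$ (supercritical). For $\alpha_i$ in the scaling window $B\alpha_i=1+O(n^{-1/3})$, Lemma~\ref{lem:criticalper} gives $|C_1^{(i)}|\leq An^{2/3}$ with failure probability $\exp(-rA^3)$; taking $A=n^{0.09}$ yields $|C_1^{(i)}|<n^{0.9}$ except with probability $\exp(-n^{\Omega(1)})$, so no heavy component arises. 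The intermediate supercritical regime $B\alpha_i-1\in[n^{-1/3+\eta},\,O(1)]$ is covered by Part~\ref{it:suplargest} of Lemma~\ref{lemia1}, which controls both the size and uniqueness of the giant with super-polynomial tails. A union bound over the $q$ color classes then establishes the claim.

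\textbf{First statement.} Since the claim gives at most $q$ heavy components across all color classes, and in the coloring step each heavy component receives an independent uniform color from $[q]$, with probability at least $q^{-q}=\Theta(1)$ all heavy components are assigned color $1$. Condition on this event. For each $i\in\{2,\dots,q\}$, the count $Z_i$ of vertices colored $i$ in $X_1$ is a sum of independent bounded variables---the color indicator of each non-heavy component times its size---with expectation $(n-h)/q\leq n/q$ (where $h$ is the total heavy mass). Hoeffding gives $|Z_i-(n-h)/q|=o(n)$ with probability $1-\exp(-n^{\Omega(1)})$, so $Z_i/n\leq 1/q+o(1)<(1-\eps)/B$ and colors $2,\dots,q$ are $\eps$-light; the overall success probability is $\Theta(1)$.

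\textbf{Second statement and main obstacle.} If colors $2,\dots,q$ in $X_0$ are already $\eps$-light, then $B\alpha_i\leq 1-\eps$ for $i\geq 2$, so by Lemma~\ref{l1} all their components have size $O(\log n)$ with failure probability $\exp(-n^{\Omega(1)})$: no heavy components come from these classes. Color class $1$ contributes at most one heavy component by the claim; this potential heavy receives a uniformly random color $J\in[q]$ in the coloring step. For each $i\neq J$, $Z_i$ sums only non-heavy contributions, so the same Hoeffding estimate yields $Z_i<(1-\eps)n/B$ with probability $1-\exp(-n^{\Omega(1)})$, and the $q-1$ colors in $[q]\setminus\{J\}$ are $\eps$-light in $X_1$. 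The delicate step is the uniform control on the color-$1$ largest component across all regimes of $\alpha_1$---in particular the scaling window and the slightly supercritical regime---with the stringent probability $1-\exp(-n^{\Omega(1)})$; this is exactly where the super-polynomial tail bounds in Lemma~\ref{lem:criticalper} and Part~\ref{it:suplargest} of Lemma~\ref{lemia1} are essential.
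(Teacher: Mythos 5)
Your proof is conceptually parallel to the paper's but differs in the quantitative setup. The paper works at the critical scale: it uses Lemma~\ref{lemia3} to bound $\sum_{j\geq 2}|C_j^{(i)}|^2 \leq n^{4/3}$ with \emph{constant} (not high) probability for the near-critical/supercritical classes and Part~\ref{it:sub} of Lemma~\ref{lemia1} for the subcritical ones, then applies Azuma with a constant width $wn^{2/3}$; this is enough for the first statement because it only asserts $\Theta(1)$ probability, and the paper's sharper bookkeeping (cf.~\eqref{eq:boundprime}) is reused for the second statement after replacing $wn^{2/3}$ by $n^{7/8}$. You instead introduce a cruder $n^{0.9}$ heaviness threshold, aim for \emph{uniform} exponential control of the component structure across all regimes of $\alpha_i$, and invoke Hoeffding with the crude sum-of-squares bound $n^{1.9}$; since the lemma only needs constant $\eps$-precision, this indeed suffices logically.

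There is, however, a gap in your central "claim" in the intermediate supercritical regime, i.e., when the density $B\alpha_i = 1+\eps$ with $\eps$ between $\Theta(n^{-1/3})$ and a small constant. You need $\Pr(|C_2^{(i)}| \geq n^{0.9}) = \exp(-n^{\Omega(1)})$ there, but Part~\ref{it:suplargest} of Lemma~\ref{lemia1} controls only $|C_1|$; Part~\ref{it:sup} with Markov gives only a polynomial tail ($\Pr(|C_2|\geq n^{0.9}) \lesssim Kn^{4/3}/(\eps n^{1.8})$); and Lemma~\ref{l1al1} requires $c>1$ to be a \emph{constant}, so it cannot be invoked for $\eps=o(1)$. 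Your proof asserts this regime is "covered by Part~\ref{it:suplargest}" together with Lemma~\ref{lem:criticalper}, but neither lemma gives the needed exponential tail on $|C_2|$ in that range. To be fair, the paper's own case split ("supercritical / critical window / subcritical") for the second statement implicitly glosses over the same intermediate window when citing Lemma~\ref{l1al1}, but the paper's version of the first statement sidesteps the issue entirely by settling for constant-probability bounds from Lemma~\ref{lemia3}, whereas your uniformly-high-probability claim runs into it head-on. You should either downgrade the first part to a constant-probability structural event (as the paper does), or supply an explicit exponential tail bound on $|C_2|$ for $\eps$ in the intermediate regime (e.g., via a sprinkling/peeling argument), rather than citing the listed lemmas directly.
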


\begin{proof}
For a color $i\in[q]$, we will write $\alpha_i$ as a shorthand for $\alpha_i(X_t)$, and denote $m_i=n\alpha_i$. In each step of the Swendsen-Wang algorithm, the percolation step for color $i$ picks a graph $G_i$ from $G(m_i,B\alpha_i/m_i)$. Let $C_1^{(i)},C_2^{(i)}, \hdots$ be the components of $G_i$ in decreasing order of  size. 

The beginning of the proof is analogous to the beginning of the proof of Lemma~\ref{lem:wgood}. Let $A$ be the constant in Lemma \ref{lemia3}. For each color $i\in[q]$ the following hold with positive probability (not depending on $n$):
\begin{enumerate}
\item  \label{it:pomp} If $B\alpha_i\geq (1-A m_i^{-1/3})/m_i$, then  $\sum_{j> 1} |C^{(i)}_j|^2 \leq m_i^{4/3}\leq n^{4/3}$ (by Lemma~\ref{lemia3}).
\item \label{it:qomp} If $(1-A m_i^{-1/3})/m_i>B\alpha_i$, then $\sum_{j\geq 1} |C^{(i)}_j|^2 \leq  n^{4/3}$ (by Item~\ref{it:sub} of Lemma~\ref{lemia1}).
\end{enumerate}
 Let $S=\{i\in[q]: B\alpha_i\geq 1\}$  (note that the set $S$ may be empty). Consider all the components \emph{different} from $C^{(i)}_1$, $i\in S$.  Color these
components  independently by a uniformly random color from $[q]$. For $i\in [q]$, let $A_i'$ be the number of vertices of color
$i$. Let $w>0$ be a constant such that $1>2q \exp( -w^2/2)$. By Azuma's inequality and a union bound we have that with probability at least $1-2q \exp( -w^2/2)>0$, for each $i\in[q]$ it holds that
\begin{equation*}
\Big|A_i' -\frac{n-\sum_{i\in S}|C^{(i)}_1|}{q}\Big|\leq w n^{2/3}.
\end{equation*}
For $i\in[q]$, let $A_i$ be the
number of vertices of color $i$ after the coloring step of the SW algorithm. With probability at least $q^{-q}$ each of $C^{(i)}_1$ with $i\in S$ receives color 1.  Note, we have $A_1=A_1' + \sum_{i\in S}|C^{(i)}_1|$ and $A_i = A_i'$ for $i\geq 2$. We obtain that with probability at
least $q^{-q}\big(1-2q\exp( -w^2/2)\big)>0$, for all $i\geq 2$,
\begin{equation}\label{eq:boundprime}
|A_i| \leq \frac{n}{q} -\frac{1}{q}\left(\sum_{i\in S}|C^{(i)}_1|\right)+w n^{2/3}\leq \frac{n}{q} + wn^{2/3}.
\end{equation}
Since $\Bu<q$, we have that for sufficiently small constant $\epsilon>0$, for all $n$ sufficiently large, it holds that $|A_i|\leq (1-\epsilon)n/B$ for all $i\neq 1$, and thus the colors $2,\hdots,q$ are $\epsilon$-light with probability $\Theta(1)$ as wanted.

For the second part of the lemma where we know that in $X_t$ there are $q-1$ colors which are $\epsilon$-light, the proof is analogous. The difference is that now we need upper bounds for  the sum of squares of the components (other than the largest component --- there can be at most one of those by the assumption) which hold with probability $1-\exp(-n^{\Omega(1)})$. Note, for a color class $i$, we have the (crude) bounds 
\begin{equation}\label{eq:countsquares1}
\sum_{j\geq 1}|C_j^{(i)}|^2\leq n\, |C_1^{(i)}|\mbox{ and }\sum_{j\geq 2}|C_j^{(i)}|^2\leq n\, |C_2^{(i)}|.
\end{equation}
For each of the $(q-1)$ $\epsilon$-light colors, the first inequality in \eqref{eq:countsquares1}  together with Lemma~\ref{l1} bounds the sum of squares of the components  by $n^{7/4}$ with probability $1-\exp(-\Theta(n^{3/4}))$. For the remaining color class (i.e., the one that we do not have an upper bound on its density by the assumption), to bound the sum of squares of the components we obtain the same bound $n^{7/4}$ with probability $1-\exp(-n^{\Omega(1)})$ by considering cases. If the color class is supercritical we use Lemma~\ref{l1al1} and the second inequality in \eqref{eq:countsquares1}. If the color class is in the critical window we use Lemma~\ref{lem:criticalper} and the first inequality in \eqref{eq:countsquares1}. If the color class is subcritical we use Lemma~\ref{l1} and the first inequality in \eqref{eq:countsquares1}. The only modification needed in the argument is to replace $wn^{2/3}$ in \eqref{eq:boundprime} by $n^{9/10}$ and the remaining part holds verbatim.
\end{proof}

The key part of our arguments is to track the evolution of the size $S_t$ of the largest colors when there are $q-1$ colors which are $\epsilon$-light.

We first do this in the easier case when $S_t$ has density close to $1/B$ (in the complementary regime, we will need more statistics of $S_t$). In this regime, the following lemma roughly says that a step of the SW dynamics makes the density of the largest color class roughly $1/q$. (Intuitively, this follows by a ``continuity" argument since $F(1/B)=1/q$.)

\begin{lemma}\label{lem:probdeviation}
Let $\epsilon>0$ be a sufficiently small constant. Suppose that $X_t$ is such that $q-1$ colors are $\epsilon$-light  and  that $S_t<(1+\epsilon)n/B$. Then with probability $1-\exp(-n^{\Omega(1)})$ it holds that $S_{t+1}<(1+3q\epsilon)n/q$.
\end{lemma}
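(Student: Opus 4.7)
The plan is to analyze the percolation and coloring steps separately, exploiting that under the hypothesis the dominant color class is at most \emph{mildly} supercritical while the other $q-1$ are subcritical. Let $m_1 \geq m_2 \geq \ldots \geq m_q$ denote the color-class sizes in $X_t$; by hypothesis $m_1 \leq (1+\epsilon)n/B$ and $m_i \leq (1-\epsilon)n/B$ for $i \geq 2$. In the percolation step, color class $i$ becomes an independent $G(m_i, B/n)$ with edge parameter $c_i := Bm_i/n$, and for $i \geq 2$ we have $c_i \leq 1-\epsilon$, so Lemma~\ref{l1} gives that all their components have size $\leq n^{1/2}$ with probability $1 - \exp(-\Omega(n^{1/2}))$.

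The analysis of the dominant color class is more delicate. Writing $\epsilon' := c_1 - 1 \leq \epsilon$, I would split into sub-cases: for $\epsilon' \geq n^{-1/4}$, Lemma~\ref{lemia1} Part~\ref{it:suplargest} yields $L := |C_1^{(1)}| \leq 3\epsilon' m_1 \leq 3\epsilon(1+\epsilon)n/B$ with failure probability $\exp(-\Omega(n^{1/4}))$; for $\epsilon'$ in the critical window $|\epsilon'| \leq C n^{-1/3}$, Lemma~\ref{lem:criticalper} with $A = n^{1/9}$ yields $L \leq n^{7/9}$ with failure probability $\exp(-\Omega(n^{1/3}))$; and for $\epsilon' \leq -C n^{-1/3}$, Lemma~\ref{l1} suffices. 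Combining these, with probability $1 - \exp(-n^{\Omega(1)})$ both $L \leq 3\epsilon(1+\epsilon)n/B + o(n)$ and all other components (across all colors) have size at most $M = n^{7/9}$.

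In the coloring step I would isolate the largest component (of size $L$, colored some $j^*$) and let $\tilde A_j$ denote the number of vertices receiving color $j$ among the remaining components, so that $S_{t+1} \leq L + \max_j \tilde A_j$. Independence of the colorings combined with Azuma--Hoeffding (using $\sum_i |C_i|^2 \leq Mn$) yields $|\tilde A_j - (n-L)/q| \leq n^{8/9 + o(1)}$ for every $j$ with probability $1 - \exp(-n^{\Omega(1)})$. Thus
\[
S_{t+1} \;\leq\; n/q + L(q-1)/q + o(n) \;\leq\; \tfrac{n}{q} + \tfrac{3\epsilon(1+\epsilon)(q-1)}{qB}\,n + o(n),
\]
and the target bound $S_{t+1} < (1+3q\epsilon)n/q = n/q + 3\epsilon n$ reduces to $(1+\epsilon)(q-1)/(qB) < 1$, i.e., $B > (1+\epsilon)(q-1)/q$. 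This holds with ample slack at $B = \Bu$: one can verify from $\Bu = \min_{z \geq 0}(z + qz/(e^z-1))$ that $\Bu > 2$ for all $q \geq 3$ (the expression is increasing in $q$, and at $q=3$ the minimum already exceeds $2.75$), whereas $(q-1)/q < 1$.

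The main obstacle I anticipate is obtaining a $1-\exp(-n^{\Omega(1)})$ tail bound on $L$ uniformly across the narrow mildly supercritical range $\epsilon' \in (C n^{-1/3}, n^{-1/4})$. In this window, Lemma~\ref{lemia1} Part~\ref{it:suplargest} only gives failure probability $\exp(-\Omega((\epsilon')^3 n))$, which is merely a constant at the lower endpoint, and Lemma~\ref{lem:criticalper} is stated for a constant parameter $c$. Patching this gap requires either invoking the underlying results of \cite{SS,NP} beyond the constant-$c$ regime or running a short exploration-process argument by hand to obtain the required exponential tail bound in that narrow range.
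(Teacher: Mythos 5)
Your proof follows the same overall structure as the paper's (split the percolation step into the dominant color class and the light ones; bound the largest component; then apply Azuma--Hoeffding to the coloring step using a sum-of-squares bound on component sizes), and your Azuma step and final arithmetic are fine. The gap you flag at the end is real for your argument as written, but the fix is simpler than invoking the underlying references or running a bare-hands exploration process: use monotonicity of the largest component in the edge density.

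Concretely, since $S_t < (1+\epsilon)n/B$ gives $B/n < (1+\epsilon)/m_1$, the graph $G(m_1, B/n)$ of the dominant color class is stochastically dominated (as a subgraph) by $G(m_1,(1+\epsilon)/m_1)$, and the size of the largest component is monotone increasing under inclusion. Apply Lemma~\ref{lemia1}~Part~\ref{it:suplargest} to the dominating graph with the parameter equal to the fixed constant $\epsilon$ (not the variable $\epsilon' = Bm_1/n - 1$); this gives $|C_1^{(1)}| \le 3\epsilon m_1 \le 3\epsilon(1+\epsilon)n/B$ with failure probability $K\exp(-c\epsilon^3 m_1) = \exp(-\Omega(n))$. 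By domination, the same one-sided bound and tail hold for the actual $G(m_1,B/n)$, uniformly over all $\epsilon'$ (subcritical, critical-window, and mildly supercritical included). This is exactly what the paper does when it applies Item~\ref{it:suplargest} with the constant $\epsilon$; the monotonicity step is implicit there. No case split on $\epsilon'$ is needed, and the narrow range that worried you disappears.

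Note that the paper's version of the final arithmetic uses only $\Bu \ge 1$ (so that $3\epsilon(n/B) \le 3\epsilon n$); your calculation with $\Bu>2$ and $(q-1)/q<1$ is consistent but gives more slack than needed.
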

\begin{proof}[Proof of Lemma~\ref{lem:probdeviation}]
The proof is analogous to the proof of Lemma~\ref{lestart2} and as such we follow the notation in there. The only difference is that now we have to account slightly more accurately for the size of the largest color class in $X_{t+1}$.  

Assume that the $q-1$ $\epsilon$-light colors in $X_t$ are $2\hdots,q$ and assume w.l.o.g. that (the perhaps linear sized) $C_1^{(1)}$ gets colored with color 1 (in state $X_{t+1}$). The color classes of $2\hdots,q$ in $X_t$ are subcritical and thus fall into Item~\ref{it:qomp} of the analysis in the proof of Lemma~\ref{lestart2}. For the remaining color class 1 in $X_t$, it may fall either into Item~\ref{it:pomp} or~\ref{it:qomp}. 

It follows that the bounds for $A_i'$ in \eqref{eq:boundprime} still hold  and in particular the colors $2,\hdots,q$ have size at most $(1/q)n+o(n)$ (since they did not receive a giant component).

For the color class 1 in $X_{t+1}$, note that $A_1=|C_1^{(1)}|+A'_i$. For all sufficiently small (constant) $\epsilon>0$, the largest component $C_1^{(1)}$, with probability $1-\exp(-n^{\Omega(1)})$, has size at most $3\epsilon (n/B)$ (by  Item~\ref{it:suplargest} of Lemma~\ref{lemia1}). Note that $\Bu\geq 1$ for all $q\geq 3$ (follows, e.g., by definition \eqref{eq:BpBh}) and hence $3\epsilon (n/B)\leq 3\epsilon n$. It follows that for all sufficiently large $n$, $A_1$ is at most $(1+3q \epsilon)n/q$, as wanted.
\end{proof}

The following lemma gives some statistics of $S_t/n$ throughout the range $(1/B,1]$, i.e., when the largest color class is supercritical in the percolation step of the SW dynamics. Recall the function $F$ defined in \eqref{defa},\eqref{dexa}.
\begin{lemma}\label{lestart}
Let $\epsilon>0$ be an arbitrarily small constant and condition on the event that $X_t$ has $q-1$ colors which are $\epsilon$-light.  Assume that $\zeta$ satisfies  $(1+\epsilon)/B\leq \zeta/n\leq 1$. Let $W:=E[S_{t+1}\, |\, S_t=\zeta]$. 

For all constant $\epsilon'>0$, for all sufficiently large $n$, it holds that
\begin{equation}\label{eko1}
n F(\zeta/n)-n^{\epsilon'}\leq W\leq n F(\zeta/n) + n^{\epsilon'}.
\end{equation} Also, there exist absolute constants $Q_1,Q_2$ (depending only on $\epsilon$) such that
\begin{equation}\label{eko2}
n Q_2 \leq Var[S_{t+1}\, |\, S_t=\zeta] \leq n Q_1,
\end{equation}
Finally, for every integer $k\geq 3$  and constant $\epsilon'>0$, there exists a constant  $c>0$ such that 
\begin{equation}\label{eko3}
E\Big[ \big|S_{t+1}-W \big|^k  |\, S_t=\zeta \Big] \leq  cn^{k/2+\epsilon'}.
\end{equation}
\end{lemma}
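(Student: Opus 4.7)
The plan is to decompose one step of the SW-chain into its percolation and recoloring substeps and track each separately. Since $S_t=\zeta$ with $\zeta/n\geq (1+\epsilon)/B$ and the remaining $q-1$ colors are $\epsilon$-light (each of size at most $(1-\epsilon)n/B$), the percolation substep produces a supercritical $G(\zeta,B/n)$ on the largest color class (rewritable as $G(\zeta,c/\zeta)$ with $c=B\zeta/n\in[1+\epsilon,B]$) and strictly subcritical random graphs on each of the remaining $q-1$ classes. Applying Lemma~\ref{lem:supercritical} to the former and Lemma~\ref{lemia1}, Part~\ref{it:sub}, to each of the latter, I will assemble the following structural facts (holding either in expectation or with probability $1-\exp(-n^{\Omega(1)})$): the unique giant component $C^*$ satisfies $E[|C^*|]=ng(\zeta/n)\pm n^{\epsilon'/3}$ and $\mathrm{Var}[|C^*|]=\Theta(n)$; the sum of squared sizes $\sum_i s_i^2$ of the non-giant components is $\Theta(n)$ in expectation; and every non-giant component has size $O(n^{1/3})$.

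Next, I will introduce the auxiliary variable $T$ equal to the size of the color class that receives $C^*$ in the recoloring substep. Conditionally on the percolation $\sigma$-algebra $\mathcal{F}$, one can write $T=|C^*|+\sum_i Y_i s_i$ with $\{Y_i\}$ i.i.d.\ Bernoulli$(1/q)$, so $E[T\mid\mathcal{F}]=(1-1/q)|C^*|+n/q$ and consequently $E[T]=nF(\zeta/n)\pm n^{\epsilon'/3}$. An Azuma estimate on the coloring (using $\sum_i s_i^2=O(n)$) then shows that each of the $q-1$ non-$C^*$ color classes has size $(n-|C^*|)/q+O(\sqrt{n}\log n)$, which falls short of $T$ by $\Omega_\epsilon(n)$ with probability $1-\exp(-n^{\Omega(1)})$, giving $S_{t+1}=T$ on this event and hence \eqref{eko1}. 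The variance bound \eqref{eko2} then follows from the law of total variance: $\mathrm{Var}[T]=E\bigl[(1/q)(1-1/q)\sum_i s_i^2\bigr]+(1-1/q)^2\mathrm{Var}[|C^*|]$, each summand being $\Theta(n)$, and the $\exp(-n^{\Omega(1)})\cdot n^2=o(1)$ discrepancy between $T$ and $S_{t+1}$ transfers the estimate to $\mathrm{Var}[S_{t+1}]$.

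The main obstacle is the moment bound \eqref{eko3}. The plan is to decompose $S_{t+1}-Z=(S_{t+1}-T)+(T-E[T])+(E[T]-Z)$; the first and third summands are either deterministically $O(n^{\epsilon'})$ or nonzero only on an event of probability $\exp(-n^{\Omega(1)})$, so their $L^k$-contribution is $O(n^{k/2+\epsilon'})$. For $T-E[T]$ I will further split into the percolation fluctuation $(1-1/q)(|C^*|-E[|C^*|])$ and the recoloring martingale $M:=\sum_i(Y_i-1/q)s_i$. Rosenthal's inequality conditional on $\mathcal{F}$ yields $E[|M|^k\mid\mathcal{F}]\leq C_k\bigl[(\sum_i s_i^2)^{k/2}+(\max_i s_i)^{k-2}\sum_i s_i^2\bigr]$; inserting $\sum_i s_i^2=O(n)$ and $\max_i s_i=O(n^{1/3})$ gives $O(n^{k/2})$ after the outer expectation (one checks $1+(k-2)/3\leq k/2$ for $k\geq 3$). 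For the percolation fluctuation, Lemma~\ref{l1al1} with deviation parameter tuned to $\epsilon'/k$, combined with the trivial bound $\bigl||C^*|-ng(\zeta/n)\bigr|\leq n$, yields $E\bigl[\bigl||C^*|-E[|C^*|]\bigr|^k\bigr]=O(n^{k/2+\epsilon'})$. The technical subtlety throughout will be to handle rare events uniformly in $k$ without spoiling the $n^{k/2+\epsilon'}$ rate, but since all rare-event probabilities decay as $\exp(-n^{\Omega(1)})$ while the quantities involved are at most $n^{O(1)}$, this reduces to a routine $k$-independent polynomial loss.
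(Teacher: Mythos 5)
Your proposal is correct and follows essentially the same route as the paper: condition on the percolation $\sigma$-algebra, apply Lemma~\ref{lem:supercritical} to the one supercritical class and Lemma~\ref{lemia1}, Part~\ref{it:sub}, to the subcritical ones, observe that $S_{t+1}$ coincides with the size of the color class that absorbs the giant component except on an event of stretched-exponentially small probability, and compute moments of that quantity via its conditional mean $n/q + (1-1/q)|C^*|$ and the law of total variance. The only real difference is in \eqref{eko3}: the paper integrates the tail bound \eqref{A1bound} directly, whereas you split off the recoloring martingale and invoke Rosenthal's inequality plus a separate tail integration for the giant's fluctuation; this is a slightly more modular way to reach the same $n^{k/2+\epsilon'}$ rate, but not a different proof. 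One small slip: to make the event $\{S_{t+1}=T\}$ have probability $1-\exp(-n^{\Omega(1)})$ via Azuma you should take deviation $n^{1/2+\epsilon'}$ rather than $\sqrt{n}\log n$ (the latter only gives $\exp(-\Omega(\log^2 n))$); the conclusion is unaffected since either suffices to make the discrepancy $o(1)$.
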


\begin{proof}
To avoid overloading notation, we assume throughout that we condition on $S_t=\zeta$. 

We will write $\alpha_i$ as a shorthand for $\alpha_i(X_t)$, and denote $m_i=n\alpha_i$. W.l.o.g. we will assume that the color class with largest size is the one corresponding to color 1, so that $\alpha_1= \zeta/n\geq (1+\epsilon)/B$. Since the remaining $(q-1)$ colors are $\epsilon$-light, for each $i\in\{2,\hdots,q\}$ we have  $\alpha_i\leq (1-\epsilon)/B$. 

In each step of the Swendsen-Wang algorithm, the percolation step for color $i$ picks a graph $G_i$ from $G(m_i,B\alpha_i/m_i)$. Let $C_1^{(i)},C_2^{(i)}, \hdots$ be the components of $G_i$ in decreasing order of  size. Note that $G_1$ is in the supercritical regime, while $G_2,\hdots,G_q$ are in the subcritical regime. By Lemma~\ref{lem:supercritical}, for every constant $\epsilon'>0$ we have  that
\begin{equation}\label{eq:c1xxp}
E\big[\big|C_1^{(1)}\big|\big]=\beta \zeta\pm \zeta^{\epsilon'}=\beta \zeta\pm n^{\epsilon'},
\end{equation}
where $\beta\in(0,1)$ satisfies $\beta+\exp(-\beta \frac{B \zeta}{n})=1$. Note that
\[n F(\zeta/n)=\frac{n}{q}+\Big(1-\frac{1}{q}\Big)\beta\zeta.\]

Let $A_i$ be the number of vertices with color $i$ in $X_{t+1}$ and w.l.o.g. assume that $C_1^{(1)}$ receives the color 1 in the coloring step of the SW dynamics. We will show that with  probability $1-\exp(-n^{\Omega(1)})$ it holds that $S_{t+1}=A_1$, so the estimates on the moments of $S_{t+1}$ will follow from those of $A_1$. 

More precisely, with a scope to also prove \eqref{eko3}, we will show that for every sufficiently small constant $\epsilon'>0$ it holds with probability $1-\exp(\Theta(n^{-\epsilon'}))$  that 
\begin{equation}\label{A1bound}
|A_1-n F(\zeta/n)|\leq 2n^{1/2+\epsilon'} \mbox{ and } A_i\leq (n-\beta \zeta)/q+n^{1/2+\epsilon'} \mbox{ for }i\in \{2,\hdots,q\}.
\end{equation}
Since $\beta \zeta=\Omega(n)$, we will then obtain that $A_1>A_i$ for all $i\neq 1$.

From Lemma~\ref{l1al1} equation \eqref{st1aa} (applied to color $i=1$) and Lemma~\ref{l1} (applied to colors $i=2,\hdots,q$), with probability  $1-q\exp(-\Theta(n^{\epsilon'}))$, we have 
\begin{equation}\label{eq:sizebound1}
|C_j^{(1)}|\leq n^{\epsilon'} \mbox{ for } j\geq 2,\qquad |C_j^{(i)}|\leq n^{\epsilon'} \mbox{ for } i\in \{2,\hdots,q\},\, j\geq 1.
\end{equation}
From Lemma~\ref{l1al1} equation \eqref{st2bb}, with probability $1-\exp(-\Theta(n^{\epsilon'}))$, we also have 
\begin{equation}\label{eq:sizebound2}
|C_1^{(1)}-\beta \zeta|\leq n^{1/2+\epsilon'}.
\end{equation}
(Note that $\beta \zeta=\Omega(n)$.) Condition on the event that the bounds in \eqref{eq:sizebound1} and \eqref{eq:sizebound2} hold. From \eqref{eq:sizebound1}, we have the crude bound 
\begin{equation}\label{eq:azumavariancebound}
\sum_{j\geq 2}(|C_j^{(1)}|)^2+\sum_{q\geq i\geq 2}\sum_{j\geq 1}(|C_j^{(i)}|)^2\leq n^{1+\epsilon'}.
\end{equation}
Consider now the coloring step of the SW algorithm and, in particular, color independently all the components \emph{different} from $C^{(1)}_1$ by a uniformly random color from $[q]$. Let $A_i'$ be the number of vertices of color $i$ in this process. Note that $A_1=|C_1^{(1)}|+A'_1$ and $A_i=A'_i$ for $i=2,\hdots,q$. Using \eqref{eq:azumavariancebound}, by Azuma's inequality we have that with probability $1-2q\exp(-n^{\epsilon'})$ for all $i\in [q]$ it holds that
\begin{equation}\label{eq:fluctuation}
\Big|A_i'-\frac{n-|C_1^{(1)}|}{q}\Big|\leq n^{1/2+\epsilon'}.
\end{equation}
From \eqref{eq:sizebound2} and \eqref{eq:fluctuation} we obtain that for all sufficiently large $n$, with probability $1-\exp(-\Theta(n^{\epsilon'}))$ it holds that $S_{t+1}=A_1$. It follows that $E[S_{t+1}]=E[A_1]+o(1)$ and $E[|S_{t+1}-E[S_{t+1}]|^k]=E[|A_1-E[A_1]|^k]+o(1)$ for all integer $k\geq 2$. Thus, the bounds in \eqref{eko1}, \eqref{eko2}, \eqref{eko3} will follow from
\begin{gather}
E[A_1]= n F(\zeta/n)\pm n^{\epsilon'},\label{eq:unconditionalexpec}\\
Q_1 n\leq Var[A_1]\leq Q_2 n,\label{eq:unconditionalvar}\\
E\big[|A_1-E[A_1]|^k\big]\leq Kn^{k/2+\epsilon'},\label{eq:unconditionalmoments}
\end{gather}
where $k\geq 3$ is an integer, $\epsilon'>0$ is an arbitrarily  small constant,  $Q_1,Q_2>0$ are absolute constants and $K$ is a constant depending on $k$. 

 We start by proving \eqref{eq:unconditionalexpec} and \eqref{eq:unconditionalvar} where we need more precise bounds. By the second inequality in \eqref{eq:variancesuper} of  Lemma~\ref{lem:supercritical} (applied to color 1) and part~\ref{it:sub} of Lemma~\ref{lemia1} (applied to colors $i=2,\hdots,q$), we have for some constants $K_1,K_2,K_3>0$ that 
\begin{equation}\label{eq:sumsquarescom}
K_1 n\leq Var\big[|C_1^{(1)}|\big]\leq K_2n, \quad E\Big[\sum_{j\geq 2}(|C_j^{(1)}|)^2+\sum_{q\geq i\geq 2}\sum_{j\geq 1}(|C_j^{(i)}|)^2\Big]\leq  K_3 n.
\end{equation}

Denote by $\mathcal{C}$ the random vector $\{\big|C_j^{(i)}\big|\}_{i\in[q],j\geq1}$. We first estimate the moments of $A_1$ conditioned on $\mathcal{C}$. We have
\begin{gather}
E\big[A_1\, \big|\, \mathcal{C}\big]=\big|C_1^{(1)}\big|+\frac{n-\big|C_1^{(1)}\big|}{q}=\frac{n}{q}+\Big(1-\frac{1}{q}\Big)\big|C_1^{(1)}\big|,\label{eq:c1xxpb}\\
Var\big[A_1\, \big|\, \mathcal{C}\big]=\frac{1}{q}\Big(1-\frac{1}{q}\Big)\bigg[\sum_{j\geq 2}(|C_j^{(1)}|)^2+\sum_{q\geq i\geq 2}\sum_{j\geq 2}(|C_j^{(i)}|)^2\bigg].
\end{gather}
It follows from \eqref{eq:c1xxpb} that $E[A_1]=\frac{n}{q}+(1-\frac{1}{q})E\big[\big|C_1^{(1)}\big|\big]$, so \eqref{eq:unconditionalexpec} follows from \eqref{eq:c1xxp}. Also, by the law of total variance we have  $Var[A_1]=Var[E\big[A_1\, \big|\, \mathcal{C}\big]]+E[Var\big[A_1\, \big|\, \mathcal{C}\big]]$, so from \eqref{eq:c1xxp},\eqref{eq:sumsquarescom},\eqref{eq:c1xxpb}, we obtain \eqref{eq:unconditionalvar}.

Finally, it remains to prove \eqref{eq:unconditionalmoments}. Let $\epsilon'':=\epsilon'/k>0$. By the triangle inequality and \eqref{eq:unconditionalexpec} (applied for the constant $\epsilon''$), we have that 
\[\big|A_1-E[A_1]\big|\leq \big|A_1-nF(\zeta/n)\big|+\big|E[A_1]-nF(\zeta/n)\big|\leq |A_1-nF(\zeta/n)|+n^{\epsilon''},\]
and hence by the AM-GM inequality we have 
\[\big|A_1-E[A_1]\big|^k\leq 2^{k-1}(|A_1-nF(\zeta/n)|^{k}+n^{\epsilon'}).\] 
By integrating the first inequality in \eqref{A1bound} (applied for the constant $\epsilon''$), we obtain that $E\big[|A_1-nF(\zeta/n)|^{k}\big]\leq 2^k n^{k/2+\epsilon'}+o(1)$. Combining these bounds yields \eqref{eq:unconditionalmoments} with $K=2^{3k}$ (to absorb the lower order terms $n^{\epsilon'}$ and $o(1)$).

This concludes the proof of Lemma~\ref{lestart}.
\end{proof}

\subsection{Upper bound on the mixing time at $B=\Bu$}\label{sec:upperboundBu}
In this section, we prove that the mixing time of the SW chain satisfies $\Tmix=O(n^{1/3})$ at the critical point $B=\Bu$. 

The most difficult part of our arguments is to argue that the SW chain escapes the vicinity of the majority phase in $O(n^{1/3})$ steps, i.e., when the size $S_t$ of the largest color class is in the window  $|S_t-na|\leq \delta n^{2/3}$ for some small constant $\delta>0$ (recall that $a$ is the marginal of the majority phase and satisfies $F(a)=a$, see also Lemma~\ref{lem:fixpoints}). Note that from \eqref{eko1} we have that  $E[S_{t+1}\mid S_t]\approx n F(S_t/n)$ and hence the drift of the process inside the window  is very weak;  for example,  when $S_t/n=a$, the expected value of $S_{t+1}/n$ remains very close to $a$. More generally, an expansion of $F$ around the point $a$ yields that  $F(z)\approx z-c(z-a)^2$ for all $z\in (a-\epsilon,a+\epsilon)$ for some constants $c,\epsilon>0$. Therefore, we obtain that $E[S_{t+1}\mid S_t]\approx S_t-c(S_t-an)^2/n$ for some  constant $c>0$, so the change (in expectation) of $S_{t+1}$  relative to $S_t$ is bounded above by roughly $\delta^2 n^{1/3}$. In particular, how does the process escape the window $|S_t-na|\leq \delta n^{2/3}$ in $O(n^{1/3})$ steps?

The rough intuition is that inside the window the variance of the process aggregates the right way and the process gets displaced (with constant probability) by the square root of the ``aggregate variance''. That is, after $\Omega(n^{1/3})$ steps, $S_t$ is displaced  by roughly $\Omega(\sqrt{n^{1/3} n})=\Omega(n^{2/3})$ from $na$. In the meantime, it holds that $F(z)\leq z$ for all $z\in [1/B,1]$ so $S_t$ is bound to escape from the lower end of the window. Once $S_t$ escapes the window, the drift $F(z)-z$ coming from the expectation of $S_t/n$ takes over and the trajectory of $S_t/n$ is close to a deterministic process $z(t)$ which satisfies the differential equation $dz=(F(z)-z)dt$.  Since $F(z)\approx z-c(z-a)^2$ for all $z\in(a-\epsilon,a+\epsilon)$, we obtain that the number of steps needed so that $S_t/n$ goes from $a-n^{-1/3}$ to $a- \epsilon$ is roughly $\int^{a- \epsilon}_{a-n^{-1/3}}\frac{1}{F(z)-z}dz\approx n^{1/3}$; from that point on,  the SW chain will get within constant distance from the uniform phase in $O(1)$ steps.\footnote{Heuristically, the exponent $1/3$ in our target mixing time bound $O(n^{1/3})$ is the value $\rho\geq 0$ obtained by balancing (i) the number of steps that the process needs to get out  from  the interval $(na-n^{1-\rho},na+n^{1-\rho})$ using its variance which we expect to happen in roughly $n^{1-2\rho}$ steps  (since $\sqrt{n^{1-2\rho}n}=n^{1-\rho}$), and (ii) the number of steps that the process needs to cross the intervals $(n(a-\epsilon),na-n^{1-\rho})$ and $(na+n^{1-\rho},n(a+\epsilon))$   using the drift $z-F(z)\approx c(z-a)^2$ (which requires roughly $n^\rho$ steps).} Rather than formalizing explicitly this intuition, we will capture the progress of the chain towards the uniform phase by a potential function argument.

The potential function is designed so that its maximum value is at most $O(n^{1/3})$ and, at each step of the SW chain, the expected decrease of the potential function is at least a constant. More precisely, we show the following lemma in Section~\ref{sec:potfun}.

\begin{lemma}\label{lem:simplifiedpotential}
Let $B=\Bu$. There exist constants $M_1,M_2,\tau>0$ such that for all sufficiently small  $\epsilon>0$, for all sufficiently large $n$ the following holds. There exists an increasing three-times differentiable potential function $G:[1/q,1]\rightarrow [0,M_1n^{1/3}]$ with $G(1/q)=0$ and $\max_{z\in [1/q,1/B]}G'(z)\leq M_2$ such that  for any $\zeta\geq (1+\epsilon) n/B$, if $X_t$ has $(q-1)$ colors which are $\epsilon$-light, then it holds that
\begin{equation}\label{eq:potentialcopy}
E[G(S_{t+1}/n)\,|\, S_t=\zeta] \leq G(\zeta/n) - \tau.
\end{equation}
\end{lemma}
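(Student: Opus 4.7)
The plan is a Foster--Lyapunov-style argument: construct a potential $G$ so that Taylor's theorem applied to $G(S_{t+1}/n)$ around $z = \zeta/n$, together with the moment bounds~\eqref{eko1}--\eqref{eko3}, yields
\begin{equation*}
E[G(S_{t+1}/n)\mid S_t=\zeta] - G(z) = G'(z)\bigl(F(z)-z\bigr) + \tfrac{1}{2} G''(z)\,Var[S_{t+1}/n\mid S_t=\zeta] + R(z),
\end{equation*}
where the remainder satisfies $|R(z)| \leq C\|G'''\|_\infty n^{-3/2+\epsilon'} + G'(z)\cdot O(n^{\epsilon'-1})$. Since $F(z)-z = -\kappa(z-a)^2 + O((z-a)^3)$ near $a$ with $\kappa := -F''(a)/2 > 0$ (using $F'(a)=1$ from Lemma~\ref{lem:notJacobianuniqueness} and $F''(a)<0$ from Lemma~\ref{lem:Fshape}), and $Var[S_{t+1}/n\mid S_t=\zeta] \in [Q_2/n, Q_1/n]$ by~\eqref{eko2}, the task reduces to exhibiting an increasing $C^3$ function $G:[1/q,1]\to[0,M_1 n^{1/3}]$ such that $\kappa G'(z)(z-a)^2 - \tfrac{Q_2}{2n}G''(z) \geq 2\tau$ uniformly on $[(1+\epsilon)/B,1]$, with $|R(z)| \leq \tau$.

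The natural length scale on which the drift $\kappa(z-a)^2$ and the variance $1/n$ become comparable (after multiplying by $G'$ and $G''$ respectively) is $\eta := K n^{-1/3}$ for a constant $K$ to be tuned. I propose the single-formula construction
\begin{equation*}
G'(z) := \frac{C_1}{(z-a+\eta/2)^2 + \eta^2} \quad \text{on } [1/B,1],
\end{equation*}
$C^3$-patched on $[1/q,1/B]$ to a linear piece of slope $M_2$ that normalizes $G(1/q)=0$. This $G$ is manifestly positive and smooth; integration yields $G(z) \leq \pi C_1/\eta = O(n^{1/3})$, and $G'(z) = \Theta(1)$ on $[1/q,1/B]$, so $M_1$ and $M_2$ can be absolute constants. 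Far from $a$ (i.e., $|z-a| \gg \eta$) one has $G'(z) \sim C_1/(z-a)^2$, so $\kappa G'(z)(z-a)^2 \sim \kappa C_1 \geq 2\tau$ once $C_1 \geq 2\tau/\kappa$. Near $a$ one has $G'(z) = \Theta(C_1/\eta^2) = \Theta(n^{2/3})$, and the shift by $\eta/2$ ensures that $G''$ is negative throughout the $Var$-dominated slab $[a-\eta/2, a+\eta]$; in particular $G''(a) = -16 C_1/(25\eta^3)$ is of order $-n$, so $-\tfrac{Q_2}{2n}G''(a) = 8 Q_2 C_1/(25 K^3) \geq 2\tau$ once $C_1 \geq \Theta(\tau K^3/Q_2)$.

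The hard part is a subtle multi-parameter balancing act. On the left sub-interval $z \in [a-\eta, a-\eta/2]$ the formula gives $G''(z) > 0$, so the variance contribution is actually \emph{positive} there and must be dominated by the drift term $\kappa G'(z)(z-a)^2$; expressed in the normalized coordinate $s := (z-a+\eta/2)/\eta$, the inequality $C_1\kappa(s-\tfrac12)^2/(s^2+1) + C_1 Q_2 s/(K^3(s^2+1)^2) \geq 2\tau$ must hold uniformly for $s$ in a compact range, which forces the constants to be chosen in the order $\tau \to \kappa, Q_2 \to C_1 \to K \to M_1$, each dependent on the previous, with $K$ bounded above by a constant of the form $\sqrt{M_1 Q_2/(\text{const}\cdot\tau)}$ to keep the total height of $G$ under $M_1 n^{1/3}$. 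Once the constants are fixed, the only remaining check is that the cubic Taylor remainder is negligible: since $G''$ of size $\Theta(n)$ varies across an interval of length $\Theta(n^{-1/3})$, one has $\|G'''\|_\infty = O(n^{4/3})$, and~\eqref{eko3} then gives $|R(z)| \leq O(n^{4/3})\cdot O(n^{-3/2+\epsilon'}) = O(n^{-1/6+\epsilon'}) = o(1) \leq \tau$, as required.
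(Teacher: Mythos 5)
Your overall design of $G$ is the right one and closely mirrors the paper's: $G'(z)\asymp C_1/(z-a)^2$ away from $a$ (the paper takes $G'(z)=1/(z-F(z))$, which has exactly this behaviour near $a$), $G''\asymp -n$ on a window of width $\Theta(n^{-1/3})$ around $a$, total height $O(n^{1/3})$, and a bounded patch on $[1/q,1/B]$; your single closed-form choice of $G'$ is a nice way to avoid the paper's piecewise gluing (Lemma~\ref{lem:potentialchoice}). However, there is a genuine gap in your reduction step. The moment bounds \eqref{eko1}--\eqref{eko3} control the conditional mean and the \emph{central} moments of $S_{t+1}$, i.e.\ moments around $Z=E[S_{t+1}\mid S_t=\zeta]\approx nF(z)$, not moments around $\zeta$. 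If you Taylor-expand around $z=\zeta/n$, the second-order term is $\tfrac12 G''(z)\,E[(S_{t+1}/n-z)^2]=\tfrac12 G''(z)\bigl(\mathrm{Var}[S_{t+1}/n\mid S_t=\zeta]+(F(z)-z)^2\bigr)+o(1)$, and the cubic remainder involves $E[|S_{t+1}/n-z|^3]\gtrsim |F(z)-z|^3$. Inside the window $|z-a|=O(n^{-1/3})$ this is harmless because $|F(z)-z|=O(n^{-2/3})$, but $\zeta$ ranges over all of $[(1+\epsilon)n/B,\,n]$, and for $z$ at constant distance from $a$ the quantity $|F(z)-z|$ is a constant: the term $\tfrac12 G''(z)(F(z)-z)^2$ that your display omits is then of constant order (and positive to the left of $a-\eta/2$, where your $G''>0$), and your remainder bound $|R(z)|\le C\|G'''\|_\infty n^{-3/2+\epsilon'}$ is false as stated, since $\|G'''\|_\infty=\Theta(n^{4/3})$ while $E[|S_{t+1}/n-z|^3]$ is of constant order. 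The paper sidesteps this by expanding around the conditional mean (Lemma~\ref{lma2}), so that the drift is recorded exactly as $G(F(x))-G(x)$ rather than linearized as $G'(x)(F(x)-x)$ and only central moments appear. To rescue your version you would need either to switch to that expansion, or to add an explicit argument that outside the window the extra terms are dominated by the drift: this requires confining the evaluation of $G'''$ to the interval between $F(z)$ and $z$ (plus a rare-event argument using $0\le G\le M_1 n^{1/3}$), and a quantitative inequality such as $(z-F(z))/(a-z)$ bounded away from $1$ on $[(1+\epsilon)/B,a)$; none of this is in your write-up.

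Second, the "subtle multi-parameter balancing act" that you name is in fact the bulk of the proof and is only asserted, not carried out: the uniform verification of the drift inequality over all regimes (far left of $a$, the transition strip $[a-\eta,a-\eta/2]$ where $G''>0$, the variance-dominated slab, and far right of $a$), together with the choice of constants, is precisely the content of the paper's Lemma~\ref{lem:satisfyineq} and takes a full case analysis; note also that on the left strip the worst case uses the \emph{upper} variance bound $Q_1$, not $Q_2$ as in your normalized inequality. Finally, a small but real point: patching $G'$ on $[1/q,1/B]$ "to a linear piece" does not give a three-times differentiable $G$; you must match $G',G'',G'''$ at $1/B$ (they are $O(1)$ there, so an $n$-independent patch as in the paper's $G_0$ works, but the linear patch as written does not).
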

The proof of Lemma~\ref{lem:simplifiedpotential} is quite technical, so let us briefly discuss the main ideas underlying the proof. The crucial ingredient is to specify the potential function $G$ so that \eqref{eq:potentialcopy} is satisfied. To motivate the choice of $G$, by taking expectations in the second order Taylor expansion of $G(S_{t+1}/n)$ around $E[S_{t+1}/n\,|\, S_t=\zeta]\approx F(\zeta/n)$ we obtain
\begin{equation}\label{eq:motivation}
E[G(S_{t+1}/n)\,|\, S_t=\zeta]\approx G(F(\zeta/n))+\frac{1}{2}Var[S_{t+1}/n\,|\, S_t=\zeta]\, G''(F(\zeta/n)).
\end{equation} 
(The precise conditions on the derivatives of $G$ such that the approximation in \eqref{eq:motivation} is sufficiently accurate are given in Lemma~\ref{lma2}.) From \eqref{eq:motivation}, in order to satisfy \eqref{eq:potentialcopy}, the function $G$ has to be carefully chosen to control the interplay between $G(F(x))-G(x)$ and $G''(F(x))$. The first derivative of $G$ should correspond to the drift $F(x)-x$ of the process coming from its expectation while the second derivative of $G$ to the variance of the process. More precisely, when $x$ is outside the critical window, the choice of the potential function is such that $G(F(x))-G(x)$ is bounded above by a negative constant (i.e., its derivative is $1/(x-F(x))$); by our earlier remarks this should be sufficient to establish progress outside the critical window. Indeed, with this choice it turns out that $|G''(x)|/n$ is bounded above by a small constant outside the critical window, so that \eqref{eq:potentialcopy} is satisfied. Inside the critical window, where $x\approx F(x)$ and hence $G(F(x))-G(x)\approx 0$,  we choose $G$ so that  $G''(x)$ is negative. More precisely, to satisfy \eqref{eq:potentialcopy},  since  $Var[S_{t+1}/n\,|\, S_t=\zeta]=\Theta(1/n)$ from Lemma~\ref{lestart}, we set $G''(x)=-Cn$ for some constant $C>0$. The remaining part  is then to interpolate between these two regimes keeping $G'(x)/G''(x)$ sufficiently large (so that \eqref{eq:potentialcopy} is satisfied) and $G(x)$ small (i.e., $O(n^{1/3})$); this is possible due to the quadratic behaviour of $F(z)-z$ around $z=a$. (See Lemma~\ref{lem:potentialchoice} and its proof for the explicit specification of $G$.)

We next combine Lemmas~\ref{lestart2},~\ref{lem:probdeviation} and~\ref{lem:simplifiedpotential} to show the following.
\begin{lemma}\label{combinecombine}
For $B=\Bu$, there exists $L>0$ such that the following is true. In $T=O(n^{1/3})$ steps, for any starting state $X_0$, with probability $\Theta(1)$
the SW algorithm ends up in a state $X_T$ such that $\|\alphab(X_T)-\u\|_{\infty}\leq L n^{-1/2}$.
\end{lemma}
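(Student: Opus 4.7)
The plan is to combine the potential-function bound of Lemma~\ref{lem:simplifiedpotential} with the transition-through-criticality estimates of Lemmas~\ref{lestart2} and~\ref{lem:probdeviation}, stitched together by a supermartingale argument on $G(S_t/n)$. By Lemma~\ref{lestart2} applied to $X_0$, with probability $\Theta(1)$ the state $X_1$ has $(q-1)$ colors that are $\epsilon$-light, and by the second part of the same lemma this property then persists through the next $O(n^{1/3})$ steps except with probability $\exp(-n^{\Omega(1)})$. I shall condition on this event, henceforth ``the good event,'' for the rest of the argument.

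Define the stopping time $\tau_* := \inf\{t \geq 1 : S_t < (1+\epsilon) n/B\}$. On the good event, Lemma~\ref{lem:simplifiedpotential} gives, for every $t < \tau_*$, the one-step drift bound $E[G(S_{t+1}/n) \mid \mathcal{F}_t] \leq G(S_t/n) - \tau$; hence $Y_t := G(S_{t \wedge \tau_*}/n) + \tau \cdot (t \wedge \tau_*)$ is a supermartingale. Because $0 \leq G \leq M_1 n^{1/3}$ on its domain, the optional stopping theorem yields $\tau \cdot E[\tau_*] \leq E[Y_0] \leq M_1 n^{1/3}$, and Markov's inequality then gives $\tau_* \leq T_1 := 4 M_1 n^{1/3}/\tau$ with probability at least $3/4$. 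Thus, in $O(n^{1/3})$ steps, the dominant color class drops below the supercritical margin $(1+\epsilon)/B$.

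At time $\tau_*$ the hypothesis of Lemma~\ref{lem:probdeviation} is met, so $S_{\tau_*+1}/n < (1+3q\epsilon)/q$ except with probability $\exp(-n^{\Omega(1)})$. Since $B = \Bu < q = \Bh$, for $\epsilon$ sufficiently small we have $(1+3q\epsilon)/q < (1-\epsilon')/B$ for some constant $\epsilon' > 0$, so every color class of $X_{\tau_*+1}$ is subcritical for the next percolation step. One more iteration then reaches the target neighborhood of $\u$: Lemma~\ref{lemia1}(\ref{it:sub}) gives $E[\sum_j |C_j|^2] \leq K n/\epsilon'$ for that step, so by Markov's inequality this sum is $O(n)$ with probability $\geq 9/10$, and conditionally Azuma's inequality applied to the uniform random recoloring of components (as in the proof of Lemma~\ref{l2}) gives $\|\alphab(X_{\tau_*+2}) - \u\|_\infty \leq L n^{-1/2}$ with probability $\geq 9/10$, for a sufficiently large constant $L$. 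Taking $T := \tau_* + 2 = O(n^{1/3})$ and combining all the above events by a union bound yields the lemma. The single delicate point in assembling the pieces is that the drift inequality of Lemma~\ref{lem:simplifiedpotential} is valid only on the $\epsilon$-light event, which is why the second half of Lemma~\ref{lestart2} is crucial for propagating that event along the entire $O(n^{1/3})$-length trajectory; the hard work has already been absorbed into the construction of $G$ itself.
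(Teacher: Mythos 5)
Your proposal follows the same skeleton as the paper---use Lemma~\ref{lestart2} to get $q-1$ $\epsilon$-light colors, drive the potential $G$ down via Lemma~\ref{lem:simplifiedpotential} until the dominant color drops below $(1+\epsilon)n/B$, then finish with Lemma~\ref{lem:probdeviation} and one Azuma step---but the mechanism by which you extract the $O(n^{1/3})$ time bound is genuinely different. You set up the supermartingale $Y_t = G(S_{t\wedge\tau_*}/n)+\tau(t\wedge\tau_*)$ and apply optional stopping and Markov to bound the stopping time $\tau_*$, whereas the paper tracks $E[G(S_t/n)]$ directly, proving $E[G(S_{t+1}/n)]\le E[G(S_t/n)]-\tau(1-P_t)+\tau/2$ and exploiting that $P_t$ (the probability of having dropped subcritical by time $t$) is essentially monotone. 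Your route is conceptually cleaner; the paper's avoids optional stopping and delivers a deterministic time directly, which turns out to matter.

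Two points need repair before the argument closes. First, you announce that you will ``condition on the good event'' and then run the supermartingale inequality; but the good event (that the $\epsilon$-light property persists for $O(n^{1/3})$ steps) concerns future times, so it is not $\mathcal{F}_t$-measurable, and conditioning on it does not in general preserve the one-step drift bound $E[G(S_{t+1}/n)\mid\mathcal{F}_t]\le G(S_t/n)-\tau$. The standard fix is to stop rather than condition: let $\sigma$ be the first $t$ at which the $\epsilon$-light property fails, observe that $Y_{t\wedge\tau_*\wedge\sigma}$ \emph{is} a genuine supermartingale under the unconditioned measure by Lemma~\ref{lem:simplifiedpotential}, run optional stopping to get $P(\tau_*\wedge\sigma>T_1)\le 1/4$, and then peel off $\sigma$ at cost $T_1\exp(-n^{\Omega(1)})=o(1)$ via Lemma~\ref{lestart2}. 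Second, you conclude with ``taking $T:=\tau_*+2$,'' which is a random time; the lemma (and its use in Corollary~\ref{cor:mixingtimeBu}) requires a fixed deterministic $T=O(n^{1/3})$. The missing ingredient is precisely the observation the paper makes explicit: by Lemma~\ref{lem:probdeviation} (together with $(1+3q\epsilon)/q<(1+\epsilon)/B$ for small $\epsilon$), once $S_t<(1+\epsilon)n/B$ it stays so with probability $1-\exp(-n^{\Omega(1)})$ per step, so on the event $\{\tau_*\le T_1\}$ you still have $S_{T_1}<(1+\epsilon)n/B$ after a union bound, and your last two steps can then be taken from the fixed time $T_1$ rather than from $\tau_*$. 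With these two modifications your argument goes through.
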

\begin{proof}
Let $T:=\left\lceil 3M_1n^{1/3}/\tau\right\rceil$, where $M_1,\tau$ are the constants in Lemma~\ref{lem:simplifiedpotential}.

Let $\epsilon>0$ be a sufficiently small constant, to be picked later.  We will assume that the state $X_1$ has $q-1$ colors which are $\epsilon$-light since (by the first part of  Lemma~\ref{lestart2}) this event happens with probability $\Theta(1)$. Henceforth, we will condition on this event. 

Recall that $S_t$ is the size of the largest color class at time $t$.  We will show that with probability $\Theta(1)$ it holds that $S_T<(1+\epsilon) n/B$. Assuming this for the moment, then in the next step, i.e., at time $T+1$, by Lemma~\ref{lem:probdeviation} all color classes have size at most $(1+3 q \epsilon)n/q$ and (for all sufficiently small $\epsilon$) are thus subcritical in the percolation step of the SW dynamics. It follows that  the components' sizes after the percolation step satisfy, by Item~\ref{it:sub} in Lemma~\ref{lemia1}, $E\Big[\sum_{i}\, |C_i|^2\Big] = O(n)$. Hence, after the coloring step, using Azuma's inequality with constant probability 
we have color classes of size $(n+O(n^{1/2}))/q$ (see for example the derivation of \eqref{grg1} and \eqref{grg2} for details). 

It remains to argue that with probability $\Theta(1)$ it holds that $S_T<(1+\epsilon) n/B$.  Let $P_t$ be the probability that at time $t$ it holds that $S_t<(1+\epsilon) n/B$. We will show that $P_T\geq 1/10$. We will use Lemma~\ref{lem:simplifiedpotential} and the potential function $G$ therein to bound $P_T$. In particular, we will show that for all $n$ sufficiently large, for all $t=1,\hdots,T$, it holds that
\begin{equation}\label{eq:modifiedpotential234}
E[G(S_{t+1}/n)] \leq E[G(S_t/n)] - \tau(1-P_t)+\tau/2,
\end{equation}
where $\tau$ is the constant in Lemma~\ref{lem:simplifiedpotential}. Prior to that, let us conclude that $P_T\geq 1/10$ assuming~\eqref{eq:modifiedpotential234}. Note that if $S_t<(1+\epsilon) n/B$ then  $S_{t+1}<(1+\epsilon) n/B$ with probability at least $1-\exp(-n^{\Omega(1)})$ (by Lemma~\ref{lem:probdeviation}), so $P_{t}\leq P_{t+1}+O(1/n)$. Since $T=O(n^{1/3})$, we have $P_t\leq P_T+O(n^{-2/3})$ for all $t=1,\hdots,T$ and hence $\sum^{T}_{t=1}P_t\leq TP_T+o(1)$. By applying \eqref{eq:modifiedpotential234} recursively, it hence follows that
\[E[G(S_{T+1}/n)]\leq E[G(S_1/n)]-\tau T(1/2-P_T)+o(1).\]
Using that $0\leq G(z)\leq M_1n^{1/3}$ for all $z\in [1/q,1]$, we obtain that $P_T\geq 1/2 -M_1n^{1/3}/(\tau T)+o(1)$. For $T=\left\lceil 3M_1 n^{1/3}/\tau\right\rceil$ we thus have $P_T\geq 1/10$ as wanted.

Finally, we prove \eqref{eq:modifiedpotential234} for $t=1,\hdots,T$. Note that Lemmas~\ref{lem:simplifiedpotential} and~\ref{lem:probdeviation} apply whenever $X_t$ has $q-1$ $\eps$-light colors, so  we will need to account for the (small-probability) event that this fails. Namely, let $\mathcal{E}_t$ denote the event that $X_t$ has $q-1$ $\eps$-light colors. Since we condition on the event that $\mathcal{E}_1$ holds, we have that $\bigcap^T_{t=2}\mathcal{E}_t$  holds with probability  at least $1-\exp(-n^{\Omega(1)})$ (by the second part of Lemma~\ref{lestart2}).

Let $\mathcal{F}_t$ be the event that $S_t<(1+\epsilon)n/B$ and note that $P_t=\Pr(\mathcal{F}_t)$. By taking expectations in inequality \eqref{eq:potentialcopy} of Lemma~\ref{lem:simplifiedpotential}, we have 
\begin{equation}\label{Gtau1}
E\big[G(S_{t+1}/n)\mid \mathcal{E}_t, \overline{\mathcal{F}_t}\,\big]\leq E\big[G(S_t/n)\mid \mathcal{E}_t, \overline{\mathcal{F}_t}\,\big]-\tau.
\end{equation}
Note that if $S_{t}<(1+\epsilon) n/B$, then by Lemma~\ref{lem:probdeviation}, with probability $1-\exp(-n^{\Omega(1)})$ we have $S_{t+1}<(1+3q\epsilon)n/q$. From Lemma~\ref{lem:simplifiedpotential}, we have $G(1/q)=0$ and $\max_{z\in [1/q,1/B]}G'(z)\leq M_2$ where $M_2$ is an absolute constant independent of $n$. It follows that for all sufficiently small constant $\epsilon>0$, when $S_{t+1}<(1+3q\epsilon)n/q$, it holds that $G(S_{t+1}/n)\leq \tau/3$. It follows that 
\begin{equation}\label{Gtau2}
E\big[G(S_{t+1}/n)\mid \mathcal{E}_t, \mathcal{F}_t\big]\leq \tau/3.
\end{equation}
Note that $G$ is positive throughout the interval $[1/q,1]$ since $G(1/q)=0$ and $G$ is increasing. By the positivity of $G$, we thus obtain the crude inequality 
\begin{equation}\label{trfgtrfg}
\Pr\big(\overline{\mathcal{F}_t}\mid \mathcal{E}_t\big)\, E\big[G(S_t/n)\mid \mathcal{E}_t, \overline{\mathcal{F}_t}\,\big]\leq E[G(S_t/n)\mid \mathcal{E}_t].
\end{equation}
Let $P_t'$ be the probability that at time $t$ it holds that $S_t<(1+\epsilon) n/B$ conditioned on the event $\mathcal{E}_{t}$, i.e., $P_t':=\Pr(\mathcal{F}_t\mid \mathcal{E}_t)$. Note  that $P_t\geq P_t'(1-\exp(-n^{\Omega(1)}))\geq P_t'-\exp(-n^{\Omega(1)})$. Combining \eqref{Gtau1}, \eqref{Gtau2} and \eqref{trfgtrfg}, we obtain 
\begin{equation}\label{eq:modifiedpotential2}
E[G(S_{t+1}/n)\mid \mathcal{E}_t] \leq E[G(S_t/n) \mid \mathcal{E}_t] - \tau(1-P_t')+\tau/3.
\end{equation}
Since $G$ is bounded by a polynomial in $n$ and the probability of the event $\overline{\mathcal{E}_t}$ is $\exp(-n^{\Omega(1)})$, removing the conditioning in \eqref{eq:modifiedpotential2} only affects the inequality by an additive $o(1)$. Similarly, replacing $P_t'$ with $P_t$ in \eqref{eq:modifiedpotential2} only affects the inequality by an additive $o(1)$. This proves that \eqref{eq:modifiedpotential234} holds for all sufficiently large $n$, thus concluding the proof of Lemma~\ref{combinecombine}.
\end{proof}

Using Lemma~\ref{combinecombine}, it is not hard to obtain the following corollary.
\begin{corollary}\label{cor:mixingtimeBu}
Let $B=\Bu$. The mixing time of the Swendsen-Wang algorithm
on the complete graph on $n$ vertices is $O(n^{1/3})$.
\end{corollary}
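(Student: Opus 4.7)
The plan is to mirror the corollary at the end of Section~\ref{sec:fast-uniqueness} for the regime $B<\Bu$, with Lemma~\ref{combinecombine} taking the place of Lemma~\ref{zirafa1}. The only substantive change is that the time needed to approach the uniform phase is $O(n^{1/3})$ rather than $O(1)$, and this propagates directly to the mixing-time bound.

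I would consider two copies $X_t,Y_t$ of the SW chain, with $X_0$ arbitrary and $Y_0$ drawn from the stationary distribution $\pi$. To bound $\Tmix$ by $O(n^{1/3})$ it suffices to produce, for some $T=O(n^{1/3})$, a coupling for which $\Pr(X_T=Y_T)=\Omega(1)$; standard amplification (iterating the coupling a constant number of times) then yields $d_{TV}(X_{kT},\pi)\leq 1/4$ for a sufficiently large constant $k$.

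The coupling is built via the symmetry trick already used in the $B<\Bu$ regime. Fix the partition $V_1,\dots,V_q$ of the vertex set of $K_n$ defined by the color classes of $X_0$, and let $\mathcal{A}_t,\mathcal{A}_t'$ be the magnetization chains associated with $X_t,Y_t$ with respect to this partition, defined by $A_{ij,t}=|V_i\cap X_t^{-1}(j)|$ and $A_{ij,t}'=|V_i\cap Y_t^{-1}(j)|$. Symmetry of SW yields $d_{TV}(X_t,Y_t)=d_{TV}(\mathcal{A}_t,\mathcal{A}_t')$ as in \eqref{eq:symtricky}, so it is enough to couple the magnetization chains. Applying Lemma~\ref{combinecombine} to each copy, there is a $T_1=O(n^{1/3})$ such that with probability $\Theta(1)$ both $X_{T_1}$ and $Y_{T_1}$ lie within $Ln^{-1/2}$ of $\u$ in $\ell_\infty$. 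Conditioned on this event, the hypothesis of Lemma~\ref{lem:magnetization} is met, so one additional step produces a coupling with $\mathcal{A}_{T_1+1}=\mathcal{A}_{T_1+1}'$ with probability $\Theta(1)$. Setting $T=T_1+1=O(n^{1/3})$ completes the construction.

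I do not anticipate a real obstacle: the main work is already done in Lemma~\ref{combinecombine}, and Lemma~\ref{lem:magnetization} applies at $B=\Bu$ because $\Bu<q$ ensures that each color class of size $\approx n/q$ is subcritical in the percolation step, so a linear number of isolated vertices is produced with constant probability (as needed in its proof). The one minor bookkeeping point is the amplification: after one run of the coupling we only get success probability $\Omega(1)$, but because this holds uniformly over the starting state, a constant number of independent restarts boosts this to within $1/4$ of stationarity in total variation, which is what the definition of $\Tmix$ requires.
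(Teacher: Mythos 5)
Your proof is correct and hinges on the same key lemma (Lemma~\ref{combinecombine}); the difference is in the final alignment step. The paper follows the template of Corollary~\ref{cor:BgBh}: after both chains are within $O(n^{-1/2})$ of $\u$, it invokes ``an analogue of Lemma~\ref{lem:phases-agree}'' (adapted from $\m$ to $\u$, but not spelled out) to equalize the phase vectors in one step, and then applies Lemma~\ref{lem:vertices-agree} for $O(\log n)$ additional steps to equalize the full configurations. You instead import the magnetization-chain machinery from Section~\ref{sec:fast-uniqueness} --- the symmetry identity $d_{TV}(X_t,Y_t)=d_{TV}(\mathcal{A}_t,\mathcal{A}_t')$ of \eqref{eq:symtricky} together with Lemma~\ref{lem:magnetization} --- which collapses the alignment into a single step. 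Your check that Lemma~\ref{lem:magnetization} still applies at $B=\Bu$ (since $\Bu<q$ forces the color classes near $\u$ to be subcritical, producing $\Theta(n)$ isolated vertices with constant probability) is exactly right, and this route has the merit of relying only on lemmas the paper actually proves rather than an asserted-but-unproved analogue of Lemma~\ref{lem:phases-agree} at $\u$. The tradeoff is that the symmetry identity forces $Y_0\sim\pi$ (the law of $Y_t$ must be exchangeable within each $V_i$), so a single run gives $d_{TV}(X_T,\pi)\le 1-c$ rather than a bound on the worst-case pair distance $\bar d(T)$; your amplification sentence should be phrased as restarting $X$ from an arbitrary state at time $T$ and recoupling with a fresh stationary copy (using that $c$ is uniform in $X_0$), but this is a minor bookkeeping point that the paper's own corollaries also gloss over.
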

\begin{proof}
Consider two copies $(X_t),(Y_t)$ of the SW chain. As in the proof of Corollary~\ref{cor:BgBh}, it suffices to  show that for $T=O(n^{1/3})$, there exists a coupling of $(X_t)$ and $(Y_t)$ such that $\Pr(X_T=Y_T)=\Omega(1)$.  

By Lemma~\ref{combinecombine}, for  $T_1=O(n^{1/3})$,  it holds that with probability $\Theta(1)$
\begin{equation}\label{eq:couplecouple2}
\|\alphab(X_{T_1}) - \u\|_\infty\leq Ln^{-1/2}\mbox{ and }\|\alphab(Y_{T_1}) - \u\|_\infty\leq Ln^{-1/2}.
\end{equation}
Conditioning on \eqref{eq:couplecouple2}, by Lemma~\ref{lem:phases-uniform}, there exists a coupling such  that with probability $\Theta(1)$ for $T_2=T_1+1$, it holds that $\alphab(X_{T_2})=\alphab(Y_{T_2})$. Conditioning on $\alphab(X_{T_2})=\alphab(Y_{T_2})$, by Lemma~\ref{lem:vertices-agree} there exists $T_3=O(\log n)$ and a coupling such that $\Pr(X_{T_2+T_3}=  Y_{T_2+T_3}\mid \alphab(X_{T_2})=\alphab(Y_{T_2}))=\Omega(1)$. It is now immediate to combine the couplings to obtain a coupling such that $\Pr(X_T=Y_T)=\Omega(1)$ with $T=T_2+T_3=O(n^{1/3})$, as desired.
\end{proof}

\subsection{Lower bound on the mixing time at $B=\Bu$}\label{sec:lowerboundBu}

In this section, we prove that the mixing time of the SW algorithm at $B=\Bu$ satisfies $\Tmix=\Omega(n^{1/3})$. 

As in the proof of the upper bound, the lower bound on the mixing time follows by carefully accounting for the number of steps that the SW algorithm needs to escape the window around the majority phase. In this section, our goal  is to show that it takes $\Omega(n^{1/3})$ steps to escape the window. The following lemma provides the ``reverse" direction of Lemma~\ref{lem:simplifiedpotential}. Recall that for a state $X_t$ of the SW algorithm, the size of the largest color class is denoted by $S_t$.
\begin{lemma}\label{lem:simplifiedpotentialb}
Let $B=\Bu$. There exist constants $M_1,M_2,\rho>0$ such that for all sufficiently small $\epsilon>0$, for all sufficiently large $n$ the following holds. There exists a three-times differentiable increasing function $G:[1/q,1]\rightarrow [0,M_1n^{1/3}]$ which satisfies $G(1/B)=O(1)$, $G(1)\geq M_2 n^{1/3}$ such that  for any $\zeta\geq n/q$, if $X_t$ has $(q-1)$ colors which are $\epsilon$-light, then it holds that
\begin{equation}\label{eq:potentialcopyb}
E[G(S_{t+1}/n)\,|\, S_t=\zeta] \geq G(\zeta/n) - \rho.
\end{equation}
\end{lemma}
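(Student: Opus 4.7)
My plan is to mirror the construction of the potential in Lemma~\ref{lem:simplifiedpotential} with signs reversed, so that the expected potential decreases by \emph{at most} a constant $\rho$ per step (rather than by at least $\tau$). The underlying geometry is unchanged: at $B=\Bu$, $F(a)=a$, $F'(a)=1$, and $F''(a)<0$ (Lemmas~\ref{lem:notJacobianuniqueness}, \ref{lem:Fshape}), so $z-F(z)\geq 0$ on $[1/B,1]$ with a quadratic vanishing $z-F(z)\asymp (z-a)^2$ near $a$. I will choose $G$ so that outside a critical window $W=[a-\delta n^{-1/3},a+\delta n^{-1/3}]$ the drift contribution $-G'(z)(z-F(z))$ is bounded below by a constant $-c_0$, while inside $W$ the function $G$ is strictly convex with $G''\gtrsim n$ so that the variance term becomes nonnegative.

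Explicitly, fix small constants $\epsilon_0,c_0,\delta>0$ and set $G(z)=\int_{1/q}^z G'(u)\,du$ where $G'\equiv 0$ on $[1/q,1/B]$, a short smooth transition on $[1/B,(1+\epsilon_0)/B]$ brings $G'$ up to $c_0/(z-F(z))$, and $G'(z)=c_0/(z-F(z))$ is used on $[(1+\epsilon_0)/B,a-\delta n^{-1/3}]\cup[a+\delta n^{-1/3},1]$, with a smooth near-parabolic convex interpolation inside $W$. Because $G'$ equals $\Theta(n^{2/3})$ at $\partial W$ and $W$ has length $2\delta n^{-1/3}$, a parabolic interpolation in $z$ naturally forces $G''=\Theta(n)$ across $W$; outside $W$, direct differentiation gives $|G''(z)|=O((z-a)^{-3})$ and $|G'''(z)|=O((z-a)^{-4})$. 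The resulting $G$ is nondecreasing, satisfies $G(1/B)=O(1)$ from the short transition, and
\[
G(1)\ \geq\ \int_{a+\delta n^{-1/3}}^{1}\frac{c_0}{u-F(u)}\,du\ \asymp\ c_0\int_{\delta n^{-1/3}}^{\Omega(1)}\frac{du}{u^2}\ =\ \Theta(n^{1/3}),
\]
yielding the stated range bounds.

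To verify \eqref{eq:potentialcopyb}, write $z=\zeta/n$, $Y=S_{t+1}/n$, $\mu=E[Y\mid S_t=\zeta]$, $\sigma^2=\mathrm{Var}[Y\mid S_t=\zeta]$. If $z\leq (1+\epsilon)/B$, then Lemma~\ref{lem:probdeviation} gives $Y<(1+3q\epsilon)/q$ with probability $1-\exp(-n^{\Omega(1)})$, so $G(z),G(Y)$ are both $O(1)$ on this event while the complementary event contributes $\exp(-n^{\Omega(1)})\cdot M_1 n^{1/3}=o(1)$; hence \eqref{eq:potentialcopyb} holds for $\rho$ large enough. If $z\geq (1+\epsilon)/B$, invoke Lemma~\ref{lestart} to get $\mu=F(z)+O(n^{-1+\epsilon'})$, $\sigma^2=\Theta(1/n)$, and $E[|Y-\mu|^k]=O(n^{-k/2+\epsilon'})$ for $k=2,3$. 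A third-order Taylor expansion of $G$ about $\mu$ followed by a second-order expansion of $G(\mu)$ about $z$ gives
\[
E[G(Y)]-G(z)=-G'(z)\bigl(z-F(z)\bigr)+\tfrac12 G''(\mu)\sigma^2+R,
\]
where $R$ collects the $O((\mu-z)^2)$ correction to $G(\mu)-G(z)$ and the cubic Taylor remainder. Outside $W$, the drift is exactly $-c_0$, the variance term is bounded below by a negative constant (worst case at $\partial W$, where $|G''(\mu)|\sigma^2=O(1)$), and both pieces of $R$ are $o(1)$ by the moment bounds combined with the pointwise bounds on $G'',G'''$. Inside $W$, the drift satisfies $|G'(z)(z-F(z))|\lesssim n^{2/3}/\delta^2\cdot \delta^2 n^{-2/3}=O(1)$, the variance term is $\geq \tfrac12 C_0 n\cdot \Omega(1/n)\geq 0$ by convexity, and $R=o(1)$ using $|G'''|=O(n^{4/3})$ on $W$ together with $E[|Y-\mu|^3]=O(n^{-3/2+\epsilon'})$. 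Choosing $\rho$ larger than $c_0$ plus the absolute value of the $O(1)$ contributions gives the claim.

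The main obstacle is the careful design of $G$ inside $W$ and the associated error analysis. The function must be simultaneously three-times differentiable with $G'''$ controlled enough for the cubic Taylor remainder to be $o(1)$; convex with $G''\geq C_0 n$ to extract a nonnegative order-one contribution from $\sigma^2=\Theta(1/n)$; and smoothly matched at $\partial W$ to the outside formula $c_0/(z-F(z))$, whose first derivative is already $\Theta(n^{2/3})$ there. The quadratic vanishing of $z-F(z)$ near $a$ is exactly what makes this balancing act feasible: the endpoint value of $G'$ at $\partial W$ and the window length $2\delta n^{-1/3}$ are both dictated by that quadratic, and their ratio is precisely of order $n$, which is what a convex interpolation needs to supply. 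Since the companion Lemma~\ref{lem:simplifiedpotential} already carries out a nearly identical interpolation with a concave profile, the required construction and bookkeeping are essentially the mirror image of the forward-direction argument.
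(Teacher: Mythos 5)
Your plan is to build a \emph{second} potential, ``mirroring the construction with signs reversed'' so that $G$ is convex on the window $W=[a-\delta n^{-1/3},a+\delta n^{-1/3}]$ and the variance term becomes nonnegative. This has two concrete problems. First, with $G'(z)=c_0/(z-F(z))$ outside $W$, one computes $G''(z)=-c_0(1-F'(z))/(z-F(z))^2$; since at $B=\Bu$ we have $F(a)=a$, $F'(a)=1$ and $F''<0$, we get $F'(z)>1$ for $z<a$ and $F'(z)<1$ for $z>a$ near $a$, hence $G''(z_-)>0$ but $G''(z_+)<0$ (this is exactly the sign pattern recorded in the paper's proof of Lemma~\ref{lem:potentialchoice}: ``$G_1''(z_-)>0>G_3''(z_m)$''). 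A three-times-differentiable $G$ that is convex throughout $W$ would need $G''\geq 0$ at $z_+$, contradicting the matching to the outside formula there, so your construction cannot be $C^2$, let alone $C^3$. Second, the claim that ``a parabolic interpolation naturally forces $G''=\Theta(n)$ across $W$'' is false: the endpoint values $G'(z_\pm)\approx c_0/(c\delta^2)\,n^{2/3}$ agree to leading order because $z-F(z)\approx c(z-a)^2$ is symmetric in $(z-a)$; their difference is only $O(n^{1/3})$ (driven by the cubic term in the expansion of $F$). Dividing by the window length $O(n^{-1/3})$ gives $G''=O(n^{2/3})$ across $W$ for a quadratic interpolant, not $\Theta(n)$, so the variance contribution $\tfrac12 G''\sigma^2$ you intend to extract is only $o(1)$, not $\Theta(1)$.

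The paper avoids all of this by not building a second potential: it reuses the \emph{same} concave-in-window $G$ from Lemma~\ref{lem:simplifiedpotential}, and the two-sided condition \eqref{eq:QQ1} in Lemma~\ref{lma2} already yields $G(\zeta/n)-2\tau_2\leq E[G(S_{t+1}/n)\,|\,S_t=\zeta]\leq G(\zeta/n)-\tau_1/2$, i.e.\ the lower bound comes for free from the left inequality in \eqref{eq:QQ1}. For the lower bound it is not necessary that the variance term be nonnegative; bounded below by a constant suffices, and the paper's single $G$ (with $|G''|=O(n)$ globally and $G''\leq-C_2n$ in the window, hence a variance contribution that is $\Theta(-1)$ there) already ensures this. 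Your handling of the complementary case $\zeta<(1+\epsilon)n/B$ via Lemma~\ref{lem:probdeviation} and the $O(1)$ variation of $G$ near $1/B$ does match the paper, as do the range bounds $G(1/B)=O(1)$ and $G(1)=\Theta(n^{1/3})$. But the key idea — that a new, convex potential is needed — is both incorrect as constructed and unnecessary; you should instead notice that Lemma~\ref{lma2}'s two-sided inequality gives the reverse direction directly.
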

We remark here that the potential function in Lemmas~\ref{lem:simplifiedpotential} and~\ref{lem:simplifiedpotentialb} will be chosen to be identical. We thus refer the reader to the discussion after Lemma~\ref{lem:simplifiedpotential} for an overview of the construction of $G$ and to Section~\ref{sec:potfun} for the actual construction and the proof of Lemma~\ref{lem:simplifiedpotentialb}.

Analogously to Section~\ref{sec:lowerboundBBh}, we will also need a (crude) bound on the probability mass of configurations which are far from the uniform phase in the Potts distribution. Recall from Section~\ref{sepp} that ${\cal B}(\vv,\delta)$ is the $\ell_\infty$-ball of configuration vectors of the $q$-state Potts model in $K_n$
around  $\vv$ of radius $\delta$, cf. equation \eqref{eq:ballball}. For a constant $\eta>0$, let
\[U(\eta):=\mathcal{B}(\u,\eta).\]
The following lemma is analogous to Lemma~\ref{lem:measureconc} and its proof hinges on the arguments used to derive the upper bound for the mixing time at $B=\Bu$. (Similarly to Lemma~\ref{lem:measureconc}, more precise bounds can be found in, e.g., \cite{HET}; the following estimate follows easily from our upper bound on the mixing time.)
\begin{lemma}\label{lem:measureconcBu}
Let $B=\Bu$ and $\eta>0$ be a constant. For all sufficiently large $n$, the Potts distribution $\mu$ (given in \eqref{eq:Gibbs}) satisfies $\mu\big(\overline{U(\eta)}\big)\leq 1/8$.
\end{lemma}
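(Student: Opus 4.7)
The plan is to follow the template of Lemma~\ref{lem:measureconc} almost verbatim, replacing the neighborhood $S=\mathcal{B}(\m,n^{-1/7})$ around the majority phase by the neighborhood $U(\eta)$ around the uniform phase, and replacing the $O(\log n)$ reaching and mixing bounds by the $O(n^{1/3})$ bounds appropriate for $B=\Bu$. The three ingredients I need are (i) a ``reaching'' statement: from any initial configuration, the chain enters $U(\eta)$ in $T=O(n^{1/3})$ steps with probability bounded below by a positive constant; (ii) a ``stability'' statement: once in $U(\eta)$, the chain remains in $U(\eta)$ for the next step with probability $1-\exp(-\Theta(n^{1/2}))$; and (iii) the mixing-time bound $\Tmix=O(n^{1/3})$.

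For (i), Lemma~\ref{combinecombine} provides exactly what is needed and then some: for some constant $L>0$ and any starting state $X_0$, there exists $T=O(n^{1/3})$ with $\Pr(\|\alphab(X_T)-\u\|_\infty\leq Ln^{-1/2})=\Theta(1)$, and since $Ln^{-1/2}<\eta$ for all sufficiently large $n$, this implies $\Pr(X_T\in U(\eta))\geq \varepsilon$ for some constant $\varepsilon>0$ independent of $n$. For (ii), I will invoke Lemma~\ref{l2}, which applies throughout the regime $B<\Bh$ and in particular at $B=\Bu$, giving $\Pr(X_{t+1}\in U(\eta)\mid X_t\in U(\eta))\geq 1-\exp(-\Theta(n^{1/2}))$ once $\eta$ is less than the $\varepsilon_0$ of that lemma (for larger $\eta$ one enlarges $U(\eta)$ only, so the bound is monotone). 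For (iii), Corollary~\ref{cor:mixingtimeBu} gives $\Tmix=O(n^{1/3})$.

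Combining these exactly as in the proof of Lemma~\ref{lem:measureconc}: from (i) and (ii) I first deduce that for every non-negative integer $j$, $\Pr(X_{(j+1)T}\in U(\eta)\mid X_{jT}\notin U(\eta))\geq \varepsilon$, so a standard geometric argument (using that once in $U(\eta)$ the chain stays there with overwhelming probability $1-\exp(-\Theta(n^{1/2}))$ per step and $T=O(n^{1/3})$ steps is polynomial) yields an integer $j=j(\varepsilon)$ such that $\Pr(X_t\in U(\eta))\geq 15/16$ for every $t\geq jT$ and every sufficiently large $n$. Then, taking $T^\star=\max\{jT,\,\Tmix\lceil\log_2 16\rceil\}=O(n^{1/3})$, the standard submultiplicativity of total variation distance gives $d_{TV}(X_{T^\star},\mu)\leq 1/16$, whence
\[
\mu(\overline{U(\eta)})\leq \Pr(X_{T^\star}\in\overline{U(\eta)})+d_{TV}(X_{T^\star},\mu)\leq \tfrac{1}{16}+\tfrac{1}{16}=\tfrac{1}{8},
\]
which is the conclusion.

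I do not expect any serious obstacle: everything I need is already proved in the paper. The only minor point requiring a bit of care is the monotonicity in $\eta$ when applying Lemma~\ref{l2} (which is stated for sufficiently small $\eta$); this is handled by noting that $U(\eta)\supseteq U(\eta')$ for $\eta\geq \eta'$, so the stability bound for some small $\eta'\leq \eta$ forces the chain to remain in the larger set $U(\eta)$ as well.
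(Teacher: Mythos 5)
Your proposal follows the same template as the paper's own proof: Lemma~\ref{combinecombine} for reaching $U(\eta)$ in $O(n^{1/3})$ steps with $\Theta(1)$ probability, Lemma~\ref{l2} for stability of $U(\eta)$, Corollary~\ref{cor:mixingtimeBu} for $\Tmix=O(n^{1/3})$, combined exactly as in Lemma~\ref{lem:measureconc}. One small remark: your monotonicity aside is phrased as if the stability bound for $U(\eta')$ ``forces the chain to remain in the larger set $U(\eta)$,'' which is not literally what Lemma~\ref{l2} gives (it conditions on $X_t\in U(\eta')$, which says nothing when $X_t\in U(\eta)\setminus U(\eta')$); the clean fix is to run the whole argument with a small $\eta'\le\eps_0$ to get $\mu(\overline{U(\eta')})\le 1/8$, and then use $\overline{U(\eta)}\subseteq\overline{U(\eta')}$ to conclude $\mu(\overline{U(\eta)})\le 1/8$ for any larger $\eta$.
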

\begin{proof}
For convenience, denote $U:=U(\eta)$. By Lemma~\ref{combinecombine}, for all sufficiently large $n$ and any starting state $X_0$, we have that for $T=O(n^{1/3})$, it holds that
\[\Pr(X_T\in U)\geq \epsilon,\]
where $\epsilon>0$ is a constant independent of $n$. It follows that for all non-negative integers $j$ it also holds that 
\[\Pr(X_{(j+1)T}\in U\mid X_{jT}\notin U)\geq \epsilon.\]
 Further, by Lemma~\ref{l2}, for integer $t\geq 0$,  it holds that 
\[\Pr(X_{t+1}\in U\mid X_{t}\in U)\geq 1-\exp(-\Omega(n^{1/3})).\]
We thus obtain that for any starting state $X_0$, for some positive integer $j=j(\epsilon)$, for all sufficiently large $n$, for all  integer $t\geq j T$, it holds that 
\begin{equation}\label{eq:XtSBu}
\Pr(X_{t}\in U)\geq 15/16.
\end{equation}

Let $T^*=\max\{j T,\, 2\Tmix\}$. By Corollary~\ref{cor:mixingtimeBu}, we have $\Tmix=O(n^{1/3})$, so $T^*=O(n^{1/3})$ as well. The same arguments as in the proof of Lemma~\ref{lem:measureconc} (cf. equation \eqref{eq:vdvdvd}) yield 
\begin{equation}\label{eq:vdvdvdBu}
\mu\big(\overline{U}\big)-\Pr\big(X_{T^{*}}\in \overline{U}\big)\leq 1/16.
\end{equation}
Combining \eqref{eq:XtSBu} and \eqref{eq:vdvdvdBu} yields $\mu\big(\overline{U}\big)\leq 1/8$, as wanted.
\end{proof}

The following lemma can be derived from Lemma~\ref{lem:simplifiedpotentialb} by suitably adapting the proof of Lemma~\ref{combinecombine}.
\begin{lemma}\label{lem:implowerboundb}
For $B= \Bu$, there exists a constant $\eta>0$  such that the following is true for all $n$. Suppose that we start at a state $X_0$ where all the vertices are assigned the color 1.  Then, for some $T=\Omega(n^{1/3})$, with probability $\geq 1/2$, it holds that $X_T\notin U(\eta)$.
\end{lemma}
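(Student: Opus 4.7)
The plan is to use the potential function $G$ from Lemma~\ref{lem:simplifiedpotentialb} as a submartingale: starting from the monochromatic configuration the initial potential is $G(1) = \Theta(n^{1/3})$, while entering $U(\eta)$ collapses it to $O(1)$, and each step of the chain can decrease $G$ by at most $\rho$ in expectation, so reaching $U(\eta)$ must take $\Omega(n^{1/3})$ steps. Fix $\epsilon>0$ small enough that Lemma~\ref{lem:simplifiedpotentialb} applies, and $\eta>0$ small enough that $1/q + \eta < 1/B$; this is possible since $\Bu < \Bh = q$ gives $1/B>1/q$. Since $G$ is increasing we have $\max G = G(1)$, so without loss of generality we may take $M_1 = M_2 =: M$ in Lemma~\ref{lem:simplifiedpotentialb}.

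Introduce the stopping times $\tau_U := \min\{t \geq 0 : X_t \in U(\eta)\}$ and $\tau_E := \min\{t \geq 0 : X_t \text{ has fewer than $q{-}1$ $\epsilon$-light colors}\}$, let $\tau := \tau_U \wedge \tau_E$, and set $Y_t := G(S_{t \wedge \tau}/n) + \rho(t \wedge \tau)$. For $t < \tau$, Lemma~\ref{lem:simplifiedpotentialb} applies to $X_t$ (which has $q{-}1$ $\epsilon$-light colors, and $S_t \geq n/q$ trivially) and yields $E[G(S_{t+1}/n) \mid X_t] \geq G(S_t/n) - \rho$; for $t \geq \tau$, $Y_{t+1} = Y_t$ deterministically. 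Hence $(Y_t)$ is a submartingale. The starting state $X_0$ has $q{-}1$ colors of density $0$ (trivially $\epsilon$-light) and $S_0 = n$, so $Y_0 = G(1) \geq M n^{1/3}$, and optional stopping gives $E[Y_T] \geq M n^{1/3}$.

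For the upper bound, the second part of Lemma~\ref{lestart2} implies the $(q{-}1)$-light property is preserved each step with probability $1 - \exp(-n^{\Omega(1)})$, so $\Pr(\tau_E \leq T) = o(1)$ for any $T = O(n^{1/3})$. On the event $\{\tau_U \leq \tau_E\} \cap \{\tau_U \leq T\}$, monotonicity of $G$ together with $S_{\tau_U}/n \leq 1/q + \eta < 1/B$ give $G(S_{\tau_U}/n) \leq G(1/B) = O(1)$, while on all other events the crude bound $Y_T \leq M n^{1/3} + \rho T$ applies. Thus
\begin{equation*}
E[Y_T] \leq M n^{1/3} \Pr(\tau_U > T) + M n^{1/3} \Pr(\tau_E \leq T) + O(1) + \rho T = M n^{1/3} \Pr(\tau_U > T) + \rho T + o(n^{1/3}).
\end{equation*}
Combining with $E[Y_T] \geq M n^{1/3}$ yields $\Pr(\tau_U > T) \geq 1 - \rho T/(M n^{1/3}) - o(1)$. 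Choosing $T := \lfloor M n^{1/3}/(4\rho)\rfloor = \Omega(n^{1/3})$ gives $\Pr(\tau_U > T) \geq 3/4 - o(1) \geq 1/2$ for $n$ large, and since $\{\tau_U > T\} \subseteq \{X_T \notin U(\eta)\}$ we conclude $\Pr(X_T \notin U(\eta)) \geq 1/2$. The main obstacle is entirely inside Lemma~\ref{lem:simplifiedpotentialb}: the potential $G$ must simultaneously be $\Theta(n^{1/3})$ at $z=1$, $O(1)$ at $z=1/B$, and satisfy the one-step drift inequality $E[G(S_{t+1}/n) \mid S_t] \geq G(S_t/n) - \rho$; given this black box, the reduction above via $\tau_U, \tau_E$ is routine.
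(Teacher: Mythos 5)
Your argument is correct and takes a genuinely different route from the paper, though both rest entirely on Lemma~\ref{lem:simplifiedpotentialb}. The paper iterates the \emph{unconditional} drift $E[G(S_{t+1}/n)] \geq E[G(S_t/n)] - 2\rho$ for $T=\Theta(n^{1/3})$ steps --- the factor $2\rho$ rather than $\rho$ absorbs the exponentially rare failure of the $(q-1)$-light condition, using that $G$ is only polynomially large --- and then derives a contradiction: if $\Pr(S_T > (1+\epsilon)n/q) < 1/2$, the expectation $E[G(S_T/n)]$ would fall short of its required size. You instead stop the chain the first time $X_t$ enters $U(\eta)$ or loses the $(q-1)$-light property, making $Y_t := G(S_{t\wedge\tau}/n) + \rho(t\wedge\tau)$ a genuine submartingale with no error terms, and you pay for the negligible failure once when bounding $E[Y_T]$ from above. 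The stopping-time framework sidesteps the paper's per-step bookkeeping around tiny failure probabilities, at the cost of a bit more machinery; both are valid. One remark worth surfacing: your ``WLOG $M_1 = M_2 =: M$'' is doing real work. Keeping them separate gives $\Pr(\tau_U > T) \geq M_2/M_1 - \rho T/(M_1 n^{1/3}) - o(1)$, which would not produce $\geq 1/2$ if $M_2/M_1 \leq 1/2$. What actually rescues the argument is precisely your observation that since $G$ is increasing, $\max G = G(1)$, so $M_1$ may be replaced by $G(1)/n^{1/3} \in [M_2, M_1]$ --- a possibly $n$-dependent but bounded quantity --- which then serves as both the lower and upper constant and gives the $3/4 - o(1)$ you need.
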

\begin{proof}
Let $M_1,M_2,\rho$ be the constants in Lemma~\ref{lem:simplifiedpotentialb} and let $T:=\left\lceil M_2n^{1/3}/(6\rho)\right\rceil$. 

Recall that $S_t$ is the size of the largest color class at time $t$. Let $\epsilon>0$ be a sufficiently small constant, to be picked later. We will prove that with probability $\geq 1/2$  it holds that 
\begin{equation}\label{eq:qazzaqwsxxsw}
\Pr\big(S_T> (1+\epsilon)n/q\big)\geq 1/2.
\end{equation}
Let $\eta:=\epsilon/q$ and note that $\eta$ is a constant. The lemma then follows by just observing that  $\Pr(X_T\notin U(\eta))\geq \Pr(S_T>(1+\epsilon)n/q)$.

We next argue that \eqref{eq:qazzaqwsxxsw} holds. To do this, we will show that for all $n$ sufficiently large, for all $t=0,\hdots,n$, it holds that
\begin{equation}\label{eq:modifiedpotential}
E[G(S_{t+1}/n)] \geq E[G(S_t/n)] - 2\rho,
\end{equation}
where $G,\rho$ are the potential function and the constant from Lemma~\ref{lem:simplifiedpotentialb}, respectively. Prior to proving \eqref{eq:modifiedpotential}, let us conclude the argument assuming~\eqref{eq:modifiedpotential}. Lemma~\ref{lem:simplifiedpotentialb} asserts that the constants $M_1,M_2$ are such that 
\begin{equation}\label{eq:G1a1a}
0\leq G(z)\leq G(1) \mbox{ for all } z\in [1/q,1],\mbox{ with } G(1)=Cn^{1/3} \mbox{ and $C$  satisfying $M_2\leq C\leq M_1$}.
\end{equation}
Applying \eqref{eq:modifiedpotential} for $t=0,\hdots,T-1$, we obtain that
\[E[G(S_{T}/n)]\geq G(S_0/n)-2\rho T.\]
Since $S_0=n$ and $G(1)= Cn^{1/3}$, it thus follows that $E[G(S_{T}/n)]\geq (2/3)Cn^{1/3}$. Let $\epsilon>0$ be such that $(1+\epsilon)/q<1/B$; such an $\epsilon$ exists since at $B=\Bu$ it holds that $1/q<1/B$. From $G(1/B)=O(1)$ and the fact that $G$ is increasing, we obtain that there exists a constant $\xi>0$ such that $G((1+\epsilon)/q)\leq \xi$. It is immediate now to conclude that with probability $\geq 1/2$ it holds that $S_T> (1+\epsilon)n/q$; otherwise, using \eqref{eq:G1a1a}, we would have that for sufficiently large $n$, it holds that $E[G(S_{T}/n)]\leq (3/5)Cn^{1/3}$, contradicting our lower bound for $E[G(S_{T}/n)]$.

Finally, we prove \eqref{eq:modifiedpotential} for $t=0,\hdots,n$. We will use Lemmas~\ref{lestart2} and~\ref{lem:simplifiedpotentialb}. Let $\epsilon>0$ be a small constant as in the statement of  Lemma~\ref{lestart2}.  Note that Lemma~\ref{lem:simplifiedpotentialb} applies whenever $X_t$ has $q-1$ $\eps$-light colors, so  we will need to account for the (small probability) event that this fails. Namely, let $\mathcal{E}_t$ denote the event that $X_t$ has $q-1$ $\eps$-light colors. Since the event $\mathcal{E}_0$ holds (by the choice of the starting state $X_0$), we have that $\bigcap^n_{t=0}\mathcal{E}_t$  holds with probability  at least $1-\exp(-n^{\Omega(1)})$ (by the second part of Lemma~\ref{lestart2}).

Let $t$ be an integer between 0 and $n$. By taking expectations in inequality \eqref{eq:potentialcopyb} of Lemma~\ref{lem:simplifiedpotentialb}, we have 
\begin{equation}\label{Gtau1b}
E\big[G(S_{t+1}/n)\mid \mathcal{E}_t\big]\geq E[G(S_t/n)\mid \mathcal{E}_t]-\rho.
\end{equation}
Since $G$ is bounded by a polynomial (cf. \eqref{eq:G1a1a}) and the probability of the event $\overline{\mathcal{E}_t}$ is exponentially small, removing the conditioning in \eqref{Gtau1b} only affects the inequality by an additive $o(1)$. This proves that \eqref{eq:modifiedpotential} holds for all sufficiently large $n$, thus concluding the proof of Lemma~\ref{combinecombine}.
\end{proof}

Using Lemmas~\ref{lem:measureconcBu} and~\ref{lem:implowerboundb}, we obtain the following corollary.
\begin{corollary}
Let $B= \Bu$. The mixing time $\Tmix$ of the Swendsen-Wang algorithm
on the complete graph on $n$ vertices satisfies $\Tmix=\Omega(n^{1/3})$.
\end{corollary}
\begin{proof}
Let $\eta$ be as in Lemma~\ref{lem:implowerboundb} and let $U:=U(\eta)$. Consider the starting state $X_0$ where all the vertices are assigned the color 1. Then, by Lemma~\ref{lem:implowerboundb}, for some $T=\Omega(n^{1/3})$ we have that 
\[\Pr\big(X_T\notin U\big)\geq 1/2.\]
On the other hand, by Lemma~\ref{lem:measureconcBu} we have that $\mu(\overline{U})\leq 1/8$. It follows that 
\[d_{TV}(X_T,\mu)=\max_{A\subseteq \Omega}|\mu(A)-\Pr(X_T\in A)|\geq \Pr(X_T\in \overline{U})-\mu(\overline{U})\geq 1/2-1/8>1/4.\]
It follows from the definition of mixing time that $\Tmix\geq T$, as claimed.
\end{proof}

\subsection{Constructing the potential function - Proof of Lemmas~\ref{lem:simplifiedpotential} and~\ref{lem:simplifiedpotentialb}}\label{sec:potfun}
In this section, we prove Lemmas~\ref{lem:simplifiedpotential} and~\ref{lem:simplifiedpotentialb}, i.e., construct the potential function $G$. We split the argument in several lemmas.

The first lemma achieves two goals: first, it quantifies the bounds that the function $G$ must satisfy so that the approximation
\begin{equation*}\tag{\ref{eq:motivation}}
E[G(S_{t+1}/n)\,|\, S_t=\zeta]\approx G(F(\zeta/n))+\frac{1}{2}Var[S_{t+1}/n\,|\, S_t=\zeta]\, G''(F(\zeta/n)),
\end{equation*}
which we described in Section~\ref{sec:upperboundBu} is valid; the bounds are given in \eqref{eq:Gders123}. Second, it gives an inequality that the function $G$ must satisfy (cf. equation \eqref{eq:QQ1}) which allows to deduce, using the approximation \eqref{eq:motivation}, the bounds on  $E[G(S_{t+1}/n)\,|\, S_t=\zeta]-G(\zeta/n)$ claimed in Lemmas~\ref{lem:simplifiedpotential} and~\ref{lem:simplifiedpotentialb} (see \eqref{eq:potential} below).
\begin{lemma}\label{lma2}
Let $\epsilon>0$. Suppose that, for all $n$ sufficiently large, $S_t$ and $S_{t+1}$ are random variables that satisfy  \eqref{eko1},\eqref{eko2},\eqref{eko3} when $\zeta\geq (1+\epsilon)n/B$.  

Let $G$ be a three-times differentiable potential function defined on the interval $[1/q,1]$ such that 
\begin{equation}\label{eq:Gders123}
\mbox{$\min_{x}G'(x)> 0$, $\max_{x} |G'(x)|=O(n^{2/3})$, $\max_{x} |G''(x)|=O(n)$, $\sup_{x} |G'''(x)| = O(n^{4/3})$}.
\end{equation} 
Further, assume that for each $x> 1/B$, it holds that
\begin{equation}\label{eq:QQ1}
\begin{aligned}
-\tau_2<G(F(x)) - G(x) + G''(F(x)) Q_1/(2n) &< -\tau_1,\\
-\tau_2<G(F(x)) - G(x) + G''(F(x)) Q_2/(2n) &< -\tau_1,
\end{aligned}
\end{equation}
where $\tau_1,\tau_2>0$ are constants (independent of $n$) and $Q_1,Q_2$ are as in \eqref{eko2}.

Then, for any $\zeta\geq (1+\epsilon) n/B$, it holds that
\begin{equation}\label{eq:potential}
G(\zeta/n) - 2\tau_2\leq E[G(S_{t+1}/n)\,|\, S_t=\zeta] \leq G(\zeta/n) - \tau_1/2.
\end{equation}
\end{lemma}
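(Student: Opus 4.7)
The plan is to prove \eqref{eq:potential} by performing a second-order Taylor expansion of $G(S_{t+1}/n)$ around the point $\bar y := F(\zeta/n)$ and then taking conditional expectations given $S_t=\zeta$. Writing $Y := S_{t+1}/n$, Taylor's theorem with the Lagrange form of the remainder gives
\[
G(Y) = G(\bar y) + G'(\bar y)(Y-\bar y) + \tfrac12 G''(\bar y)(Y-\bar y)^2 + \tfrac16 G'''(\xi)(Y-\bar y)^3,
\]
for some $\xi$ between $Y$ and $\bar y$. The first step is to show that after taking expectations, the linear and cubic terms are $o(1)$, and the quadratic term equals $\tfrac12 G''(\bar y)\,\mathrm{Var}(Y\mid S_t=\zeta) + o(1)$, so that
\[
E[G(Y)\mid S_t=\zeta] \;=\; G(\bar y) \;+\; \tfrac12 G''(\bar y)\,\mathrm{Var}(Y\mid S_t=\zeta) \;+\; o(1).
\]
Once this is established, the hypothesis \eqref{eq:QQ1} will essentially hand us the two sides of \eqref{eq:potential}.

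For the linear term, \eqref{eko1} yields $|E[Y\mid S_t=\zeta]-\bar y|=O(n^{-1+\epsilon'})$, which combined with $|G'(\bar y)|=O(n^{2/3})$ contributes $O(n^{-1/3+\epsilon'})=o(1)$ once $\epsilon'$ is chosen small. For the quadratic term, the conditional second moment decomposes as
\[
E[(Y-\bar y)^2\mid S_t=\zeta]=\mathrm{Var}(Y\mid S_t=\zeta)+(E[Y\mid S_t=\zeta]-\bar y)^2=\mathrm{Var}(Y\mid S_t=\zeta)+O(n^{-2+2\epsilon'}),
\]
and since $|G''(\bar y)|=O(n)$, the deviation from $\tfrac12 G''(\bar y)\,\mathrm{Var}(Y\mid S_t=\zeta)$ is $o(1)$. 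For the cubic remainder, setting $Z:=E[S_{t+1}\mid S_t=\zeta]$ and using the triangle inequality together with \eqref{eko3} (with $k=3$) gives
\[
E[|Y-\bar y|^3\mid S_t=\zeta]\leq 4\,E[|Y-Z/n|^3\mid S_t=\zeta]+4|Z/n-\bar y|^3=O(n^{-3/2+\epsilon'}),
\]
and multiplying by $\sup_x|G'''(x)|=O(n^{4/3})$ this contributes $O(n^{-1/6+\epsilon'})=o(1)$ for $\epsilon'<1/6$. Putting these three estimates together gives the displayed approximation.

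To extract the two inequalities in \eqref{eq:potential}, I will split on the sign of $G''(\bar y)$, since the worst-case value of the variance depends on this sign. Because $Q_2/n\leq \mathrm{Var}(Y\mid S_t=\zeta)\leq Q_1/n$ by \eqref{eko2}, when $G''(\bar y)\geq 0$ the quadratic contribution lies in $[G''(\bar y)Q_2/(2n),\,G''(\bar y)Q_1/(2n)]$, so the upper bound in \eqref{eq:potential} comes from the $Q_1$ instance of \eqref{eq:QQ1} and the lower bound from the $Q_2$ instance; when $G''(\bar y)<0$ the roles of $Q_1$ and $Q_2$ swap, but both inequalities of \eqref{eq:QQ1} remain available so the same conclusion holds. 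In either case one obtains $G(\zeta/n)-\tau_2+o(1)\leq E[G(Y)\mid S_t=\zeta]\leq G(\zeta/n)-\tau_1+o(1)$, and absorbing the $o(1)$ into the slightly weaker constants $2\tau_2$ and $\tau_1/2$ yields \eqref{eq:potential} for all $n$ sufficiently large. The main delicate point I anticipate is bookkeeping the error terms uniformly in $\zeta\geq (1+\epsilon)n/B$ and handling both signs of $G''(\bar y)$ cleanly; beyond that, the argument is a direct application of Taylor's theorem together with the moment bounds \eqref{eko1}--\eqref{eko3}.
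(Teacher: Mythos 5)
Your proof is correct and follows essentially the same approach as the paper: a second-order Taylor expansion of $G$, controlled by the moment bounds \eqref{eko1}--\eqref{eko3}, followed by an application of \eqref{eq:QQ1}. The only (immaterial) difference is that you expand around $F(\zeta/n)$ and bound the resulting linear term via \eqref{eko1}, whereas the paper expands around the conditional mean $E[S_{t+1}/n\mid S_t=\zeta]$ (so the linear term vanishes exactly) and then transfers $G$ and $G''$ from that point to $F(\zeta/n)$ using \eqref{eko1}.
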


Recall that for $B=\Bu$, the function $F(z)$  has exactly one  fixpoint in the interval $(1/B,1]$ at $z=a$.  The following lemma specifies a potential function $G$ which will be used to verify the conditions \eqref{eq:Gders123} and \eqref{eq:QQ1} in Lemma~\ref{lma2}. We have already described in Section~\ref{sec:upperboundBu}, the high-level approach for the construction of $G$. The actual definition  of $G$ is quite technical due to the requirement that $G$ should be  three times differentiable. We pulled out the important bits in the construction of $G$ that will also be relevant in verifying \eqref{eq:QQ1}.

For positive real numbers $A,B$ we will use the notation $A\gg B$ to denote that for some (large) constant $C>1$, it holds that $A>BC$.
\begin{lemma}\label{lem:potentialchoice}
Let $L,L'$ be positive constants which satisfy $L\gg L'$. There exist positive constants $M,C_0,C_1,C_2$ such that the following holds. 

For all sufficiently large $n$, there exists a strictly  increasing three-times differentiable function $G:[1/q,1]\rightarrow [0,Mn^{1/3}]$ with $G(1/q)=0$ which satisfies \eqref{eq:Gders123} and
\begin{equation}\label{eq:Gspecify}
\begin{gathered}
|G'(z)|,|G''(z)|\leq C_0  \mbox{ for } z\in [1/q,1/B],\\
G'(z) = \frac{1}{z-F(z)} \mbox{ for } z\in [1/B,a-Ln^{-1/3}] \cup [a+Ln^{-1/3}, 1],\\
G'(z)\geq C_1n^{2/3},\ \ |G''(z)|\leq (10^2C_1/L) n \mbox{ for  } z\in [a-Ln^{-1/3},a-L'n^{-1/3}],\\
G''(z)\leq -C_2n \mbox{ for } z\in [a-L'n^{-1/3},a+Ln^{-1/3}].
\end{gathered}
\end{equation}  
\end{lemma}

\begin{lemma}\label{lem:satisfyineq}
Let $L,L'$ be positive constants which satisfy $L\gg L'\gg1$. Then, there exist constants $\tau_1,\tau_2>0$, such that, for any function $G$ satisfying \eqref{eq:Gders123} and \eqref{eq:Gspecify},   inequality \eqref{eq:QQ1} holds for every $x>1/B$. 
\end{lemma}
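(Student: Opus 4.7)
The plan is a case analysis on where $x \in (1/B,1]$ sits relative to the piecewise specification of $G'$ and $G''$ in \eqref{eq:Gspecify}, while also tracking where $F(x)$ lands (since \eqref{eq:QQ1} mixes $G$ at $x$ with $G''$ at $F(x)$). Three ingredients recur throughout: (a) the Taylor expansion of $F$ around $a$ gives $x - F(x) = \tfrac{1}{2}|F''(a)|(x-a)^2 + O(|x-a|^3)$, using $F'(a)=1$ and $F''(a)<0$ from Lemma~\ref{lem:Flemma}; in particular $x - F(x) = O(n^{-2/3})$ throughout the window $[a - Ln^{-1/3}, a + Ln^{-1/3}]$, and $x - F(x) \ge \tfrac{1}{2}|F''(a)|(L')^2 n^{-2/3}(1 + o(1))$ once $|x - a| \ge L' n^{-1/3}$; (b) $F(x) < x$ for $x \neq a$, so $G(F(x)) - G(x) \le 0$ since $G$ is increasing; (c) the global bounds $|G'| = O(n^{2/3})$, $|G''| = O(n)$, $|G'''| = O(n^{4/3})$ from \eqref{eq:Gders123}. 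Split $(1/B,1]$ into the two bulk components (where $G'(z) = 1/(z-F(z))$), the outer transition $[a - Ln^{-1/3}, a - L'n^{-1/3}]$, and the inner window $[a - L'n^{-1/3}, a + Ln^{-1/3}]$, and verify \eqref{eq:QQ1} in each.

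\textbf{Bulk.} For $x \in (1/B, a - Ln^{-1/3}] \cup [a + Ln^{-1/3}, 1]$, write
\[
G(x) - G(F(x)) = \int_{F(x)}^{x} \frac{dz}{z - F(z)} + R,
\]
where $R$ accounts for the (at most $O(1)$-length) portion of the integration interval that may lie inside $[1/q, 1/B]$, on which $|G'| = O(1)$; hence $R = O(1)$. The mean value theorem applied to $F$ gives $F(x) - F(F(x)) = F'(\eta)(x - F(x))$ for some $\eta \in [F(x),x]$, so $z - F(z)$ stays within a bounded multiplicative factor of $x - F(x)$ across $[F(x),x]$, and the integral sits in a fixed compact subinterval of $(0,\infty)$, uniformly in $n$. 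For the curvature term, differentiating $1/(z-F(z))$ gives $G''(z) = -(1 - F'(z))/(z - F(z))^2$ on the bulk, yielding $|G''(z)| = O(1)$ away from $a$ and $|G''(z)| = O(n/L^3)$ near the window boundary. Hence $|G''(F(x)) Q_i/(2n)| = O(1/L^3) = o(1)$ for $i = 1,2$ once $L$ is large, and the left-hand side of \eqref{eq:QQ1} lies in a fixed negative interval bounded away from $0$ and from $-\infty$.

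\textbf{Inner window.} For $x \in [a - L'n^{-1/3}, a + Ln^{-1/3}]$ we have $|F(x) - x| = O(n^{-2/3})$. Integrating $G''(z) \le -C_2 n$ leftwards from the right-boundary value $G'(a + Ln^{-1/3}) = \Theta(n^{2/3}/L^2)$ (matched by the bulk formula) gives $|G'(z)| = O(n^{2/3})$ throughout the window, so $|G(F(x)) - G(x)| \le \sup|G'| \cdot |x - F(x)| = O(1)$. When $F(x)$ also lies in the inner window, $G''(F(x)) \le -C_2 n$ gives $G''(F(x)) Q_i/(2n) \le -C_2 Q_i/2$, so the sum in \eqref{eq:QQ1} is at most $-C_2 Q_i/2 + O(1) \le -\tau_1$ for $C_2$ large (a freedom in Lemma~\ref{lem:potentialchoice}); the lower bound $\ge -\tau_2$ follows from the global $|G''| = O(n)$ bound in \eqref{eq:Gders123}.

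\textbf{Outer transition and cross-boundary cases.} For $x \in [a - Ln^{-1/3}, a - L'n^{-1/3}]$, the bound $G'(z) \ge C_1 n^{2/3}$ combined with $x - F(x) \ge \tfrac{1}{2}|F''(a)|(L')^2 n^{-2/3}(1+o(1))$ yields $G(x) - G(F(x)) \ge \tfrac{1}{2} C_1 |F''(a)| (L')^2 (1 + o(1))$, a (large) positive constant; while $|G''(z)| \le (10^2 C_1/L) n$ forces $|G''(F(x)) Q_i/(2n)| = O(C_1/L) = o(1)$ for $L \gg L'$. Residual subcases arise when $F(x)$ lies in a different piece than $x$ --- for instance, $x$ at the left edge of the inner window may have $F(x)$ slip into the outer transition, or $x$ slightly above $1/B$ may have $F(x) \in [1/q, 1/B]$; in each such case, combining the piecewise estimates across the relevant boundary still yields the required bounds. \emph{The main technical obstacle} is precisely this cross-region bookkeeping: controlling $G(F(x)) - G(x)$ when $[F(x),x]$ straddles a boundary of the piecewise specification, and then selecting $\tau_1, \tau_2$ uniformly across all cases. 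The ordering $L \gg L' \gg 1$ supplies the slack needed to absorb all transition errors into the $o(1)$ corrections and close the argument.
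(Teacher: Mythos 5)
Your proposal takes essentially the same route as the paper's proof: the same decomposition of $(1/B,1]$ into a bulk region, an outer transition $[a-Ln^{-1/3},a-L'n^{-1/3}]$, and an inner window $[a-L'n^{-1/3},a+Ln^{-1/3}]$; the same Taylor expansion of $F$ around $a$ (the paper's equations \eqref{eq:expansioninequality}, \eqref{eq:xFxxFx}); and the same use of the hierarchy $L\gg L'\gg 1$ to make the negative drift term dominate the curvature term. The paper writes the bulk and transition estimates via the mean value theorem ($G(F(x))-G(x)=G'(\xi)(F(x)-x)$ with $\xi$ between $F(x)$ and $x$) and a parametrization $x=a+Kn^{-1/3}$ with explicit constants; you use the equivalent integral form of $G'$ and $O(\cdot)$ bookkeeping. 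These are presentation differences, not substantive ones.

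Two small slips worth fixing. First, in the inner window you argue ``$-C_2Q_i/2+O(1)\le-\tau_1$ for $C_2$ large (a freedom in Lemma~\ref{lem:potentialchoice})'' --- but $C_2$ is not a tunable parameter there; it is a function of $L$. The clean way to close the case (which you in fact list as ingredient (b) at the start but then do not use here) is the paper's: since $F(x)<x$ and $G$ is increasing, $G(F(x))-G(x)\le 0$, so the upper bound in \eqref{eq:QQ1} is $\le G''(F(x))Q_2/(2n)\le -C_2Q_2/2$ with no dependence on the size of $C_2$; the two-sided $O(1)$ magnitude estimate is then only needed for the $\ge -\tau_2$ direction. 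Second, the claim in the bulk that ``the integral sits in a fixed compact subinterval of $(0,\infty)$'' fails as $x\downarrow 1/B$: there $F(x)<1/B$, so the portion of the integral over $I_1=[1/B,x]$ shrinks to zero, and the positive lower bound on $G(x)-G(F(x))$ must come from your remainder $R=\int_{F(x)}^{1/B}G'(z)\,dz$. That needs a uniform (in $n$) positive lower bound on $G'$ over $I_0=[1/q,1/B]$; this holds because the construction chooses an $n$-independent $G_0$ on $I_0$, but it is not a consequence of the stated constraints $|G'|,|G''|\le C_0$ alone. (The paper's ``easy regime'' argument is similarly terse on this point.) Neither slip changes the outcome, but the first one should be corrected as stated.
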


The following lemma will be useful throughout the rest of this section.
\begin{lemma}\label{lem:Flemma}
Let $B=\Bu$. Then  it holds that
\begin{enumerate}
\item $F'(z)=1$ iff $z=a$.
\item $F''(z)< 0$ for all $z\in(1/B,1]$.  
\item $F(z)\leq z$ for all $z\in[1/B,1]$ with equality iff $z=a$.
\end{enumerate}
\end{lemma}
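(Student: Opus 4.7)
The plan is to reduce all three conclusions to properties of $F$ that have already been established in the earlier portion of the paper, leveraging the strict concavity of $F$ together with the identification $F'(a)=1$ that is specific to $B=\Bu$.

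Part 2 is immediate: it is exactly the conclusion $F''(z)<0$ on $(1/B,1]$ established in Lemma~\ref{lem:Fshape}, which holds for every $B>0$ and in particular at $B=\Bu$. For Part 1, I would argue that because $F''<0$ on $(1/B,1]$ by Part 2, the map $F'$ is strictly decreasing on this interval, so the equation $F'(z)=1$ has at most one solution there. Since $F'(a)=1$ by Lemma~\ref{lem:notJacobianuniqueness} (which is exactly the statement at $B=\Bu$), that unique solution is $z=a$. To cover the remainder of the domain, note that the definition of $F$ via \eqref{defa}--\eqref{dexa} gives $F\equiv 1/q$ on $[1/q,1/B]$ (since $x=0$ there), so $F'\equiv 0\ne 1$ on that subinterval; this was also used in the proof of Lemma~\ref{lem:jacobianuniform}.

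For Part 3, I would introduce the auxiliary function $h(z):=F(z)-z$ on $[1/B,1]$ and exploit the same two ingredients. On $(1/B,1]$ one has $h''(z)=F''(z)<0$ by Part 2, so $h$ is strictly concave, and $h'(a)=F'(a)-1=0$ by Part 1, so $a$ is the unique critical point and hence the unique global maximizer of $h$ on $(1/B,1]$. Since $a$ is a fixpoint, $h(a)=0$, and strict concavity then yields $h(z)<0$ for all $z\in(1/B,1]\setminus\{a\}$. It remains only to handle the boundary point $z=1/B$: by the definition of $F$ we have $F(1/B)=1/q$, and since $B=\Bu<\Bh=q$ we have $1/q<1/B$, giving $F(1/B)<1/B$. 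Combining these two cases proves $F(z)\le z$ on $[1/B,1]$ with equality iff $z=a$.

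I do not anticipate any real obstacle; all the heavy lifting (strict concavity of $F$ in Lemma~\ref{lem:Fshape}, the critical-slope identity $F'(a)=1$ at $B=\Bu$ in Lemma~\ref{lem:notJacobianuniqueness}, and the inequality $\Bu<\Bh=q$ that handles the boundary point) has already been done. The only thing worth being careful about is stating the domain of each conclusion correctly, namely that Parts 1 and 2 are genuinely about $(1/B,1]$ (with the flat-region statement tacked on for Part 1), while Part 3 is stated on the closed interval $[1/B,1]$ and therefore requires the separate boundary computation at $z=1/B$.
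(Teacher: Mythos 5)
Your proof is correct and takes essentially the same route as the paper: Part 2 is Lemma~\ref{lem:Fshape}, Part 1 combines $F'(a)=1$ from Lemma~\ref{lem:notJacobianuniqueness} with strict monotonicity of $F'$ (coming from $F''<0$), and Part 3 is the concavity/convexity argument applied to $F(z)-z$. The only difference is that you are slightly more explicit than the paper — spelling out the uniqueness step in Part 1 and handling the endpoint $z=1/B$ separately in Part 3 (the paper folds the latter into the claim that $z-F(z)$ is convex on the closed interval $[1/B,1]$, which holds by continuity even though $F$ is not differentiable at $1/B$).
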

\begin{proof}
The proofs for the first two parts are given in Lemmas~\ref{lem:notJacobianuniqueness} and~\ref{lem:Fshape}, respectively.  For the third part, note that the function $z-F(z)$ is convex in $[1/B,1]$ and has a unique critical point at $z=a$. Thus, $z-F(z)\geq a-F(a)=0$ with equality if $z=a$.
\end{proof}

We are now ready to prove Lemmas~\ref{lem:simplifiedpotential} and~\ref{lem:simplifiedpotentialb} (assuming Lemmas~\ref{lma2},~\ref{lem:potentialchoice} and~\ref{lem:satisfyineq}). 
\begin{proof}[Proof of Lemma~\ref{lem:simplifiedpotential}]
Let $L,L'$ be positive constants satisfying $L\gg L'\gg 1$. By Lemmas~\ref{lem:potentialchoice} and~\ref{lem:satisfyineq}, there exist constants $M,\tau_1,\tau_2>0$ such that for all sufficiently large $n$  there exists a three-times differentiable function $G:[1/q,1/B]\rightarrow [0,Mn^{1/3}]$ which satisfies both \eqref{eq:Gders123} and \eqref{eq:QQ1}. Note that \eqref{eq:Gders123} guarantees that $G$ is increasing. Further, by Lemma~\ref{lem:potentialchoice}, it holds that $G(1/q)=0$ and $\max_{z\in [1/q,1/B]}G'(z)\leq C_0$ where $C_0$ is a constant. We will use this function $G$ to prove Lemma~\ref{lem:simplifiedpotential} with $M_1=M$, $M_2=C_0$ and $\tau=\tau_1/2$.

Let $\epsilon>0$ be a sufficiently small constant and suppose that  $X_t$ has $(q-1)$ colors which are $\epsilon$-light. Recall that $S_t$ is the size of the largest color class in $X_t$. By Lemma~\ref{lestart}, we have that for all sufficiently large $n$, for all $\zeta\geq (1+\epsilon)n/B$, the random variables $S_t,S_{t+1}$ satisfy \eqref{eko1},\eqref{eko2},\eqref{eko3}. It follows by Lemma~\ref{lma2} that
\[E[G(S_{t+1}/n)\,|\, S_t=\zeta] \leq G(\zeta/n) - \tau_1/2.\]
This completes the verification of  all the conditions that  $G$ must satisfy, concluding the proof of the lemma.
\end{proof}

\begin{proof}[Proof of Lemma~\ref{lem:simplifiedpotentialb}]
We begin by specifying some constants. Let $\epsilon_0>0$ be a constant such that $(1+\epsilon_0)/B<a$ and let 
\begin{equation}\label{eq:defW0W0}
W_0:=\min_{z\in [1/B,(1+\epsilon_0)/B]} \{z-F(z)\}.
\end{equation}
By Lemma~\ref{lem:Flemma} and the choice of $\epsilon_0$, we have that $W_0>0$.

Consider positive constants $L,L'$ satisfying $L\gg L'\gg 1$. By Lemmas~\ref{lem:potentialchoice} and~\ref{lem:satisfyineq}, there exist constants $M,\tau_1,\tau_2>0$ such that for all sufficiently large $n$  there exists a function $G:[1/q,1/B]\rightarrow [0,Mn^{1/3}]$ which satisfies all of \eqref{eq:Gders123},  \eqref{eq:QQ1} and \eqref{eq:Gspecify}. Further, by Lemma~\ref{lem:potentialchoice}, there exist positive constants $C_0,C_1$ such that 
\[G(1/q)=0, \quad \max_{z\in [1/q,1/B]}G'(z)\leq C_0, \quad \min_{z\in [a-Ln^{-1/3},a-L'n^{-1/3}]}G'(z)\geq C_1 n^{2/3}.\]
Therefore, $G(1/B)=O(1)$ and $G(1)\geq C_3 (L-L') n^{1/3}$ (for the latter we also need that $G$ is increasing which is guaranteed from \eqref{eq:Gders123}). We will also need a bound on the variation of $G$ on the interval $[1/q,(1+\epsilon)/B]$.  Using \eqref{eq:Gspecify} and \eqref{eq:defW0W0}, we have that $\max_{z\in [1/B,(1+\epsilon_0)/B]}G'(z)\leq 1/W_0$. It follows that for $\eta_0:=\max\{C_0,1/W_0\}$, it holds that $G'(z)\leq \eta_0$  for all $z\in [1/q,(1+\epsilon_0)/B]$ and thus there exists a constant $\eta>0$ such that 
\begin{equation}\label{eq:Gz1z2lip}
|G(z_1)-G(z_2)|\leq \eta \mbox{ for all } z_1,z_2\in [1/q,(1+\epsilon_0)/B].
\end{equation}
We will use $G$ to prove Lemma~\ref{lem:simplifiedpotentialb} with $M_1=M$, $M_2=C_3(L-L')$ and $\rho=2\tau_2+2\eta$.

Let $\epsilon>0$ be a sufficiently small constant and $n$ be sufficiently large. Suppose that $X_t$ has $(q-1)$ colors which are $\epsilon$-light.  Recall that $S_t$ is the size of the largest color class in $X_t$ and suppose that $S_t=\zeta$ where $\zeta\geq n/q$. We will split the proof into cases depending on whether $\zeta\geq (1+\epsilon)n/B$.

Consider first the case where $\zeta\geq (1+\epsilon)n/B$. By Lemma~\ref{lestart}, we have that the random variables $S_t,S_{t+1}$ satisfy \eqref{eko1},\eqref{eko2},\eqref{eko3}. It follows by Lemma~\ref{lma2} that
\[E[G(S_{t+1}/n)\,|\, S_t=\zeta] \geq G(\zeta/n) - 2\tau_2.\]

Consider now the case where $\zeta\leq (1+\epsilon)n/B$ so that $S_t\leq (1+\epsilon)n/B$. Let $\mathcal{E}_t$ be the event that $S_{t+1}\leq (1+\epsilon)n/B$. By Lemma~\ref{lem:probdeviation}, we have that $\Pr(\mathcal{E}_t)=1-\exp(-n^{\Omega(1)})$. Also, using \eqref{eq:Gz1z2lip}, we have that 
\begin{equation}\label{eq:StponeSt}
E[G(S_{t+1}/n)\,|\, S_t=\zeta, \mathcal{E}_t ] \geq G(\zeta/n)-\eta.
\end{equation}
Recall that $G$ is non-negative with values that  are polynomially bounded. Since $\mathcal{E}_t$ holds with exponentially large probability, it follows that  removing the conditioning on the event $\mathcal{E}_t$ in \eqref{eq:StponeSt} only affects the inequality by $o(1)$. Hence, for all sufficiently large $n$, it holds that
\[E[G(S_{t+1}/n)\,|\, S_t=\zeta ] \geq G(\zeta/n)-2\eta.\]
This completes the verification of  all the conditions that  $G$ must satisfy, concluding the proof of the lemma.
\end{proof}

\begin{proof}[Proof of Lemma~\ref{lma2}]
Let $x=\zeta/n$ and $y=E[S_{t+1}/n\,|\, S_t = \zeta]$. Let
$Z = S_{t+1}/n - y$. Note that $Z$ is a random variable, $E[Z\, |\, S_t = \zeta]=0$,
and by Lemma~\ref{lestart},
\[Q_1/n \leq Var[Z\, |\, S_t=\zeta]=E[Z^2\, |\, S_t=\zeta]\leq Q_2/n.\]
By Taylor's expansion, we have
\begin{equation}\label{oooopp}
G(y+Z) = G(y) + G'(y)Z + G''(y)\frac{Z^2}{2} +
G'''(\rho)\frac{Z^3}{6},
\end{equation}
for some $\rho$ which lies between $y$ and $y+Z$ (note that $\rho$ is also a random variable).

From inequality~\eqref{eko3} of Lemma~\ref{lestart} we have for all sufficiently small $\epsilon'>0$
$$
 E[|Z|^3\, |\, S_t = \zeta] \leq K n^{-3/2+\epsilon'}.
$$
Taking expectations of~\eqref{oooopp} we obtain
\begin{equation}\label{eq:expansion}
\begin{split}
E[G(S_{t+1}/n)\,|\, S_t=\zeta] =
E[G(y+Z)\,|\, S_t=\zeta] =
G(y) + G''(y)\frac{E[Z^2\,|\, S_t=\zeta]}{2} +
C,
\end{split}
\end{equation}
where $|C|\leq K n^{-3/2+\epsilon'}\sup_x |G'''(x)|=o(1)$ since $\sup_x |G'''(x)|=O(n^{4/3})$. 
Using~\eqref{eko1} of Lemma~\ref{lestart} (for $\epsilon'=1/10$), we have
$$
|G(y)-G(F(x))|\leq \frac{n^{1/10}}{n}\sup_x |G'(x)|
\quad\mbox{and}\quad
|G''(y) - G''(F(x))| \leq \frac{n^{1/10}}{n}\sup_x |G'''(x)|.
$$
Plugging these estimates in \eqref{eq:expansion} we obtain
\begin{equation}\label{eq:expansiontwo}
\begin{split}
\left| E[G(S_{t+1}/n)\,|\, S_t=\zeta]-\left(G(F(x))+G''(F(x))\frac{E[Z^2\, |\, S_t=\zeta]}{2}\right)\right|\leq R,
\end{split}
\end{equation}
where $R$ is an error term satisfying 
\[|R|\leq \frac{n^{1/10}}{n}\sup_x |G'(x)|+\frac{n^{1/10}}{n}\sup_x |G'''(x)|\frac{E[Z^2\, |\, S_t=\zeta]}{2}+C.\] From $\sup_x |G'(x)|=O(n^{2/3})$,  $\sup_x |G'''(x)|=O(n^{4/3})$ and  $E[Z^2\, |\, X_t=\zeta]\leq Q_2/n$, we obtain that $|R|=o(1)$.

It thus follows from \eqref{eq:expansiontwo} that
\begin{equation}\label{eq:expansiontwob1}
E[G(S_{t+1}/n)\,|\, S_t=\zeta]-G(\zeta/n)=G(F(x))-G(x)+G''(F(x))\frac{E[Z^2\, |\, S_t=\zeta]}{2}+o(1).
\end{equation}
We also have that 
\begin{align}
G(F(x))-G(x)+G''(F(x))\frac{E[Z^2\, |\, S_t=\zeta]}{2}&\leq G(F(x))-G(x)+ \frac{\max\{Q_1G''(F(x)),Q_2G''(F(x))\}}{2n}\notag\\
&\leq -\tau_1\label{eq:tgbnm1}
\end{align}
where in the first inequality we used that $Q_1/n\leq E[Z^2\, |\, S_t=\zeta]\leq Q_2/n$ (note that both estimates are needed since we do not know the sign of $G''$) and in the second inequality we used \eqref{eq:QQ1}. Analogously, one has 
\begin{align}
G(F(x))-G(x)+G''(F(x))\frac{E[Z^2\, |\, S_t=\zeta]}{2}&\geq G(F(x))-G(x)+ \frac{\min\{Q_1G''(F(x)),Q_2G''(F(x))\}}{2n}\notag\\
&\geq -\tau_2\label{eq:tgbnm2}
\end{align}
Combining \eqref{eq:expansiontwob1}, \eqref{eq:tgbnm1} and \eqref{eq:tgbnm2}, it follows that for all sufficiently large $n$ it holds that 
\begin{equation*}
-2\tau_2\leq -\tau_2+o(1)\leq E[G(S_{t+1}/n)\,|\, S_t=\zeta]-G(\zeta/n)\leq -\tau_1+o(1)\leq -\tau_1/2
\end{equation*}
This proves that \eqref{eq:potential} holds, as wanted.
\end{proof}

We next prove Lemmas~\ref{lem:potentialchoice} and~\ref{lem:satisfyineq}. It is more instructive to use Lemma~\ref{lem:potentialchoice} as a black box for now and prove Lemma~\ref{lem:satisfyineq} first. 
\begin{proof}[Proof of Lemma~\ref{lem:satisfyineq}]
Let $L\gg L'\gg1$ be constants and $n$ be large. Let
\[z_-:=a-Ln^{-1/3},\quad z_{-}':=a-L'n^{-1/3}, \quad  z_+:=a+Ln^{-1/3},\]
and consider the intervals
\begin{equation*}
I_0=[1/q,1/B], \, \, \, I_1=[1/B,z_-],\, \, \, I_2=[z_-,z_{-}'], \, \, \, I_3=[z_{-}',z_+], \, \, \, I_4=[z_+,1].
\end{equation*}
Let $G$ be a function defined on the interval $[1/q,1]$ that satisfies  \eqref{eq:Gders123} and \eqref{eq:Gspecify}, i.e.,
\begin{equation*}\tag{\ref{eq:Gders123}}
\mbox{$\min_{x}G'(x)> 0$, $\max_{x} |G'(x)|=O(n^{2/3})$, $\max_{x} |G''(x)|=O(n)$, $\sup_{x} |G'''(x)| = O(n^{4/3})$},
\end{equation*} 
and 
\begin{equation*}\tag{\ref{eq:Gspecify}}
\begin{gathered}
|G'(z)|,|G''(z)|\leq C_0 \mbox{ for } z\in I_0,\\
G'(z) = \frac{1}{z-F(z)} \mbox{ for } z\in I_1 \cup I_4,\\
G'(z)\geq C_1n^{2/3},\ \ |G''(z)|\leq (10^2C_1/L) n \mbox{ for  } z\in I_2,\\
G''(z)\leq -C_2n \mbox{ for } z\in I_3,
\end{gathered}
\end{equation*}
where recall that $C_0,C_1,C_2$ are positive constants.

Our goal is to show that there exist constants $\tau_1,\tau_2>0$, so  that for all $x\in(1/B,1]$ it holds that 
\begin{equation*}\tag{\ref{eq:QQ1}}
\begin{aligned}
-\tau_2<G(F(x)) - G(x) + G''(F(x)) Q_1/(2n) &< -\tau_1,\\
-\tau_2<G(F(x)) - G(x) + G''(F(x)) Q_2/(2n) &< -\tau_1,
\end{aligned}
\end{equation*}
where $Q_1,Q_2$ are positive constants satisfying $Q_2\leq Q_1$.

We first show the inequality \eqref{eq:QQ1} in the easier regime where $x\in (1/B,1]$ and $x\notin (a-\epsilon,a+\epsilon)$, for any arbitrarily small constant $\epsilon>0$ when $n$ is sufficiently large. By Lemma~\ref{lem:Flemma}, for all $x\neq a$ such that $x\in I_1\cup I_4$ it holds that $F(x)<x$. Let $\rho:=F(\alpha+\epsilon)-\alpha$, so that $\rho\in(0,\epsilon)$. Set
\[W_1:=\min_{x\in (1/B,1],\ x\notin(a-\rho,a+\rho)}\{x-F(x)\}, \quad W_2:=\max_{x\in (1/B,1],\ x\notin(a-\rho,a+\rho)}\{x-F(x)\}.\] 
Since $x-F(x)$ is continuous, we obtain that $W_1,W_2>0$. By taking $n$ sufficiently large, we obtain that any $x\in (1/B,1]$ such that $x\notin (a-\rho,a+\rho)$ belongs to $I_1\cup I_4$ and therefore,  from \eqref{eq:Gspecify}, $G'(x)$ is upper and lower bounded by the absolute constants $1/W_1$ and $1/W_2$ for all $x\notin (a-\rho,a+\rho)$. Hence, there exist constants $W_1',W_2'>0$ such that for all $x\notin (a-\epsilon,a+\epsilon)$, it holds that
\[-W_2'\leq G(F(x))-G(x)\leq -W'_1.\] 
A similar argument shows that $|G''(x)|$ is bounded by a constant for all $x\in [1/q,1]$ with $x\notin(a-\rho,a+\rho)$. It follows that for all $x\in (1/B,1]$ and $x\notin (a-\epsilon,a+\epsilon)$ it holds that
\begin{equation}\label{eq:baza0}
\begin{aligned}
\max_{i\in \{1,2\}}G(F(x))-G(x)+G''(F(x))Q_i/(2n)&\leq -W'_1+o(1),\\
\min_{i\in \{1,2\}}G(F(x))-G(x)+G''(F(x))Q_i/(2n)&\geq -W_2'+o(1).
\end{aligned}
\end{equation} 
This proves \eqref{eq:QQ1} when $x\notin(a-\epsilon,a+\epsilon)$.

We next prove \eqref{eq:QQ1} when $x\in(a-\epsilon,a+\epsilon)$ for some appropriate constant $\epsilon>0$ to be specified next.  Let $c=-F''(a)/2$ and note that $c>0$ by Lemma~\ref{lem:Flemma}. Using again Lemma~\ref{lem:Flemma} and Taylor's Theorem, there exists $\epsilon''>0$ such that for all $z\in(a-\epsilon'',a+\epsilon'')$, it holds that
\begin{equation*}
F(z)=z-c(z-a)^2+O((z-a)^3).
\end{equation*}
Hence, there exists $\epsilon'>0$ so that for all $z\in (a-\epsilon',a+\epsilon')$ it holds that
\begin{equation}\label{eq:expansioninequality}
\begin{gathered}
\frac{1}{2}c (z-a)^2\leq z-F(z)\leq 2c (z-a)^2,\\
c|z-a|\leq |F'(z)-1|\leq  4c|z-a|.
\end{gathered}
\end{equation}
Let $\epsilon>0$ be a small constant such that $\epsilon+ 2c\epsilon^2<\epsilon'$ and $4c\epsilon, 4c\epsilon^2<1/8$. With this choice of $\epsilon$, we will be able to use the expansion of $F(z)$ around $z=a$. Before we proceed, we give a few intermediate inequalities that  will be later used to establish the desired inequalities in \eqref{eq:QQ1}.

For $x\in (a-\epsilon,a+\epsilon)$, we will use the parametrization $x=a+Kn^{-1/3}$ so that $|K|\leq \epsilon n^{1/3}$. From \eqref{eq:expansioninequality}, we have that  
\begin{equation}\label{eq:xFxxFx}
\frac{1}{2}cK^2n^{-2/3}\leq x-F(x)\leq 2cK^2n^{-2/3}.
\end{equation} 
By the Mean Value Theorem, we also have that there exists $\xi\in (F(x),x)$ such that 
\begin{equation}\label{eq:GFxGFx}
G(F(x))-G(x)=G'(\xi)(F(x)-x).
\end{equation}
Since $\xi\in (F(x),x)$, we have by \eqref{eq:xFxxFx} that $\xi=x-\kappa cK^2n^{-2/3}$ for some $1/2\leq \kappa\leq 2$. By the choice of $\epsilon$, it follows that $\xi\in (a-\epsilon',a+\epsilon')$ and hence, using \eqref{eq:expansioninequality}, we obtain $\frac{1}{2}c (\xi-a)^2\leq \xi-F(\xi)\leq 2c (\xi-a)^2$. Note that $\xi=a+Kn^{-1/3}-\kappa cK^2n^{-2/3}$, so using that $4c\epsilon, 4c\epsilon^2<1/8$, we obtain 
\begin{equation}\label{ineq:xi}
\frac{1}{4}c K^2n^{-2/3}\leq \xi-F(\xi)\leq 4c K^2n^{-2/3}.
\end{equation}
Finally, for the lower bounds we will use sometimes the following immediate consequences of \eqref{eq:Gders123}: there exist constants $C_1',C_2'>0$ such that for all $x\in[1/q,1]$ it holds that 
\begin{equation}\label{ineq:xi2}
\begin{gathered}
0\leq G'(x)\leq C_1'n^{2/3},\qquad |G''(x)|\leq C_2' n.
\end{gathered}
\end{equation} 

We are now ready to give the proof of \eqref{eq:QQ1} for $x\in (a-\epsilon,a+\epsilon)$. The proof splits into cases depending on the value of $K$ in the parametrization $x=a+Kn^{-1/3}$. \vskip 0.2cm

\noindent \textbf{Case I.} $K\leq -L$ or $K\geq L$. We  will do the case $K\leq -L$, the proof for $K\geq L$ is analogous. For $K\leq -L$, we have that $x\in I_1$. From \eqref{eq:xFxxFx}, we  also have $F(x)\in I_1$. In fact, our choice of $\epsilon$ guarantees that $F(x)\in (a-\epsilon',a+\epsilon')$, where recall that $\epsilon'$ is as in \eqref{eq:expansioninequality}.

For $z\in (a-\epsilon',a+\epsilon')$, we have $G'(z)=1/(z-F(z))$ and thus $G''(z)=\frac{F'(z)-1}{(z-F(z))^2}$. From \eqref{eq:expansioninequality}, we thus obtain that $|G''(z)|\leq \frac{16}{c|z-a|^3}$. Applying this for $z=F(x)$ and observing that $F(x)-a\leq -Ln^{-1/3}$, we obtain 
\[|G''(F(x))|\leq \frac{16n}{cL^3}, \]
so that 
\[\max_{i\in\{1,2\}}|G''(F(x))Q_i/(2n)|\leq \frac{8Q_1}{cL^3}.\] 
Let $\xi$ be as in \eqref{eq:GFxGFx}. Since $\xi\in(F(x),x)$, we have from \eqref{eq:Gspecify} that $G'(\xi)=1/(\xi-F(\xi))$. From \eqref{ineq:xi}, we have $1/(4cK^2 n^{-2/3})\leq G'(\xi)\leq 4/(cK^2 n^{-2/3})$. It follows from \eqref{eq:xFxxFx} and \eqref{eq:GFxGFx} that 
\[-8\leq G(F(x))-G(x)\leq -1/8.\]
Combining the above estimates, we can conclude that 
\begin{equation}\label{eq:baza4}
\begin{aligned}
\max_{i\in \{1,2\}}G(F(x))-G(x)+G''(F(x))Q_i/(2n)&\leq -\frac{1}{8}+\frac{8Q_1}{cL^3},\\
\min_{i\in \{1,2\}}G(F(x))-G(x)+G''(F(x))Q_i/(2n)&\geq -8-\frac{8Q_1}{cL^3}.
\end{aligned}
\end{equation}
Since we can choose $L$ to be an arbitrarily large constant, we can make the right-side quantities in \eqref{eq:baza4} to be negative constants, as needed. This completes the proof for  Case I. \vskip 0.2cm

\noindent \textbf{Case II. } $-L\leq K\leq -L'$. In this case, we have $x\in I_2$. It follows from \eqref{eq:xFxxFx} that 
\begin{equation}\label{eq:xinCaseI}
-\frac{1}{2}c(L')^2n^{-2/3}\geq F(x)-x\geq -2c L^2n^{-2/3}.
\end{equation} 
Since $x\in I_2$, from \eqref{eq:xinCaseI}, for all sufficiently large $n$, we clearly have that either $F(x)\in I_2$ or $F(x)\in I_1$.

Suppose first that $F(x)\in I_2$. From \eqref{eq:Gspecify} we have that $|G''(F(x))|\leq (10^2C_1/L) n$, so that
\[\max_{i\in\{1,2\}}|G''(F(x))Q_i/n|\leq 10^2C_1Q_1/L.\]
Since  $F(x)\in I_2$ and $\xi\in (F(x),x)$, we have $\xi\in I_2$ as well, so from \eqref{eq:Gspecify} we have $G'(\xi)\geq C_1n^{2/3}$. We also have from \eqref{ineq:xi2} that $G'(\xi)\leq C_1'n^{2/3}$. Thus, together with \eqref{eq:GFxGFx} and \eqref{eq:xinCaseI}, we obtain
\begin{equation}\label{eq:C1C1pF}
-2cL^2 C_1'\leq G(F(x))-G(x)\leq  -\frac{1}{2}c(L')^2C_1.
\end{equation}
It follows that 
\begin{equation}\label{eq:baza}
\begin{aligned}
\max_{i\in \{1,2\}}G(F(x))-G(x)+G''(F(x))Q_i/(2n)&\leq -C_1\Big(\frac{1}{2}c(L')^2-10^2Q_1/L\Big),\\
\min_{i\in \{1,2\}}G(F(x))-G(x)+G''(F(x))Q_i/(2n)&\geq -\Big(2cC_1' L^2+10^2C_1Q_1/L\Big).
\end{aligned}
\end{equation}

Suppose next that $F(x)\in I_1$ so that $a-Ln^{-1/3}\geq F(x)$. From the lower bound in \eqref{eq:xinCaseI}, we obtain $F(x)\geq a-Ln^{-1/3}-2cL^2n^{-2/3}$. Using that $\sup_{x} |G'''(x)| = O(n^{4/3})$ from \eqref{eq:Gders123}, it thus follows that $\big||G''(F(x))|- |G''(a-Ln^{-1/3})|\big|=O(n^{2/3})$. Moreover, using that $\max_{x} |G''(x)| = O(n)$ from \eqref{eq:Gders123} and $\xi\geq F(x)$, we see that $\big||G'(\xi)|-|G'(a-Ln^{-1/3})|\big|=O(n^{1/3})$. Combining these estimates yields again \eqref{eq:baza} (up to a $o(1)$ term which can be ignored for large $n$).

Since we can choose $L$ to be an arbitrarily large constant, we can make the right-side quantities in \eqref{eq:baza} to be negative constants, as needed. This completes the proof for  Case II.  \vskip 0.2cm

\noindent \textbf{Case III.} $-L'\leq K\leq L$. In this case, we have $x\in I_3$. Observe that \eqref{eq:xinCaseI} holds in this case as well, so for all sufficiently large $n$, we have that either $F(x)\in I_2$ or $F(x)\in I_3$. 

If $F(x)\in I_3$ then from \eqref{eq:Gspecify} and \eqref{ineq:xi2}, we have $-C_2'n\leq G''(F(x))\leq -C_2n$.  Since $G$ is increasing and $F(x)\leq x$, we trivially have $G(F(x))-G(x)\leq 0$. The lower bound on $G(F(x))-G(x)$ from \eqref{eq:C1C1pF} is valid in this case as well (since both \eqref{ineq:xi2} and \eqref{eq:xinCaseI}  hold), so we obtain
\[-2cL^2 C_1'\leq G(F(x))-G(x)\leq 0.\] 
It follows that
\begin{equation}\label{eq:baza3}
\begin{aligned}
\max_{i\in \{1,2\}}G(F(x))-G(x)+G''(F(x))Q_i/(2n)&\leq -C_2Q_2,\\
\min_{i\in \{1,2\}}G(F(x))-G(x)+G''(F(x))Q_i/(2n)&\geq -2cL^2 C_1'-C_2'Q_1.
\end{aligned}
\end{equation}

If $F(x)\in I_2$ then $a-L'n^{-1/3}\geq F(x)$. From \eqref{eq:xinCaseI} we obtain $F(x)\geq a-L'n^{-1/3}-2cL^2n^{-2/3}$. It follows that $|G(F(x))-G(a-L'n^{-1/3})|=o(1)$ and $\big||G''(F(x))|- |G''(a-L'n^{-1/3})|\big|=O(n^{2/3})$, yielding again \eqref{eq:baza3} (up to a $o(1)$ term which can be ignored for large $n$).

The right-side quantities in \eqref{eq:baza3} are negative constants, as needed. This completes the proof for  Case III.  \vskip 0.2cm

We have shown that \eqref{eq:QQ1} holds for all $x\in(1/B,1]$, thus finishing  the proof of Lemma~\ref{lem:satisfyineq}.
\end{proof}

We conclude by giving the  proof of Lemma~\ref{lem:potentialchoice}.
\begin{proof}[Proof of Lemma~\ref{lem:potentialchoice}]
Let $L,L'$ be positive constants satisfying $L\gg L'$. To keep better track of the various subintervals involved in the construction of the potential function $G$, define $L_-,L_+,L_m$ by setting $L_-=L_+=L$ and $L_m=L'$ and note that $L_+,L_-\gg L_m$. Further, set 
\[z_-:=a-L_-n^{-1/3},\quad z_{m}:=a-L_mn^{-1/3}, \quad  z_+:=a+L_+n^{-1/3}.\]
The function $G$ will be more complicated to construct in an interval around $a$. To help the reader keep track of the notation, we note that $z_-$ refers to the left-most point of the interval around $a$ that will be interesting, while $z_+,z_m$ to the right-most and ``middle" points of the interval, respectively.

We will define piecewise the function $G(z)$  in the intervals
\[I_0=[1/q,1/B],\, \, \, I_1=[1/B,z_-],\, \, \, I_2=[z_-,z_m], \, \, \, I_3=[z_m,z_+], \, \, \, I_4=[z_+,1].\] 
Specifically, for $j\in \{0,1,2,3,4\}$, let $G_j(z)$ be a strictly increasing three-times differentiable function defined on the interval $I_j$ which satisfies 
\begin{equation}\label{eq:iuyiuy}
\sup_{z\in I_j}|G_j(z)|=O(n^{1/3}),\quad \sup_{z\in I_j}|G_j'(z)|=O(n^{2/3}),\quad  \sup_{z\in I_j}|G_j''(z)|=O(n),\quad \sup_{z\in I_j}|G_j'''(z)|=O(n^{4/3}).
\end{equation}
For $z\in I_j$, we will set $G(z)=G_j(z)+w_j$, where the $w_j$'s  are such that  $G(1/q)=0$ and $G$ is well-defined on the interval (for example, $w_0=-G_0(1/q)$, $w_1=-G_1(1/B)+G_0(1/B)-G_0(1/q)$ and so on). Note, from \eqref{eq:iuyiuy}, the $w_j$'s satisfy $|w_j|=O(n^{1/3})$. 

The construction of $G$ so far ensures that $G(1/q)=0$, $G$ is continuous and strictly increasing in the interval $[1/q,1]$. From \eqref{eq:iuyiuy} and the fact that the $w_j$'s satisfy  $|w_j|=O(n^{1/3})$, we also obtain that there exists a constant $M>0$ such that $G(z)\leq Mn^{1/3}$ for all $z\in [1/q,1]$. 

The main part of the argument is to specify strictly increasing functions $G_j$ so that:
\renewcommand{\theenumi}{\roman{enumi}}
\begin{enumerate}
\item \label{it:cond1} The properties \eqref{eq:Gspecify} and  \eqref{eq:iuyiuy} hold. 
\item \label{it:cond2} $G$ is three-times differentiable. 
\end{enumerate} 
Provided that these conditions are met, we obtain that the function $G$ also satisfies \eqref{eq:Gders123} (which completes the proof of the lemma). The roadmap of the construction is as follows:
\renewcommand{\theenumi}{\arabic{enumi}}
\begin{enumerate}
\item \label{it:G1G4} We first specify the functions $G_1,G_4$. In particular, we will have 
\begin{equation}\label{eq:G1G4construction}
G_1'(z)=1/(z-F(z)) \mbox{ for } z\in I_1, \quad  G_4'(z)=1/(z-F(z)) \mbox{ for } z\in I_4.
\end{equation}
$G_1,G_4$ are strictly increasing three-differentiable functions which also satisfy \eqref{eq:iuyiuy}.
\item \label{it:G0} The derivatives of $G_0$ at $z=1/B$ need to match the derivatives of $G_1$ at $z=1/B$, i.e.,
\begin{equation}\label{eq:G0ders1}
G_0'(1/B)=G_1'(1/B),\quad G_0''(1/B)=G_1''(1/B), \quad G_0'''(1/B)=G_1'''(1/B).
\end{equation}
We will see that $G'_1(1/B)$, $G_1''(1/B)$, $G_1'''(1/B)$ are constants that do not depend $n$. Thus, $G_0$ can be chosen to be a function that does not depend on $n$ whatsoever; any strictly increasing three-times differentiable  function which satisfies \eqref{eq:G0ders1} will do. This yields that $G_0$ in fact satisfies the following bounds (which are stronger than those given in \eqref{eq:iuyiuy}):
\begin{equation}\label{eq:G0construction}
\max_{z\in I_0}|G_0(z)|=O(1), \quad \max_{z\in I_0}|G_0'(z)|=O(1),\quad \max_{z\in I_0}|G_0''(z)|=O(1), \quad  \max_{z\in I_0}|G_0'''(z)|=O(1).
\end{equation} 
\item \label{it:G3} The function $G_3(z)$ will be chosen to be quadratic. The requirement \eqref{eq:G3ders1} will thus completely specify $G_3$ (up to an additive constant). We will see that $G_4''(z_+)$ is negative, so the function $G_3$ will be concave. Our goal here is to ensure that for constants $C_2,C_3>0$ it holds that
\begin{gather}
G_3''(z)\leq -C_2n \mbox{ for } z\in I_3,\label{eq:G3construction}\\
G_3'(z_+)=G_4'(z_+),\quad G_3''(z_+)=G_4''(z_+).\label{eq:G3ders1}
\end{gather}
Note that $G_3$ is strictly increasing (since $G_3'(z_+)=G_4'(z_+)>0$ and $G_3'$ is decreasing from \eqref{eq:G3construction}) and three-times differentiable (since $G_3$ is quadratic). $G_3$ will also satisfy \eqref{eq:iuyiuy}.
\item \label{it:G2} The function $G_2$ will satisfy the following constraints (in addition to \eqref{eq:G2ders1}):
\begin{gather}
G_2'(z)\geq C_1n^{2/3},\quad |G_2''(z)|\leq (10^2C_1/L) n \mbox{ for  } z\in I_2,\label{eq:G2construction}\\
G_2'(z_-)=G_1'(z_-),\quad G_2''(z_-)=G_1''(z_-),\label{eq:G2ders1}\\
G_2'(z_m)=G_3'(z_m),\quad G_2''(z_m)=G_3''(z_m),\label{eq:G2ders2}
\end{gather}
where $C_1$ is a positive constant. Note that $G_2$ is clearly strictly increasing (from \eqref{eq:G2construction}). $G_2$ will also be three-times differentiable and it will satisfy \eqref{eq:iuyiuy}.

We will see that $G_1'(z_-)<G_3'(z_m)$ and $G_1''(z_-)>0>G_3''(z_m)$. Recall that we also need that the first derivative of $G_2$ is positive. Thus, the first derivative $G_2'$ will increase overall in the interval $I_2$, yet at the same time $G_2'$ should change monotonicity at some point inside the interval. 
\end{enumerate}
Let us assume for now that the functions $G_j$ satisfy all of the Items~\ref{it:G1G4}---\ref{it:G2} and conclude that the function $G$ satisfies Conditions~\ref{it:cond1} and~\ref{it:cond2}. For Condition~\ref{it:cond1}, first observe that \eqref{eq:iuyiuy} is satisfied for all $j\in\{0,1,2,3,4\}$ by Items~\ref{it:G1G4}---\ref{it:G2}. Also, equations \eqref{eq:G1G4construction}, \eqref{eq:G0construction}, \eqref{eq:G3construction} and \eqref{eq:G2construction} show that $G$ satisfies \eqref{eq:Gspecify}. This proves that $G$ satisfies Condition~\ref{it:cond1}. Relative to Condition~\ref{it:cond1}, using \eqref{eq:G0ders1},  \eqref{eq:G3ders1}, \eqref{eq:G2ders1}, \eqref{eq:G2ders2} and the three-times differentiability of the $G_j$'s, we have that $G$ is two-times continuously differentiable with a  third derivative which exists everywhere apart (possibly) from the points $z=z_-,z_m,z_+$. For each of these points, we interpolate $G'''$ in an (infinitesimally) small  neighborhood of the point using a steep linear function; the use of the linear function guarantees  that the order of $G'''$ is still $O(n^{4/3})$. The infinitesimally small length of the interpolation interval guarantees  that the effect on $G,G',G''$ by this modification of $G'''$ can safely be ignored. It follows that $G$ satisfies Conditions~\ref{it:cond1} and~\ref{it:cond2}, as wanted. 

It remains to obtain Items~\ref{it:G1G4}---\ref{it:G2}. We start with Item~\ref{it:G1G4}.

To specify the functions $G_1$ and $G_4$, first consider a function $h$ on the interval $I_1\cup I_4$ which satisfies $h(1/B)=h(z_+)=0$ and $h'(z)=1/(z-F(z))$ for $z\in I_1\cup I_4$. This well-defines $h$ on $I_1\cup I_4$. We then set $G_1(z)=h(z)$ for $z\in I_1$ and  $G_4(z)=h(z)$ for $z\in I_4$. For $z\in I_1\cup I_4$, note that $z> F(z)$ (using that $z\neq a$ and Lemma~\ref{lem:Flemma}) and thus $h'(z)>0$, so $G_1$ and $G_4$ are stricly increasing. It remains to show \eqref{eq:iuyiuy} for $j=1,4$. 

Note that 
\begin{equation}\label{eq:derh}
h''(z)=\frac{F'(z)-1}{(z-F(z))^2}, \quad h'''(z)=\frac{2(F'(z)-1)^2+F''(z)(z-F(z))}{(z-F(z))^3}.
\end{equation}
Let $c:=-F''(a)/2$. By Lemma~\ref{lem:Flemma}, we have that $c>0$. By Taylor's theorem, we have that for all sufficiently small  $\epsilon>0$, for all $z$ in the interval $I:=(a-\epsilon,a+\epsilon)$, it holds that 
\begin{equation}\label{eq:Fexpansion}
F(z)=z-c(z-a)^2+R_3(z)
\end{equation}
for a remainder function $R_3(z)$ which satisfies $\max_{z\in I}|R_3(z)|=O( |z-a|^3)$. From \eqref{eq:Fexpansion}, it also follows that 
\begin{equation*}
F'(z)=1-2c(z-a)+R_2(z), \qquad F''(z)=-2c+R_1(z),
\end{equation*}
for  remainder functions  $R_1(z),R_2(z)$ which satisfy $\max_{z\in I}|R_1(z)|= O(|z-a|)$ and $\max_{z\in I}|R_2(z)|= O(|z-a|^2)$. We thus obtain that there exist constants $U_1,U_2,U_3>0$ such that for $z\in I\backslash \{a\}$, it holds that 
\begin{equation}\label{eq:dersbounds}
\begin{gathered}
\Big|\frac{1}{z-F(z)}-\frac{1}{c}(z-a)^{-2}\Big|\leq U_1|z-a|^{-1}, \quad \Big|\frac{F'(z)-1}{(z-F(z))^2}+\frac{2}{c}(z-a)^{-3}\Big|\leq U_2|z-a|^{-2},\\
\Big|\frac{2(F'(z)-1)^2+F''(z)(z-F(z))}{(z-F(z))^3}-\frac{6}{c}(z-a)^{-4}\Big|\leq U_3|z-a|^{-3}.
\end{gathered}
\end{equation}
Using \eqref{eq:derh} and \eqref{eq:dersbounds}, it is immediate  to show that $\max_{z\in I_1\cup I_4} |h'(z)|=O(n^{2/3})$, $\max_{z\in I_1\cup I_4} |h''(z)|=O(n)$, $\max_{z\in I_1\cup I_4}|h'''(z)|=O(n^{4/3})$ and thus these bounds carry over to $G_1,G_4$ as well. We next show that  $\max_{z\in I_1} h(z)=O(n^{1/3})$, the proof for $\max_{z\in I_4} h(z)=O(n^{1/3})$ being completely analogous.

In the interval $z\in[1/B,a-\epsilon]$, we have that $h'$ is bounded above by an absolute constant throughout the interval, so we clearly have that $h(a-\epsilon)-h(1/B)=O(1)$. Consider next $z\in (a-\epsilon,z_-)$, and parameterize $z$ as $z=a-Kn^{-1/3}$ for some $K$ which satisfies $L_-<K<\epsilon n^{1/3}$. Using \eqref{eq:dersbounds}, we have the  bound
\[h'(z)\leq\frac{1+\epsilon U_1}{cK^2} n^{2/3}.\]
Thus
\begin{eqnarray*}
h(z_-)-h(a-\epsilon)
&=&
\int^{z_-}_{\alpha-\epsilon}h'(z)dz=n^{-1/3}\int^{\epsilon n^{1/3}}_{L_-}h'(a-Kn^{-1/3})dK
\\
&\leq &
 (1+\epsilon U_1)n^{1/3}\int^{\epsilon n^{1/3}}_{L_-}\frac{1}{cK^2}dK\leq M n^{1/3},
 \end{eqnarray*}
for some absolute constant $M$. This concludes the construction for  Item~\ref{it:G1G4}.

For Item~\ref{it:G0}, we only need to show that $G_1'(1/B),G_1''(1/B),G_1'''(1/B)$ are constants. This is clear for $G_1'(1/B)$ which is equal to $h'(1/B)=1/(1/B-1/q)$;  for  $G_1''(1/B)$ and $G_1'''(1/B)$, it follows from the expressions in \eqref{eq:derh} (note, using the method in Lemma~\ref{lem:jacobianuniform}, one can show that the right derivative of $F$ at $1/B$ is equal to $2(q-1)/q$, while the right second  derivative of $F$ at $1/B$ is equal to $-\frac{4B(q-1)}{3q}$). This yields  Item~\ref{it:G0}.

For Items~\ref{it:G3} and~\ref{it:G2}, we will need  the values of the derivatives of $G_1$ and $G_4$ at the points $z_-$ and $z_+$, respectively.  Set
\begin{equation*}
d_-':= G_1'(z_-),\quad d_-'':= G_1''(z_-),\quad d_+':= G_4'(z_+),\quad d_+'':= G_4''(z_+).
\end{equation*}
From the first two inequalities in \eqref{eq:dersbounds}, we obtain
\begin{equation}\label{eq:limits4545}
\lim_{n\rightarrow \infty} \frac{d_\pm'}{n^{2/3}}=\frac{1}{cL^2_\pm},\quad
 \lim_{n\rightarrow \infty} \frac{d_-''}{n}=\frac{2}{cL^3_-},\quad \lim_{n\rightarrow \infty} \frac{d_+''}{n}=-\frac{2}{cL^3_+}.
\end{equation}
From \eqref{eq:limits4545}, we obtain that for all sufficiently large $n$, there exist $D_\pm', D_\pm''>0$ such that
\[d_\pm'= D_\pm' n^{2/3},\quad d_-''= D_-'' n,\quad d_+''= -D_+'' n,\]
and
\begin{equation}\label{eq:Dpmaccent}
\frac{1}{cL^2_\pm}(1-10^{-5})\leq D_\pm'\leq (1+10^{-5})\frac{1}{cL^2_\pm},\quad \frac{2}{cL^3_\pm}(1-10^{-5})\leq D_\pm''\leq (1+10^{-5})\frac{2}{cL^3_\pm}.
\end{equation}
Note that $D_\pm',D_\pm''$ depend on $n$, but as \eqref{eq:Dpmaccent} shows they satisfy $D_\pm',D_\pm''=\Theta(1)$. 

We are now ready to show Item~\ref{it:G3}. For $z\in I_3$, we will set $G_3(z)=u_1 n^{2/3}(z-a)+u_2n(z-a)^2$ for $u_1,u_2$ which we next specify. To satisfy \eqref{eq:G3ders1}, we will choose
\begin{equation}\label{eq:u1u2choice}
2u_2=-D_+'',\quad  u_1 +2u_2L_+=D_+'. 
\end{equation}
Observe that $u_2<0$, so $G_3$ is not only a quadratic function but also concave. Note that $u_1,u_2$ satisfy $|u_1|,|u_2|=\Theta(1)$ from where it easily follows that \eqref{eq:iuyiuy} is satisfied (for $j=3$). For \eqref{eq:G3construction}, just observe that $G_3''(z)=2u_2=-D_+''$ and hence the bound on $G_3''$ follows from \eqref{eq:Dpmaccent}. This completes the construction for Item~\ref{it:G3}.

 For the construction in Item~\ref{it:G2}, we will need a handle of the derivatives of $G_3$ at the endpoint $z_m$ of the interval $I_3$ (we will also use these later in the construction for Item~\ref{it:G2}). Let 
\begin{equation*}
\begin{gathered}
D_m':=G_3'(z_m)/n^{2/3}\mbox{ and }D_m'':=-G_3''(z_m)/n.
\end{gathered}
\end{equation*}
We will show that 
\begin{equation}\label{eq:derivativesatzm}
\begin{gathered}
D_m'=(1\pm10^{-4})\frac{3}{cL^2_\pm},\quad D_m''= (1\pm 10^{-4})\frac{2}{cL^3_\pm}.\\
\end{gathered}
\end{equation} 
By the definition of $D_m'$, we have that  $D_m'=u_1-2u_2 L_m$ and hence, by the choice \eqref{eq:u1u2choice} of $u_1,u_2$, we have 
\[D_m'=D'_{+}+ D_+''(L_++L_m).\]
Also, we have $D_m''=D_+''$ since the function $G_3$ is quadratic. It is immediate thus to conclude \eqref{eq:derivativesatzm} using the bounds in \eqref{eq:Dpmaccent} and $L_\pm\gg L_m$.

We are now ready to give the construction for Item~\ref{it:G2}. To define the function $G_2(z)$ on the interval $I_2$, we will set
\[G_2(z)=n^{1/3}\, g\big(n^{1/3}(z-a)\big),\]
where $g$ is a three times differentiable  function on the interval $I:=[-L_-,-L_m]$ such that
\begin{gather}
g'(-L_-)=D_-',\quad g''(-L_-)=D_-'', \label{eq:diffwell1}\\
g'(-L_m)=D_m', \quad g''(-L_m)=-D_m'', \label{eq:diffwell2}\\
\min_{x\in I} g'(x)\geq  \frac{1}{2cL^2_-},\quad \max_{x\in I}|g''(x)|\leq \frac{25}{cL^3_+}.\label{eq:difflowersecond}
\end{gather}
Equations \eqref{eq:diffwell1}, \eqref{eq:diffwell2} and \eqref{eq:difflowersecond} ensure that the function $G_2$ satisfies \eqref{eq:G2construction}, \eqref{eq:G2ders1} and \eqref{eq:G2ders2}. Also it will be clear from the specification of $g$ that all of $g,g',g'',g'''$ are bounded by absolute constants, which thus implies that $G_2$ satisfies \eqref{eq:iuyiuy} (for $j=2$).

It remains to specify such a function $g$, we do this by specifying its second derivative. More precisely, for $z\in I$, we will set
\begin{equation}\label{eq:gdefviah}
g'(z):=D'_-+\int^{z}_{-L_-}h(x)dx, \mbox{ so that } g''(z)=h(z),
\end{equation}
where $h(z)$ is a differentiable function on $I$ satisfying 
\begin{gather}
h(-L_-)= D_-'', \quad h(-L_m)=-D_m'', \quad  \int^{-L_m}_{-L_-}h(x)dx=D'_m-D'_-,\label{eq:hbhbhb1} \\
\max_{x\in I} |h(x)|\leq \frac{25}{cL_+^3},\quad  \int^{z}_{-L_-}h(x)dx\geq 0\mbox{ for all } z\in I.\label{eq:hbhbhb2}
\end{gather}
Using \eqref{eq:hbhbhb1} and \eqref{eq:hbhbhb2}, it is immediate to verify that the function $g$, as specified in \eqref{eq:gdefviah}, satisfies \eqref{eq:diffwell1}, \eqref{eq:diffwell2} and \eqref{eq:difflowersecond} (for the first inequality in \eqref{eq:difflowersecond}, note that $g'(z)\geq D_-'$ for all $z\in I$ and then use the  bound for $D_-'$ from \eqref{eq:Dpmaccent}).

To specify the function $h$, we will need two parameters $K_1,K_2>0$ such that 
\begin{equation}\label{eq:K1K2}
-L_-< -K_1< -K_2< -L_m.
\end{equation} 
We will specify the parameters $K_1,K_2$ shortly but for now it will be more instructive to assume that $K_1,K_2$ just satisfy \eqref{eq:K1K2}; the freedom to specify $K_1,K_2$ will be helpful at a slightly later point. 

So, consider the function $h$ defined on $[-L_-,-L_m]$ by
\[h(z)=\begin{cases} \frac{ D''_-}{(L_--K_1)^2}(z+K_1)^2,&\mbox{if } -L_-\leq z\leq -K_1\\[0.2cm]
\frac{100(D'_m-D'_-)}{3(K_1-K_2)^5}(z+K_1)^2(z+K_2)^2,&\mbox{if } -K_1<z<-K_2\\[0.2cm]
\frac{- D''_m}{(K_2-L_m)^2}(z+K_2)^2, &\mbox{if } -K_2\leq z\leq -L_m\end{cases}\]
Note that 
\begin{equation}\label{eq:hLmLmDmDp}
h(-L_m)=-D''_m,\quad  h(-L_-)=D''_-,
\end{equation} and that $h$ is differentiable throughout the interval $[-L_-,-L_m]$ since at the points $z=-K_1,-K_2$ it holds that $h(-K_1)=h(-K_2)=h'(-K_1)=h'(-K_2)=0$. Further, by a direct calculation, the function $h$ satisfies the following:
\begin{equation}\label{eq:integralsofh}
\int^{-K_1}_{-L_-}h(z)dz=\frac{ D''_-}{3}(L_--K_1),\ \int^{-K_2}_{-K_1}h(z)dz=\frac{10}{9}(D_m'-D_-'), \ \int^{-L_m}_{-K_2}h(z)dz=-\frac{D_m''}{3}(K_2-L_m).
\end{equation}
Since $D_m'>D_-'>0$ and $D''_-,D''_m>0$ (cf. \eqref{eq:Dpmaccent} and  \eqref{eq:derivativesatzm}), we also have that 
\begin{equation}\label{eq:hmonotconvex}
\begin{gathered}
 0\leq h(z)\leq D''_- \mbox{ for } z\in [-L_-,-K_1],\\
 0\leq h(z)\leq \frac{100(D'_m-D'_-)}{48(K_1-K_2)} \mbox{ for } z\in (-K_1,-K_2),\\
-D''_m\leq h(z)\leq 0 \mbox{ for } z\in[-K_2,-L_m].
\end{gathered}
\end{equation}
It follows that 
\begin{equation}\label{eq:maxhboundM}
\max_{z\in I}|h(z)|\leq M, \mbox{ where } M:=\max\Big\{ D''_-,D''_m, \frac{3(D'_m-D'_-)}{K_1-K_2}\Big\}.
\end{equation}

It remains to choose $K_1,K_2$ satisfying \eqref{eq:K1K2} so that the specifications for $h$ in \eqref{eq:hbhbhb1} and \eqref{eq:hbhbhb2} are satisfied. We set
\begin{equation}\label{eq:K1K2tgecspec}
K_1=L_--\frac{D_-'}{3D_-''}, \quad K_2=L_m+\frac{D_m'}{3D_m''}.
\end{equation}
Using \eqref{eq:Dpmaccent} and \eqref{eq:derivativesatzm} and $L_\pm\gg L_m$, we have 
\begin{equation}\label{eq:K1K2approx}
K_1=\Big(\frac{5}{6}\pm 10^{-3}\Big)L_-,\quad K_2=\Big(\frac{1}{2}\pm10^{-3}\Big)L_+.
\end{equation}
Since $L_+=L_-\gg L_m$, we obtain that $K_1,K_2$ satisfy \eqref{eq:K1K2} as desired.

We next check that the specifications for $h$ in \eqref{eq:hbhbhb1} and \eqref{eq:hbhbhb2} are satisfied. First, combining \eqref{eq:integralsofh} and \eqref{eq:K1K2tgecspec}, we obtain that    
\begin{equation}\label{eq:integralsofh2}
\int^{-L_m}_{-L_-}h(z)dz=D'_m-D'_-.
\end{equation}
Equations \eqref{eq:hLmLmDmDp} and \eqref{eq:integralsofh2} show that $h$ does indeed satisfy \eqref{eq:hbhbhb1}.

We next show that $h$ satisfies the inequalities in \eqref{eq:hbhbhb2}. To show that $\max_{z\in I}|h(z)|\leq 25/(cL_+^3)$ it suffices to show that $M\leq 25/(cL_+^3)$, where $M$ is as in \eqref{eq:maxhboundM}. This is immediate to verify using \eqref{eq:Dpmaccent}, \eqref{eq:derivativesatzm} and \eqref{eq:K1K2approx}. For the second inequality in \eqref{eq:hbhbhb2}, note from \eqref{eq:hmonotconvex} that the function $h(z)$ is non-negative when $z< -K_2$ and negative when $z>-K_2$. Thus, it suffices to check the inequality in \eqref{eq:hbhbhb2} when $z=-L_-$ and $z=-L_m$. For $z=-L_-$, the inequality holds (trivially) at equality while for $z=-L_m$ the inequality follows from \eqref{eq:integralsofh2} and $D'_m>D'_-$. This completes the construction for Item~\ref{it:G2}.

We have thus shown how to do the construction of the functions $G_0,G_1,G_2,G_3,G_4$ so that Items~\ref{it:G1G4}---\ref{it:G2} hold, completing the proof of Lemma~\ref{lem:potentialchoice}.
\end{proof}

\bibliographystyle{plain}
\bibliography{references}
\end{document}